\newtheorem{theo}{Theorem}
\newtheorem{pro}{Proposition}[section]
\newtheorem{lem}[pro]{Lemma}
\newtheorem{remark}[pro]{Remark}
\newtheorem{defi}[pro]{Definition}
\def\Xint#1{\mathchoice
   {\XXint\displaystyle\textstyle{#1}}%
   {\XXint\textstyle\scriptstyle{#1}}%
   {\XXint\scriptstyle\scriptscriptstyle{#1}}%
   {\XXint\scriptscriptstyle\scriptscriptstyle{#1}}%
   \!\int}
\def\XXint#1#2#3{{\setbox0=\hbox{$#1{#2#3}{\int}$}
     \vcenter{\hbox{$#2#3$}}\kern-.5\wd0}}
\def\dashint{\Xint-}
\DeclareMathOperator{\supp}{\mathrm{Supp}}
\def\1{\mathds{1}}
\def\ainfty{{\mathcal{A}_1}}
\def\bm{{\underline{m}}}
\def\dist{\text{dist}\ }
\def\div{\mathrm{div} \ }
\def\D{\displaystyle}
\def\({\left(}
\def\){\right)}
\def\E{E}
\def\ep{\varepsilon}
\def\hal{\frac{1}{2}}
\def\indic{\mathds{1}}
\def\mr{\mathbb{R}}
\def\nab{\nabla}
\def\p{\partial}
\def\ro{\rho}
\def\supp{\text{Supp}}
\def\fae{f_{\alpha, \eta}}
\def\vp{\varphi}
\def\cc{c_1}
\def\cC{c_2}
\def\w{{H_n}}
\def\W{\mathcal{W}}
\def\Z{Z_{n,\beta}}
\def\xbf{{\mathbf x}}
\def\I{\mathcal{E}}
\def\g{g}
\def\Q{\mathbb{P}_{n,\beta}}
\newcommand{\op}[1]{{\rm{#1}}}
\def\P{\mathcal{P}}
\def\En{\mathcal{E}}
\def\yg{|y|^\gamma}
\def\muv{\mu_V}
\def\c{c_{d,s}}
\def\M{M_{R,\eta}}
\def\Mp{M_{R,\eta'}}
\newcommand{\hne}{h_{n,\eta}}
\newcommand{\Y}{\mathcal Y}
\newcommand{\mz}{\mathbb{Z}}
\newcommand{\R}{\mathbb{R}}
\def\dist{\mathrm{dist}}
\newcommand{\loc}{\mathrm{loc}}
\newcommand{\drd}{\delta_{\mr^d}}
\numberwithin{equation}{section}
\title{Next  Order Asymptotics and Renormalized Energy for Riesz Interactions }
\author{Mircea Petrache\footnote{UPMC Univ. Paris 6,  UMR 7598 Laboratoire Jacques-Louis Lions,  Paris, F-75005 France}  and Sylvia Serfaty\footnote{UPMC Univ. Paris 6,  UMR 7598 Laboratoire Jacques-Louis Lions,  Paris, F-75005 France \newline \& Courant Institute, New York University, 251 Mercer st, NY NY 10012, USA.}
}
\date{September 26, 2014}
\begin{document}

\maketitle

\begin{abstract}
We study systems of $n$ points in the Euclidean space of dimension $d \ge 1$  interacting  via a Riesz kernel  $|x|^{-s}$ and confined by an external potential, in the regime where $d-2\le s<d$. We also treat the case of logarithmic interactions in dimensions $1$ and $2$. Our study includes and retrieves all cases previously  studied in   \cite{ss2d,ss1d,rs}.  Our approach is based on the Caffarelli-Silvestre extension formula which allows to view the Riesz kernel as the kernel of a (inhomogeneous) local operator in the extended space $\R^{d+1}$.
 
As $n \to \infty$, we exhibit  a next to leading order term in $n^{1+s/d}$  in the  asymptotic expansion of the  total energy of the system, where the constant term in factor of $n^{1+s/d}$ depends on the microscopic arrangement of the points and is expressed in terms of a ``renormalized energy." This new object  is expected to penalize the disorder of an infinite set of points in whole space, and to be minimized by Bravais lattice (or crystalline) configurations.    We  give applications to the statistical mechanics in the case where temperature is added to the system, and identify an expected ``crystallization regime."
We also obtain  a result of separation of the points  for minimizers of the energy.
\end{abstract}


\section{Introduction}

We study the equilibrium properties of  a system of $n$  points  in the full space of dimension $d\ge 1$, interacting via Riesz kernel interactions
and confined by an ``external field" or potential $V$. More precisely, we are considering energies (or Hamiltonians) of the form  
\begin{equation}\label{Hn}
\w(x_1, \dots, x_n)=   \sum_{i\neq j} \g (x_i-x_j)  +n \sum_{i=1}^n V(x_i)
\end{equation}
where  $x_1 , \dots, x_n$ are  $n$ points in $\R^d$ and the interaction kernel is given by either
\begin{equation}\label{kernel}
 \g(x)=\frac{1}{|x|^{s}}\qquad \max(0, d-2)\le s<d,\end{equation} 
or  \begin{equation}
\label{wlog}
\g(x)=-\log |x| \quad \text{in dimension } d=1,\end{equation} 
or 
\begin{equation}
\label{wlog2d}
\g(x) = - \log |x| \quad \text{in dimension } d=2.\end{equation}
We are interested in the asymptotics $n\to \infty$ of the minimum of $\w$. Note here that  the factor $n$ in front of second term in \eqref{Hn}  puts us in a mean-field scaling where the potential term  and the pair interaction terms  are of the same order of magnitude.
  This choice is equivalent to demanding that the pair-interaction strength be of order $n^{-1}$. One can always reduce to this situation in the particular case where the confining potential $V$ has some homogeneity. 

The case $s=d-2$ for $d \ge 3$  and the case  \eqref{wlog2d} correspond to the Coulomb interaction cases in dimension $d\ge 2$. The cases $ d-2<s<d$ correspond to more general Riesz interactions. 
Systems of points with Riesz interaction have particularly attracted attention in approximation theory. We refer to the forthcoming monograph of Borodachev-Hardin-Saff \cite{bhslivre}, the review papers \cite{sk,bhs}  and references therein. Such systems are mostly studied on the $d$-dimensional sphere or torus, but studying them in Euclidean space with an external confining potential $V$  is also of  interest, as  it can also correspond to physically meaningful particle  systems, or can provide tools to study such interactions on manifolds.
 The Coulomb case  was already studied in \cite{rs,ss2d}, while the case \eqref{wlog} was studied in \cite{ss1d}. We refer to these papers and the book of Forrester \cite{forrester} for references on these classical cases of Coulomb and log gases and their importance in mathematical physics and  random matrix theory.  Here we  generalize these approaches, in particular that of \cite{rs},  to the case of non-Coulomb interaction, and we retrieve by the same token most of the results obtained in \cite{ss2d,ss1d,rs}.

In approximation theory (cf. references above) the whole range $s\in \mr$ is of interest. The case $s \ge d$ is called the {\it hypersingular case} and the case $s<d$ the  {\it potential case} (cf. \cite{landkof}).
 When $s\to \infty$  the problem connects with best packing problems, and when $s\to 0$ it connects by means of $(|\cdot |^{-s}-1) /s  \to - \log |\cdot|$ to the  logarithmic interaction $-\sum_{i\neq j} \log |x_i-x_j|$, whose minimization is equivalent to the  maximization of the product of distances  $\prod_{i\neq j} |x_i-x_j|$, i.e. to Fekete points. These  are related to orthogonal polynomials and are of major interest in interpolation theory. 
The cases \eqref{wlog}--\eqref{wlog2d} which we study here correspond exactly to one and two-dimensional weighted Fekete points (cf.  \cite{safftotik})  or physically to the  ``log gas" systems mentioned above (cf. \cite{forrester}).

It is well-known since \cite{choquet}  that to leading order 
\begin{equation}\label{1order}
\min \w = n ^2 \En(\muv) +o(n^2) 
\end{equation}
in the limit $n\to \infty$, where 
\begin{equation}\label{MFener}
\En(\mu) = \iint_{\mr^d\times \mr^d} \g(x-y) \, d\mu(x)\, d\mu(y)+ \int_{\mr^d}V(x)\, d\mu(x)
\end{equation}
is the mean-field energy functional defined for Radon measures $\mu$, and the {\it equilibrium measure} $\muv$ is the minimizer of $\En$ in the space of  probability measures on $\R ^d$, denoted $\mathcal P(\R^d)$. 
This is true only for $s<d$, which is the condition for \eqref{MFener} to make sense and to have a minimizer. This is why this case is called the potential case.  

In this paper we characterize the next to leading order term, by showing among other things the following sample result (for more details, see Theorem \ref{th2}):
\begin{theo}\label{th1}
Assume \eqref{kernel} and that  $V$ is  such that the equilibrium measure  $\muv$ exists and satisfies some suitable regularity assumptions (in particular has a density, cf. below). Then we have the expansion
\begin{equation}\label{expH}
\min \w= n^2 \En(\muv)+ n^{1+\frac{s}{d}}  C_{d,s,V} +o(n^{1+\frac{s}{d}})\end{equation}
where  $$C_{d,s,V}= \xi_{d,s} \int_{\R^d} \muv^{1+\frac{s}{d}}(x)\, dx$$
and  the number $\xi_{d,s}$ depends only on $d,s$  and is characterized as the minimum of a function $\W$ described below.
\end{theo}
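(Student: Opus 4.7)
The plan is to follow and adapt the strategy developed in the Coulomb case in \cite{rs,ss2d}, with the additional ingredient of the Caffarelli--Silvestre extension to handle the non-local Riesz kernel. The starting point is to rewrite $\w$ as $n^2 \En(\muv)$ plus a next-order contribution expressed through the potential
\[
h_n(x) = \int \g(x-y)\, d\Big(\textstyle\sum_{i=1}^n \delta_{x_i} - n\muv\Big)(y),
\]
which encodes the discrepancy between the empirical distribution and the equilibrium measure. Using the Euler--Lagrange equation satisfied by $\muv$ on its support, one obtains a splitting of $\w - n^2\En(\muv)$ whose main remaining term is the self-interaction of $\sum_i \delta_{x_i} - n\muv$ against itself through $\g$; this self-interaction is formally divergent on the diagonal and has to be regularized.

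The second step exploits that $\g$ is, up to a multiplicative constant, the kernel of the fractional operator $(-\Delta)^{(d-s)/2}$, so that its convolution with any measure on $\mr^d$ can be realized as the trace on $\mr^d\times\{0\}$ of a function $\tilde h_n$ on $\mr^{d+1}$ solving a local elliptic equation of the form $-\div(|y|^\gamma \nabla \tilde h_n) = $ (source on $\mr^d\times\{0\}$) for the appropriate exponent $\gamma$. The divergent self-interaction then becomes the weighted Dirichlet integral $\int_{\mr^{d+1}} |y|^\gamma |\nabla \tilde h_n|^2$, which we truncate at scale $\eta$ by replacing each $\delta_{x_i}$ by its average on a small sphere of radius $\eta$. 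The truncation removes the singularity at each point and produces an explicit correction of order $n\, g_\eta(0)$, with $g_\eta(0)$ the self-energy of one smeared charge; it is this correction that, after the natural rescaling at the inter-point distance $n^{-1/d}$, produces the $n^{1+s/d}$ order of magnitude.

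Third, one blows up at scale $n^{1/d}$ around a generic point $x_0$ of the support of $\muv$. After this zoom the rescaled configuration locally looks like an infinite point pattern of mean density $\muv(x_0)$ in $\mr^d$, and the rescaled extended field converges along subsequences to an admissible electric field in the sense used to define $\W$. A Fatou-type / two-scale lower-semicontinuity argument (in the spirit of \cite{ss2d}) applied to the spatially averaged rescaled energies then gives
\[
\liminf_{n\to\infty} n^{-(1+s/d)}\big(\min \w - n^2\En(\muv)\big) \;\ge\; \xi_{d,s} \int_{\mr^d} \muv^{1+s/d}(x)\, dx,
\]
the density $\muv(x_0)^{1+s/d}$ arising from the scaling law $\W(\lambda\,\cdot\,) = \lambda^{1+s/d}\W(\cdot)$. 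The matching upper bound is obtained by an explicit construction: tile the support of $\muv$ into mesoscopic hypercubes of sidelength $R\,n^{-1/d}$, and within each cube paste a configuration obtained by rescaling a near-minimizer of $\W$ to the local density $\muv(x_0)$; the total energy is then estimated via the extended formulation, with boundary interactions between adjacent tiles absorbed by a cutoff/screening argument, and the limits $R\to\infty$, $\eta\to 0$, $n\to\infty$ taken in the appropriate order.

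The main obstacle, relative to the local Coulomb setting of \cite{rs,ss2d}, is the non-locality of $\g$: there, the electric energy of a bounded region is essentially determined by the charges it contains, while here one must carry the whole extended variable $y$ and deal with long-range tails in $\mr^{d+1}$ weighted by $|y|^\gamma$. Both the ergodic lower bound and the cube-by-cube upper bound therefore require uniform a priori control on $\tilde h_n$ in weighted Sobolev spaces and a delicate screening argument to show that long-range contributions vanish in the limit and that the truncation correction $n\,g_\eta(0)$ combines with the weighted Dirichlet integral to produce exactly the renormalized quantity $\W$. Handling the interacting scales $n^{-1/d}$, $\eta$ and $R$ in the correct order, together with uniform separation and regularity estimates for minimizers in the weighted extension, is where the bulk of the technical work lies.
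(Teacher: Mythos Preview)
Your outline is correct and matches the paper's strategy essentially point for point: the splitting formula, the Caffarelli--Silvestre extension, the $\eta$-truncation in the extended space, the ergodic/two-scale lower bound, and the screening-based upper bound are exactly the ingredients the paper assembles into Theorem~\ref{th2}, of which Theorem~\ref{th1} is the immediate corollary. The only nuance worth noting is that the paper implements the $\eta$-regularization by truncating the kernel $g$ (equivalently, replacing $\delta_p$ by the measure $\delta_p^{(\eta)}$ supported on $\partial B(p,\eta)\subset\mathbb R^{d+k}$) rather than by a radial smearing in $\mathbb R^d$ \`a la \cite{rs}, precisely because Newton's theorem fails for non-Coulomb Riesz kernels; your description is compatible with this once read in the extended space, but be sure you mean the sphere in $\mathbb R^{d+k}$ and not in $\mathbb R^d$.
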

The fact that the next order term lies at order $n^{1+s/d}$ is not difficult to guess by scaling,  but  the existence of an asymptotic limit for it was the main  open question. In the context of the Riesz energy on the sphere, 
 this is  listed as a conjecture (with a precise conjectured value for the constant if $d=2,4,8,24$)  in \cite[Conjecture 3]{bhs}. In addition to finding the existence of the asymptotic term, we exhibit a new object, called  ``renormalized energy" and denoted $\W$, which governs this order in the energy for arbitrary (possibly non minimizing) configurations. 
  
Our method relies on  expressing the interaction as a quadratic integral  of the potential  generated by the point configuration via
$$ \g* \sum_i \delta_{x_i}$$
and expanding this integral interaction to next order in $n$ to obtain a next order asymptotic limit of the energy, by an exact splitting formula, as done in \cite{ss2d,ss1d,rs}.  The main difference is that  here the Riesz kernel $\g$
is not the convolution kernel of a local operator, as  in  the  Coulomb case  $s=d-2$ or \eqref{wlog2d}, where $\g$ is the kernel of the (inverse)  Laplacian operator. Instead it is the kernel of a   nonlocal one, more precisely a fractional Laplacian. It turns out however that if  $d-2<s<d$, this fractional Laplacian nonlocal operator can be transformed into a local but inhomogeneous operator of the form $\div (\yg \nab \cdot)$ by adding one space variable  $y\in \R$ to the space $\R^d$. In the particular case of $s=d-1$ then $\gamma=0$ and this corresponds to using a harmonic extension, a relatively common procedure which  seems to have originated in the probability literature in \cite{molchanov} . In the more general setting, the extension procedure is due Caffarelli and Silvestre \cite{cafsil}, and    has been much used to study nonlocal PDE's involving fractional Laplacians.   Using it in our context is  in the line of \cite{ss1d} where the harmonic  extension from dimension $1$ to $2$ was 
used to transform the one-dimensional logarithmic interaction into the two-dimensional Coulomb interaction.
We note that the boundaries of the constraint  $s\in [d-2, d)$ we inherit from this approach, are quite natural: $d-2$ is the Coulomb case, and $s<d$ is the regime for which a potential theory associated to the Riesz kernel exists  \cite{landkof}.  The constraint $s\in [d-2,d)$ also appears in most of the results on the sphere in \cite{bhs2}. 

Transforming the nonlocal relation into a local one (in an extended space) allows us to follow the strategy of \cite{ss2d,rs}, however work is required to show that the strategy still works in an extended space and with an inhomogeneous operator. Also, \cite{rs} relied on  Onsager's lemma, which itself relies on Newton's theorem (that any point charge generates  outside of a ball the same Coulomb potential  as  the same charge which has been radially smeared out in the ball), only valid for Coulomb potentials. Here, we replace that use with a simple truncation procedure (a simplification which works in the Coulomb case too).

The proof of our main result relies on first  proving   lower bounds for the next order in $\w$ of a generic configuration in terms of a limiting energy $\widetilde{\W}$ that we introduce, and second in constructing test configurations which allow to achieve the value $\min \widetilde{\W}$. Just as in \cite{ss2d,rs} the upper bound  construction relies on a ``screening procedure" of a generic configuration.  The proof of this screening takes up a large part of the paper, since it needs to be completely redone  in the extended space $\R^{d+1}$ with the corresponding inhomogeneous operator. 

As a byproduct of these matching lower and upper bounds, we not only obtain the asymptotic expansion \eqref{expH} but we also obtain information on the minimizers themselves: after a suitable rescaling, they have to minimize $\widetilde{\W}$. $\widetilde{\W}$ is an appropriate average of a quantity   $\W$, which itself is the energy of an infinite configuration of discrete points in the whole space $\R^d$, with a suitable uniform ``neutralizing background" corresponding to the average point density. It is the Riesz analogue of the renormalized energy that was introduced in the 2D Coulomb case first by Sandier-Serfaty \cite{gl13,ss2d}, and then in higher dimension by Rougerie-Serfaty \cite{rs}.  

Our analysis thus  leads to the question of minimizing $\W$ itself.
While we know how to prove the existence of minimizers of $\W$ and  a few of their qualitative properties, the identification of its minimum remains  widely open. The only few exceptions are   the case of the one-dimensional logarithmic  interaction (this could very likely be extended to Riesz interactions), for which the minimum is proven in \cite{ss1d} to be achieved at the perfect lattice configuration $\mathbb{Z}$ (for some uniqueness result, see \cite{leb}), and  the two-dimensional Coulomb case  in \cite{gl13} where it is shown that within the class of configurations of points  that are forming a perfect lattice, the minimizer is the ``Abrikosov" triangular lattice (with $\pi/3$ angles).  This relies heavily on  the corresponding result from number theory \cite{cassels} which asserts that  in dimension 2 the minimum of the Epstein Zeta function of a lattice of fixed volume is uniquely achieved by the triangular lattice. \footnote{Some analogous special lattices exist in dimensions $4,8$ and $24$: the lattice $D_4$ in dimension 
$4$, $E_8$ in dimension $8$ and the Leech lattice in dimension $24$. They are proven in \cite{sarnakstormbergsson} to be local minimizers of the Epstein Zeta function among lattices of fixed volume. }
These results led to conjecturing in \cite{gl13} that the triangular lattice indeed achieves the minimum of $\W$ among all possible configurations. By mapping the plane to the $2$-sphere,  B\'etermin \cite{bet}  showed that that conjecture is equivalent to the conjecture of \cite{bhs}, itself formulated on the sphere.
Here we prove the analogous result to that of \cite{gl13} in the more general Riesz case, i.e. that the triangular lattice is the minimizer of $\W$ among lattices of volume $1$. We may then naturally extend the previous conjecture to one that says  that in dimension 2 the minimum of $\W$ for all  $0\le s<2$ is achieved by the triangular lattice. For general dimension, we may also conjecture that the  minimum of $\W$ for all  $\max(0, d-2)\le s<d$ is always achieved  by some lattice, which in dimensions $4,8,24$  is the lattice $D_4$, $E_8$, and  Leech respectively; this can be expected to also be equivalent to Conjecture 3 in \cite{bhs} in the corresponding cases. Note that the fact that these special lattices should be minimizing for a broad class of interaction kernels appears e.g. in the Cohn-Kumar conjecture \cite{ck}.
Finally, let us mention that a few such cristallization results are known mostly in one dimension, e.g. one-dimensional Coulomb gases (i.e. with the one-dimensional Coulomb interaction kernel $|x|$ which is not treated here) \cite{bl,len1,len2}, zeroes of orthogonal polynomials \cite{als}, and in dimension  2 and 3 for some very particular interaction kernels  \cite{theil,radin,harristheil}.
\medskip

As mentioned, the approach used here allows to retrieve all previous results \cite{ss2d,ss1d,rs} in one unified approach, but also brings a few  simplifications :
\begin{itemize}
\item we use a simple definition of the renormalized energy by truncation. This allows to bypass the use of Onsager's  lemma and Newton's theorem in \cite{rs}. As in \cite{rs}, this also avoids the use of mass displacement of  \cite{ss2d,ss1d} since the energy thus defined is immediately bounded below.
\item by proving a monotonicity in the truncation parameter, we easily deduce that $\W$ is bounded from below without having to first prove that 
 minimizing configurations have well separated points (which we do prove later). 
 \item our proof of screening, which has many features in common with that  of \cite{ss1d} and with that of \cite{rs}  is also more general than both of them in that  we screen (almost) arbitrary configurations and not only those that have well separated points, a result of independent interest. 
\item the assumptions on the equilibrium measure are weakened, and we distinguish more precisely those that suffice for the lower bound and those that are needed for the upper bound.
\end{itemize}

  In addition to our main results about minimizers,  we obtain two additional results, which are stated at the end of the introduction. One is an independent result of  good separation of points for energy minimizers.  The other is an application of our 
method to the statistical mechanics model  of particles interacting via $\w$ with temperature, in particular a next order expansion of the partition function.  

Let us now get into the specifics.

\subsection{The equilibrium measure and our assumptions}
  We first place assumptions on $V$ that ensure the existence of  the equilibrium $\muv$ 
  from standard potential theory: 
\begin{eqnarray}\label{assv1}
&V \mbox{ is l.s.c. and bounded below}
\\ \label{assv2}
&\{x: V(x)<\infty\} \mbox{ has positive $\g$-capacity}
\\ \label{assv3}
&\lim_{|x|\to\infty} V(x)=+\infty, \quad \text{resp.} \  \lim_{|x|\to \infty} \frac{V(x)}{2}- \log |x|= + \infty \  \text{in cases }\eqref{wlog}-\eqref{wlog2d}
\end{eqnarray}

The following theorem, due to Frostman (cf. also \cite{safftotik}) then gives  the existence and characterization of the equilibrium measure:

\begin{theo}[\cite{frostman}]\label{thfrostman}
 Assume  that $V$ satisfies \eqref{assv1}--\eqref{assv2}, 
then there exists a unique minimizer $\mu_V\in\mathcal P(\mathbb R^d)$ of $\I$ and $\I(\mu_V)$ is finite. Moreover the following properties hold:
\begin{itemize}
 \item $\Sigma := \supp(\muv)$ is bounded and has positive $\g$-capacity,
 \item for $c:=\I(\muv) - \int\tfrac{V}{2}d
 \muv$ and $h^{\muv}(x):=\int \g(x-y) d\muv(y)$ there holds
 \[
  \left\{\begin{array}{ll} h^{\muv} +\tfrac{V}{2} \ge c& \text{ q.e.}\ , \\ [2mm]
          h^{\muv}+\tfrac{V}{2}=c&\text{ q.e. on } \Sigma.
         \end{array}\right.
 \]
\end{itemize}\end{theo}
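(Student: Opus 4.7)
The plan is to apply the direct method of the calculus of variations on $\mathcal P(\R^d)$ equipped with the narrow topology, derive the Euler--Lagrange conditions by first-variation, and use strict convexity of the Riesz energy for uniqueness.

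\textbf{Existence.} First I would bound $\I$ from below: in the Riesz case $\g\ge 0$ and $V$ is bounded below; in the logarithmic cases \eqref{wlog}--\eqref{wlog2d} the coercivity $V/2-\log|\cdot|\to+\infty$ controls the negative part of $\g$ on a minimizing sequence. The growth assumption \eqref{assv3} forces tightness of any minimizing sequence $(\mu_n)$, since $\int V d\mu_n$ is uniformly bounded and $V$ is coercive. Extracting a narrowly convergent subsequence $\mu_n\wto\mu_V$, lower semicontinuity of $\int V d\mu$ follows from $V$ l.s.c.\ and bounded below. For the interaction part I would approximate $\g$ from below by continuous bounded truncations $\g_k:=\min(\g,k)$, observe that $\mu\mapsto\iint\g_k\,d\mu\,d\mu$ is narrowly continuous, and let $k\to\infty$ by monotone convergence to get
\[
\iint\g\,d\mu_V\,d\mu_V\le\liminf_n\iint\g\,d\mu_n\,d\mu_n .
\]
Hence $\mu_V$ minimizes $\I$. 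The hypothesis \eqref{assv2} supplies a competitor with finite energy (smooth out a measure of positive $\g$-capacity supported in $\{V<\infty\}$), so $\I(\mu_V)<\infty$.

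\textbf{Euler--Lagrange and support properties.} Perturbing $\mu_V$ along $\mu_t=(1-t)\mu_V+t\nu$ for $t\in[0,1]$ and arbitrary finite-energy $\nu\in\mathcal P(\R^d)$, a direct expansion gives
\[
\I(\mu_t)-\I(\mu_V)=2t\int\bigl(h^{\mu_V}+\tfrac{V}{2}\bigr)\,d(\nu-\mu_V)+t^2\bigl(\I(\nu-\mu_V)\bigr) ,
\]
and the first-order condition yields $\int(h^{\mu_V}+V/2)\,d\nu\ge c:=\int(h^{\mu_V}+V/2)\,d\mu_V$. Testing against (regularized) Dirac masses gives the inequality $h^{\mu_V}+V/2\ge c$ q.e., while localizing the perturbation inside $\Sigma$ yields the equality $h^{\mu_V}+V/2=c$ q.e.\ on $\Sigma$. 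The compactness of $\Sigma$ follows from this equality: since $h^{\mu_V}(x)\to 0$ (resp.\ $h^{\mu_V}(x)+\log|x|\to 0$ in the log cases) as $|x|\to\infty$, $V$ must be bounded on $\Sigma$, so by \eqref{assv3} $\Sigma$ is bounded. That $\Sigma$ has positive $\g$-capacity is immediate from the fact that a polar set cannot carry a measure of finite $\g$-energy.

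\textbf{Uniqueness.} This rests on strict convexity of $\mu\mapsto\iint\g\,d\mu\,d\mu$ on the affine space of probability measures with finite energy. Classically, for $0<s<d$ the Fourier transform of $|x|^{-s}$ is (up to a positive constant) $|\xi|^{s-d}>0$, so the quadratic form $\iint\g\,d\sigma\,d\sigma$ is positive definite on signed measures with finite energy; the log kernels have the analogous conditional positivity on measures of zero total mass. If $\mu_1\ne\mu_2$ were both minimizers, then $\I\bigl(\tfrac{\mu_1+\mu_2}{2}\bigr)<\tfrac{1}{2}(\I(\mu_1)+\I(\mu_2))$, a contradiction.

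The main technical obstacle I anticipate is the careful justification of pointwise versus quasi-everywhere evaluation of $h^{\mu_V}+V/2$ when deriving the Euler--Lagrange conditions; this requires standard but non-trivial potential-theoretic facts about quasi-continuity of $\g$-potentials and the fact that polar sets are $\mu$-null for every finite-energy measure $\mu$.
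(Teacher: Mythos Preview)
The paper does not prove this theorem at all: it is stated as a classical result attributed to Frostman (with an implicit pointer also to \cite{safftotik}) and simply invoked. So there is no argument in the paper to compare your sketch against. Your outline is the standard direct-method/first-variation proof and is essentially correct.

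One point worth flagging: the theorem as written in the paper assumes only \eqref{assv1}--\eqref{assv2}, yet your argument (tightness of a minimizing sequence, and boundedness of $\Sigma$) uses \eqref{assv3}. That hypothesis is genuinely needed on $\R^d$---without coercivity a minimizer need not exist, let alone have compact support---and indeed the paper's own preamble lists \eqref{assv3} among the ``assumptions on $V$ that ensure the existence of the equilibrium $\mu_V$''. So this looks like a small omission in the theorem's hypothesis list rather than a gap in your reasoning; just be explicit that you are also invoking \eqref{assv3}.
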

We will write 
\begin{equation}\label{defzeta}
\zeta:=h^{\muv} +\tfrac{V}{2} - c\ge 0.\end{equation}
We will assume that $\muv$ is really a $d$-dimensional measure (i.e. $\Sigma$ is a nice $d$-dimensional set), with a density, and just as 
 in \cite{ss2d,rs}, in order to make the explicit constructions easier, we need  to assume that this  density (that we still denote $\muv$) is  bounded  and sufficiently regular on its support. More precisely, we make the following assumptions (which are technical and could certainly be somewhat relaxed): 

\begin{eqnarray}
\label{assumpsigma}
& \partial \Sigma\  \text{is} \ C^1
\\
\label{assmu1}
& \text{$\muv$  has a density which is $ C^{0, \beta}$ in $\Sigma$, } \\ \label{assmu2}
&  \text{ $\exists \cc, \cC, \overline m>0$ s.t.  $\cc \dist (x, \p \Sigma)^\alpha \le \muv(x) \le \min(\cC \dist(x, \p \Sigma)^\alpha,  \overline m)<\infty $  in $\Sigma$, }
\end{eqnarray}
with the conditions 
\begin{equation}\label{condab}
0<\beta \le 1,\qquad 0\le \alpha \le \frac{2\beta d}{2d-s}.\end{equation}
Of course if $\alpha<1$ one should take $\beta=\alpha$, and if $\alpha\ge 1$, one should take $\beta=1$ and $\alpha \le \frac{2d}{d-s}$.
These assumptions are meant to include the case of the semi-circle law  $\frac{1}{2\pi} \sqrt{4-x^2}\indic_{|x|<2}$ arising for the quadratic potential in the setting \eqref{wlog}. We also know that in the Coulomb cases, a quadratic potential gives rise to an equilibrium  measure  which is a multiple of a characteristic function of a ball, also covered by our assumptions with $\alpha=0$. Finally, in the Riesz case, it was noticed in \cite[Corollary 1.4]{CGZ}  that any compactly supported radial profile can be obtained as the equilibrium measure associated to some potential. Our assumptions are thus never empty.

We also  note that the problem of minimizing $\I$ can be recast as a fractional obstacle problem \cite{sil,cs,crs}, thus the regularity of $h^{\muv}$ and the free-boundary $\partial \Sigma$ as a function of $V$ are known.


\subsection{The extension representation for the fractional Laplacian}
In what follows, $k$ will denote the dimension extension. We will take $k=0 $ in all the Coulomb cases, i.e. $s=d-2$ and $d \ge 3$ or \eqref{wlog2d}. In all other cases, we will need  to take $k=1$.
 Points in the space $\mr^d$ will be denoted by $x$, and points in the extended space $\mr^{d+k}$ by $X$, with $X=(x,y)$, $x\in \mr^d$, $y\in \mr^k$. We will often identify $\R^d \times \{0\}$ and $\R^d$.
By balls $B_R$ or $B(X,R)$  we will mean balls in the space $\mr^{d+k}$, unless otherwise specified.

The extension representation  of \cite{cs} relies on the remark that if  $d-2+k+\gamma>1$, a function of the form $C/|x|^{d-2+k+\gamma}$ appears as the $y=0$ restriction of the fundamental solution
\[
 G_\gamma(x,y)=\frac{C_{\gamma+d+k}}{\left(|x|^2+|y|^2\right)^{\frac{d-2+k+\gamma}{2}}}
\]
of the operator $\div(\yg\nabla u)$ on $\mathbb R^d\times\mathbb R^k$. Here $C_q=\pi^{q/2}\Gamma(q/2-1)/4$. What we mean by fundamental solution is that it solves
\[
\left\{
 \begin{array}{l}
\div(\yg\nabla u)=0\text{ on }\mathbb R^d\times(\mathbb R^k\setminus\{0\})\ ,\\
-\lim_{|y|\downarrow 0}|y|^{\gamma+k-1}\p_y u(\cdot,y)=\delta_0\text{ on }\mathbb R^d\times\{0\}\, ,
 \end{array}
\right.
\]
where $\delta_0$ denotes the Dirac mass at the origin. 
It follows that if $\gamma$ is chosen such that 
\begin{equation}\label{gs}
d-2+k+ \gamma  =s\, , \end{equation}
then, given a measure $\mu $ on $\R^d$, the potential $h^\mu(x)$ generated by $\mu$ defined in $\R^d$ by 
\begin{equation}\label{hmu}
h^\mu(x)= \g* \mu(x)= \int_{\R^d} \frac{1}{|x-x'|^s} \, d\mu(x')\end{equation}
can be extended to a function $h^\mu(X)$ on $\R^{d+k}$ defined by 
\begin{equation}\label{hmu2}
h^\mu(X) = \int_{\R^d} \frac{1}{|X- (x',0)|^s } \, d\mu(x') \end{equation}
and this function satisfies 
\begin{equation}
\label{divh}
- \div (\yg \nab h^\mu)=  c_{d,s} {\mu}\delta_{\mr^d}\end{equation}
where by $\drd $ we mean the uniform  measure on $\mr^d\times \{0\}$ i.e.  $\mu \drd$ 
acts on test functions $\vp$ by 
$$\int_{\R^{d+k}} \vp (X)d (\mu \drd)(X) = \int_{\R^d} \vp(x, 0) \, d\mu(x),$$
 and 
 \begin{equation}
 c_{d,s}= \left\{\begin{array}{ll}2s\,\frac{2\pi^{\frac{d}{2}}\Gamma\left(\frac{s+2-d}{2}\right)}{\Gamma\left(\frac{s+2}{2}\right)}&\text{ for }s>\max(0,d-2)\ ,\\[3mm]
                      (d-2)\frac{2\pi^{\frac{d}{2}}}{\Gamma(d/2)}&\text{ for }s=d-2>0\ ,\\[3mm]
                      2\pi&\text{ in cases \eqref{wlog}, \eqref{wlog2d}}\ .
                     \end{array}
\right.
 \end{equation}
 In particular $\g(X)= |X|^{-s}$ seen as a function of $\R^{d+k}$ satisfies
 \begin{equation}\label{eqg}
 - \div (\yg \nab \g)= \c \delta_0.
 \end{equation}
 In order to recover the Coulomb cases, it suffices to take $k=\gamma=0$. If $s>d-2$ we take $k=1$ and 
$\gamma$ satisfying \eqref{gs}. In the case \eqref{wlog}, we note that $\g(x)=- \log |x|$ appears as the $y=0$ restriction 
of $-\log |X|$, which is (up to a factor $2\pi$) the fundamental  to the Laplacian operator in dimension $d+k=2$. In this case, we may thus choose $k=1$ and $\gamma=0$, $c_{d,s}=c_{1,0}=2\pi$, and the potential $h^\mu=\g *\mu$ still satisfies \eqref{divh}, while $\g$ still satisfies \eqref{eqg}. This is the procedure that was used in \cite{ss1d}, and we see that this case naturally embeds into the Riesz setting we are studying. 

To summarize, we will take 
\begin{eqnarray}
\label{casriesz} 
&\text{in the case} \ \max(0, d-2) < s<d, & \text{ then } \  k=1, \ \gamma=s-d+2-k\, , 
\\
\label{cas1d}
&\text{in the case } \eqref{wlog}, & \text{ then } \ k=1, \ \gamma=0\, , \\
\label{cas2d}
&\text{in the case } \eqref{wlog2d} \text{ or } d\ge 3, s=d-2, & \text{ then} \ k=0, \ \gamma=0\, .
\end{eqnarray}
We note that the formula \eqref{gs}
  always remains formally true when taking the convention that $s=0$ in the case $\g(x)=-\log |x|$, and we also note that the  assumption $d-2\le s<d$  implies that in all cases $\gamma \in  (-1,1).$

Through this extension procedure, we will be led to studying equations of the form \eqref{divh}. These are degenerate elliptic equations, however they are associated to weights which are in the Muckenhoupt class $A_2$ for which there is a good elliptic theory \cite{fks}.
\subsection{Definition of $\W$}
Before defining our renormalized energy $\W$, we define  the  truncated Riesz (or logarithmic) kernel as follows:
for  $1>\eta>0$ and  $X\in \R^{d+k}$, let 
\begin{equation}
\label{feta} f_\eta(X)= \left(\g(X)- \g(\eta)\right)_+.
\end{equation}
We note that the function  $f_\eta$ vanishes outside of $B(0, \eta)$ and satisfies that
 \begin{equation}\label{defde}
\delta_0^{(\eta)}:= \frac{1}{\c}\div (\yg \nab f_\eta)+ \delta_0\end{equation}
 is a positive  measure supported on $\p B(0, \eta)$, and which is such that  for any test-function $\varphi$, 
$$\int \vp \delta_0^{(\eta)}=\frac{1}{\c} \int_{\p B(0, \eta)}  \vp (X)\yg \g'(\eta) .$$
One can thus check that $\delta_0^{(\eta)}$ is a positive measure of mass $1$, and we may write 
\begin{equation}\label{divf}
-\div (\yg \nab f_\eta) = \c ( \delta_0 - \delta_0^{(\eta)})\quad \text{in} \ \R^{d+k}.\end{equation}
We will also  denote by $\delta_p^{(\eta)}$ the measure $\delta_0^{(\eta)} (X-p)$, for $p \in \R^d\times \{0\}$.
Again, we note that this includes the cases \eqref{cas1d}--\eqref{cas2d}. In the Coulomb cases, i.e. when $k=0$, then $\delta_0^{(\eta)}$ is simply the normalized surface measure on $\p B(0, \eta)$, it is thus a particular case of the radially symmetric  smearing out performed in \cite{rs}, and thus the renormalized energy we will define next is the same as in \cite{rs}.

The renormalized energy of an infinite configuration of points is defined via the gradient of the  potential generated by the point configuration,   embedded into the extended space $\R^{d+k}$. That gradient is a vector field that we denote $E$ (like electric field, by analogy with the Coulomb case). In view of the discussion of the previous subsection, it is no surprise that $E$ will solve a  relation of the form 
\begin{equation}
\label{eqe} -\div (\yg \E) = c_{d,s} \Big(\sum_{p \in \Lambda} N_p \delta_p - m(x)\drd\Big) \quad \text{in} \ \R^{d+k}.\end{equation}
where $\Lambda $ is some discrete set in $\R^d \times \{0\}$ (identified with $\R^d$), $N_p$ are positive integers,  and $m(x)$ is to be specified.  
For any such $E$ (defined over $\R^{d+k}$ or over subsets of it), we define 
\begin{equation}\label{defeeta}
\E_\eta :=  \E- \sum_{p \in \Lambda } N_p \nab f_\eta (x-p).\end{equation}
If $\E$ happens to be the gradient of a function $h$, then we will also denote 
\begin{equation}
\label{defheta}
h_\eta:= h- \sum_{p \in \Lambda } N_p f_\eta (x-p).\end{equation} 
We will write $\Phi_\eta$ for the map that sends $E$ to $E_\eta$, and note that it is a bijection from the set of vector fields satisfying a relation of the form  \eqref{eqe} to those satisfying a relation of the form 
\begin{equation}
\label{eqeeta} -\div (\yg \E_\eta) = c_{d,s} \Big(\sum_{p \in \Lambda}N_p \delta_p ^{(\eta)}- m(x)\drd\Big) \quad \text{in} \ \R^{d+k}.\end{equation}

\begin{remark}\label{rmktrunc}
If $h=\g* (\sum_{i=1}^n\delta_{x_i}- m(x)\drd)$ then the 
 transformation from $h$ to $h_\eta$  amounts to truncating the kernel $\g$, but only for the Dirac part of the r.h.s. Indeed, letting $\g_\eta(x)=\min (\g(x), \g(\eta))$ be the truncated kernel, we have 
$$h_\eta= \g_\eta* (\sum_{i=1}^n\delta_{x_i}) - \g* (m\drd).$$
\end{remark}

\begin{defi}[Admissible vector fields]\label{defbam}Given a number $m\ge 0$, we define the class $\mathcal{A}_m$ to be the class of gradient vector fields $E=\nab h$ that satisfy
\begin{equation}\label{eqclam}
-\div (\yg \nab h) = \c \Big( \sum_{p \in \Lambda} N_p \delta_p-m\drd \Big)
\quad \text in\ \R^{d+k}\end{equation}
where $\Lambda $ is a discrete set of points in $\R^d\times \{0\}$ and $N_p$ are integers in $\mathbb{N}^*$.
\end{defi}This class corresponds to vector fields that will be limits of blown-ups of those generated by the original configuration $(x_1, \dots, x_n)$, after blow-up at the scale $n^{1/d}$ near the point $x$ where $m=\muv(x)$ and can be understood as the local density of points.  We note that since such vector fields   blow up exactly in $1/|X|^{s+1}$ near each $p \in \Lambda$  (with the convention $s=0$ for the cases \eqref{wlog}--\eqref{wlog2d}),  such vector fields naturally belong to the space $L^p_{\loc}(\R^{d+k}, \R^{d+k})$ for $p<\frac{d+k}{s+1}$. In fact, to ensure stability in that class, we will need to take $p<\min(2,\frac{2}{\gamma+1}, \frac{d+k}{s+1})$.

We are now in a position to define the renormalized energy. 
In the definition, we let $K_R$ denote the hypercubes $[-R/2,R/2]^d.$

\begin{defi}[Renormalized energy]
For $\nab h\in \mathcal{A}_m$ and $0<\eta<1$,  we define 
\begin{equation}
\label{Weta}
\W_\eta(\nab h)=\limsup_{R\to \infty} \left(\frac{1}{R^d}\int_{K_R\times \R^k}\yg |\nab h_\eta|^2 - m c_{d,s}\g(\eta)\right)
\end{equation}
and 
\begin{equation}\label{defW}
\W(\nab h) = \lim_{\eta\to 0} \W_\eta(\nab h).\end{equation}
\end{defi}
This is a generalization of the renormalized energy defined in \cite{rs}. As in \cite{rs} it differs from the one defined in \cite{gl13,ss1d} for the one and two-dimensional logarithmic interaction, essentially in the fact that the order of the limits $R\to \infty $ and $\eta\to 0$ is reversed.  We refer to \cite{rs} for a further discussion of the comparison between the two.

By scaling, we may always reduce to studying the class $\ainfty$, indeed, if $E\in \mathcal A_m$, then $\hat{E} = m^{-\frac{s+1}{d}}   E(\cdot m^{-1/d})\in \ainfty$ \footnote{with the convention $s=0$ in the case \eqref{wlog}} and 
\begin{equation}\label{scalingW}
\W_\eta(E)=m^{1+s/d} \W_{\eta m^{1/d}}  (\hat{E})  \qquad \W(E)= m^{1+s/d} \W(\hat{E}).  
\end{equation} in the case \eqref{kernel}, and 
respectively
\begin{equation}\label{scalinglog}
\W_\eta(E)= m\left(\W_{m\eta} (\hat{E})-\frac{2\pi}{d} \log m\right)
\qquad
 \W(E)= m\left(\W(\hat{E})-\frac{2\pi}{d} \log m\right)
\end{equation}
in the cases \eqref{wlog}--\eqref{wlog2d}.

We denote
\begin{equation}\label{xi}
\xi_{d,s}=\frac{1}{c_{d,s}}\min_{\ainfty} \W.\end{equation}
The name renormalized energy (originating in Bethuel-Brezis-H\'elein \cite{bbh} in the context of two-dimensional Ginzburg-Landau vortices) reflects the fact that $\int \yg |\nab h|^2 $ which is infinite, is computed in renormalized way by first changing $h$ into $h_\eta$ and then removing the appropriate divergent part $\c\g(\eta)$ per point.

We will prove the following facts about $\W$.
\begin{pro}[Minimization of $\W$]\label{prow}
\begin{enumerate}
\item 
The limit in \eqref{defW} exists. 
\item 
$\{\W_\eta\}_{\eta<1}$ are uniformly bounded below on $\ainfty$ by a finite constant depending only on $s $ and $d$.
\item
 $\W$ and  $\W_\eta$ have a minimizer over the class $ \ainfty$. 
 \item There exists a minimizing sequence for $\W $ (resp. $\W_\eta$) formed of periodic configurations (in $\R^d$) with period $N \to \infty$.\end{enumerate}
\end{pro}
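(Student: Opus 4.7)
\textbf{Proof plan for Proposition \ref{prow}.}

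For \emph{parts $(1)$ and $(2)$}, I derive an exact identity relating $\W_\eta$ and $\W_{\eta'}$ for $0<\eta<\eta'<1$. Write $h_\eta=h_{\eta'}+\phi$ with $\phi:=\sum_{p\in\Lambda}N_p(f_{\eta'}-f_\eta)(X-p)$, a nonnegative function supported in $\bigcup_p B(p,\eta')$. Expanding $|\nab h_\eta|^2=|\nab h_{\eta'}|^2+2\nab h_{\eta'}\cdot\nab\phi+|\nab\phi|^2$ and integrating against $\yg$ on $K_R\times\R^k$: integration by parts via \eqref{eqeeta} for $h_{\eta'}$ and \eqref{divf} for $f_{\eta'}-f_\eta$, together with the observation that $\phi$ vanishes on $\partial B(p,\eta')$ in the disjoint-balls regime (overlaps add nonnegative interaction corrections), reduces the cross term to $-2c_{d,s}\int_{\R^d}\phi(x,0)\,dx$ and the quadratic term to $c_{d,s}(\g(\eta)-\g(\eta'))\sum_{p\in K_R}N_p^2$. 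Dividing by $R^d$, taking $\limsup$, and subtracting $c_{d,s}(\g(\eta)-\g(\eta'))$ (the difference of the normalizations in \eqref{Weta}), I obtain a formula of the shape $\W_\eta-\W_{\eta'}=-2c_{d,s}\,I(\eta,\eta')+c_{d,s}(\g(\eta)-\g(\eta'))(\overline{N^2}-1)+(\text{nonneg. overlap})$, where $I(\eta,\eta'):=\int_{B(0,\eta')\cap\R^d}(f_{\eta'}-f_\eta)(x,0)\,dx$ and $\overline{N^2}:=\limsup_R R^{-d}\sum_{p\in K_R}N_p^2\ge 1$ (by $N_p\in\mathbb N^*$ and the admissibility $m=1$ in $\ainfty$). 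Sending $\eta\to 0$ with $\eta'$ fixed, the right-hand side converges in $[-\mathrm{const},+\infty]$, proving $(1)$. For $(2)$, the trivial bound $\int\yg|\nab h_{\eta_0}|^2\ge 0$ gives $\W_{\eta_0}\ge -c_{d,s}\g(\eta_0)$ at any fixed $\eta_0<1$; inserting this into the identity yields $\W_\eta\ge -c_{d,s}\g(\eta_0)-2c_{d,s}I(0,\eta_0)$, a finite constant depending only on $d,s$.

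For \emph{part $(3)$}, the direct method applies. Take a minimizing sequence $\{\nab h^{(n)}\}\subset\ainfty$ for $\W_\eta$; by $(2)$ and the weighted elliptic theory for $\div(\yg\nab\,\cdot\,)$ (the weight $\yg$ belongs to the Muckenhoupt class $A_2$ since $\gamma\in(-1,1)$, cf.\ \cite{fks}), the sequence is locally equibounded in $L^p_{\loc}(\R^{d+k},\R^{d+k})$ for $p<\min(2,2/(\gamma+1),(d+k)/(s+1))$. Extract a subsequential weak limit $\nab h^\infty$; the relation \eqref{eqclam} passes to the distributional limit, and the lower bound on $\W_\eta$ prevents any merger of points (which would force the local $\sum N_p^2\g(\eta)$ contribution to blow up, contradicting the energy bound). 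Weak lower semicontinuity of the coercive quadratic form $\int\yg|\,\cdot\,|^2$ then produces a minimizer of $\W_\eta$ for each $\eta<1$. A diagonal extraction as $\eta\to 0$, combined with the convergence established in $(1)$, yields a minimizer of $\W$.

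For \emph{part $(4)$}, given a near-minimizer $\nab h\in\ainfty$, I approximate by $N$-periodic configurations. The recipe is to restrict $\nab h$ to a cube $K_N$, modify the configuration and the field in a boundary layer of width $o(N)$ around $\partial K_N$ so that the normal flux of $\yg\nab h$ across $\partial K_N$ vanishes and neutrality $\sum_{p\in K_N}N_p=N^d$ is achieved, and then tile $\R^d$ by $N$-translates. This is the ``screening'' construction, carried out in the extended space $\R^{d+k}$ with the degenerate weight $\yg$; its energy cost is of subleading order $o(N^d)$, so $\W$ of the periodized field converges to $\W(\nab h)$ as $N\to\infty$. The \emph{main obstacle} is precisely this screening construction: adapting the arguments of \cite{ss2d,ss1d,rs} to the inhomogeneous extended setting is substantially more delicate than the Coulomb case and occupies the bulk of the paper. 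By contrast, parts $(1)$--$(3)$ are comparatively routine once the truncation identity of the first paragraph is in hand.
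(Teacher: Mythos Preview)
Your treatment of parts~(1), (2), and~(4) is essentially in line with the paper. The monotonicity identity you describe is the content of Lemma~\ref{prodecr} and Proposition~\ref{Wbb}; the paper is more careful than you are about the boundary terms arising from integrating by parts over $K_R\times\R^k$ rather than all of $\R^{d+k}$ (handled there via a cutoff $\chi_R$ and the discrepancy estimate of Lemma~\ref{lemdiscrepance}), but the core computation is the same. For part~(4) you correctly identify screening as the mechanism and as the main technical obstacle; this is exactly what the paper does at the start of Section~\ref{sec6}, combining Proposition~\ref{screening} with Lemma~\ref{ss1d}.

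Part~(3), however, has a genuine gap. The direct method as you sketch it does not go through: $\W_\eta$ is defined as a $\limsup_{R\to\infty}$ of spatial averages, and such functionals are \emph{not} lower semicontinuous under weak $L^p_{\loc}$ convergence. Concretely, weak convergence $\nab h^{(n)}_\eta\rightharpoonup\nab h^\infty_\eta$ in $L^2_{\yg,\loc}$ gives, for each fixed $R$,
\[
\int_{K_R\times\R^k}\yg|\nab h^\infty_\eta|^2\le\liminf_n\int_{K_R\times\R^k}\yg|\nab h^{(n)}_\eta|^2,
\]
but this does \emph{not} imply
\[
\limsup_{R\to\infty} R^{-d}\int_{K_R\times\R^k}\yg|\nab h^\infty_\eta|^2\le\liminf_n\,\limsup_{R\to\infty} R^{-d}\int_{K_R\times\R^k}\yg|\nab h^{(n)}_\eta|^2,
\]
since one cannot in general interchange $\limsup_R$ with $\liminf_n$. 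A minimizing sequence could have its energy concentrated on regions escaping to infinity, so that the weak limit has strictly smaller average energy or even fails to lie in $\ainfty$. The paper does not spell out the argument but refers to \cite[Appendix]{rs}, where existence is obtained through the translation-invariant probability-measure (ergodic) framework of the kind used in Section~\ref{sec4}. Alternatively, one can deduce part~(3) \emph{from} part~(4): for a periodic near-minimizer the $\limsup_R$ is an honest limit, and extracting a subsequential limit of the periodic near-minimizers as the period tends to infinity avoids the lower-semicontinuity obstruction.
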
 
We can also  note that $\W$ does not feel compact perturbations of the points in $\Lambda$.
As already mentioned the questions of identifying $\min_{\mathcal A_1} \W$ is open, and we expect  some (Bravais) lattice configuration to achieve the minimum. In view of Proposition \ref{prow}, to identify the value of the minimum it would suffice to compute $\min \W$ over periodic configurations with larger and larger period, for which we have an explicit formula:

\begin{pro}[Periodic case]\label{periodic}
 Let $a_1,\ldots,a_N$ be $N$ points, possibly with repetition, in a torus $\mathbb T$ of volume $N$ in $\mathbb R^d$, for $d\ge 1$. On $\mathbb R^d$ consider the configuration corresponding to the $\mathbb T$-periodic repetition of the configuration $a_1,\ldots,a_N$. Then the following hold:
 \begin{enumerate}\item If the points $a_1,\ldots, a_N$ are not distinct then for all $E$ such that $$-\div(\yg E)=c_{d,s}\left(\sum_{i=1}^N \delta_{a_i} - \drd \right)\quad \text{in}\ \mathbb T \times \R^k $$ (we call such $E$ \emph{compatible} with the points) there holds $\W(E)=+\infty$.   \item If all points are distinct then let $H$ be the  function satisfying
  \[
   -\div\left(\yg\nab H\right)=c_{d,s}\left(\sum_{i=1}^N \delta_{a_i} -\drd \right)\  \text{in} \ \mathbb{T} \times \R^k,\quad\int_{\mathbb T\times \{0\}}H=0\ .
  \]
 Then for any vector field $E=\nab h$ compatible with $a_1,\ldots,a_N$ there holds
  \[
   \W(E)\ge\W(\nab H)\ ,
  \]
 with equality precisely when $E=\nab H$. Moreover there holds
  \begin{equation}\label{minper}
   \W(\nab H) =\frac{c_{d,s}^2}{N}\sum_{i\neq j}G(a_i-a_j) + c_{d,s}^2\lim_{x\to 0}\left(G(x) - \frac{\g(x)}{c_{s,d}}\right)\ ,
  \end{equation}
 where $G$ satisfies 
  \begin{equation}\label{defG}
   (-\Delta)^\alpha G=\delta_0 - \frac{1}{|\mathbb T|} \quad \text{in } \ \mathbb T\, \quad \int_{\mathbb T} G=0\ .
  \end{equation}
  where $\alpha= \hal(2-\gamma-k)$.
   \item In case $d=1$ we have 
   \begin{equation}
    \label{G1d}
    G(x) = 2\frac{N^{2\alpha-1}}{(2\pi)^{2\alpha}\Gamma(2\alpha)}\int_0^\infty \frac{t^{2\alpha-1} \(e^t \cos \(\frac{2\pi}{N}x\)  - 1\) }{1-2e^t \cos \(\frac{2\pi}{N}x\) + e^{2t}} \, dt\ .
   \end{equation}

 \end{enumerate}

\end{pro}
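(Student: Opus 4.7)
The plan is to handle the three parts sequentially: Part (1) by a Green-identity computation exploiting the strict inequality $\sum_p N_p^2 > \sum_p N_p$ whenever some $N_p\ge 2$; Part (2) by an orthogonal decomposition of competitors against $\nab H$ combined with an explicit computation of $\W(\nab H)$; Part (3) by Fourier inversion.

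For Part (1), fix $R=ML$ with $L=N^{1/d}$ the torus side and $M$ a positive integer, so that $K_R$ contains exactly $M^d$ copies of $\mathbb T$. Integration by parts on $K_R\times\R^k$ together with $-\div(\yg\nab h_\eta) = c_{d,s}(\sum_p N_p\delta_p^{(\eta)}-\drd)$ yields
\begin{equation*}
\int_{K_R\times\R^k}\yg|\nab h_\eta|^2 \;=\; c_{d,s}\!\!\sum_{p\in\Lambda\cap K_R}\!\!N_p\!\int\!h_\eta\,d\delta_p^{(\eta)} \;-\; c_{d,s}\!\int_{K_R}\!h_\eta\,dx \;+\; (\text{boundary}).
\end{equation*}
Near a multiple point $p$ one has $h_\eta = N_p\,\g(\eta) + O(1)$, since the only singular contribution not truncated away is $N_p(\g - f_\eta) = N_p\,\g(\eta)$ inside $B(p,\eta)$; hence $\int h_\eta\,d\delta_p^{(\eta)} = N_p\,\g(\eta) + O(1)$ by surface averaging, and the dominant term is $c_{d,s}\,\g(\eta)\sum_p N_p^2 + O(R^d)$. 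Dividing by $R^d=M^d N$ and subtracting the renormalization $c_{d,s}\,\g(\eta)$ leaves the excess $\frac{c_{d,s}\,\g(\eta)}{N}\sum_p N_p(N_p-1) + O(1)$, which blows up as $\eta\to 0$ as soon as some $N_p\ge 2$.

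For Part (2), write $h = H + U$; since $h$ and $H$ share the same Dirac sources, $U$ satisfies $\div(\yg\nab U) = 0$ on all of $\R^{d+k}$. Expanding $|\nab h_\eta|^2 = |\nab H_\eta|^2 + |\nab U|^2 + 2\,\nab H_\eta\cdot\nab U$ and integrating by parts with $-\div(\yg\nab H_\eta)=c_{d,s}(\sum_j\delta_{a_j}^{(\eta)}-\drd)$, the mean-value identity $\int U\,d\delta_{a_j}^{(\eta)}=U(a_j)$ (valid since $U$ is weighted-harmonic near each $a_j$) reduces the cross-term to $2c_{d,s}[\sum_{a_j\in K_R}U(a_j) - \int_{K_R}U\,dx]$ plus boundary contributions. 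A Liouville argument for the Muckenhoupt-weighted operator $\div(\yg\nab\cdot)$ shows that any nontrivial such $U$ either forces $\int_{K_R\times\R^k}\yg|\nab U|^2$ to grow faster than $R^d$ (hence $\W(E)=+\infty$) or is constant; in the latter case the cross-term is negligible and $\W(E)=\W(\nab H)$. To evaluate $\W(\nab H)$ itself, apply the same integration by parts on a single period: the normalization $\int_\mathbb T H=0$ gives $\frac{1}{N}\int_{\mathbb T\times\R^k}\yg|\nab H_\eta|^2 = \frac{c_{d,s}}{N}\sum_i\int H_\eta\,d\delta_{a_i}^{(\eta)} + o_\eta(1)$. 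Writing $H(X)=\g(X-(a_i,0))+H_i^{\mathrm{reg}}(X)$ with $H_i^{\mathrm{reg}}$ weighted-harmonic near $a_i$, the mean-value property yields $\int H_\eta\,d\delta_{a_i}^{(\eta)} = \g(\eta) + H_i^{\mathrm{reg}}(a_i) + o(1)$. The Caffarelli--Silvestre extension identifies $H(x,0) = c_{d,s}\sum_j G(x-a_j)$ with $G$ as in \eqref{defG} (both have zero mean on $\mathbb T$ and satisfy the same distributional equation), so $H_i^{\mathrm{reg}}(a_i) = c_{d,s}\lim_{x\to 0}\bigl(G(x)-\g(x)/c_{d,s}\bigr) + c_{d,s}\sum_{j\ne i}G(a_i-a_j)$; subtracting $c_{d,s}\,\g(\eta)$ and letting $\eta\to 0$ yields \eqref{minper}.

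For Part (3), expand $G$ in Fourier series on $\mathbb T=[-N/2,N/2]$: the coefficient of $e^{2\pi inx/N}$ is $N^{2\alpha-1}/(2\pi|n|)^{2\alpha}$ for $n\ne 0$, giving $G(x)=\frac{2N^{2\alpha-1}}{(2\pi)^{2\alpha}}\sum_{n\ge 1}\cos(2\pi nx/N)/n^{2\alpha}$. Using $n^{-2\alpha}=\Gamma(2\alpha)^{-1}\int_0^\infty t^{2\alpha-1}e^{-nt}\,dt$, interchanging sum and integral, and summing the geometric series $\sum_{n\ge 1}e^{-nt}\cos(n\theta) = (e^t\cos\theta-1)/(e^{2t}-2e^t\cos\theta+1)$ with $\theta=2\pi x/N$ produces the representation \eqref{G1d}. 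The main technical obstacle throughout is the rigidity step in Part (2): ruling out nontrivial finite-$\W$ perturbations $\nab U$ requires a Liouville-type statement for $\div(\yg\nab\cdot)$ and careful control of the boundary integrals on $\partial K_R\times\R^k$ as $R\to\infty$.
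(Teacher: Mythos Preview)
Your treatment of Parts (1) and (3) is correct and matches the paper's argument closely.

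The gap is in Part (2), in the minimality step. You write $h=H+U$ and then appeal to a Liouville-type result for $\div(\yg\nab\cdot)$ on all of $\R^{d+k}$, flagging this as ``the main technical obstacle''. But you have overlooked that by hypothesis $E$ is a vector field on $\mathbb T\times\R^k$: ``compatible'' is defined in item (1) as solving the equation on the torus, so $\nab h$, $\nab H$, and hence $\nab U$ are all $\mathbb T$-periodic. This makes the Liouville detour unnecessary and the argument one line. Integrate the cross term by parts the \emph{other} way, moving the derivatives onto $U$ rather than onto $H_\eta$:
\[
\int_{\mathbb T\times\R^k}\yg\,\nab H_\eta\cdot\nab U
= -\int_{\mathbb T\times\R^k} H_\eta\,\div(\yg\nab U)
= 0,
\]
the boundary terms vanishing by periodicity in $x$ and decay in $y$. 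Thus
\[
\int_{\mathbb T\times\R^k}\yg|\nab h_\eta|^2
= \int_{\mathbb T\times\R^k}\yg|\nab H_\eta|^2
+ \int_{\mathbb T\times\R^k}\yg|\nab U|^2,
\]
and the inequality $\W(E)\ge\W(\nab H)$ (with equality iff $\nab U=0$) is immediate. Your route via the mean-value identity $\int U\,d\delta_{a_j}^{(\eta)}=U(a_j)$ and the dichotomy ``$U$ constant or superlinear energy growth'' can be made to work once periodicity is used, but it is circuitous, and as written the Liouville claim is not justified: without periodicity there are many nontrivial weighted-harmonic $U$ with $O(R^d)$ energy growth.

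Your computation of $\W(\nab H)$ via $H=\g(\cdot-a_i)+H_i^{\mathrm{reg}}$ and the identification $H(x,0)=c_{d,s}\sum_j G(x-a_j)$ is correct and is exactly how the paper proceeds.
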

In the periodic case, $\W$ can thus be seen as a function of the points rather than of the vector fields $E$, via the formula \eqref{minper}.
We note that it would suffice to check that $G$ given by \eqref{G1d} is convex to apply the same proof as in \cite{ss1d} to obtain that the minimum of $\W$ in dimension 1 is achieved over the class $\mathcal A_1$  by the lattice $\mathbb{Z}$. We did not pursue this.

Moreover, specializing \eqref{minper} to the case of Bravais lattice configurations (i.e. $N=1$), we are able to prove the analogue of the result of \cite{gl13} in dimension 2:
\begin{theo}\label{abrikolat}
Assume $d=2$.  For any $0<s<2$, the minimum of $\W$ (seen as a function of the point configuration via \eqref{minper})  over lattices $\Lambda=\vec a \mathbb Z +\vec b\mathbb Z$ of volume $1$ (i.e. such that $\op{det}(\vec a, \vec b)=1$) is uniquely achieved (up to rotation) by  the triangular lattice, i.e the one for which $\vec{a}=(\frac{\sqrt{3}}{2})^{-1/2}(1,0)$ and $\vec{b}=
(\frac{\sqrt{3}}{2})^{-1/2}(\hal, \frac{\sqrt{3}}{2})$. 
 \end{theo}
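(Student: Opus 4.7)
The plan is to reduce the minimization of $\W$ over planar lattices of unit covolume to the classical minimization of the (analytically continued) Epstein zeta function of a $2$-dimensional lattice, for which the triangular lattice is known to be the unique minimizer up to rotation.

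I would first specialize Proposition \ref{periodic} to the case $N=1$ with $a_1=0$ and $\mathbb T=\mathbb R^{2}/\Lambda$, so that the off-diagonal sum in \eqref{minper} disappears and
\[
\W(\nab H_\Lambda)=c_{d,s}^{2}\lim_{x\to 0}\left(G_\Lambda(x)-\frac{\g(x)}{c_{d,s}}\right),
\]
where $G_\Lambda$ is the zero-mean solution on $\mathbb T$ of $(-\Delta)^{\alpha}G_\Lambda=\delta_0-1$ with $\alpha=1-s/2\in(0,1)$. The limit is finite because the Euclidean fundamental solution of $(-\Delta)^{\alpha}$ in $\mathbb R^{2}$ is exactly $\g(x)/c_{d,s}$. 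Next, I would expand $G_\Lambda$ in Fourier series over the dual lattice $\Lambda^{*}$ and insert the Mellin representation $|2\pi q|^{-2\alpha}=\Gamma(\alpha)^{-1}\int_0^{\infty}t^{\alpha-1}e^{-t|2\pi q|^{2}}\,dt$. Splitting the $t$-integral at $t=1$ and applying Poisson summation (Jacobi inversion) on the low-$t$ portion converts the dual-lattice sum into a rapidly convergent sum over $\Lambda$, which is the standard device for the analytic continuation of the Epstein zeta function. The subtracted singularity $\g(x)/c_{d,s}$ matches precisely the $p=0$ term of the Poisson-dual sum and cancels against it, and the remaining expression yields an identity of the form
\[
\W(\nab H_\Lambda)=c_{d,s}\,\zeta_\Lambda(s)+\kappa(s),
\]
where $\zeta_\Lambda(s)$ denotes the analytic continuation of $\sum_{p\in\Lambda\setminus\{0\}}|p|^{-s}$ to $s\in(0,2)$ and $\kappa(s)$ is a constant depending only on $s$.

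With this identification in hand, minimizing $\W$ among unit-covolume lattices is equivalent to minimizing $\zeta_\Lambda(s)$ among them. The latter is a classical theorem of Rankin, Cassels, Diananda and Ennola, and an alternative route is to invoke the stronger result of Montgomery asserting that for every $t>0$ the theta function $\theta_\Lambda(t)=\sum_{p\in\Lambda}e^{-\pi t|p|^{2}}$ is uniquely minimized among unit-covolume planar lattices by the triangular one; integrating this pointwise minimality against $t^{\alpha-1}$ through the Mellin representation used above transfers it to $\zeta_\Lambda(s)$ for all $s\in(0,2)$ with uniqueness preserved. The technical heart of the argument lies in the bookkeeping described above: verifying that the normalization of $(-\Delta)^{\alpha}$ and the constant $c_{d,s}$ conspire so that the singular term $\g/c_{d,s}$ subtracted in \eqref{minper} is exactly the $p=0$ Poisson-dual contribution, and hence the surviving lattice-dependent piece is proportional to $\zeta_\Lambda(s)$ with a $\Lambda$-independent residual $\kappa(s)$. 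Once that identification is carried out cleanly, Theorem \ref{abrikolat} reduces to quoting the classical minimization of the Epstein zeta function of planar lattices.
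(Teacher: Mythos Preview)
Your overall strategy is correct and lands in the same place as the paper: both arguments identify the lattice-dependent part of $\W(\Lambda)$ with (a constant multiple of) the analytically continued Epstein zeta function of a unit-covolume planar lattice, and then invoke the classical result of Rankin--Cassels--Diananda--Ennola (or Montgomery's theta-function inequality) to conclude that the triangular lattice is the unique minimizer.

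The technical implementation, however, differs. You propose to carry out the analytic continuation directly via the Mellin representation and Poisson summation (the Riemann--Chowla--Selberg splitting), checking by hand that the subtracted singularity $\g/c_{d,s}$ matches the $p=0$ Poisson-dual term so that only a $\Lambda$-independent $\kappa(s)$ survives. The paper instead replaces $\g$ by an auxiliary $\g_*$ with the same singularity but tame Fourier transform, defines $\W_*(\Lambda)$ differing from $\W(\Lambda)$ by a constant, and proves the identity $\W_*(\Lambda_\tau)-\W_*(\Lambda_{tri})=\frac{c_{d,s}^2}{(2\pi)^{2\alpha}}\bigl(Z_\tau(\alpha)-Z_{e^{i\pi/3}}(\alpha)\bigr)$ by first checking it trivially for $\op{Re}(\alpha)>1$ and then establishing analyticity of the left-hand side in $\alpha$ via a smoothed-sum (Euler--Maclaurin type) estimate. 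Working with differences lets the paper avoid all bookkeeping of constants such as your $\kappa(s)$, at the price of the smoothed-sum argument. Your route is more direct and classical, but does require the careful constant-matching you flag as the ``technical heart''; note in particular that what falls out of the Fourier expansion is a sum over $\Lambda^*$ at exponent $2\alpha=2-s$, so your claimed formula with $\zeta_\Lambda(s)$ either relies implicitly on the functional equation or should be stated as proportionality to $\zeta_{\Lambda^*}(2-s)$, which is equivalent for the purpose of minimization over unit-covolume lattices.
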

As in \cite{gl13} the proof consists in connecting the minimization to that of finding the minimimizer over lattices of volume 1 of the Epstein Zeta function of the lattice $\Lambda$,  $\zeta_\Lambda(s)= \sum_{p \in \Lambda\backslash \{0\}}\frac{1}{|p|^s}$, which was proved in \cite{cassels,dianada,rankin,ennolacassels,montgomery} to be the triangular lattice.

\subsection{Connection to the original problem and splitting formula}
We reproduce here the framework of \cite{ss2d,ss1d,rs}.
The renormalized energy will appear as a next order limit of $\w$ after a blow-up is performed, at the inverse of the typical nearest neighbor distance between the points, i.e. $n^{1/d}$. It is expressed in terms of the potential generated by the configuration $x_1, \dots, x_n$ and defined by 
\begin{equation}\label{defhn}
h_n(X)=  \g * \left(\sum_{i=1}^n \delta_{(x_i,0)} - n {\muv}\drd\right).\end{equation}
For the blown-up quantities  we will use the following notation (with the convention $s=0$ in the cases \eqref{wlog} or \eqref{wlog2d}):
\begin{align}
& x'= n^{1/d} x \quad X'=n^{1/d} X \quad
x_i'=n^{1/d} x_i
\\ &  \muv'(x')= \muv(x)\\
& h_n'(X')= n^{-\frac{s}{d}} h_n(X).\label{rescalh}
\end{align}
We note that in view of \eqref{defhn}, \eqref{hmu}, \eqref{divh},    $h_n$ and  $h_n'$ satisfy
\begin{equation}\label{hn}
-\div (\yg \nab h_n) = \c \Big( \sum_{i=1}^n \delta_{x_i} - n \muv \drd\Big) \quad \text{in } \ \R^{d+k}\, ,
\end{equation}
\begin{equation}\label{hnp}
-\div (\yg \nab h_n') = \c \Big( \sum_{i=1}^n \delta_{x_i'} - \muv' \drd\Big) \quad \text{in } \ \R^{d+k}\, ,  \end{equation}
while in view of   \eqref{divf}, $h_{n, \eta}$ and $h_{n, \eta}'$, defined from  $h_n$ and $h_n'$ via \eqref{defheta}, satisfy
\begin{equation}\label{hne}
-\div (\yg \nab h_{n, \eta}) = \c \Big( \sum_{i=1}^n \delta_{x_i}^{(\eta)} - n \muv \drd\Big) \quad \text{in } \ \R^{d+k}\, ,  \end{equation}
\begin{equation}\label{hnpe}
-\div (\yg \nab h_{n, \eta}') = \c \Big( \sum_{i=1}^n \delta_{x_i'}^{(\eta)} - \muv' \drd\Big) \quad \text{in } \ \R^{d+k}\, ,  \end{equation}
with the usual embedding of $\R^d $ into $\R^{d+k}$.
\\
The next proposition connects $\w$ with these quantities via an algebraic  identity. 
\begin{pro}[Splitting formula]\label{splitting}
For any $n$, any $x_1, \dots, x_n$ distinct points in $  \R^d\times\{0\}$, letting $h_n$ be as in \eqref{defhn} and  $\hne$ deduced from it via   \eqref{defheta}, we have in the case \eqref{kernel}
\begin{multline}\label{propsplit}
\w(x_1, \dots, x_n)  
= n^2 \I (\muv) +2 n\sum_{i=1}^n \zeta(x_i) \\
+  n^{1+\frac{s}{d}}    \lim_{\eta\to 0} \frac{1}{\c} \left( \frac{1}{n  }\int_{\R^{d+k}}\yg |\nab \hne'|^2 -\c  \g(\eta)\right),
\end{multline} respectively in the cases \eqref{wlog}--\eqref{wlog2d}
\begin{multline}
\w(x_1, \dots, x_n)
= n^2 \I (\muv) +2 n\sum_{i=1}^n \zeta(x_i) - \frac{n}{d}\log n\\
+ n    \lim_{\eta\to 0}\frac{1}{\c} \left( \frac{1}{n }\int_{\R^{d+k}}\yg |\nab \hne'|^2 - \c \g(\eta)\right).\end{multline}
\end{pro}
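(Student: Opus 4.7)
}
The plan is to derive an algebraic identity in the unscaled variables first, then rescale at the end. The starting point is the integration-by-parts identity
\[
\int_{\R^{d+k}} \yg |\nabla \hne|^2 = c_{d,s}\int_{\R^{d+k}} \hne \, d\!\left(\sum_{i=1}^n \delta_{x_i}^{(\eta)} - n\muv\drd\right),
\]
which comes from multiplying \eqref{hne} by $\hne$ and using the decay of $\hne$ at infinity (which is guaranteed because $\sum \delta_{x_i} - n\muv\drd$ has vanishing total mass, so $h_n$ decays like $|X|^{-s-1}$ and the weighted $L^2$ integrability holds in all our ranges of $s$ and $\gamma$). I first need to justify that $\nabla h_{n,\eta}$ belongs to the weighted $L^2$ space and that boundary terms at infinity drop out.

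Next I substitute the explicit formula $\hne(X) = \sum_i g_\eta(X - x_i) - n h^{\muv}(X)$ from Remark \ref{rmktrunc} into the right-hand side, and evaluate each piece using the following mean-value property: if $\phi$ satisfies $-\div(\yg \nabla\phi)=\rho$ in $B(x_j,\eta)$, then pairing \eqref{divf} (centred at $x_j$) with $\phi$ and integrating by parts gives
\[
\int \phi \, d\delta_{x_j}^{(\eta)} = \phi(x_j) - \frac{1}{c_{d,s}}\int f_\eta(X-x_j)\, d\rho(X),
\]
because $f_\eta$ vanishes on $\partial B(x_j,\eta)$. Applied to $\phi = g(\cdot - x_i)$ with $i\neq j$ and $|x_i-x_j|>\eta$ this gives $g(x_j-x_i)$ exactly, to $\phi = g_\eta(\cdot - x_j)$ it gives $g(\eta)$ (since $g_\eta \equiv g(\eta)$ on $\partial B(x_j,\eta)$), and to $\phi = h^{\muv}$ it gives $h^{\muv}(x_j) - \int_{\R^d} f_\eta(x-x_j,0)\,d\muv(x)$, an error term that is $O(\eta^{d-s})$ by the density bound on $\muv$ and that vanishes as $\eta\to 0$. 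A symmetric argument handles the integral of $\hne$ against $n\muv\drd$, using $g_\eta = g - f_\eta$ and the same $\eta\to 0$ vanishing. Summing and taking $\eta\to 0$, I will obtain the unscaled identity
\[
\lim_{\eta\to 0}\!\left(\frac{1}{c_{d,s}}\int \yg |\nabla\hne|^2 - n\, g(\eta)\right) = \sum_{i\neq j} g(x_i-x_j) - 2n\sum_i h^{\muv}(x_i) + n^2 \int h^{\muv}d\muv.
\]

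I then invoke the Frostman characterization from Theorem \ref{thfrostman} to rewrite the right-hand side in terms of $\zeta$: since $2h^{\muv}(x_i) + V(x_i) = 2c + 2\zeta(x_i)$ and $\int h^{\muv}d\muv = c - \tfrac12\int V d\muv$, the combination $2n\sum_i h^{\muv}(x_i) + n\sum_i V(x_i) - n^2\int h^{\muv}d\muv$ collapses to $n^2\I(\muv)+2n\sum_i\zeta(x_i)$. This yields the splitting in the unscaled variables; adding $n\sum_i V(x_i)$ back gives $\w$ on the left-hand side.

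Finally I perform the change of variables $X=n^{-1/d}X'$. Using \eqref{rescalh} and the chain rule, one checks that truncation at parameter $\eta$ in the primed variable corresponds to truncation at $\eta n^{-1/d}$ in the original variable, and a direct scaling computation (using the constraint $\gamma+d+k-2=s$) yields
\[
\int\yg|\nabla h_{n,\eta n^{-1/d}}|^2 \,dX = n^{s/d}\int\yg|\nabla \hne'|^2\,dX', \qquad g(\eta n^{-1/d}) = n^{s/d}g(\eta)
\]
in the Riesz case, so the bracket on the right acquires a factor $n^{1+s/d}$ and the stated formula follows. In the logarithmic cases \eqref{wlog}--\eqref{wlog2d} the weighted $L^2$ integral is scale invariant (since $s=0$) but $g(\eta n^{-1/d})=g(\eta)+\tfrac{1}{d}\log n$, which produces the additional $-\tfrac{n}{d}\log n$ term. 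The main technical obstacle I expect is the careful justification of the integration by parts for the degenerate $A_2$-weighted operator $\div(\yg\nabla\cdot)$ and of the interchange of the $\eta\to 0$ limit with the spatial integrals; this reduces to verifying decay of $\hne$ at infinity and the $O(\eta^{d-s})$ control on the correction term coming from the density of $\muv$.
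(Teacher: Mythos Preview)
Your proposal is correct and follows essentially the same route as the paper. The only differences are organizational: the paper first performs the algebraic decomposition of $H_n$ (via $\nu_n = (\nu_n - n\mu_V) + n\mu_V$) and then identifies the off-diagonal term $\iint_{\triangle^c} g\,d(\nu_n-n\mu_V)^{\otimes 2}$ with the renormalized energy, whereas you compute the renormalized energy first and recover $H_n$ afterward via the Frostman relations; and where the paper uses the shorthand $h_n^i(x)=h_n(x)-g(x-x_i)$ to pass to the limit $\eta\to 0$, you spell out the equivalent mean-value identity for $\delta^{(\eta)}$. The core ingredients (Green's formula on $B_R$ with vanishing boundary term from the zero-mass decay of $h_n$, the $O(\eta^{d-s})$ error from $\|\mu_V\|_{L^\infty}$, and the scaling computation using $\gamma+d+k-2=s$) are identical.
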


Recalling that $\zeta $ defined in \eqref{defzeta} is nonnegative and $0$ in $\Sigma$ (it acts like an effective potential whose only role is to confine the points to $\Sigma$), we see that this formula easily allows to get a next order lower bound for $\w$, and that there remains to take the $n\to \infty$ limit in the parenthesis in the right-hand side of \eqref{propsplit}, which will lead to $\W$. 

In order to make this rigorous, we need to introduce a $\W$ at the level of the ``electric field process", and a way of averaging $\W$ with respect to blow-up centers in $\Sigma.$

More precisely, given any configuration $\xbf_n = (x_1, \dots, x_n)$, we denote 
$\nu_n= \sum_{i=1}^n \delta_{x_i}$. 
 The  configuration generates (at the blown-up scale) an electric field 
$E_{\nu_n} $ given by $\nab h_n'(X') $ above.
As already mentioned, such electric fields naturally live in the spaces of vector fields 
$ L^p_\loc(\mr^{d+k},\mr^{d+k}) $
 for $p\in[1,  \min(2,\frac{2}{\gamma+1}, \frac{d+k}{s+1}))$. Choosing once and for all such a $p$, we define $\mathcal X:= \Sigma \times L^
 p_\loc(\mr^{d+k},\mr^{d+k}) $ the space of ``marked" electric fields, where the mark $x\in \Sigma$ corresponds to the point where we center the blow-up.   We denote by 
 $\P(\mathcal X)$  the space of probability measures on $\mathcal X$ endowed with the topology of weak convergence, which can be seen as the space of  random marked vector fields. 
 
We  may now naturally associate  to  each configuration $\xbf_n= (x_1, \dots, x_n)$ a ``marked electric field process" $P_{\nu_n}$ via the map 
\begin{eqnarray}
\label{in}
i_n : &  (\mr^d)^n \longrightarrow \P(\mathcal X) 
\\  \label{pnun} & \xbf_n \mapsto P_{\nu_n} :=  \dashint_{\Sigma} \delta_{(x,\E_{\nu_n}(n^{1/d} x+\cdot))}\,dx, 
\end{eqnarray}
i.e. $P_{\nu_n} $ is  the push-forward of the normalized Lebesgue measure on $\Sigma$ by $x \mapsto (x, \E_{\nu_n} ({n}^{1/d} x+\cdot)).$
Another way of saying is that each $P_{\nu_n} (x, \cdot)$ is equal to a Dirac at the electric field generated by $\xbf_n$, after centering at the point $x$. 

The nice feature is that, under an energy bound on the sequence  $(\xbf_n)_n$,  the sequence $\{P_{\nu_n}\}_n$ will be tight as $n\to \infty$, and thus converge to an element $P$ of  $\P (\mathcal X)$. In a probabilistic point of view, $P$  is  a marked  electric field process. In an analysis point of view, $P$ is similar to a Young measure on micropatterns formed by the configuration, as e.g. in \cite{albmu}. 

By construction, such a $P= \lim_{n\to \infty} P_{\nu_n}$ satisfies three properties, which are summarized in the following definition:
\begin{defi}[admissible probabilities]\label{admissible}
We say $P\in \P (\mathcal X)$ is \em{admissible} if 
\begin{itemize}
\item The first marginal of $P$ is the normalized Lebesgue measure on $\Sigma$.
\item It holds for $P$-a.e. $(x, E)$ that $E \in \mathcal{A}_{\muv(x)}$.
\item $P$ is $T_{\lambda(x)}$-invariant.
\end{itemize} 
\end{defi}
Here $T_{\lambda(x)}$-invariant is a strengthening of translation-invariance, related to the marking:
\begin{defi}[ $T_{\lambda(x)}$-invariance]\label{invariance} We say a probability measure $P$ on $\Sigma \times L^p_\loc(\mr^{d+k},\mr^{d+k})$ is $T_{\lambda(x)}$-invariant if $P$ is invariant  by $(x,E)\mapsto\left(x,E(\lambda(x)+\cdot)\right)$, for any $\lambda(x)$ of class $C^1$  from $\Sigma$ to $\mr^d$.
\end{defi}

Note that from such an admissible electric field process, using that $E$ solves  \eqref{eqclam} one can immediately get a (marked)  point process by taking the push-forward of  $P(x, E) $ by $E \mapsto \frac{1}{c_{s,d}} \div(\yg E )+ \muv(x) \drd$.

For each  $P\in \P(\mathcal X)$, we may then define
\begin{equation}
\label{Wtilde}
\widetilde{\W}(P) = \frac{|\Sigma|}{c_{d,s}} \int \W(E)\, dP(x, E)
\end{equation} if $P$ is admissible, and $+\infty$ otherwise.

With these definitions at hand, we may state our main result on 
 minimizers of $\w$ which improves Theorem \ref{th1}.
It  identifies the next order  $\Gamma$-limit (in the sense of $\Gamma$-convergence) of $\w$ and allows a description of  the minimizers at the microscopic level. Below we abuse notation by writing $\nu_n= \sum_{i=1}^n \delta_{x_i}$ when it should be $\nu_n= \sum_{i=1}^n \delta_{x_{i,n}}$.

\begin{theo}[Microscopic behavior of Riesz energy minimizers] \label{th2}  Let the potential $V$ satisfy assumptions \eqref{assv1}--\eqref{assv3} and \eqref{assumpsigma}--\eqref{assmu2}.
\smallskip
Let $(x_1, \dots, x_n) $ minimize $\w$ and  let $P_{\nu_n}$ be associated via 
\eqref{pnun}.
 Then any subsequence of $\{P_{\nu_n}\}_n$ has a convergent subsequence converging
as $n\to \infty$  to an admissible  probability measure $P\in\P(\mathcal X)$  and, in the case \eqref{kernel},
\begin{equation}\label{thlow}\lim_{n\to \infty} n^{  -1-s/d }\left( \w(x_1, \dots, x_n) - n^2 \En(\muv)\right) = \widetilde{\W} (P) = \min_{P \ \mathrm{admissible }} \widetilde{\W}=  \xi_{d,s}\int \muv^{1+s/d} ,\end{equation} 
respectively in the case \eqref{wlog}--\eqref{wlog2d}
\begin{multline}
\label{thlow1d}\lim_{n\to \infty} n^{  -1}\left( \w(x_1, \dots, x_n) - n^2 \En(\muv)+\frac{n}{2} \log n \right)
 \\= \widetilde{\W} (P) = \min_{P \ \mathrm{admissible }} \widetilde{\W}=  \xi_{d,0}-\frac{1}{d}\int \muv \log \muv ;\end{multline} 
and $P$ is a minimizer of $\widetilde{\W}$, and $E$ minimizes $\W$ over $ \mathcal{A}_{\muv(x)}$ for $P$-a.e. $(x,E)$.
\end{theo}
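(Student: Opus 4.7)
The plan is to start from the splitting formula of Proposition~\ref{splitting} so that the task reduces to obtaining matching lower and upper bounds on the ``electric'' term $\frac{1}{c_{d,s}}\bigl(\frac{1}{n}\int \yg |\nabla h_{n,\eta}'|^2 - c_{d,s}\,\g(\eta)\bigr)$ in terms of $\widetilde{\W}$, plus a scaling identification of $\min\widetilde{\W}$.

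For the lower bound, I would use that $\zeta\ge 0$ so the splitting formula immediately dominates the rescaled energy by the electric term. Uniform bounds on $\int_{K_R\times\R^k}\yg|\nabla h_{n,\eta}'|^2$ along energy-bounded sequences imply tightness of $\{P_{\nu_n}\}_n$ in $\P(\mathcal{X})$. Along a convergent subsequence $P_{\nu_n}\to P$, admissibility is verified directly: the first marginal is $dx/|\Sigma|$ by definition of $P_{\nu_n}$, the $T_{\lambda(x)}$-invariance emerges from the averaging in \eqref{pnun} (as in \cite{ss2d,rs}), and $E\in\mathcal{A}_{\mu_V(x)}$ $P$-a.e.\ by passing to the limit in \eqref{hnpe} using continuity of $\mu_V$. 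The core lower bound
\[
\liminf_{n\to\infty}\Bigl(\frac{1}{n}\int_{\R^{d+k}} \yg|\nabla h_{n,\eta}'|^2 - c_{d,s}\,\g(\eta)\Bigr)\ \ge\ \frac{c_{d,s}}{|\Sigma|}\int \W_\eta(E)\,dP(x,E)
\]
would then follow by rewriting the left side as an average over blow-up centers via Fubini, applying weak lower semicontinuity of $E\mapsto\int_{K_R\times\R^k}\yg|E_\eta|^2$ at fixed $R$, and finally letting $\eta\to 0$ using the monotonicity of $\W_\eta$ invoked in Proposition~\ref{prow}.

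The upper bound, which I expect to be the main obstacle, requires that for each admissible $P$ one construct a sequence $(x_1^n,\dots,x_n^n)$ with $P_{\nu_n}\to P$ and electric energy converging to $\frac{c_{d,s}}{|\Sigma|}\int\W(E)\,dP$. My plan is to sample, near each blow-up center $x\in\Sigma$, a typical microscopic configuration from $P$ at scale $n^{1/d}$ inside a large cube, then apply a \emph{screening procedure} in the extended space $\R^{d+k}$ to modify each sampled configuration near the lateral cube boundary so that its electric field acquires zero normal trace and so that the number of points matches the local background mass $n\int_K\mu_V$. Screened blocks can then be pasted together: interactions decouple into cube-wise contributions up to negligible boundary error, and the total electric energy becomes a Riemann sum for $\int_\Sigma\W(E(x,\cdot))\,dx$. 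The technical heart is carrying out this screening in the extended space with the inhomogeneous operator $\div(\yg\nabla\cdot)$ and its Muckenhoupt $A_2$-weighted estimates, while simultaneously adjusting both the number of points and the normal trace of the field — this is where the bulk of the work lies.

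Finally, to identify $\min\widetilde{\W}$, I use the scaling identity \eqref{scalingW}, which gives $\min_{\mathcal{A}_m}\W=c_{d,s}\,\xi_{d,s}\,m^{1+s/d}$. For any admissible $P$ this yields $\W(E)\ge c_{d,s}\xi_{d,s}\mu_V(x)^{1+s/d}$ $P$-a.e., with equality iff $E$ minimizes $\W$ on $\mathcal{A}_{\mu_V(x)}$. Integrating and using that the first marginal of $P$ is $dx/|\Sigma|$ yields $\widetilde{\W}(P)\ge \xi_{d,s}\int\mu_V^{1+s/d}$. Applying the lower and upper bounds to a sequence of minimizers of $\w$ forces all inequalities to equalities, which both identifies the minimum value and shows that for $P$-a.e.\ $(x,E)$ the field $E$ must minimize $\W$ on $\mathcal{A}_{\mu_V(x)}$. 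The logarithmic cases proceed identically using \eqref{scalinglog} in place of \eqref{scalingW}, which produces the extra entropy term $-\frac{1}{d}\int\mu_V\log\mu_V$ appearing in \eqref{thlow1d}.
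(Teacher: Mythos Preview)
Your proposal is essentially correct and follows the paper's overall architecture: splitting formula, lower bound by averaging over blow-up centers, upper bound by a screening construction, and identification of $\min\widetilde{\W}$ via the scaling relations \eqref{scalingW}--\eqref{scalinglog}. Two differences are worth noting.

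For the upper bound, the paper does \emph{not} build a recovery sequence for every admissible $P$. It is enough to exhibit one test configuration with next-order energy $\le \min\widetilde{\W}+\ep$; comparing with the lower bound applied to minimizers then forces all inequalities to be equalities. Concretely, Proposition~\ref{upperbound} takes a single periodic near-minimizer $E\in\mathcal A_1$ of $\W$ (furnished by Proposition~\ref{prow}, item~4, whose points are well separated by Proposition~\ref{pro53}), applies the screening Proposition~\ref{screening} to it, and pastes rescaled copies $\sigma_{m_K}\hat E$ into a tiling of $\Sigma'$ by hyperrectangles $K$ with $m_K=\dashint_K\mu_V'$, correcting for $m_K-\mu_V'$ via Lemma~\ref{lem57}. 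Your sampling-from-$P$ approach would give the full $\Gamma$-limsup, but it is both unnecessary here and technically heavier: one must select good representatives from the support of $P$ on which the screening hypotheses (finite $M_{R,\eta}$, controlled tail $e_{\ep,R}$) hold, which is not automatic for a.e.\ $(x,E)$.

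For the lower bound, your description ``Fubini plus weak lower semicontinuity at fixed $R$'' matches the averaging identity \eqref{crucialerg} and the compactness Lemma~\ref{rslem52}, but it omits the step that actually produces $\W_\eta$: after passing to the limit one has $\int\mathbf f_\eta\,dP_\eta$ with $\mathbf f_\eta$ a \emph{local} integral over a unit-size set, and it is Wiener's multiparameter ergodic theorem (applied as in \eqref{rg1}, using the $T_{\lambda(x)}$-invariance of $P$) that converts this into $\int\bigl(\lim_{R\to\infty}\dashint_{K_R}\mathbf f_\eta(x,\theta_\lambda\Y)\,d\lambda\bigr)dP=\int\W_\eta(E)\,dP(x,E)$. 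This is packaged in \cite[Theorem~7]{ss2d}, and is the mechanism by which the local estimates become the global renormalized energy; it is not a consequence of lower semicontinuity alone.
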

Thus, our result reduces the original question to the minimization of $\widetilde{\W}$ hence that of $\W$, which we already discussed.  If one believes in the conjecture that minimizers of $\W$ are Bravais lattices, then from the formal statement ``$E$ minimizes $\W$ over $ \mathcal{A}_{\muv(x)}$"  it   can  be expected that after zooming at the right scale around $x$, minimizing configurations look  like a cristal with the appropriate density  $\muv(x)$.

We conclude our introduction with the two additional results. The first one states that minimizers have points that are well-separated at the expected scale of their typical distance, i.e. $n^{-1/d}$. We provide a short proof of this result based solely on the extension representation and maximum principle arguments. This is to be compared with the analogous statements in \cite{bhs2} for points on the sphere, which rely on fine potential theory arguments.  

\begin{theo}[Point separation]\label{separation}
Assume  $V$ satisfies \eqref{assv1}--\eqref{assv3} and $\muv$ has a density which satisfies  $\|\muv\|_{L^\infty} \le \overline{m}$. Let $(x_1, \dots, x_n)$ minimize $H_n$. Then for each $i\in [1,n]$, $x_i \in \Sigma$, and for each $i\neq j$, it holds
$$|x_i-x_j| \ge \frac {r}{(n \overline{m}) ^{1/d}}, $$ where $r$ is some positive constant depending only on $s$ and $d$. \end{theo}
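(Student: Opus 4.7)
The plan is to exploit the one-variable minimality of $x_i$ in $\w$ (with the other coordinates held fixed), combined with the extension representation of $h_n$ and the maximum principle for the weighted elliptic operator $\div(\yg \nab \cdot)$. The Euler--Lagrange condition at $x_i$ states that $x_i$ minimizes $F_i(x) := 2\sum_{k \ne i}\g(x-x_k) + nV(x)$ on $\R^d$. Introducing the regular part $\phi_i(X) := h_n(X) - \g(X - (x_i, 0))$ on $\R^{d+k}$ and using Frostman's identity $V/2 = c - h^{\muv} + \zeta$, this rewrites as the master inequality
\[
\phi_i(x_i, 0) + n\zeta(x_i) \;\le\; \phi_i(\tilde x, 0) + n\zeta(\tilde x) \qquad \forall\, \tilde x \in \R^d.
\]
Both conclusions of the theorem will be drawn from this inequality together with maximum-principle arguments on the extended space.

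The conclusion $x_i \in \Sigma$ follows from the confining role of $\zeta$: since $\zeta \ge 0$ vanishes on $\Sigma$ and diverges at infinity by \eqref{assv3}, while $\phi_i(\cdot, 0)$ is bounded on $\Sigma$ (using $\|\muv\|_\infty \le \overline{m}$, $s < d$, and boundedness of $\Sigma$), testing the master inequality with $\tilde x \in \Sigma$ near the projection of $x_i$ onto $\Sigma$ and using continuity of $\phi_i$ away from the $x_k$ forces $\zeta(x_i) = 0$, hence $x_i \in \Sigma$.

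For the separation, now assuming all $x_i \in \Sigma$, set $r := \min_{j \ne i}|x_i - x_j|$ and $B := B((x_i, 0), r) \subset \R^{d+k}$. By the extension representation \eqref{divh}, $-\div(\yg \nab \phi_i) = \c(\sum_{k \ne i}\delta_{(x_k, 0)} - n\muv\drd)$ in $\R^{d+k}$, and in $B$ no Dirac source is present, so $-\div(\yg \nab \phi_i) = -\c n\muv\drd \le 0$: $\phi_i$ is $\yg$-subharmonic in $B$. I will compare $\phi_i$ with the barrier $\Phi$ solving the Dirichlet problem $-\div(\yg \nab \Phi) = -\c n\overline{m}\,\drd$ in $B$ with $\Phi|_{\partial B} = \phi_i|_{\partial B}$. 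Since $\muv \le \overline{m}$, the difference $\phi_i - \Phi$ is $\yg$-superharmonic in $B$ and vanishes on $\partial B$, so $\phi_i \le \Phi$ in $B$ by the maximum principle. An explicit computation of $\Phi(x_i, 0)$ via the fundamental solution of $\div(\yg\nab\cdot)$ (which is $\g$ up to constants) and the rescaling $X \mapsto X/r$ yields $\Phi(x_i, 0) - \mathrm{avg}_{\partial B}\phi_i \le C n\overline{m}\, r^{d-s}$ with $C = C(s, d)$. Combining this with the master inequality applied to $\tilde x \in \partial B \cap \Sigma$ to control $\mathrm{avg}_{\partial B}\phi_i$, and with the pointwise lower bound $\phi_i(x_i, 0) \ge \g(r) - n\|h^{\muv}\|_\infty$ coming from isolating the contribution of the closest $x_j$, yields $\g(r) \le C'(n\overline{m})^{s/d}$, equivalent to $r \ge r_0(n\overline{m})^{-1/d}$ with $r_0$ depending only on $s$ and $d$.

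The main technical obstacle is the barrier estimate: making it quantitative with the sharp scaling in $r$ and a constant depending only on $s, d$. This is precisely where the extension representation is essential, reducing the nonlocal fractional-Laplacian problem to a local weighted elliptic PDE on $\R^{d+k}$ for which classical maximum-principle and $A_2$-weighted elliptic-regularity tools (Fabes--Kenig--Serapioni) apply.
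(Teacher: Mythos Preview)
Your master inequality is correct and matches the paper's Lemma~\ref{lemreste}. The argument for $x_i\in\Sigma$ is incomplete as stated---``testing near the projection'' does not control how $\phi_i$ varies between $x_i$ and $\Sigma$---but is easily repaired via the maximum principle: since $-\div(\yg\nab\phi_i)\ge 0$ off $(\Sigma\times\{0\})\cup\{x_k\}_{k\ne i}$ and $\phi_i$ has the right behavior at infinity, its global minimum lies on $\Sigma\times\{0\}$; testing the master inequality at that minimum forces $\zeta(x_i)=0$. This is essentially what the paper does.

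The separation argument has a genuine gap. Set aside the sign slip (superharmonic with zero boundary data gives $\phi_i\ge\Phi$, not $\phi_i\le\Phi$) and the fact that $\phi_i|_{\partial B}$ is singular at $x_j\in\partial B$; the decisive issue is quantitative. Your lower bound $\phi_i(x_i,0)\ge \g(r)-n\|h^{\muv}\|_\infty$ discards the contributions of all $x_k$ with $k\ne i,j$, which collectively nearly cancel the $-nh^{\muv}$ term. Since $\|h^{\muv}\|_\infty\le C\overline m^{s/d}$, any upper bound on $\phi_i(x_i,0)$ obtainable from the master inequality (e.g.\ by averaging against $\muv$) is also $O(n\,\overline m^{s/d})$, and together these yield only $\g(r)\le Cn\,\overline m^{s/d}$, hence $r\gtrsim n^{-1/s}\overline m^{-1/d}$, off from the target $(n\overline m)^{-1/d}$ by a factor $n^{1/s-1/d}\to\infty$. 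Your barrier does not recover this loss. The idea you are missing is the paper's near/remainder splitting $\phi_i=U^{\mathrm{near}}+U^{\mathrm{rem}}$, with $U^{\mathrm{near}}=\g*\big(\delta_{x_j}-n\overline m\,\indic_{B(x_j,\rho)}\drd\big)$ and $\rho=(n\omega_d\overline m)^{-1/d}$: then $U^{\mathrm{rem}}$ is $\yg$-superharmonic on $(B(x_j,\rho)\cup\Sigma^c)\setminus\{x_k\}_{k\ne i,j}$, so its minimum is attained at some $\bar x\in\Sigma\setminus B(x_j,\rho)$, while an elementary one-body estimate shows that $U^{\mathrm{near}}$ at any point within $c\rho$ of $x_j$ strictly exceeds $\max_{B(x_j,\rho)^c}U^{\mathrm{near}}$. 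If $|x_i-x_j|<c\rho$ one would obtain $\phi_i(x_i)>\phi_i(\bar x)$ with $\zeta(\bar x)=0$, contradicting the master inequality. This decomposition localizes the comparison so that no absolute bound on $\phi_i$ is ever needed, and is what produces the correct $n^{-1/d}$ scaling.
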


The last result concerns the application to statistical mechanics. 
As in \cite{ss2d,ss1d,rs}, our method can be used to obtain as well next order information on  such a system of particles with temperature. More precisely, let us consider the Gibbs measure
\begin{equation}\label{gibbs}
d\Q(x_1, \dots, x_n)= \frac{1}{\Z} e^{-\beta \w(x_1, \dots, x_n)}\, dx_1 \dots, dx_n
\end{equation}
where  $\beta>0$ is an inverse temperature and $\Z$ is the partition function of the system, i.e. a number that  normalizes $\Q$ to a probability measure on $(\R^d)^n$. Then, under the additional assumption that 
there exists $\beta_1>0$ such that
\begin{equation}\label{integrabilite}
\begin{cases}
\int e^{-\beta_1 V(x)/2} \, dx <\infty  & \mbox{in the case } \eqref{kernel} \\
\int e^{-\beta_1( \frac{V(x)}{2}- \log |x|)} \, dx <\infty & \mbox{in the cases \eqref{wlog}--\eqref{wlog2d}}, 
\end{cases}
\end{equation}
we can obtain  the following:
\begin{theo}[System with temperature]\label{stat} Assume all the previous assumptions on $V$, i.e. \eqref{assv1}--\eqref{assv2}, \eqref{assumpsigma}--\eqref{assmu2}, and \eqref{integrabilite}.
Let $\bar{\beta}:=\limsup_{n\to+\infty} \beta n^{s/d}$, \footnote{always with the convention $s=0$ in the case \eqref{wlog}} and assume $0<\bar{\beta}\le + \infty$. There exists $C_{\bar{\beta}}>0$ depending only on $V$,  $d$ and $s$,  with $\lim_{\beta \to \infty} C_{\bar{\beta}}=0$ and  $C_{\infty}=0$  such that  the following hold.
\begin{enumerate}
\item
\begin{equation}\label{eq:partition low T}
\limsup_{n\to \infty}  n^{-1-s/d} \left| -\frac{\log \Z}{\beta}  - n ^2 \En (\muv)  - n ^{1+s/d} \min \widetilde{\W}     \right| \leq    C_{\bar{\beta}},
\end{equation} 
respectively in the cases \eqref{wlog}--\eqref{wlog2d}  
\begin{equation}\label{eq:partition low T 2d}
\limsup_{n\to \infty} n^{-1} \left| -\frac{\log\Z}{\beta} - n ^2 \En (\muv) +  \frac{n}{d}\log n - n\min \widetilde{\W}  \right| \leq  C_{\bar{\beta}} .
\end{equation}
\item 
For fixed $\beta>0$,  letting $\widetilde{ \Q}$ denote the push-forward of $\Q$ by $i_n$ (defined in \eqref{in}), $\{\widetilde{\Q}\}_{n}$ is tight and converges as $n\to \infty$, up to a subsequence, to a probability measure on $\mathcal P(\mathcal X)$ which is concentrated on admissible probabilities satisfying  $\widetilde{\W}(P) \le \min \widetilde{\W} + C_{\bar{\beta}}$. 
\end{enumerate}
\end{theo}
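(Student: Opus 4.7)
The proof follows the energy/entropy competition scheme developed in \cite{ss2d,ss1d,rs}, now adapted to the Riesz setting via the splitting formula of Proposition \ref{splitting} and the $\Gamma$-convergence statement of Theorem \ref{th2}. The starting point is to plug the splitting formula into \eqref{gibbs} to obtain
\[
\Z = e^{-\beta n^2\I(\muv)}\int_{(\R^d)^n}\exp\!\left(-2\beta n\sum_{i=1}^n \zeta(x_i)-\beta n^{1+s/d} F_n(x_1,\dots,x_n)\right)dx_1\cdots dx_n,
\]
where $F_n$ denotes the renormalized-field term inside the parenthesis in Proposition \ref{splitting} (with an additional $-\frac{n}{d}\log n$ in the cases \eqref{wlog}--\eqref{wlog2d}). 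The key scaling observation is that $\bar\beta<\infty$ makes the prefactor $\beta n^{1+s/d}$ of order $n$, i.e. of the same order as the volume entropy of $n$ points in a bounded region, so one expects a nontrivial energy/entropy balance with no $\bar\beta$-independent correction only as $\bar\beta\to\infty$.

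\textbf{Upper bound on $\log \Z$ (lower bound on $-\log \Z/\beta$).} For every $\varepsilon>0$, partition $(\R^d)^n$ into three pieces according to the marked field process $P_{\nu_n}$ from \eqref{pnun}: (i) configurations with $\sum\zeta(x_i)$ large, whose contribution is absorbed by \eqref{integrabilite} after pulling out a small fraction $\beta_1$ of the exponent; (ii) configurations for which $P_{\nu_n}$ fails to be close to an admissible probability with $\widetilde\W(P)\le \min\widetilde\W+\varepsilon$; and (iii) the good set. On (ii), the $\Gamma$-liminf inequality established in the proof of Theorem \ref{th2} (which holds for arbitrary, not merely minimizing, configurations) yields $F_n \ge \min\widetilde\W+\varepsilon - o(1)$, so the Boltzmann factor is at most $e^{-\beta n^{1+s/d}(\min\widetilde\W+\varepsilon/2)}$ times a volume bound coming from step (i). On (iii), we use the trivial bound $F_n\ge \min\widetilde\W-o(1)$ together with the total volume of $(\R^d)^n$ cut down by the $e^{-\beta_1\sum V}$ integrability.

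\textbf{Lower bound on $\log \Z$ (upper bound on $-\log \Z/\beta$).} This is the harder direction and requires quantitative construction of a positive-measure family of near-minimizing configurations. Starting from a reference $E^\ast\in\ainfty$ with $\W(E^\ast)\le\min\W+\varepsilon$, we invoke the screening procedure used for the upper bound in Theorem \ref{th2} to produce a configuration $\xbf^\ast=(x_1^\ast,\ldots,x_n^\ast)$ realizing, up to $o(1)$, the value $\min\widetilde\W+\varepsilon$ in $F_n$. We then perturb: each $x_i^\ast$ is allowed to move inside a ball of radius $r\,n^{-1/d}$ for some small but fixed $r$, and a deformation argument (bounded $L^p$ change of $E$ on compact sets, continuity of $\W_\eta$ and monotonicity in $\eta$ from Proposition \ref{prow}) shows that $F_n$ increases by at most $C(r)$, uniform in $n$. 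This produces an open set of volume $\ge (cr^d/n)^n$ on which the integrand is bounded below, giving
\[
\Z \ge (c r^d/n)^n\,e^{-\beta n^2\I(\muv)-\beta n^{1+s/d}(\min\widetilde\W+\varepsilon+C(r))-O(\beta n)}.
\]
Dividing by $\beta$, the $n\log n$ entropic term is of order $n\log n/\beta = n^{1-s/d}\log n/\bar\beta$, which is $o(n^{1+s/d})$ when $s>0$ (resp.\ matches the $\frac{n}{d}\log n$ correction in the logarithmic cases). Letting $\varepsilon\to 0$ and $r\to 0$ in the right order gives an error $C_{\bar\beta}$ that vanishes as $\bar\beta\to\infty$, establishing part (1).

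\textbf{Part (2) and tightness.} The tightness of $\widetilde\Q$ follows from the tightness of $\{P_{\nu_n}\}$ under the energy bound $F_n\le C$, which holds in Gibbs expectation because part (1) already implies $\mathbb E_{\Q}[F_n]\le \min\widetilde\W+C_{\bar\beta}+o(1)$. For concentration, the two-sided bound on $\log\Z$ yields that for any $\eta>0$,
\[
\Q\bigl(\{\widetilde\W(P_{\nu_n})\ge\min\widetilde\W+\eta\}\bigr)\le e^{-\beta n^{1+s/d}(\eta-2C_{\bar\beta})+o(\beta n^{1+s/d})},
\]
which tends to $0$ as soon as $\bar\beta$ is bounded below away from $0$ and $\eta>2C_{\bar\beta}$; hence any subsequential limit of $\widetilde\Q$ is supported on admissible $P$ with $\widetilde\W(P)\le\min\widetilde\W+C_{\bar\beta}$. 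The \emph{main obstacle} is the quantitative entropy construction in the lower bound on $\Z$: one needs to show that the screening/regularization machinery producing $\xbf^\ast$ is stable under a $n^{-1/d}$-scale perturbation of each point with controlled cost on $\W$, which forces us to revisit the construction of the recovery sequence in Theorem \ref{th2} and verify that it can be performed inside a full-dimensional family rather than a single configuration.
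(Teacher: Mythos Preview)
Your approach is essentially the one the paper intends (it simply says ``Theorem \ref{stat} follows from all the previous results exactly as in \cite{rs}''), and the overall structure---splitting formula, uniform lower bound on $F_n$ for the upper bound on $\Z$, and a positive-volume family of near-minimizers for the lower bound on $\Z$---is correct.

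Two points need correction. First, your entropy computation contains a sign error: with $\beta n^{s/d}\to\bar\beta$ one has $n\log n/\beta\sim n^{1+s/d}\log n/\bar\beta$, not $n^{1-s/d}\log n/\bar\beta$, so this term is \emph{not} $o(n^{1+s/d})$. The way out is that you are underselling the volume bound: Proposition \ref{upperbound} gives $|A_n|\ge n!\,(\pi r^d/n)^n$, and by Stirling $n!\,(1/n)^n\sim e^{-n}$, so $\log|A_n|\gtrsim -Cn$ rather than $-Cn\log n$. The resulting error in $-\log\Z/\beta$ is then $O(n/\beta)=O(n^{1+s/d}/\bar\beta)$, which is exactly the contribution absorbed into $C_{\bar\beta}$ and vanishes as $\bar\beta\to\infty$. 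In the logarithmic cases the same Stirling cancellation makes the bookkeeping match the $\frac{n}{d}\log n$ correction.

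Second, what you flag as the ``main obstacle''---stability of the recovery construction under $n^{-1/d}$-scale perturbations of each point---is not something you need to redo: it is precisely the content of the last sentence of the proof of Proposition \ref{upperbound} (the set $A_n$ is already a full-dimensional neighborhood of the constructed configuration, obtained by moving each well-separated point within a small ball, with the energy estimate preserved up to $\varepsilon/2$). So you may invoke Proposition \ref{upperbound} directly rather than revisiting the screening machinery.
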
We could also express our results in terms of next order correction to mean field theory exactly as  in \cite{rs}. This extension is left to the reader.

To our knowledge, there is no such result in the literature  beyond the much studied case of systems with logarithmic interactions in one and two dimensions, also called $\beta$-ensembles  (cf. \cite{ss1d} and \cite{ss2d} for the references).
 
This theorem indicates that there is a transition temperature regime $\beta\gg n^{-s/d}$ beyond which the system tends to concentrate on minimizers of $\widetilde{\W}$, hence should be expected to cristallize. For lower $\beta$ on the contrary, it is expected that temperature creates disorder.

\medskip

The rest of the paper is organized as follows.
We start by proving the splitting formula and preliminaries on $\W$.
In particular we show that $\W_\eta$ is essentially monotone in $\eta$. This fact directly provides a  uniform lower bound  on $\W_\eta$. In Section \ref{sec3} we study the periodic case and prove Proposition \ref{periodic} and Theorem \ref{abrikolat}. In Section~\ref{sec3b}, we prove Theorem \ref{separation} and an analogous separation result for minimizers of $\W$ with periodic boundary condition. 
In Section \ref{sec4} we show the general next order lower bound corresponding to the main result, using the approach based on the ergodic theorem initiated in \cite{gl13}.
Section~\ref{sec5} is devoted to the proof of the general screening result.
In Section~\ref{sec6}, using the screening result we make the construction that allows to obtain the upper bound for the minimal energy.
\\

{\bf Additional remarks and acknowledgements: } After this work was completed, we learnt of a related forthcoming work by Hardin, Saff, Simanek and Su \cite{hsss}, who obtain the existence of the same order asymptotic expansion of the minimal energy for the Riesz or logarithmic interaction on a flat torus for any $s<d$.

This paper benefited from comments and discussions with Thomas Lebl\'e, whom we would like to warmly thank here.
 
 The first author was supported by a postdoctoral fellowship of the Fondation Sciences Math\'ematiques de Paris. The second author was supported by a EURYI award.
 
\section{Splitting formula and preliminaries on $\W$}\label{sec2}
In the whole paper, whenever possible we treat 
 all the cases \eqref{casriesz}--\eqref{cas1d}--\eqref{cas2d} in one unified way. We thus need to carry the extension dimension as $k=0$ or $1$.
We note that in Sections \ref{sec2} to \ref{sec4}, we only use  weak assumptions on $\muv$: that $\muv$ has an $L^\infty$ density, and that  it is continuous at almost every point of $\Sigma$.
In the statements of all these sections where the parameter $s$ appears, the convention is that $s$ should be taken to be $0$ in cases \eqref{wlog}--\eqref{wlog2d}.

\subsection{Proof of the splitting formula}
\begin{proof}
We let $\nu_n= \sum_{i=1}^n \delta_{x_i}.$  Denoting by $\triangle $ the diagonal in $\R^d \times \R^d$, we may write 
\begin{eqnarray}
\nonumber H_n(x_1, \dots, x_n)  & = &  \sum_{i \neq j} \g(x_i- x_j) + n \sum_{i=1}^n V(x_i)  \\ 
\nonumber
& = & \iint_{\triangle^c} \g(x-y) d\nu_n(x) d\nu_n(y) + n \int V d\nu_n \\
\nonumber 
& = &  n^2 \iint_{\triangle^c} \g(x-y) d\muv(x) d\muv(y) + n^2 \int V d\muv 
\\ 
\nonumber
 & + &  2n \iint_{\triangle^c} \g(x-y) d\muv(x) d(\nu_n - n \muv)(y)
+ n \int V d(\nu_n - n \muv) \\ 
\label{finh}
& + & \iint_{\triangle^c} \g(x-y) d(\nu_n - n\muv)(x)d(\nu_n - n\muv)(y).
\end{eqnarray}
We now recall that $\zeta$ was defined in \eqref{defzeta} by
\begin{equation}\label{defzetap}
\zeta = h^{\muv} + \frac{V}{2} - c = \int \g(x-y)\, d\muv(y)  + \frac{V}{2} - c\end{equation}
and that $\zeta=0$  in $\Sigma$ (with the assumptions we made, one can check that  $\zeta$ is continuous, so the q.e. relation can be upgraded to everywhere).

With the help of this we may rewrite the middle line in the right-hand side of \eqref{finh} as
\begin{multline*}
2n \iint_{\triangle^c} \g(x-y) d\muv(x) d(\nu_n - n \muv)(y) + n \int V d(\nu_n - n \muv) \\
 = 2n  \int (h^{\muv} + \frac{V}{2}) d(\nu_n - n\muv) = 2n  \int (\zeta + c) d(\nu_n - n\muv)  \\
 = 2n \int \zeta d\nu_n - 2n^2 \int \zeta d\muv + 2 n c \int  d(\nu_n - n\muv) = 2n \int \zeta d\nu_n.
\end{multline*}
The last equality is due to the facts that  $\zeta \equiv 0$ on the support of $\muv$ and that  $\nu_n$ and $n \muv$ have the same mass $n$. We also have to notice that since $\muv$ has an $L^{\infty}$ density with respect to the Lebesgue measure, it does not charge the diagonal $\triangle$ (whose Lebesgue measure is zero) and we can include it back in the domain of integration.
By that same argument, one may recognize in the first line of the right-hand side of \eqref{finh}, the quantity $n^2 \I(\muv)$.

We may thus rewrite \eqref{finh} as
\begin{multline} \label{splitting2}
H_n(x_1, \dots, x_n) = n^2 \I(\muv) + 2n \sum_{i=1}^n \zeta(x_i) \\ + \iint_{\triangle^c} \g(x-y) d(\nu_n - n\muv)(x)d(\nu_n - n\muv)(y).
\end{multline}
Next, we show that 
$$\iint_{\triangle^c} \g(x-y) d(\nu_n - n\muv)(x)d(\nu_n - n\muv)(y)=  \lim_{\eta\to 0}\left(\frac{1}{\c}\int_{\R^{d+k}}\yg |\nab \hne|^2 -  n \g(\eta)\right) .$$
To this aim, we compute the right-hand side of this relation. Let us extend the space dimension by $k$ and  choose $R$ so that all the points are in $B(0,R-1)$ in $\mr^{d+k}$, and $\eta$ small enough that $2\eta<\min_{i\neq j}|x_i-x_j|$.    Since $\hne= h_n$ (defined in \eqref{defhn}) at distance $\ge \eta$ from the points,  by Green's formula and \eqref{hne}, we have
\begin{multline}\label{greensplit1}
\int_{B_R}\yg  |\nabla \hne |^2 = \int_{\p B_R} h_n \frac{\p  h_n}{\p \nu} -
\int_{B_R } \hne    \div  (\yg \nab \hne)
  \\= \int_{\p B_R} h_n \frac{\p  h_n}{\p \nu} +
\c \int_{B_R } \hne    \left(\sum_i \delta_{x_i}^{(\eta)} -{\muv}\drd\right)
\end{multline}
It is easy to check that, since the total mass on the right-hand side of \eqref{hn} is $0$, $h_n$ decreases like $\g'(X)$ i.e. like $|X|^{-s-1}$ at infinity, and $\nab h_n$ like $\g''(X)$ i.e. $|X|^{-s-2}$ (with the convention $s=0$ in the logarithmic cases) and in all cases the boundary integral hence  tends to $0$ as  $R\to \infty$. We thus find
\begin{multline}\label{greensplit2}
\int_{\R^{d+k} } \yg  |\nabla \hne |^2 =
\c \int_{\R^{d+k} }  \hne    \left(\sum_{i=1}^n \delta_{x_i}^{(\eta)} -\muv\drd\right)\\  =
\c \int_{\R^{d+k} }  \left(h_n - \sum_{i=1}^n f_\eta(x-x_i)\right)   \left(\sum_{i=1}^n \delta_{x_i}^{(\eta)} -\muv\drd \right) \, .\end{multline}
 Since $f_\eta(x-x_i) =0 $ on $\p B(x_i, \eta)=\supp (\delta_{x_i}^{(\eta)})$ and outside of $B(x_i, \eta)$, and since the balls $B(x_i, \eta)$ are disjoint, we may write 
$$
\int_{\R^{d+k} } \yg  |\nabla \hne |^2 =\c  \int_{\R^{d+k}}   h_n\left(  \sum_{i=1}^n \delta_{x_i}^{(\eta)}  -\muv\drd\right) -
\c \int_{\R^{d+k}} \sum_{i=1}^n f_\eta(x-x_i) \muv\drd .$$
Let us now use (temporarily) the notation $h^i_n(x) = h_n(x) - \g(x -x_i)$ (for the potential generated by the distribution  bereft of the point $x_i$).
The function $h_n^i $ is regular near $x_i$, hence $\int h_n^i \delta_{x_i}^{(\eta)} \to h_n^i(x_i)$ as $\eta \to 0$.
It follows that 
\begin{multline}\label{estdistinct}
 \int_{\R^{d+k}}  h_n\left(  \sum_{i=1}^n \delta_{x_i}^{(\eta)}  - \muv\drd\right)
 \\= n\c  \g(\eta)  + \c \sum_{i=1}^n h_n^i(x_i)- \c \int_{\R^{d+k}}  h_n\muv \drd + O(n \|\muv\|_{L^\infty} ) \int_{B(0, \eta)} |f_\eta|\drd\ .
\end{multline}

In the case \eqref{kernel} we have   $ \int_{B(0, \eta)} |f_\eta|
\drd\le C\eta^{d-s}$ while in the cases \eqref{wlog}--\eqref{wlog2d} we have   $ \int_{B(0, \eta)} |f_\eta|\le \int_{B(0, \eta)} \le \eta^{d} |\log \eta|$, 
and thus, letting $\eta\to 0$, we find
\begin{equation}\label{Wsplit11}\lim_{\eta \to 0} \frac{1}{\c}  \int_{\R^{d+k} } \yg  |\nabla \hne |^2- n\g(\eta)  = \sum_{i=1}^n h_n^i(x_i)-\int_{\R^{d+k}} h_n{\muv}\drd.\end{equation}
 Now, from the definitions it is easily seen that 
\begin{equation}
h^i_n(x_i) = \int_{\R^{d}\backslash \{x_i\}} \g(x_i-y) d(\nu_n - n\muv)(y),
\end{equation}
from which it follows that 
\begin{multline*}
\iint_{\triangle^c} \g(x-y) d(\nu_n - n\mu_V)(x)d(\nu_n - n\muv)(y)
\\ = \sum_{i=1}^n  \int_{\R^d \backslash \{x_i\}}  \g(x_i-y) d(\nu_n - n\muv)(y) - n \int_{\R^{d+k}} h_n \, \muv\drd\\
= \sum_{i=1}^n h_n^i(x_i) - n \int_{\R^{d+k}} h_n \muv\drd.
\end{multline*} In view of  \eqref{Wsplit11}, we conclude that the claim holds.

The final step is to blow up and note that, using the relation $\gamma= s+2-d-k$ and a change of variables, we have
$$ \int_{\R^{d+k}}\yg |\nab h_{n,\ell}|^2= n^{\frac{s}{d}}  \int_{\R^{d+k}}\yg |\nab h'_{n,\eta}|^2
 $$ with the convention $s=0$ in the cases \eqref{wlog}--\eqref{wlog2d}, 
where $\ell= \eta n^{-1/d}$.
We may thus write 
\begin{multline*} \lim_{\eta\to 0}\left(\int_{\R^{d+k}}\yg |\nab \hne|^2 -  n\c  \g(\eta)\right)
= \lim_{\eta \to 0} \left(\int_{\R^{d+k}}\yg |\nab h_{n,  n^{-1/d} \eta}|^2 - n\c \g(n^{-1/d} \eta) \right)
\\ = \lim_{\eta \to 0} \left( n^{\frac{s}{d}} \int_{\R^{d+k}}\yg |\nab h'_{n,   \eta}|^2 -  n\c \g(n^{-1/d} \eta)   \right) 
\end{multline*}
and, combining with the above and rearranging terms, this completes the proof.

\end{proof}

\subsection{Coercivity of $\W$}
In this subsection, we show that $\W$ controls the discrepancy between the number of points in a ball and the (suitably rescaled) volume of the ball.
We start with two lemmas 
\begin{lem}[Average density of points]\label{limdensity} If $\nab h \in \mathcal A_m$ and $\W_\eta(\nab h)<+\infty$ for some $\eta<1$ then 
$$\lim_{R\to \infty} \frac{\sum_{p\in \Lambda\cap K_R} N_p}{R^d} =m.$$
\end{lem}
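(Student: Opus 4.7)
The plan is to integrate the distributional equation $-\div(\yg \nab h_\eta) = \c(\sum_p N_p \delta_p^{(\eta)} - m\drd)$ over a cylinder $\Omega_{R', L} := K_{R'} \times [-L, L]^k$ of well-chosen dimensions, apply the divergence theorem, and show that the resulting boundary flux is of smaller order than the bulk discrepancy $\sum_{p\in \Lambda \cap K_{R'}}N_p - m(R')^d$ that one wants to control. In the Coulomb case $k=0$ there is no $y$-cutoff and the argument simplifies accordingly.

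First, from the assumption $\W_\eta(\nab h) < +\infty$ I would extract the energy bound $\int_{K_R \times \R^k} \yg |\nab h_\eta|^2 \le C R^d$ for all $R$ sufficiently large, with $C$ depending on $\eta, m, d, s$. Combined with a crude self-energy argument around each point---on an annulus $\{\eta<|X-p|<2\eta\}$ the field $|\nab h_\eta|^2$ is bounded below by $c N_p^2/\eta^{2s+2}$, which integrated against $\yg$ gives $\sim c N_p^2 \g(\eta)$---this yields an a priori bound $\sum_{p \in \Lambda \cap K_R} N_p \le C_\eta R^d$. This is used to control the contribution of the points $p$ whose $\eta$-ball straddles $\partial K_{R'}$: there are $O(R^{d-1})$ such points, and so their contribution to the right-hand side of the divergence-theorem identity is negligible.

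Next, I would select good $R'$ and $L$ by two Fubini-type averaging arguments. Averaging in the $y$-direction, $\int_0^{2R} \int_{K_R \times \{|y|=L\}} \yg |\nab h_\eta|^2 \, d\sigma \, dL \le C R^d$, so some $L \in [R, 2R]$ satisfies $\int_{K_R \times \{|y|=L\}} \yg |\nab h_\eta|^2 \, d\sigma \le C R^{d-1}$. Averaging in $R'$ over a small window $[R, R(1+\ep)]$ produces, for each fixed $\ep>0$, a good $R'$ with $\int_{\partial K_{R'} \times [-L,L]^k} \yg |\nab h_\eta|^2 \, d\sigma \le CR^{d-1}/\ep$. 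Because $\gamma \in (-1,1)$, both $\int_{K_{R'} \times \{|y|=L\}} \yg \, d\sigma$ and $\int_{\partial K_{R'} \times [-L,L]^k} \yg \, d\sigma$ are finite (of order $L^\gamma R^d$ and $R^{d-1} L^{\gamma+1}$ respectively), and Cauchy-Schwarz on each boundary piece then yields a total flux of order $R^{d - (1-\gamma)/2}/\sqrt{\ep} = o(R^d)$ for fixed $\ep$.

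Combining these bounds, the divergence-theorem identity gives $\sum_{p\in \Lambda \cap K_{R'}} N_p - m(R')^d = o(R^d)$ along a sequence of good $R' \in [R, R(1+\ep)]$, and by the same argument along good $R'' \in [R(1-\ep), R]$. Since $R \mapsto \sum_{p\in \Lambda \cap K_R} N_p$ is monotone, sandwiching yields $m(1-\ep)^d - o(1) \le \sum_{p\in \Lambda \cap K_R} N_p / R^d \le m(1+\ep)^d + o(1)$; letting first $R\to\infty$ and then $\ep\to 0$ completes the proof. The main technical point is to guarantee the boundary flux is genuinely $o(R^d)$ through the two-parameter Fubini selection and Cauchy-Schwarz; the full range $\gamma<1$ (equivalently $s<d$) is what makes the exponent strictly smaller than $d$.
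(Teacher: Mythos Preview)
Your proof is correct and follows the same strategy as the paper: integrate the truncated equation over a cylinder, select good slices by Fubini/mean-value, and bound the boundary flux via Cauchy--Schwarz together with $\gamma<1$. The paper makes two slightly tighter parameter choices that streamline the endgame. First, it takes $t\in[R-1,R]$ (a fixed-width window) rather than $R'\in[R,R(1+\ep)]$; since $\eta<1$ this gives directly $\nu(K_{R-2})\le\int_{K_t\times[-L,L]^k}\sum N_p\delta_p^{(\eta)}\le\nu(K_{R+1})$, so no monotonicity/$\ep$-sandwich is needed and your a~priori self-energy bound on straddling points becomes unnecessary. Second, it takes $L\in[1,\sqrt R]$ rather than $L\in[R,2R]$; this makes the weight factors $\int\yg$ on both boundary pieces smaller and the arithmetic of the $o(R^d)$ estimate a touch cleaner, though your choice works equally well. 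In short, the two arguments differ only in bookkeeping.
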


\begin{proof}
We note that in the case $k=0$, the result is proven in \cite[Lemma 3.1]{rs}.
We denote $K_R= [-R/2,R/2]^d\times \{0\}$ and $\tilde K_R= [-R/2,R/2]^{d+k}$.
We also recall that $E_\eta =\nab h_\eta $ satisfies 
\begin{equation}\label{delp}
-\div (\yg E_\eta)= \c \Big(\sum_{p\in \Lambda } N_p \delta_p^{(\eta)}  - m\drd\Big) ,\end{equation} and denote $\nu= \sum_{p\in \Lambda} N_p \delta_p$.
   Since $\W_\eta(E) <+\infty$ we have $\int_{K_R\times \R^k} \yg|E_\eta|^2 \le C_\eta R^d$ for any $R>1$.
Thus, by a mean value argument, we   may find  
$t \in [R-1,R]$   such that
\begin{equation}\label{bbord}
\int_{\p  K_t  \times\R^k   } \yg |E_\eta|^2 \le  \int_{ K_R\times \R^k  }  \yg  |E_\eta|^2\le CR^d;\end{equation} and then $L \in [1, \sqrt{R}]$ such that 
\begin{equation}\label{bbord2}
\int_{ K_R \times \p [-L, L]^k}\yg |E_\eta|^2 \le R^{-1/2} \int_{K_R \times \R^k }  \yg |E_\eta|^2\le CR^{d-\frac{1}{2}}.
\end{equation}
Let us next  integrate \eqref{delp} over $ K_t \times [-L,L]  $ and use Green's theorem to find
\begin{equation}
\int_{  K_t\times [-L,L]^k} \sum_{p \in \Lambda } N_p \delta_p^{(\eta)}  - m |K_t| =  - \frac{1}{\c}\int_{\p  (K_t\times [-L,L]) } \yg E_\eta\cdot \vec{\nu},
\end{equation}where $\vec{\nu}$ denotes the outer unit normal.
Using the Cauchy-Schwarz inequality and \eqref{bbord}--\eqref{bbord2}, we deduce that
\begin{multline}\label{feg}
\left|\int_{  K_t\times [-L,L]^k} 
\sum_{p \in \Lambda } N_p \delta_p^{(\eta)}  - m |K_t| \right|\le  C
\left( \int_{\p K_t\times [-L,L]^k}     \yg \right)^{1/2}  \left( \int_{\p  K_t\times [-L,L]^k} \yg |E_\eta|^2\right)^{1/2}\\
+ C\left( \int_{ K_t\times \p [-L,L]^k }     \yg \right)^{1/2}  \left( \int_{  K_t\times \p [-L,L]^k } \yg |E_\eta|^2\right)^{1/2}   \\
\le    CL^{\frac{\gamma+1}{2} }  R^{\frac{d-1}{2}} R^{\frac{d}{2}} + CL^{\frac{\gamma}{2}} R^{\frac{d}{2}}  R^{\frac{d}{2}-\frac{1}{4} }=o(R^d) \end{multline}
as $R\to \infty$, in view of the bound on $L$ and the fact that $\gamma<1$.
Since $\eta< 1$, by definition of $\nu$ and since the $\delta_p^{(\eta)}$ are supported in $B(p,\eta)$, we have $\nu(K_{R-2})\le \int_{ K_t \times [-L,L]^k} \sum_{p \in \Lambda } N_p \delta_p^{(\eta)} \le \nu(K_{R+1}) $.  The result thus follows from \eqref{feg}, after  dividing  by $R^d$ and letting $R \to \infty$.

\end{proof}

\begin{lem}[Controlling the  discrepancy]\label{lemdiscrepance}
Assume  $E$ satisfies  a relation of the form $$-\div(\yg E)= \c \Big(\sum_{p\in \Lambda} N_p \delta_p- m(x)\drd \Big)   $$
in some subset $U\subset \R^{d+k}$ for some $m\in L^\infty(U)$, and let $E_\eta$ be associated as in \eqref{defeeta}. Then for any $0<\eta<1$, $L>2$ and $a\in \R^d \times \{0\}$, denoting $\tilde B_L(a)= B_L(a)\times [-L/2,L/2]$, if $\tilde B_{2L}(a)\subset U$ we have \footnote{with the convention $s=0$ in the cases  \eqref{wlog}--\eqref{wlog2d}}  
\begin{equation}\label{estdisc}
 \int_{\tilde B_{2L}(a)} \yg |E_\eta|^2\ge C \frac{D(a, L)^2}{L^s}\min \left(1, \frac{D(a,L)}{L^d}\right) ,
\end{equation}
for some $C$ depending only on $d,s$ and $\|m\|_{L^\infty}$,
where $D(a,L)$ denotes the discrepancy
\[
\sum_{p\in \Lambda\cap B_L(a)} N_p - m|B_L(a)|.
\]
\end{lem}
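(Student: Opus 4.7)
The strategy is the classical Chebyshev/Green's-formula argument adapted to the degenerate-elliptic setting. Without loss of generality assume $D := D(a,L) > 0$; the case $D < 0$ is symmetric by shrinking the radius below $L$ rather than enlarging it. For $t \in [L + \eta, 2L]$, consider the cylinder $\Omega_t := B_t(a) \times [-t/2,t/2]^k$. Since $0<\eta<1<L/2$, every smeared atom $\delta_p^{(\eta)}$ with $p \in \Lambda \cap B_L(a)$ is entirely contained in $\Omega_t$, so Green's theorem applied to the equation satisfied by $E_\eta$ gives
\[
 F(t) := -\int_{\partial \Omega_t} \yg E_\eta \cdot \nu \;=\; c_{d,s}\Bigl(\sum_{p \in \Lambda \cap B_t(a)} N_p - \int_{B_t(a)} m\Bigr)\,.
\]
Since the sum over $B_t(a)$ dominates the one over $B_L(a)$ and $\|m\|_{L^\infty}\le M$, we get $F(t) \ge c_{d,s}\bigl(D - CM(t^d - L^d)\bigr)$.

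The second ingredient is a Cauchy--Schwarz bound on $F(t)$: $|F(t)|^2 \le \bigl(\int_{\partial \Omega_t} \yg\bigr)\bigl(\int_{\partial \Omega_t} \yg |E_\eta|^2\bigr)$. A direct computation (splitting $\partial \Omega_t$ into the lateral piece $\partial B_t(a) \times [-t/2,t/2]^k$ and the top/bottom faces $B_t(a)\times\partial[-t/2,t/2]^k$, and integrating $|y|^\gamma$ using $\gamma = s-d+2-k$) yields $\int_{\partial \Omega_t}\yg \le C t^{s+1}$ uniformly in the three regimes \eqref{casriesz}--\eqref{cas1d}--\eqref{cas2d} (with the convention $s=0$ in the logarithmic cases). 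Next, a Chebyshev / mean-value argument: given any length $\rho \in [\eta, L]$, one can find $t \in [L+\eta, L+\eta+\rho]$ with
\[
 \int_{\partial \Omega_t} \yg |E_\eta|^2 \;\le\; \frac{1}{\rho}\int_{\tilde B_{2L}(a)} \yg |E_\eta|^2\,.
\]

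It remains to optimize $\rho$ in two regimes. If $D \le L^d$, choose $\rho := c D / L^{d-1}$ with $c$ so small that $CM((L+\rho)^d - L^d) \le D/2$; then $|F(t)| \ge c_{d,s} D/2$, and combining with the Cauchy--Schwarz and averaging bounds gives $\int_{\tilde B_{2L}}\yg|E_\eta|^2 \gtrsim D^3/L^{s+d}$, i.e.\ precisely $D^2 L^{-s}\cdot (D/L^d)$. If instead $D \ge L^d$, take $\rho = L$; then $(L+\rho)^d - L^d \le C L^d \le C D$, so again $|F(t)| \ge c_{d,s} D/2$, and the same calculation now delivers $\int_{\tilde B_{2L}}\yg|E_\eta|^2 \gtrsim D^2/L^s$. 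Together the two regimes give the claimed lower bound by $CD^2 L^{-s}\min(1, D/L^d)$.

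The main technical obstacle is the boundary estimate $\int_{\partial \Omega_t} \yg \le C t^{s+1}$, which must be verified uniformly in $k\in\{0,1\}$ and for the full range $\gamma \in (-1,1)$; the degeneracy of $|y|^\gamma$ on the top/bottom faces requires separate (but elementary) treatment. A small additional care is needed when $\eta$ is not much smaller than the optimal $\rho$: in that borderline regime one simply replaces $\rho$ by $\max(\eta,\rho)$, which only worsens the constants. Finally, the hypothesis $L > 2$ is exactly what guarantees both that the cylinders $\Omega_t$ used above fit inside $\tilde B_{2L}(a)$ and that all relevant smeared atoms remain inside them, so the Green-formula step is justified.
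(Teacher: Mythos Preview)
Your argument is correct and follows essentially the same strategy as the paper: Green's formula to bound the flux through $\partial\Omega_t$ from below by a constant times $D$, then Cauchy--Schwarz together with $\int_{\partial\Omega_t}\yg\le Ct^{s+1}$. The only real difference is in how the radial variable is exploited: the paper integrates the Cauchy--Schwarz inequality over \emph{all} radii $t\in[L+\eta,T]$ with $T=\min\bigl(2L,((L+\eta)^d+D/2C)^{1/d}\bigr)$, obtaining directly $CD^2(\g(L+\eta)-\g(T))$, which encodes both the small-$D$ and large-$D$ regimes in a single formula; you instead pick a single good $t$ by a mean-value argument and then split explicitly into the two cases $D\lesssim L^d$ and $D\gtrsim L^d$. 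The two routes are standard reformulations of one another and yield the same bound; the paper's is slightly more streamlined, while yours makes the dichotomy in the final estimate more visibly explicit. One small point: your threshold between the two regimes should really be $D\ge C_0L^d$ with $C_0$ depending on $\|m\|_{L^\infty}$ (so that $(2L)^d-L^d\le D/(2CM)$), but since the final constant is allowed to depend on $\|m\|_{L^\infty}$ this is harmless.
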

\begin{proof}
The proof follows \cite[Lemma 3.8]{rs}. For simplicity of notation we denote $\tilde B_t(a)=\tilde B_t$. We first consider the case that $D:= D(a, L)>0$. We first  note that if 
\begin{equation}\label{defLe}
L+\eta\le t \le T:= \min \left(  2L, \Big((L+\eta)^d + \frac{D}{2C}\Big)^{\frac1d}\right)\end{equation}
with $C$ well-chosen,  we have 
\begin{eqnarray*}
 -\int_{\p \tilde B_t}\yg E_\eta\cdot\vec{\nu} &=&-\int_{\tilde B_t}\div(\yg E_\eta) = \c \int_{\tilde B_t}\Big(\sum_{p\in\Lambda}N_p\delta_p^{(\eta)} - m(x)\drd\Big)\\
 &\ge &c_{d,s}\left(D  -   \int_{B_t\backslash B_L } m(x)\, dx\right)\ge \c D- C \left( t^{d}-L^{d}\right)\ge \frac{\c}{2} D   ,
\end{eqnarray*} if we choose the same $C$ in \eqref{defLe}. 
By the Cauchy-Schwarz inequality there holds (with the convention $s=0$ if \eqref{wlog}--\eqref{wlog2d}) 
\begin{eqnarray*}
 \int_{\tilde B_{2L}}\yg|E_\eta|^2&\ge&\int_{L+\eta}^{T}\left(\int_{\p \tilde K_t}\yg\right)^{-1}\left(\int_{\p \tilde B_t}\yg E_\eta\cdot\nu\right)^2dt\\
 &=&C D^2 \int_{L+\eta}^{T}t^{-(\gamma+d+k-1)}  \, dt=  CD^2 \left( \g( L+\eta)-\g(T)\right)   \end{eqnarray*}
using the previous estimate and \eqref{gs}. Inserting the definition of $T$ and rearranging terms, one easily checks that we obtain \eqref{estdisc}.
There remains to treat the case where $D\le 0$.  This time, we let 
$$T \le  t \le L-\eta, \qquad    T:= \Big( (L-\eta)^d - \frac{D}{2C}\Big)^{\frac1d}$$
and if $C$ is  well-chosen  we have \begin{eqnarray*}
 -\int_{\p \tilde B_t}\yg E_\eta\cdot\vec{\nu} &=&-\int_{\tilde B_t}\div(\yg E_\eta) = \c \int_{\tilde B_t}\Big(\sum_{p\in\Lambda}N_p\delta_p^{(\eta)} - m(x)\drd\Big)\\
 &\le &c_{d,s}\left(D  -   \int_{B_L\backslash B_t } m(x)\, dx\right)\le \frac{\c}{2} D   ,
\end{eqnarray*}and the rest of the proof is analogous, integrating from $T$ to $L-\eta$. 

\end{proof}

\subsection{Monotonicity of $\W_\eta$}

Next, we prove that applying the $\eta $ truncation to the energy is essentially  monotone  in $\eta$ (this is natural if we recall that it is almost truncating the kernel at level $\eta$). 

\begin{lem}\label{prodecr}
For any $x_1, \dots, x_n \in \R^d$, and any $1>\eta>\alpha>0$ we have\footnote{with the convention $s=0$ in cases \eqref{wlog}--\eqref{wlog2d}}
\begin{multline*}
-C n \|\muv\|_{L^\infty} \eta^{\frac{d-s}{2}}\le \left(\int_{\R^{d+k}}\yg |\nab h'_{n,\alpha}|^2 - n \c\g(\alpha) \right) -
\left(\int_{\R^{d+k}}\yg |\nab h'_{n,\eta}|^2 - n\c  \g(\eta) \right)
\\ \le  C n \|\muv\|_{L^\infty} \eta^{\frac{d-s}{2}} +
\c \sum_{i\neq j,  |x_i-x_j|\le 2\eta} \min(\g(\alpha), \g(|x_i-x_j|-\alpha)) - \g(|x_i-x_j|+\eta).
\end{multline*}where  $C$ depends only of $d$ and $s$. 
\end{lem}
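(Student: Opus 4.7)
The plan is to express the difference as an integral of the nonnegative excess potential $w := h'_{n,\alpha} - h'_{n,\eta} = \sum_j (f_\eta - f_\alpha)(\cdot - X_j')$ (supported in $\cup_j B(X_j', \eta)$ since $f_\eta \ge f_\alpha \ge 0$ with common support), split off a diagonal piece that cancels the explicit $\g(\alpha) - \g(\eta)$ terms in the statement, and isolate the remaining contributions coming from close point pairs and from the background density $\muv'$.

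\textbf{Energy-difference identity.} Starting from the algebraic expansion $|\nab h'_{n,\alpha}|^2 - |\nab h'_{n,\eta}|^2 = 2\nab h'_{n,\eta}\cdot\nab w + |\nab w|^2$, I would integrate against $\yg$ and perform two integrations by parts (boundary terms at infinity vanish since $h'_{n,\alpha}, h'_{n,\eta}$ decay like $|X|^{-s}$ and $w$ has compact support). Using the PDEs
\[
-\div(\yg\nab h'_{n,\eta}) = \c\Big(\sum_i\delta_{x_i'}^{(\eta)} - \muv'\drd\Big),\quad -\div(\yg\nab w) = \c\Big(\sum_i\delta_{x_i'}^{(\alpha)} - \sum_i\delta_{x_i'}^{(\eta)}\Big),
\]
this yields the clean identity
\[
\int \yg\bigl(|\nab h'_{n,\alpha}|^2 - |\nab h'_{n,\eta}|^2\bigr) = \c\int w\,\Big(\sum_i\delta_{x_i'}^{(\alpha)} + \sum_i\delta_{x_i'}^{(\eta)} - 2\muv'\drd\Big).
\]

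\textbf{Diagonal cancellation, close-pair estimate and background control.} I then split each Dirac sum into $i=j$ and $i\neq j$ parts. Since $f_\eta \equiv f_\alpha \equiv 0$ on $\p B(X_i',\eta)$, the $i=j$ contribution against $\delta_{x_i'}^{(\eta)}$ vanishes; on $\p B(X_i',\alpha)$ one has $f_\alpha \equiv 0$ and $f_\eta \equiv \g(\alpha)-\g(\eta)$, so the $i=j$ contribution against $\delta_{x_i'}^{(\alpha)}$, summed over $i$, equals $n(\g(\alpha)-\g(\eta))$. This diagonal piece precisely cancels the explicit $-n\c(\g(\alpha)-\g(\eta))$ in $A_\alpha - A_\eta$, leaving only a manifestly nonnegative cross sum supported on pairs with $|x_i'-x_j'|\le 2\eta$ (since $f_\eta - f_\alpha$ has support in $B(0,\eta)$), plus the background term $-2\c\int w\muv'\drd$. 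For each close pair, using $f_\eta - f_\alpha = \g_\alpha - \g_\eta$, radial monotonicity of $\g$, and the fact that $|X - X_j'|$ on $\p B(x_i',\alpha)$ ranges in $[|x_i'-x_j'|-\alpha, |x_i'-x_j'|+\alpha]$ (similarly on $\p B(x_i',\eta)$), a sup--inf bound gives a contribution $\le \min(\g(\alpha),\g(|x_i'-x_j'|-\alpha)) - \g(|x_i'-x_j'|+\eta)$ after the harmless relaxation $\g(|x_i'-x_j'|+\alpha) \ge \g(|x_i'-x_j'|+\eta)$. For the background term, $0 \le w(x,0) \le \sum_j f_\eta(x-x_j',0)$ together with the explicit radial estimate $\int_{\R^d} f_\eta(y,0)\,dy \le C\eta^{d-s}$ (with the obvious logarithmic analogue) gives $|\int w\muv'\drd| \le Cn\|\muv\|_{L^\infty}\eta^{d-s} \le Cn\|\muv\|_{L^\infty}\eta^{(d-s)/2}$ since $\eta<1$.

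\textbf{Assembly and main obstacle.} The cross sum is nonnegative, so discarding it yields the lower bound $A_\alpha - A_\eta \ge -2\c\int w\muv'\drd \ge -Cn\|\muv\|_{L^\infty}\eta^{(d-s)/2}$; conversely, the background contribution $-2\c\int w\muv'\drd \le 0$ can be dropped to produce the upper bound in terms of the pair sum. The delicate step is the close-pair estimate when the balls $B(X_i',\eta)$ overlap: since $\delta_{x_i'}^{(\eta)}$ and $\delta_{x_i'}^{(\alpha)}$ are the Poisson kernels of the degenerate operator $-\div(\yg\nab\cdot)$ on the corresponding balls, one must carefully track how the radially decreasing kernel $\g_\alpha - \g_\eta$ pairs against these weighted surface measures over possibly non-empty intersections and distill this into the clean geometric quantity appearing in the lemma.
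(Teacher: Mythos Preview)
Your approach is essentially identical to the paper's: both expand the square, integrate by parts against the equations for $h'_{n,\eta}$ and for the difference $w$ (the paper's $\fae=-w$), extract the diagonal piece $n(\g(\alpha)-\g(\eta))$, and control the background term by $Cn\|\muv\|_{L^\infty}\eta^{(d-s)/2}$ via $\int f_\eta \,\drd$.

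The one place where the paper does something sharper is the close-pair estimate. Your direct sup--inf bound on $\int(\g_\alpha-\g_\eta)(X-X_j')\,(\delta_{x_i'}^{(\alpha)}+\delta_{x_i'}^{(\eta)})$ does not quite produce the single term $\g_\alpha(|x_i'-x_j'|-\alpha)-\g(|x_i'-x_j'|+\eta)$ per ordered pair: each of the two surface measures separately already gives a contribution of that size, so you are off by a factor of two (and the $\delta^{(\eta)}$ term naturally produces $\g_\alpha(|x_i'-x_j'|-\eta)$ rather than $\g_\alpha(|x_i'-x_j'|-\alpha)$). The paper avoids this by a symmetrization: writing $(\g_\alpha-\g_\eta)(\delta^{(\alpha)}+\delta^{(\eta)})$ back as $\int\yg(\nab\g_\alpha-\nab\g_\eta)(x-x_i)\cdot(\nab\g_\alpha+\nab\g_\eta)(x-x_j)$, the mixed terms cancel under $i\leftrightarrow j$, leaving exactly $\sum_{i\neq j}\int\g_\alpha(x-x_i)\delta_{x_j}^{(\alpha)}-\g_\eta(x-x_i)\delta_{x_j}^{(\eta)}$, from which the stated bound follows directly. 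This refinement is cosmetic for the applications (only the qualitative form of the upper bound is ever used), but if you want the lemma exactly as stated you need this extra integration-by-parts step.
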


\begin{proof} As in the introduction we use the notation $\g_\eta(x)= \min (\g, \g(\eta) $, and we note that  since $\g_\eta=\g- f_\eta $ (recall \eqref{feta} and \eqref{divf}) we have 
\begin{equation}\label{divge}
-\div (\yg \nab \g_\eta) = \c \delta_0^{(\eta)}.
\end{equation}
We then let $\fae:= f_\alpha-f_\eta$.  We note that  $\fae$  vanishes outside $B(0,\eta)$, and  
 \begin{equation}\label{faee}
\g(\eta)- \g(\alpha)\le \g_\eta- \g_\alpha = \fae\le 0 \end{equation}  and $\fae$ solves (cf. \eqref{divf})
\begin{equation}\label{eqfae}
-\div (\yg \nab \fae)= \c (\delta_0^{(\eta)}- \delta_0^{(\alpha)}).
\end{equation}
In view of
\eqref{defheta}, we have
$\nab \hne'= \nab h_{n, \alpha}' + \sum_{i=1}^n\nab \fae(x-x_i)$
and hence
\begin{multline}
\int_{\R^{d+k}} \yg |\nab \hne'|^2 = \int_{\R^{d+k}} \yg |\nab h_{n, \alpha}'|^2 + \sum_{i, j} \int_{\R^{d+k}} \yg \nab \fae (x-x_i) \cdot \nab \fae (x-x_j) 
\\ +  2 \sum_{i=1}^n \int_{\R^{d+k}} \yg \nab \fae(x-x_i)\cdot \nab h_{n, \alpha}'.\end{multline}
We first examine
\begin{align}\nonumber
&  \sum_{i, j} \int_{\R^{d+k}} \yg \nab \fae (x-x_i) \cdot \nab \fae (x-x_j) 
\\ \label{faeterm}
& = - \sum_{i,j} \int_{\R^{d+k}}  \fae (x-x_i) \div (\yg \nab \fae(x-x_j))= \c\sum_{i,j} \int_{\R^{d+k}} \fae (x-x_i) (  \delta_{x_j}^{(\eta)}- \delta_{x_j}^{(\alpha)}) 
.
\end{align}
Next, 
\begin{multline}\label{fae2}
2 \sum_{i=1}^n \int_{\R^{d+k}} \yg \nab \fae(x-x_i)\cdot \nab h_{n, \alpha}'
 = -2 \sum_{i=1}^n \int_{\R^{d+k}} \fae(x-x_i) \div (\yg \nab h_{n,\alpha}') 
\\=  2
\c  \sum_{i=1}^n\int_{\R^{d+k}} \fae(x-x_i) \Big( \sum_{j=1}^n \delta_{x_j}^{(\alpha)} - \muv'\drd\Big) . 
\end{multline}
These last two equations add up  to give a right-hand side equal to 
\begin{equation}\label{rhs}
\sum_{i\neq j} \c \int_{\R^{d+k}} \fae (x-x_i) (  \delta_{x_j}^{(\alpha)}+ \delta_{x_j}^{(\eta)}) 
- 2\c \sum_{i=1}^n \int\fae(x-x_i) \muv'\drd
+ n\c \int_{\R^{d+k}} \fae ( \delta_{0}^{(\alpha)} + \delta_0^{(\eta)} )\, . \end{equation}
We then note that $\int\fae( \delta_0^{(\alpha)} + \delta_0^{(\eta)} ) = -\int f_\eta \delta_0^{(\alpha)}=-( \g(\alpha)-\g(\eta))  $ by definition of $f_\eta$ and the fact that $\delta_0^{(\alpha)}$ is a measure supported on $\p B(0, \alpha)$ and of mass $1$.
Secondly, we bound  $\int_{\R^{d+k}}\fae(x-x_i) \muv'\drd$ by $$\|\muv\|_{L^\infty} \int_{\R^{d+k}} |f_\eta| \le  C \|\muv\|_{L^\infty} \max(\eta^{d-s}, \eta^d |\log \eta|) $$
according to the cases, as seen in the proof of the splitting formula.
Thirdly, we observe that in view of \eqref{faee}, the first term in \eqref{rhs} is nonpositive, and that only the terms for which $|x_i-x_j|\le 2 \eta$ contribute. We now bound its absolute value using \eqref{faee} and \eqref{divge}
\begin{multline*} 
\left|\sum_{i\neq j} \c \int_{\R^{d+k}} \fae (x-x_i) (  \delta_{x_j}^{(\alpha)}+ \delta_{x_j}^{(\eta)}) \right|\\=-  \sum_{i\neq j} \int_{\R^{d+k}} (\g_\alpha- \g_\eta)(x-x_i)  \div (\yg (\nab \g_\alpha(x-x_j) + \nab \g_\eta(x-x_j) )) \\
= \sum_{i\neq j} \int_{\R^{d+k}}\yg( \nab \g_\alpha -\nab \g_\eta )(x-x_i) \cdot( \nab \g_\alpha + \nab \g_\eta) (x-x_j)\\
= \sum_{i\neq j} \int_{\R^{d+k}}\yg \nab \g_\alpha (x-x_i) \cdot\nab \g_\alpha(x-x_j) - \yg \nab \g_\eta(x-x_i) \cdot   \nab \g_\eta(x-x_j)\end{multline*}
where we noted that the other terms cancel out in the sum when exchanging the roles of $i$ and $j$.
Integrating by parts again using \eqref{divge}, and using the fact that $\delta_x^{(\eta)}$ is a measure of mass $1$ supported on $\p B(x, \eta)$,  we are led to 
\begin{multline*} 
\left|\sum_{i\neq j} \c \int_{\R^{d+k}} \fae (x-x_i) (  \delta_{x_j}^{(\alpha)}+ \delta_{x_j}^{(\eta)}) \right|=  \c
\sum_{i\neq j} \int_{\R^{d+k}}\g_\alpha (x-x_i) \delta_{x_j}^{(\alpha)}   - \g_\eta(x-x_i) \delta_{x_j}^{(\eta)}\\
\le \c \sum_{i\neq j, |x_i-x_j|\le 2\eta} \g_\alpha(|x_i-x_j|-\alpha) - \g_\eta (|x_i-x_j|+\eta)\\
\le \c \sum_{i\neq j,  |x_i-x_j|\le 2\eta} \min(\g(\alpha), \g(|x_i-x_j|-\alpha)) - \g(|x_i-x_j|+\eta).
\end{multline*} where we used the fact that $\g$ and $\g_\eta$ are radial decreasing.

We conclude that 
\begin{multline*}  
- C n \|\muv\|_{L^\infty} \max( \eta^{d-s}, \eta^d |\log \eta|)
\\ \le \(\int \yg |\nab h_{n, \alpha}'|^2  -n\c  \g(\alpha)\) -\( \int_{\R^{d+k}}\yg |\nab \hne'|^2 - n\c \g(\eta)\)\\ \le 
 C n \|\muv\|_{L^\infty} \max( \eta^{d-s}, \eta^d |\log \eta|)\\+ 
 \c \sum_{i\neq j,  |x_i-x_j|\le 2\eta} \min(\g(\alpha), \g(|x_i-x_j|-\alpha)) - \g(|x_i-x_j|+\eta),
\end{multline*}
 and this finishes the proof, noting that in all cases we have $\max(\eta^{d-s}, \eta^d  |\log \eta|) \le \eta^{\frac{d-s}{2}}$ if one takes the convention $s=0$ in the logarithmic cases.

\end{proof}
The next proposition expresses the same fact at the level of the limits.
\begin{pro}\label{Wbb}
Let  $E \in \mathcal{A}_m$. For any $1>\eta>\alpha>0$ such that $\W_\alpha(E) <+\infty$  we have 
$$\W_\eta(E) \le \W_\alpha (E )+ Cm\eta^{\frac{d-s}{2}}
$$
where $C$ depends only on $s$ and $d$; and thus   $\lim_{\eta\to 0}\W_\eta(E)= \W(E)$ always exists. Moreover,
$\W_\eta$ is bounded below on $\mathcal A_m$ by  a constant depending only on $s,d$ and $m$.
\end{pro}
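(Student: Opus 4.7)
The plan is to establish the ``almost monotonicity'' inequality $\W_\eta(E)\le \W_\alpha(E)+Cm\eta^{(d-s)/2}$ for $0<\alpha<\eta<1$ whenever $\W_\alpha(E)<\infty$, and then to read off both the existence of the limit and the uniform lower bound as easy consequences. The announced inequality is exactly the ``clean'' direction of Lemma~\ref{prodecr} transplanted from finite configurations in $\R^d$ to an infinite configuration in $\mathcal{A}_m$; the algebraic fact that makes this direction clean is that the pair-term $\c\sum_{i\neq j}\int f_{\alpha,\eta}(x-x_i)(\delta_{x_j}^{(\alpha)}+\delta_{x_j}^{(\eta)})$ is non-positive, since $f_{\alpha,\eta}=f_\alpha-f_\eta\le 0$ while the smeared masses $\delta_{x_j}^{(\alpha)},\delta_{x_j}^{(\eta)}$ are nonnegative, so it may simply be dropped when upper-bounding $\int \yg|\nabla h_\eta|^2-n\c\g(\eta)$.

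To transfer the argument to the infinite setting, I would replay the integration by parts of Lemma~\ref{prodecr} on a bounded box $\tilde K_{R,L}:=K_R\times[-L,L]^k$ rather than on all of $\R^{d+k}$. Writing $g:=h_\eta-h_\alpha=\sum_p N_p f_{\alpha,\eta}(x-p)$, which is supported in the $\eta$-neighborhood of $\Lambda$ in $\R^{d+k}$, and expanding
\[
\int_{\tilde K_{R,L}} \yg\bigl(|\nabla h_\eta|^2-|\nabla h_\alpha|^2\bigr) = \int_{\tilde K_{R,L}}\yg|\nabla g|^2 + 2\int_{\tilde K_{R,L}}\yg\,\nabla g\cdot\nabla h_\alpha,
\]
the same Green identities produce the bulk pieces $T_1^R+T_2^R+T_3^R$ of the finite case: $T_2^R = 2 n_R \c(\g(\eta)-\g(\alpha))$ (with $n_R=\sum_{p\in \Lambda\cap K_R}N_p$), $|T_3^R|\le C n_R m\eta^{(d-s)/2}$ as in the proof of Lemma~\ref{prodecr}, and the favorable pair-term $T_1^R\le 0$ is discarded. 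The new feature is the boundary flux $\int_{\partial \tilde K_{R,L}} g\,\yg\,\nabla(h_\eta+h_\alpha)\cdot\nu$. Since $\W_\alpha(E)<\infty$ gives $\int_{K_R\times\R^k}\yg|\nabla h_\alpha|^2\le CR^d$ for $R$ large, a Cauchy--Schwarz/mean-value argument in the spirit of Lemma~\ref{limdensity} selects, for each large $R$, a height $L\in[1,\sqrt R]$ and a side-length in $[R-1,R]$ for which the total boundary flux is $o(R^d)$; moreover, since $g$ lives in the $\eta$-tube around $\Lambda$, only points $p\in\Lambda$ within distance $\eta$ of $\partial K_R$ contribute to the flux of $\nabla g$, an $O(R^{d-1})$ correction that is absorbed as well.

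Dividing through by $R^d$, taking $\limsup_{R\to\infty}$, and invoking Lemma~\ref{limdensity} to replace $n_R/R^d$ by $m$, yields the desired $\W_\eta(E)\le \W_\alpha(E)+Cm\eta^{(d-s)/2}$. Existence of $\lim_{\eta\to 0}\W_\eta(E)$ is then immediate: the inequality gives $\limsup_{\eta\to 0}\W_\eta(E)\le \liminf_{\alpha\to 0}\W_\alpha(E)$, and combined with the trivial reverse inequality this forces equality in $\R\cup\{+\infty\}$. For the uniform lower bound, fix $\eta_0=1/2$; the trivial estimate $\W_{\eta_0}(E)\ge -m\c\g(\eta_0)$ (since $\int \yg|\nabla h_{\eta_0}|^2\ge 0$) combined with the almost-monotonicity applied between $\alpha$ and $\eta_0$ yields $\W_\alpha(E)\ge -m\c\g(1/2)-Cm(1/2)^{(d-s)/2}$ uniformly for $\alpha<1/2$; for $\alpha\in[1/2,1)$ the trivial bound applied directly is enough.

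The main obstacle is the bookkeeping of the boundary terms in the localized integration by parts: the bulk computation is essentially a copy of Lemma~\ref{prodecr}, but one must check that the lateral flux across $\partial K_R\times[-L,L]^k$, the top/bottom flux across $K_R\times\partial[-L,L]^k$, and the contribution of ``crossing'' points $p\in\Lambda$ within $\eta$ of $\partial K_R$ all scale as $o(R^d)$. The mean-value-in-scale device that achieves this is essentially the one already used in Lemma~\ref{limdensity}, and the hypothesis $\W_\alpha(E)<\infty$ is precisely what makes it applicable.
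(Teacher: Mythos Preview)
Your overall strategy---transplanting the algebra of Lemma~\ref{prodecr} to the infinite setting, discarding the nonpositive pair term, and reading off the limit and lower bound from the resulting almost-monotonicity---is exactly the paper's. The paper, however, localizes with a smooth cutoff $\chi_R$ (equal to $1$ on $\tilde K_{R-3}$, vanishing outside $\tilde K_{R-2}$) rather than a hard box, so that no boundary integrals appear: the price is an ``Error'' term supported in the transition layer $K_{R-1}\setminus K_{R-4}$, which the paper controls by partitioning the layer into unit cubes and invoking the discrepancy estimate of Lemma~\ref{lemdiscrepance} to bound $\sum_j \mathcal N_j^2 \le CR^{d-1}+\int_{\tilde K_R\setminus\tilde K_{R-5}}\yg|E_\alpha|^2$.

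Your hard-cutoff variant is workable in principle, but the control of the boundary flux is underdeveloped. The assertion that ``only points within $\eta$ of $\partial K_R$ contribute, an $O(R^{d-1})$ correction'' is precisely the point that requires justification: nothing in the hypotheses prevents points from clustering near the slice you select, and the mean-value argument of Lemma~\ref{limdensity} controls $\int_{\partial K_t}\yg|E_\alpha|^2$ but not directly the \emph{number} of points near $\partial K_t$. You need Lemma~\ref{lemdiscrepance} here just as the paper does. Moreover, with a hard cutoff your boundary term involves $\nabla g$ on a hypersurface, and $\nabla f_{\alpha,\eta}(\cdot-p)$ is singular near $p$; you must either choose the slice to avoid the points (possible since $\Lambda$ is discrete, but this needs to be said) or otherwise argue that the surface integral of $\yg|\nabla g|^2$ is finite. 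The smooth cutoff sidesteps both issues, which is why the paper prefers it. The conclusions about the existence of the limit and the uniform lower bound are derived exactly as you do.
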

We note that this proves items 1 and 2 in Proposition \ref{prow}.

\begin{proof} 
Let  us consider $E$ satisfying  a relation of the form $$-\div(\yg E)= \c \Big(\sum_{p\in \Lambda} N_p \delta_p- m(x)\drd \Big) \quad \text{in} \ K_R \times \R^k   $$
 for some $m\in L^\infty(K_R)$.  Let $E_\eta$ be associated via \eqref{defeeta}. Assume $1>\eta>\alpha>0$. 
Let $K_R= [-R/2,R/2]^d \times\{0\} $ and $\tilde K_R= [-R/2,R/2]^{d+k}$.
Let $\chi_R$ denote a smooth cutoff function equal to $1$ in $\tilde K_{R-3}$  and vanishing outside $\tilde K_{R-2}$. As in the previous proof we we note that 
$$E_\eta= E_\alpha + \sum_{p\in \Lambda} N_p \nab \fae(x-p)$$
and insert to expand
\begin{multline}
\int_{\R^{d+k}} \chi_R  \yg |E_\eta|^2- \int \chi_R \yg |E_\alpha|^2 
\\=\sum_{p, q\in \Lambda} N_p N_q \int_{\R^{d+k}}\chi_R \yg \nab \fae(x-p)\cdot \nab \fae(x-q) + 2\sum_{p\in \Lambda} N_p \int_{\R^{d+k}} \yg \chi_R \nab \fae(x-p)\cdot E_\alpha.\end{multline}
We separate here between the integral over the set  where $\chi_R\equiv 1$ and the rest, which we will control by 
\begin{multline*}
Error:= \sum_{p, q\in K_{R-1} \backslash  K_{R-4}} N_p N_q \int_{\R^{d+k}} \yg|\nab \fae(x-p)||\nab \fae(x-q)|
\\+ \sum_{p \in K_{R-1} \backslash  K_{R-4}}N_p  \int_{\R^{d+k}}  \yg |E_\alpha| |\nab \fae(x-p)|.\end{multline*}
We will work on controlling $Error$ just below, and for now, using the fact that $\fae $ is supported in $B(0,\eta)$,  we may write 
\begin{equation}\label{errormain}
\int_{\R^{d+k}}\chi_R  \yg |E_\eta|^2- \int_{\R^{d+k}} \chi_R \yg |E_\alpha|^2 \le Error+ Main\, , 
\end{equation} where \begin{multline*}
Main:=
\sum_{p, q \in\Lambda\cap K_{R-4}}N_pN_q  \int_{\R^{d+k}} \yg \nab \fae(x-p)\cdot \nab \fae(x-q) \\+ 2\sum_{p\in\Lambda\cap K_{R-4s} }N_p \int_{\R^{d+k}} \yg  \nab \fae(x-p)\cdot \nab h_\alpha.\end{multline*}
We may now evaluate $Main$ just as we did in the proof of Lemma \ref{prodecr}:
\begin{multline*}
Main= 
- \sum_{p, q\in K_{R-4}\cap \Lambda}N_p N_q  \int  \fae(x-p)\div (\yg \nab \fae(x-q))  \\
-  2\sum_{p\in K_{R-4}\cap \Lambda}  N_p \int \fae(x-p)\div (\yg E_\alpha)\\
= \c \sum_{p,q\in K_{R-4}\cap \Lambda} N_pN_q\int  \fae (x-p) ( \delta_q^{(\eta)}- \delta_q^{(\alpha)}) 
+ 2 \c \sum_{p,q\in K_{R-4}\cap \Lambda}N_p N_q  \int \fae(x-p) (\delta_q^{(\alpha)}- m(x)\drd)\end{multline*} thus, bounding $\fae$ from below by \eqref{faee}, we get  
\begin{multline}
- C \|m\|_{L^\infty} \eta^{\frac{d-s}{2}} + 2\c(\g(\eta)- \g(\alpha)) \sum_{p\neq q, |p-q|\le 2\eta} N_p N_q
\\ \le 
Main - \c \sum_{p \in\Lambda\cap K_{R-4}} N_p  (\g(\eta)-\g(\alpha))\le  C \|m\|_{L^\infty} \eta^{\frac{d-s}{2}}.
\end{multline}

Next, to control $Error$, we partition $K_{R-1}\backslash K_{R-4}$ into disjoint cubes $\mathcal C_j$  of sidelength centered at points $y_j$ and we denote by $\mathcal N_j =\sum_{p  \in \Lambda \cap \mathcal C_j} N_p$.
By Lemma \ref{lemdiscrepance},     we have that 
$\mathcal N_j^2  \le C + C e_j$ where $$e_j:= \int_{B_2(y_j)\times (-R/2, R/2)^k} \yg |E_\alpha|^2 .$$
Using that the overlap of the $B_2(y_j)$ is bounded, we may write 
$$ \sum_j \mathcal N_j^2  \le C R^{d-1} + \sum_j e_j \le CR^{d-1} + \int_{\tilde K_{R}\backslash \tilde K_{R-5}} \yg |E_\alpha|^2 .$$
We then may deduce, by separating the contributions in each $\mathcal C_j$  and using the Cauchy-Schwarz inequality and 
$\int |\nab f_{\alpha, \eta} |^2  \le C \g(\alpha)$, 
that
\begin{multline}
Error\le  C \g(\alpha) \sum_j \mathcal N_j^2 + C \sum_j \g( \alpha)^{1/2}\mathcal  N_j e_j^{1/2}
 \le C \g( \alpha) \left( \sum_j \mathcal  N_j^2 + \sum_j e_j\right) \\
\le C \g( \alpha)      \left( R^{d-1} + \int_{\tilde K_{R}\backslash \tilde K_{R-5}} \yg |E_\alpha|^2 \right)  .\end{multline}
Returning to \eqref{errormain}  we have found that 
\begin{multline}\label{estm} 
- C\|m \|_{L^\infty} \eta^{\frac{d-s}{2}}  \sum_{p \in\Lambda\cap K_{R-4}} N_p - C \g( \alpha)     \left( R^{d-1} + \int_{\tilde K_{R}\backslash \tilde K_{R-5}} \yg |E_\alpha|^2 \right)\\
\le 
\(\int   \chi_R \yg |E_\alpha|^2 - \c \sum_{p \in\Lambda\cap K_{R-4}} N_p  \g(\alpha)\)-\( \int \chi_R  \yg |E_\eta|^2 - \c \sum_{p \in\Lambda\cap K_{R-4}} N_p  \g(\eta)\)\le 
\\
 2\c(\g(\alpha)- \g(\eta)) \sum_{p\neq q, |p-q|\le 2\eta} N_p N_q
+C\|m \|_{L^\infty} \eta^{\frac{d-s}{2}}  \sum_{p \in\Lambda\cap K_{R-4}} N_p + C \g( \alpha)     \left( R^{d-1} + \int_{\tilde K_{R}\backslash \tilde K_{R-5}} \yg |E_\alpha|^2 \right)
\end{multline}and it easily follows that 
\begin{multline}      
0 \le\( \int_{K_R\times \R^k }\yg |E_\alpha|^2 -\c \g(\alpha) \sum_{p\in \Lambda \cap K_R}  N_p \right)- 
\left( \int_{K_R\times \R^k }\yg |E_\eta|^2 -   \c\g(\eta) \sum_{p\in \Lambda \cap K_R}  N_p \right)  +error \\ \le  2\c(\g(\alpha)- \g(\eta)) \sum_{p\neq q, |p-q|\le 2\eta} N_p N_q\end{multline}with 
\begin{multline*}|error|\le
 C\eta^{\frac{d-s}{2}}\|m\|_{L^\infty}  \sum_{p \in \Lambda \cap K_{R } }N_p + C \g(\alpha)  \sum_{p\in \Lambda \cap (K_R\backslash K_{R-5}   )} N_p  \\+C (1+ \g(\alpha))  \int_{(K_R\backslash K_{R-5})\times \R^k} \yg |E_\eta|^2 +\yg |E_\alpha|^2,
\end{multline*} where $C$ depends only on $s$ and $d$. 

Let us now specialize to $E\in \mathcal{A}_m$.
In view of Lemma \ref{limdensity} we have that $ \lim_{R\to \infty} R^{-d} \sum_{p\in \Lambda \cap K_R} N_p =m$ and 
$\lim_{R \to \infty}R^{-d} \sum_{p\in  \Lambda \cap (K_R\backslash K_{R-5})}N_p =0$.
In addition,  since $\W_\alpha(E)  <\infty$ and by definition of $\W_\alpha$,  we must have $\lim_{R\to \infty} R^{-d} \int_{\tilde K_{R}\backslash \tilde K_{R-5}} \yg |E_\alpha|^2=0$. We deduce that 
$$- C m \eta^{\frac{d-s}{2}}\le  \W_\alpha(E)- \W_\eta(E)  \le    Cm \eta^{\frac{d-s}{2}} + 2\c  (\g(\alpha)- \g(\eta))  \limsup_{R\to \infty}\frac{1}{R^d} \sum_{p\neq q, |p-q|\le 2\eta} N_p N_q  .$$
It then immediately follows that $\W_\eta$ has a limit (finite or infinite) as $\eta \to 0$, and that $\W_\eta(E)$ is bounded below by, say, $\W_{1/2}(E) - Cm$, which in view of its definition is obviously bounded below by $-\c m- Cm$. 
\end{proof}

\section{The periodic case}\label{sec3}

We consider now the case of periodic configurations of charges $\Lambda$ an prove Proposition \ref{periodic} and Theorem \ref{abrikolat}. 
\subsection{Proof of  Proposition \ref{periodic}}
 We start by proving the second point, by a modification of the calculations of the proof of Proposition \ref{splitting}. By using the periodicity we see that we just have to compute
 \begin{equation}\label{wetaper}
  \lim_{\eta\to 0}\left[\int_{\mathbb T\times\mathbb R^k}\yg|\nab H_\eta|^2 -\c N\g(\eta)\right]\ .
 \end{equation}
We may take $\eta\le\frac{1}{2}\min_{i\neq j}|a_i-a_j|$. We remark that
\[
 H_\eta(x) = \sum_{i=1}^N \left(c_{d,s}\bar G(x-a_i) - f_\eta(x-a_i)\right)\ ,
\]where $\bar G$ is the  solution in $\mathbb{T}\times \R^k$ of 
\begin{equation} 
\label{defG1}-\div (\yg \nab \bar G)= \delta_0 - \frac{1}{|\mathbb{T}|}\drd,
\end{equation} 
with $\int_{\mathbb T\times \R^k} \bar G \drd=0$.  By the extension representation, one may check that the trace $G$ on $\R^d$ of $\bar G$ solves \eqref{defG}. 
Inserting into  \eqref{wetaper}, we  can  compute via an integration by parts:
\begin{eqnarray*}
\int_{\mathbb T\times\mathbb R^k}\yg|\nab H_\eta|^2 &=&- \int_{\mathbb T\times\mathbb R^k}H_\eta \div(\yg\nab H_\eta)  \\
 &=&c_{d,s}\int\left(\sum_{i=1}^N (c_{d,s}H(x-a_i) - f_\eta(x-a_i))\right) \Big(\sum_{j=1}^N\delta_{a_j}^{(\eta)} - \drd\Big)dx\ .
\end{eqnarray*}
Now we may continue exactly like in the proof of Proposition \ref{splitting}, with $\sum_{i=1}^n \bar G(x-a_i)$ in the place of $h_n(x)$ and with $1$ in the place of $\mu_V$. As an analogue of $h_n^i(x)$ we thus have
\[
c_{d,s}\sum_{j\neq i} \bar G(x-a_j) + N\lim_{|x|\to 0}\left(c_{d,s}\bar G(x)-\g(x)\right),
\]
therefore, since $\bar G $ has zero average, we obtain the following analogue of \eqref{Wsplit11}, where we write $G$ instead of $\bar G$ since all the quantities can be computed on the trace:
\[
 \lim_{\eta\to 0}\frac{1}{c_{d,s}^2}\left[\int_{\mathbb T\times\mathbb R^k}\yg|\nab H_\eta|^2 - N\c\g(\eta)\right]=\sum_{i\neq j} G(a_i- a_j) + N\lim_{x\to 0}\left(G(x) - \frac{\g(x)}{c_{s,d}}\right).
\]
This is equivalent to \eqref{minper}. Now assume that  $E=\nab h$ is another periodic vector field (in the first $d$ coordinates) which is compatible with the points  $a_1,\ldots,a_N$. We have that $\nab u=\nab h-\nab H$ solves $\div (\yg \nab u)=0$ and  is periodic. 
We may then write 
$$ \int_{\mathbb T \times \R^k} \yg |(\nab u + \nab H)_\eta|^2= \int_{\mathbb T \times \R^k} \yg |\nab u|^2 +\int_{\mathbb T \times \R^k} |\nab H_\eta|^2 $$
where the cross-term has vanished by using the periodicity of $H_\eta$ and of $\nab u$, and $\div (\yg \nab u)=0$.
It is then straightforward to deduce that $\W(\nab h) \ge \W (\nab H)$, which finishes the proof of item 2. 

We now turn to the proof of item 1. 
Suppose that the points $a_1,\ldots, a_N$ are not distinct. Without loss of generality, we have $k$ points $a_1,\ldots, a_k$ with multiplicities $N_1,\ldots, N_k$ with $\sum_{i=1}^k N_i=N, \max_i N_i\ge 2$. The left hand side of \eqref{estdistinct} in our periodic case then contains a first term equal to $c_{d,s}\g(\eta)\sum_{i=1}^kN_i^2$, which is not  cancelled by the negative $c_{d,s}N\g(\eta)$ term any more. The remaining terms in \eqref{estdistinct} are respectively positive, zero, and $o_\eta(1)$, therefore we have
\[
 \int_{\mathbb T\times\mathbb R}\yg|\nab H_\eta|^2=c_{d,s}\g(\eta)\left(\sum_{i=1}^kN_i^2 - N\right) + o_\eta(1)\to+\infty\ ,\quad (\eta\to 0)\ ,
\]
as desired.

For the third item, we  generalize the proof of  \cite[Prop. 2.10]{BS}. We solve the equation \eqref{defG} satisfied by $G$ by Fourier transform. 
We choose the following normalization for Fourier transforms and series:
$$\widehat{f}(\xi)= \int f(x) e^{-2i\pi x \cdot \xi }\, dx$$
$$c_k(f)= \int_{\mathbb T _N} f(x) e^{-\frac{2i\pi kx}{N}} \, dx.$$
Then  the Fourier inversion formula is $f(x) =\int \widehat{f}(\xi) e^{2i\pi x\cdot \xi} \, d\xi$ and
$$f(x)= \frac{1}{N} \sum_{k \in \mz} c_k(f) e^{\frac{2i\pi k}{N} x} .$$
If $G$ solves \eqref{defG} then $\widehat{G}$ has to satisfy
$$
\(\frac{2\pi m}{N}\)^{2\alpha}\widehat{G}(m)= \widehat{\delta_{(0,0)}} - \delta_m^0\frac{1}{N}\ , 
$$
with $\delta_m^0$ by definition equal to $1$ if $m =0$ and $0$ otherwise.
Combining these facts, we obtain
$$ \widehat{G}(m) = \frac{1 - \delta_m^0}{\(\frac{2\pi |m|}{N}\)^{2\alpha}}
\qquad for\ m\neq 0.$$
The undetermination of $\widehat{G(0)}$ corresponds to the fact that $G$ is only determined by \eqref{defG} up to a constant.
By Fourier inversion with the above normalization, it follows that
\begin{equation}
\label{Gcos}
G(x)= 2\frac{N^{2\alpha-1}}{(2\pi)^{2\alpha}}\sum_{k=1}^\infty \frac{\cos (2\pi\frac{ k}{N}x) }{k^{2\alpha}}+c.
\end{equation}
We see that the condition $\int_{\mathbb T_N}G=0$ gives $c=0$. We next use  the following formula found in \cite[p. 726]{prud}:
$$
\sum_{k=1}^\infty \frac{\cos ky}{k^a} = \frac{1}{\Gamma(a)}\int_0^\infty \frac{t^{a-1} (e^t \cos y  - 1) }{1-2e^t \cos y + e^{2t}} \, dt\ .
$$
Applying  it to  $a=2\alpha$ and $ y=\frac{2\pi}{N}x$ we obtain
$$
 G(x) = 2\frac{N^{2\alpha-1}}{(2\pi)^{2\alpha}\Gamma(2\alpha)}\int_0^\infty \frac{t^{2\alpha-1} \(e^t \cos \(\frac{2\pi}{N}x\)  - 1\) }{1-2e^t \cos \(\frac{2\pi}{N}x\) + e^{2t}} \, dt\ ,
$$
which is the desired formula \eqref{G1d}.

\subsection{The two-dimensional case}
In the case $N=1$ of Proposition \ref{periodic}, the first term in the expression of $\W(\nab H)$ does not appear and the lattice $\Lambda$ is such that $\mathbb T$ has volume $1$. Then the minimum $\W$-energy corresponding to the configuration $\Lambda$ can be expressed as follows, using the Fourier series expression for the function $G=G_\Lambda$ obtained like in the above proof:
\begin{equation}\label{glambda}
 \W(\Lambda)=c_{d,s}\lim_{x\to 0}\left(c_{d,s}G_\Lambda(x) - \g(x) \right)=c_{d,s}\lim_{x\to 0}\left(c_{d,s}\sum_{q\in\Lambda^*\setminus \{0\}}\frac{e^{2i\pi\,  q\cdot x}}{|2\pi q|^{2\alpha}} -\g(x) \right)\ ,
\end{equation}where $\Lambda^*$ denotes the dual lattice to $\Lambda$.
In dimension $d=2$ we may use like in \cite{gl13} the number theory results from \cite{cassels,ennolacassels,rankin,montgomery} to characterize the lattice of smallest $\W$-energy, allowing to prove Theorem \ref{abrikolat}.
Recalling that $\g(x)$ is the solution to $(-\Delta)^\alpha \g =\c \delta_0$ in $\mathbb R^2$, we have that $U_\Lambda:=G_\Lambda- \c^{-1}\g(x)$ is regular near $0$ and we have $\W(\Lambda)=\c^2U_\Lambda(0)$. 

Instead of the function $\g(x)$ we will use a function $\g_*(x)$ which has the following properties (where $\mathcal F$ denotes the Fourier transform):
\begin{equation}\label{gstar}
\left\{\begin{array}{l} 
        \mathcal F \g_*(\xi) = \mathcal F \g(\xi)\quad\text{outside }B_1,\text{ and }\mathcal F \g_*\text{ is finite in  }B_1\,\\
        \g(x) - \g_*(x)\text{ is }C^\infty\, .
       \end{array}\right.
\end{equation}
To construct $\g_*$ it suffices to note that both $\g(x)$ and $\mathcal F \g(\xi)$ are in $L^1_{\mathrm{loc}}(\R^2 \backslash \{0\})$ for our range of exponents, since $s,2\alpha\in(0,2)$. It follows that for $\eta$ smooth with support in $B_1$, equal to $1$ in a neighborhood of the origin, $\eta(\xi)\mathcal F \g(\xi)$ is a compactly supported distribution therefore its inverse Fourier transform is $C^\infty$ (see \cite[2.1.3]{hormander1}). We then define $\g_*(x):= \g(x)-\mathcal F^{-1}(\eta \mathcal F \g)(x)$ and the above properties are easily verified.\\

By replacing $\g_*$ for $\g$ in the definition \eqref{glambda} of $\W$, we define the functional
\begin{equation}\label{wstar}
  \W_*(\Lambda):=\c^2\lim_{x\to 0}\left(G_\Lambda(x) - \frac{\g_*(x)}{\c}\right)=\c^2 \lim_{x\to 0}\left(\sum_{q\in\Lambda^*\setminus \{0\}}\frac{e^{2i\pi\,  q\cdot x}}{|2\pi q|^{2\alpha}} -\frac{\g_*(x)}{\c}\right)\ .
\end{equation}
Note that $\W_*$ and $\W$ differ by a constant thus have the same minimizers, since $g(x)-g_*(x)$ is a regular function near zero. It thus suffices to prove the result for $\W_*$.

\subsection{Minimum of the Epstein zeta function in $2$ dimensions}
 We now recall a result from \cite{cassels} regarding the analytic continuation of the Epstein zeta function. 
Without loss of generality, we may just consider lattices of the canonical form 
\begin{equation}\label{lambdatau}
 \Lambda_{\tau}:= y^{-1/2}((x,y)\mathbb Z + (1,0)\mathbb Z)\text{, with }\tau=x+iy, y>0, x\in[0,1)\, .
\end{equation}
Let us denote the corresponding Epstein zeta function with exponent $2\alpha$ by
\begin{equation}\label{realpha1}
 Z_\tau (\alpha):=\sum_{q\in \Lambda_{\tau}\setminus\{0\}}\frac{1}{|q|^{2\alpha}}= \sum_{(m,n)\in\mathbb Z^2\setminus \{0\}}\frac{y^{\alpha}}{((mx +n)^2+ m^2y^2)^\alpha}\ .
\end{equation}
This series converges only for $\op{Re}(\alpha)>1$. We note that $Z_\tau$ is formally periodic in $x$ of period $1$. The following formula  is stated in \cite{chowlaselberg}, proved in \cite{batemangrosswald}, and obtained by manipulating the Fourier series in $x$ of $Z$:
\begin{equation}\label{selchow}
Z_\tau (\alpha) = 2y^\alpha \zeta(2\alpha) + 2 y^{1-\alpha}\zeta(2\alpha -1) \frac{\Gamma(\tfrac{1}{2})\Gamma(\alpha-\tfrac{1}{2})}{\Gamma(\alpha)} + Q(x,y,\alpha)\ ,
\end{equation}
where $\zeta(u):=\sum_{k=1}^\infty \frac{1}{k^u}$ is the Riemann zeta function,
\begin{equation*}
Q(x,y,\alpha):=\frac{8\pi^\alpha y^{-1/2}}{\Gamma(\alpha)}\sum_{r=1}^\infty r^{\alpha-\frac{1}{2}}\sigma_{1-2\alpha}(r) K_{\alpha - \frac{1}{2}}(2\pi\,ry)\cos(2\pi\,rx)\ ,
\end{equation*}
and $\sigma_\beta(k)=\sum_{d|k}d^\beta$ and $K_\nu(z)=\int_0^\infty e^{-z\cosh t}\cosh (\nu t) dt$ is the so-called modified Bessel function on the second kind, which for positive real $\nu, z$ decays exponentially. However if we consider the expression for $Z_\tau$ we see that the terms involving the Riemann zeta function again converge only for $\op{Re}(\alpha)>1$, i.e. precisely outside our interest range. But we may use the functional equation 
\[
 \zeta(s)=2^s\pi^{s-1}\sin\left(\frac{\pi s}{2}\right)\Gamma(1-s)\zeta(1-s)
\]
to extend the Riemann zeta function to $\op{Re}(\alpha)<1$. The formula \eqref{selchow} then gives the analytic continuation of $Z_\tau(\alpha)$ to a meromorphic function on all $\alpha\in\mathbb C\setminus\{1\}$, which has a pole of residue $\pi$ at $1$. Note that the value given by formula \eqref{realpha1}
coincides with the one given by \eqref{selchow} as long as it is defined, by uniqueness of the analytic continuation and since there are no branch points. The useful point to note is that for $(x',y')\neq (x,y)$ the difference $Z_\tau (\alpha) - Z_{\tau'}(\alpha)$ extends by continuity in $\alpha$ over $1$, the two poles cancelling each other.

We are then able to  use the following result
\begin{theo}[\cite{cassels,ennolacassels}]\label{casselsth}
 Let $\alpha\in \mathbb C, \op{Re}(\alpha)>0$. Under the same assumptions and notations on $\tau,\Lambda_\tau$ as above, if $Z_\tau (\alpha)$ is the analytic continuation in $\alpha$ as above then for all $x\in(0,1)$ and $y>0$ there holds
 \begin{equation}\label{zetaser}
  Z_\tau(\alpha) - Z_{e^{i\pi/3}}(\alpha)\ge 0\ ,
 \end{equation}
with strict inequality for $\tau \neq e^{i\pi/3}$.
\end{theo}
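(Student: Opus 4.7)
The plan is to follow the classical Rankin--Cassels--Ennola strategy: reduce by modular invariance to a fundamental domain for $SL_2(\mathbb Z)$, use the Chowla--Selberg formula \eqref{selchow} to control $Z_\tau(\alpha)$ on that domain, and establish the minimum by two successive monotonicity arguments in the variables $x$ and $y$. First I would observe that for $\operatorname{Re}(\alpha)>1$ the definition \eqref{realpha1} shows $Z_\tau(\alpha)$ only depends on the lattice $\Lambda_\tau$, hence is invariant under $\tau\mapsto \tau+1$ (trivial) and, via Poisson summation, under $\tau\mapsto -1/\tau$. These two symmetries generate the full modular group, so it suffices to prove the inequality on the standard fundamental domain $\{|\tau|\ge 1,\ 0\le\operatorname{Re}(\tau)\le 1/2\}$ (the symmetry $x\mapsto -x$ reduces $[-1/2,1/2]$ to $[0,1/2]$). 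Throughout this step I will work with $\operatorname{Re}(\alpha)>1$; the extension to $\operatorname{Re}(\alpha)>0$ will come at the end via analytic continuation.

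Next I would prove monotonicity in $x$ for fixed $y\ge\sqrt{3}/2$, namely that $\partial_x Z_\tau(\alpha)\le 0$ on the strip $0<x<1/2$, with strict inequality away from endpoints. Using the Chowla--Selberg formula \eqref{selchow} the $x$-dependence is contained entirely in
\[
Q(x,y,\alpha)=\frac{8\pi^\alpha y^{-1/2}}{\Gamma(\alpha)}\sum_{r=1}^\infty r^{\alpha-\tfrac12}\sigma_{1-2\alpha}(r)\,K_{\alpha-\tfrac12}(2\pi r y)\cos(2\pi r x),
\]
so $-\partial_x Q$ is a Fourier sine series $\sum_{r\ge 1} c_r(y,\alpha)\sin(2\pi r x)$. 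The key point is that because $K_{\alpha-\tfrac12}(2\pi ry)$ decays like $e^{-2\pi r y}$, the first term $r=1$ strictly dominates the tail when $y\ge \sqrt 3/2$, so the whole sine series is strictly positive on $(0,1/2)$. I would make this precise by a direct tail estimate on $K_\nu$ (integral representation) and by bounding $\sigma_{1-2\alpha}(r)$ from above. This reduces the minimization to the boundary $x=1/2$.

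The next step is to minimize $y\mapsto Z_{1/2+iy}(\alpha)$ for $y\ge\sqrt 3/2$. Along $x=1/2$ the series $Q(1/2,y,\alpha)$ has alternating signs $(-1)^r$, and differentiation in $y$ again produces a series whose $r=1$ term dominates; after combining with the derivative of the explicit piece $2\zeta(2\alpha)y^\alpha+2\zeta(2\alpha-1)y^{1-\alpha}\Gamma(\tfrac12)\Gamma(\alpha-\tfrac12)/\Gamma(\alpha)$, one checks that $\partial_y Z_{1/2+iy}(\alpha)>0$ for $y>\sqrt 3/2$ (equivalently $|\tau|>1$ on the line $x=1/2$). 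The critical point $y=\sqrt 3/2$ corresponds to $\tau=e^{i\pi/3}$. Together with the modular reduction and the $x$-monotonicity, this gives \eqref{zetaser} for real $\alpha>1$, with strict inequality except at $\tau=e^{i\pi/3}$.

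Finally, I would extend to all $\alpha\in\mathbb C$ with $\operatorname{Re}(\alpha)>0$. As noted just before the statement, for $\tau\neq\tau'$ the difference $Z_\tau(\alpha)-Z_{\tau'}(\alpha)$ extends to an entire function of $\alpha$ because the simple poles at $\alpha=1$ coming from $\zeta(2\alpha-1)$ in \eqref{selchow} cancel; what does not cancel is bounded by an exponentially convergent series coming from the $Q$-terms, so the difference is holomorphic in $\alpha$ on the entire plane minus $\{0\}$, and in particular on $\{\operatorname{Re}(\alpha)>0\}$. Inequality \eqref{zetaser} holds on the real ray $\alpha>1$ by the previous steps, and extends to $\operatorname{Re}(\alpha)>0$ by the identity principle together with positivity on the real axis $\alpha\in(0,\infty)$, which is proved by the same monotonicity argument (valid for all real $\alpha>0$ once the $r=1$ domination in the Bessel tail is verified uniformly). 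The main obstacle I anticipate is the quantitative Bessel tail estimate needed to show that the $r=1$ term really dominates $\sum_{r\ge 2}$ on the whole range $y\ge\sqrt 3/2$ and for all $\alpha>0$; the divisor sum $\sigma_{1-2\alpha}(r)$ grows too slowly to spoil the exponential decay of $K_{\alpha-\tfrac12}(2\pi ry)$, but producing a clean estimate uniform in $\alpha$ is the delicate point where one must either argue case by case in $\alpha$ or invoke a sharp integral representation of $K_\nu$.
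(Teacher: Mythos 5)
You should first note that the paper itself does not prove this statement: Theorem \ref{casselsth} is imported wholesale from Cassels and Ennola (with Rankin and Diananda in the background), and it is only ever used for real $\alpha=\tfrac{d-s}{2}\in(0,1)$ in Proposition \ref{unifconv}. So your proposal must stand as a self-contained proof attempt, and as such it has a genuine gap. The modular reduction to the fundamental domain and the use of the Chowla--Selberg expansion \eqref{selchow} are indeed the classical starting point, but the entire content of the theorem sits in the two monotonicity claims that you assert via ``the $r=1$ term dominates.'' That domination argument is essentially Rankin's, and it is known \emph{not} to cover the whole range: Rankin obtained the result only for real exponents above an explicit threshold (about $1.035$), and extending it to all real $\alpha>0$ is precisely the substance of the Cassels--Diananda--Ennola papers, which requires a finer analysis than first-term domination. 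The troublesome range includes exactly the one the paper needs, $0<\alpha<1$: there $\zeta(2\alpha)<0$, $\zeta(2\alpha-1)<0$, and $\Gamma(\alpha-\tfrac12)$ changes sign across $\alpha=\tfrac12$ (through a pole), so the claimed positivity of $\partial_y Z_{1/2+iy}(\alpha)$ for $y>\sqrt3/2$ is a delicate balance between sign-changing explicit terms and the Bessel series, not a small perturbation of an obviously monotone main term. Moreover the ``uniform in $\alpha$'' tail bound you hope for degrades for large real $\alpha$, since for order $\nu$ much larger than the argument $z$ one has $K_\nu(z)\sim\tfrac12\Gamma(\nu)(2/z)^{\nu}$, so $r^{\nu}K_\nu(2\pi ry)$ does not decay in $r$ until $2\pi ry$ exceeds roughly $\nu$. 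In short, what you flag as ``the delicate point'' is not a deferrable technicality; it is the theorem.

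The final step is also logically invalid: an inequality cannot be propagated by the identity principle. Analytic continuation transports identities, not sign conditions, and for nonreal $\alpha$ the difference $Z_\tau(\alpha)-Z_{e^{i\pi/3}}(\alpha)$ is in general not even real, so \eqref{zetaser} has no meaning off the real axis. The correct reading of the statement (and the only one used in the proof of Proposition \ref{unifconv}) is for real $\alpha>0$, the analytic continuation serving only to \emph{define} $Z_\tau(\alpha)$ when $\alpha\le 1$, where the lattice sum diverges. So you should drop the complex-$\alpha$ extension entirely and either carry out the real-$\alpha$ monotonicity estimates rigorously on all of $(0,\infty)$ (which amounts to redoing Cassels/Diananda/Ennola) or, as the paper does, cite those works.
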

The lattice $\Lambda_{tri}$ corresponding to $\tau = e^{i\pi/3}$ or  $(x,y)=(1/2,\sqrt3/2)$ is the triangular lattice. 
\subsection{Relation between $\W_*$ and $Z_\tau$ and proof of Theorem \ref{abrikolat}}
Theorem \ref{abrikolat} will follow from Theorem \ref{casselsth} once we prove the following proposition. 
\begin{pro}\label{unifconv}
Let $d=2, \alpha=\frac{d-s}{2}\in(0,1)$ and assume that $\Lambda_\tau \subset\mathbb R^2$ is a unit volume lattice. There holds 
\begin{equation}\label{wg}
 \W_*(\Lambda_\tau)- \W_*(\Lambda_{tri})=\frac{\c^2}{(2\pi)^{2\alpha}}\left(Z_\tau (\alpha) -Z_{e^{i\pi/3}} (\alpha)\right)\ .
\end{equation}
 \end{pro}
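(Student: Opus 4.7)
First, by \eqref{wstar} the term $\g_*(x)/\c$ appears identically in $\W_*(\Lambda_\tau)$ and $\W_*(\Lambda_{tri})$, so
\begin{equation*}
\W_*(\Lambda_\tau) - \W_*(\Lambda_{tri}) = \c^2 \lim_{x\to 0}\bigl(S_\tau(x) - S_{tri}(x)\bigr),\quad S_\Lambda(x):=\sum_{q\in\Lambda^*\setminus\{0\}}\frac{e^{2i\pi q\cdot x}}{|2\pi q|^{2\alpha}}.
\end{equation*}
For $x\neq 0$ each $S_\Lambda$ coincides with the Green's function $G_\Lambda$ of Proposition \ref{periodic} on the unit-volume torus $\mathbb T=\mathbb R^2/\Lambda$ (it is the termwise Fourier-inversion of \eqref{defG}), and the two Green's functions share the same $\g(x)/\c$ singularity at the origin, so their difference extends continuously to $x=0$. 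The plan is then to identify this continuous value with $(2\pi)^{-2\alpha}(Z_\tau(\alpha)-Z_{e^{i\pi/3}}(\alpha))$ by analyticity in $\alpha$.

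Next I would relate the dual Epstein zeta back to the direct one. With $\Lambda_\tau$ as in \eqref{lambdatau}, a direct dual-basis computation shows that $\Lambda_\tau^*$ is the image of $\Lambda_\tau$ under rotation by $\pi/2$, hence isometric to $\Lambda_\tau$, giving $Z_{\Lambda_\tau^*}(\alpha)=Z_\tau(\alpha)$ (and similarly for $\Lambda_{tri}$). The proposition is thus equivalent to
\begin{equation*}
\lim_{x\to 0}\bigl(S_\tau(x)-S_{tri}(x)\bigr) = \frac{1}{(2\pi)^{2\alpha}}\bigl(Z_{\Lambda_\tau^*}(\alpha) - Z_{\Lambda_{tri}^*}(\alpha)\bigr),
\end{equation*}
where on the right $Z$ is understood via meromorphic continuation.

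Finally, I would prove this identity by analytic continuation in $\alpha$. For $\op{Re}(\alpha)>1$ the series $S_\Lambda(0)$ converges absolutely to $(2\pi)^{-2\alpha}Z_{\Lambda^*}(\alpha)$ and the identity is immediate, so it suffices to check that both sides are holomorphic in $\alpha$ on a neighborhood of $(0,1)$. For the right-hand side this is read off from \eqref{selchow}: the candidate pole at $\alpha=1/2$ coming from $\zeta(2\alpha)$ cancels with that of $\Gamma(\alpha-1/2)$ inside the second term (a direct residue check using $\zeta(0)=-\tfrac12$ and $\Gamma(1/2)=\sqrt{\pi}$), and the pole at $\alpha=1$ cancels within the difference $Z_\tau-Z_{e^{i\pi/3}}$ as noted in the excerpt. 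For the left-hand side, the Mellin/heat-kernel representation
\begin{equation*}
\Gamma(\alpha)S_\Lambda(x) = \int_0^\infty t^{\alpha-1}(\Theta_\Lambda(t,x)-1)\,dt,\quad \Theta_\Lambda(t,x):=\sum_{q\in\Lambda^*}e^{-t|2\pi q|^2+2i\pi q\cdot x},
\end{equation*}
combined with Poisson summation (whose leading $\pi/t$ singularity of $\Theta_\Lambda(t,0)$ as $t\to 0$ is the same for every unit-volume lattice) shows that $S_\tau(0)-S_{tri}(0)$, defined by subtracting the common $t\to 0$ asymptotic before integrating, is entire in $\alpha$. Agreement on $\op{Re}(\alpha)>1$ then forces the identity on the connected component $(0,1)$. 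The main difficulty is precisely that in the range $\alpha\in(0,1)$ neither $G_{\Lambda_\tau}(0)$ nor $G_{\Lambda_{tri}}(0)$ is finite, and the Mellin/Poisson identity applied to the difference is what turns the claim into a legitimate analytic continuation argument.
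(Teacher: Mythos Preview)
Your approach is correct and takes a genuinely different route from the paper. Both proofs establish the identity for $\op{Re}(\alpha)>1$ by absolute convergence and then argue that each side is holomorphic on $\op{Re}(\alpha)>0$; the difference lies in how the analyticity of the left-hand side is obtained. The paper regularizes $G_\Lambda(0)$ by testing against approximate identities $\varphi_N=\mathcal F^{-1}\eta(\cdot/N)$, decomposes the resulting Fourier integral over Voronoi cells of $\Lambda^*$, and controls the discrepancy between sum and integral by a second-order Taylor expansion combined with a dyadic summation --- an Euler--Maclaurin style argument. You instead invoke the Mellin/theta representation and Poisson summation: since both unit-volume lattices produce the same leading $1/(4\pi t)$ singularity of $\Theta_\Lambda(t,0)$ as $t\to 0$, the integrand $t^{\alpha-1}(\Theta_{\Lambda_\tau}-\Theta_{\Lambda_{tri}})$ decays exponentially at both ends, giving an entire function of $\alpha$. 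This is the classical Riemann--Hecke method and is conceptually cleaner; the paper's smoothed-sum argument is more elementary in that it avoids the theta machinery but is computationally heavier. Your observation that $\Lambda_\tau^*$ is the $\pi/2$-rotation of $\Lambda_\tau$ (hence $Z_{\Lambda_\tau^*}=Z_\tau$) is a pleasant simplification the paper does not make explicit.

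One point deserves to be made more precise in your write-up: you need to justify that the Mellin-regularized value of $S_\tau(0)-S_{tri}(0)$ actually equals $\lim_{x\to 0}\bigl(G_{\Lambda_\tau}(x)-G_{\Lambda_{tri}}(x)\bigr)$, which is what $\W_*$ is built from. This follows because the Poisson summation shows $\Theta_{\Lambda_\tau}(t,x)-\Theta_{\Lambda_{tri}}(t,x)=O(t^{-1}e^{-c/t})$ as $t\to 0$ \emph{uniformly} for $x$ in a neighbourhood of $0$ (both share the leading Gaussian $\tfrac{1}{4\pi t}e^{-|x|^2/(4t)}$), so the Mellin integral for the difference converges uniformly in $x$ and one may pass to the limit under the integral sign. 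With this step made explicit, your argument is complete.
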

\begin{proof}
The left-hand side and right-hand side of \eqref{wg} agree for $\alpha=\frac{d-s}{2}>1$, because in view of \eqref{wstar} the following holds
\begin{eqnarray*}
 \W_*(\Lambda_\tau)-\W_*(\Lambda_{tri})&=&c_{d,s}^2\lim_{x\to 0}\left(\sum_{q\in\Lambda_\tau^*\setminus \{0\}}\frac{e^{2i\pi\,  q\cdot x}}{|2\pi q|^{2\alpha}} - \sum_{q\in\Lambda_{tri}^*\setminus \{0\}}\frac{e^{2i\pi\,  q\cdot x}}{|2\pi q|^{2\alpha}}\right)\\
 &=&\c^2\left(\sum_{q\in\Lambda^*\setminus \{0\}}\frac{1}{|2\pi q|^{2\alpha}} - \sum_{q\in\Lambda_{tri}^*\setminus \{0\}}\frac{1}{|2\pi q|^{2\alpha}}\right)\\
 &=&\frac{\c^2}{(2\pi)^{2\alpha}}\left(Z_\tau(\alpha)- Z_{e^{i\pi/3}} (\alpha)\right)\, .
\end{eqnarray*}
This is justified since the above sums giving the traditional definition of $ Z(\alpha)$ converge absolutely. For the same reason, the above formula for $\W_*(\Lambda)-\W_*(\Lambda_{tri})$ can be extended to the complex half-plane $\op{Re}(\alpha)>1$. We also have that the right hand side of \eqref{wg} is analytic in $\alpha$ for $\alpha\neq 1$, thus we only need to show that the left hand side is analytic in $\alpha$ too. To prove this we use a smoothed sum method, via a very basic Euler-Maclaurin type error estimate valid in two dimensions.

To that aim, let us consider a positive Schwartz function $\eta:\mathbb R^2\to [0,1]$ such that $\eta=1$ on $B_{1/2}$ and $\eta=0$ outside $B_1$. Then the inverse Fourier transform $\varphi$ of $\eta$ is a Schwartz function with integral equal to $1$ and the functions $\varphi_N(x):=N^2\varphi(Nx)=(\mathcal F^{-1}\eta)(\xi/N)$ approximate a Dirac mass at the origin. The following holds:
\begin{eqnarray*}
 \lim_{x\to 0}\left(G_\Lambda(x) - c_{d,s}^{-1}\g_*(x)\right)&=&\lim_{N\to\infty}\int_{\mathbb R^2} \varphi_N(x)(G_\Lambda(x)-\c^{-1}g_*(x))dx\\
 &=&\lim_{N\to\infty}\int_{\mathbb R^2} \eta(\xi/N)(\mathcal F G_\Lambda(\xi) - \c^{-1}\mathcal F g_*(\xi))d\xi\ .
\end{eqnarray*}
We may write the integral of the last line as a sum over the Voronoi cells $K_p:=\{x\in\mathbb R^2:\,|x-p|=\min_{p'\in \Lambda^*}|x-p'|\}$ for $p\in\Lambda^*$ and use \eqref{gstar}, to get
\begin{multline*}
 \int_{\mathbb R^2} \eta(\xi/N)(\mathcal F G_\Lambda(\xi) - \c^{-1}\mathcal F \g_*(\xi))d\xi =\\
  \sum_{p\in(\Lambda^*\setminus\{0\})\cap B_1}\frac{\eta(p/N)}{|2\pi p|^{2\alpha}}-\c^{-1} \sum_{p\in \Lambda^*\cap B_1}\int_{K_p}\eta(\xi/N)\mathcal F \g_*(\xi)d\xi + \sum_{p\in\Lambda^*\setminus B_1}\left(\frac{\eta(p/N)}{|2\pi p|^{2\alpha}} - \int_{K_p}\frac{\eta(\xi/N)}{|2\pi \xi|^{2\alpha}}d\xi\right)\\
 :=I_1(\Lambda, \alpha) + I_2(\Lambda,\alpha)\, .
\end{multline*}
We note that the term $I_1(\Lambda,\alpha)$ is just the sum of a finite (depending on $\Lambda=\Lambda_\tau^*$ but locally bounded with respect to $\tau$) number of terms, in particular it is analytic in $\alpha$ and converges uniformly in $N$ for $\tau$ in a fixed compact set and $\alpha$ in a fixed compact subset within $\op{Re}(\alpha)>0$. Therefore it suffices to focus on the term $I_2(\Lambda,\alpha)$. The terms in the sum defining it are of the form
\[
M_p^N:=\int_{K_p}\left(\frac{\eta(p/N)}{|2\pi p|^{2\alpha}}-\frac{\eta(\xi/N)}{|2\pi\xi|^{2\alpha}}\right)d\xi ,\ p\in\Lambda^*\setminus B_1\ .
\]
For real $\alpha>0$, define  $f_N(\xi):=|2\pi\xi|^{-2\alpha}\eta(\xi/N) $, $\vp_N= f_{N}-f_{N/2}$ which by choice of $\eta$ is supported in $B_N\setminus B_{N/4}$. Its Taylor expansion is
\[
  \vp_N(p) - \vp_N(\xi) + D\vp_N(p)\cdot(\xi-p) = E_N(\xi) \]
with
\begin{equation}\label{taylorf}
\|E_N\|_{L^\infty(K_p)}\le C\|D^2\vp_N\|_{L^\infty(K_p)}\max_{\xi \in K_p} |\xi-p|^2= C_\Lambda\|D^2\vp_N\|_{L^\infty(K_p)}\, ,
\end{equation} and by a direct computation we have
\begin{equation}\label{d2f}
 \|D^2\vp_N\|_{L^\infty(K_p)}\le C_{\eta,\alpha,\Lambda}\left(\frac{1}{N^2|p|^{2\alpha}} + \frac{1}{N|p|^{2\alpha+1}} + \frac{1}{|p|^{2\alpha+2}}\right)\, .
\end{equation}

Choosing now $N=2^h$, we may  write $f_N= f_{2^{k_\Lambda}}+  \sum_{k=k_\Lambda+1}^{h} f_{2^k}-f_{2^{k-1}}= f_{2^{k_\Lambda}}+ \sum_{k=k_\Lambda+1}^h  \vp_{2^k} $ and thus 
\begin{equation}\label{kptaylorerror}
 |M_p^N|\le \int_{K_p}|f_{2^{k_\Lambda}}(\xi)-f_{2^{k_\Lambda}} (p)|  + C_\Lambda  \sum_{k=k_\Lambda+1}^h   \int_{K_p}|E_{2^k}(\xi)|d\xi\ ,
\end{equation}where we used the fact that since  the domain $K_p$ is symmetric with respect to $p$ the integral of the first order term in the Taylor expansion  is zero.
We now sum the bounds \eqref{taylorf}--\eqref{d2f} for the contributions \eqref{kptaylorerror} over $p\in\Lambda^*$ such that $K_p$ intersects $B_{2^k}\setminus B_{2^{k-2}}$. In this case $2^{k-2}-C_\Lambda\le|p|\le 2^{k}+C_\Lambda$ and since each $K_p$ has volume $1$, the number of such $p$'s is $\le (2^k+C_\Lambda)^2$. Choosing $k_\Lambda$ minimal such that $2^{k_\Lambda}\ge 8C_\Lambda$, we have
\begin{eqnarray*}
 \sum_{p\in\Lambda^*\setminus B_1}|M_p^{2^h}|&\le& C_{\alpha,\eta,\Lambda} + C_\Lambda \sum_{k=k_\Lambda}^h\sum_{2^{k-2}-C_\Lambda\le|p|\le 2^k+C_\Lambda} \|D^2\vp_{2^k}\|_{L^\infty(K_p)}\\
 &\le& C_{\alpha,\eta,\Lambda} + C_\Lambda \sum_{k=k_\Lambda}^h (2^k+C_\Lambda)^2\frac{1}{(2^k)^{2+2\alpha}}\\
 &\le& C_{\alpha,\eta,\Lambda} + C_\Lambda \sum_{k=k_\Lambda}^\infty 2^{2(k+1)-(2+2\alpha)k}\\
 &=& C_{\alpha,\eta,\Lambda} + 2^{-2k_\Lambda\alpha}C'_{\Lambda,\alpha}\, .
\end{eqnarray*} In the last estimate we used the fact that $\alpha>0$ thus the series $\sum_{k=1}^\infty 2^{-2k\alpha}$ converges. The same reasoning easily extends to the case $\op{Re}(\alpha)>0$.

We thus see that the series in $M_p^N$ converges absolutely and uniformly in $N$ for $\tau,\alpha$ inside a fixed compact set respecting the given constraints. Since $\eta$ had bounded support, just finitely many terms contribute to each sum and by Morera's theorem the series gives a function which is  analytic in $\alpha$. By local uniformity of the convergence the limit in $N$ remains analytic in $\alpha$ for $\op{Re}(\alpha)>0$.

As discussed at the beginning of the proof, the left-hand side and the right-hand side of \eqref{zetaser} agree for $\op{Re}(\alpha)>1$, while they are both analytic on $\op{Re}(\alpha)>0$,  so by analytic continuation they must also agree on $\op{Re}(\alpha)>0$. This completes the proof.
\end{proof}

\section{Separation of points}\label{sec3b}
The proofs in this section follow \cite{rs,rns}, which themselves are inspired by \cite{lieb}. 

The first lemma expresses that if a configuration minimizes $\w$, each point must be  a minimum for the potential generated by the rest. 
\begin{lem}\label{lemreste}
Let $x_1, \dots , x_n\in \R^d$ minimize $H_n$. Let $x_i'$ be the blown-up points and $h_{n}'$ be associated as in \eqref{rescalh} and let $U(X)= h_{n}'(X)- \g(X-x_1') $.
Then, for any $\bar x \in \R^d$ and $\bar x'= n^{1/d}\bar x$, we have 
\begin{equation}\label{compU}
n^{s/d}U(x_1') + 2n \zeta(x_1)\le n^{s/d} U(\bar x')+ 2 n \zeta(\bar x)\end{equation}
\end{lem}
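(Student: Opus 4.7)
The plan is to exploit the minimality of $(x_1,\dots,x_n)$ in its first coordinate and then translate the resulting one-variable inequality into the objects $U$ and $\zeta$ via the Frostman identity \eqref{defzeta}. Fixing $x_2,\dots,x_n$, the map $\bar x\mapsto H_n(\bar x,x_2,\dots,x_n)$ is minimised at $\bar x=x_1$. Expanding $H_n$ and using that $\g$ is even, its $\bar x$-dependent part equals exactly $2\sum_{j\ge 2}\g(\bar x-x_j)+nV(\bar x)$, so this scalar quantity is minimised at $\bar x=x_1$.

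To see $\zeta$ and $U$ emerge, I would substitute the Frostman identity $h^{\muv}(y)+V(y)/2=\zeta(y)+c$. Using also that by \eqref{defhn}, $h_n(\bar x)-\g(\bar x-x_1)=\sum_{j\ge 2}\g(\bar x-x_j)-nh^{\muv}(\bar x)$, this gives
\[
2\sum_{j\ge 2}\g(\bar x-x_j)+nV(\bar x)=2\bigl[h_n(\bar x)-\g(\bar x-x_1)\bigr]+2n\zeta(\bar x)+2nc.
\]
The constant $2nc$ drops upon comparing the values at $\bar x=x_1$ and at an arbitrary $\bar x$; the quantity $h_n(y)-\g(y-x_1)$ is regular at $y=x_1$ and is interpreted there as the finite limit $\lim_{y\to x_1}(h_n(y)-\g(y-x_1))$, since only the $j=1$ term in $h_n$ carries a singularity at $x_1$.

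To finish, pass to the blown-up coordinates. The scaling \eqref{rescalh}, $h_n(Y)=n^{s/d}h_n'(Y')$, combined with the Riesz homogeneity $\g(y-x_1)=n^{s/d}\g(y'-x_1')$ (with the usual convention in the logarithmic cases \eqref{wlog}--\eqref{wlog2d}, where the additive $\log n$ terms cancel between the two sides of the comparison), yields $h_n(\bar x)-\g(\bar x-x_1)=n^{s/d}U(\bar x')$, and similarly at $\bar x=x_1$ through the finite limit defining $U(x_1')$. Inserting into the minimality inequality above produces \eqref{compU}. The argument is purely algebraic; the only bookkeeping points are the factor $2$ coming from the symmetry of the pair interaction (which yields the coefficient $2n$ on $\zeta$) and the regularisation of $U$ at $x_1'$ via a limit, and there is no serious analytic obstacle.
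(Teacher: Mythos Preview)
Your argument is correct and is in fact more elementary than the paper's own proof. You work directly with the Hamiltonian: isolating the $\bar x$-dependent part $2\sum_{j\ge 2}\g(\bar x-x_j)+nV(\bar x)$, substituting $V=2\zeta+2c-2h^{\muv}$ from \eqref{defzeta}, and recognising $\sum_{j\ge 2}\g(\bar x-x_j)-nh^{\muv}(\bar x)=h_n(\bar x)-\g(\bar x-x_1)=n^{s/d}U(\bar x')$ via the scaling \eqref{rescalh}. The minimality in the first variable then gives \eqref{compU} immediately.

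The paper instead passes through the renormalized-energy machinery: it introduces the potential $\tilde h_n$ associated to the perturbed configuration $(\bar x,x_2,\dots,x_n)$, expands $\int\yg|\nabla\tilde h_{n,\eta}|^2-\int\yg|\nabla h_{n,\eta}|^2$ by Green's formula, identifies the limit as $\eta\to 0$ with $U(\bar x')-U(x_1')$, and then invokes the splitting formula of Proposition~\ref{splitting} to bring in $\zeta$ and conclude from minimality of $H_n$. This is consistent with the paper's overall framework (everything phrased in terms of truncated energy integrals), but for this particular lemma your direct route avoids the truncation, the integration by parts, and the appeal to the splitting formula entirely. Both approaches rest on the same underlying fact---that $n^{s/d}U+2n\zeta$ is, up to an additive constant, the one-variable marginal of $H_n$---but you extract it algebraically rather than analytically.
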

\begin{proof}
Let us denote $\tilde{h}_n= \g * \left( \sum_{i=2}^n \delta_{x_i'} + \delta_{\bar x '} - \muv' \drd\right)$.   We note that $h_n$ and $\tilde{h}_n$ correspond to the potential generated by a zero total charge, so they both decay like $ |x|^{-s-1} $ as $x \to \infty$ while their gradients decay like $|x|^{-s-2}$.

We also let $\bar{h}(X) = \tilde{h}_n(X)- h_n(X)= \g( X-\bar x')-\g( X- x_1') $, and note that it decreases in the same way at infinity. 

Expanding the square, we write
\begin{multline}\label{decW1}
\int_{\R^{d+k}}  \yg |\nab \tilde {h}_{n,\eta}|^2 = \int_{\R^{d+k}}  \yg |\nab h_{n,\eta} |^2 + \int_{\R^{d+k}}  \yg  |\nab \bar h_\eta|^2 +  2\int_{\R^{d+k} } \yg \nab  
h_{n, \eta} \cdot \nab \bar h_\eta. \end{multline}
Using Green's theorem and the decay of $\bar{h}$ and $\nab \bar{h}$ at infinity, we have, if $\bar x' \neq x_1'$
$$ \int_{\mr^{d+k}}\yg |\nab \bar{h}_\eta|^2
= \c\int_{\R^{d+k}}  (\delta_{\bar x'}^{(\eta)} - \delta_{x_1'}^{(\eta)} ) (\g(X-\bar x') - \g(X- x_1') ) = 2 \c \g(\eta)  - 2\c \g(\bar x'- x_1') +o(1)$$ as $ \eta \to 0.  $
On the other hand, still if $\bar x' \neq x_1'$,  
\begin{multline*}
\int_{\mr^{d+k}}\yg \nab  
h_{n, \eta} \cdot \nab \bar{h}_\eta = \c \int (\g (X-\bar x')- \g(X-x_1') )  ( \sum_{i=1}^n \delta_{x_i'}^{(\eta)} - \muv' \drd) 
\\ =   \c\left( \sum_{i=2}^n \g(x_i'- \bar x') - \g(x_i'- x_1') -  \g(\eta) + \g(x_1' - \bar x')\right)\\ - 
\c \int (\g (X-\bar x')- \g(X-x_1') ) \muv' \drd+o(1), 
\end{multline*}as $\eta \to 0$.
We deduce that if $\bar x'\neq x_1'$, 
$$\lim_{\eta\to 0} 
\int_{\R^{d+k}}  \yg |\nab \tilde {h}_{n,\eta}|^2 - \int_{\R^{d+k}}  \yg |\nab h_{n,\eta} |^2 = U(\bar x')-U(x_1') ,$$
and of course there is equality as well if $\bar x '=x_1'$.
Combining with  the splitting formula Proposition \ref{splitting}, it follows that \eqref{compU} holds for minimizers.
\end{proof}

We now state a maximum principle proven in \cite[Theorem 2.2.2]{fks} for operators with weights in the  Muckenhoupt $A_2$ class, which contain the operator $-\div (\yg \nab \cdot)$ that we are using.
 
\begin{lem}\label{pcpmax}
Assume that  $h$ satisfies  
$$-\div (\yg \nab h) \ge 0$$ and $\int_U \yg |\nab h|^2 <+\infty$,
in some open subset $U $ of $\R^{d+k}$, and $\int_U \yg |\nab h|^2 <+\infty$. Then $h$ does not have any local minimum in $U$.
\end{lem}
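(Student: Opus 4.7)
The plan is to recognize this as a standard strong minimum (or weak Harnack) principle for degenerate elliptic supersolutions with a Muckenhoupt $A_2$ weight, and to reduce to the Fabes--Kenig--Serapioni framework \cite{fks}. First I note that the weight $w(X)=|y|^{\gamma}$ lies in $A_2(\mathbb R^{d+k})$ precisely when $\gamma\in(-1,1)$, which is the range relevant to us (recorded at the end of Section 1.2). The assumption $\int_U w|\nabla h|^2<+\infty$ places $h$ in the natural weighted Sobolev space $H^1(U,w)$, so the statement $-\div(w\nabla h)\ge 0$ can be given the usual weak meaning: $\int_U w\,\nabla h\cdot\nabla\varphi\ge 0$ for every non-negative $\varphi\in C^\infty_c(U)$.

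The first step is to interpret ``no local minimum'' as the strong minimum principle: if $h$ attains a local minimum at some $X_0\in U$ then $h$ is constant on the connected component of $X_0$ in $U$ (which is how I would state the conclusion; the phrasing in the lemma rules out isolated/strict local minima). So I would assume by contradiction that $v:=h-h(X_0)\ge 0$ on some ball $B(X_0,2r)\subset U$ with $v(X_0)=0$, and that $v\not\equiv 0$ on $B(X_0,r)$. By construction $v$ is a non-negative weak supersolution of the same degenerate equation on $B(X_0,2r)$ and still belongs to $H^1(B(X_0,2r),w)$.

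The second step is to invoke the weak Harnack inequality for non-negative supersolutions from \cite{fks} (Theorem~2.3.8 and the minimum-principle corollaries), which in the $A_2$ setting yields
\[
\Bigl(\frac{1}{w(B(X_0,r))}\int_{B(X_0,r)} v^{p}\,w\,dX\Bigr)^{1/p}\ \le\ C\,\essinf_{B(X_0,r/2)} v
\]
for some $p>0$ and $C$ depending only on $d,k,\gamma$. Since $\essinf_{B(X_0,r/2)} v\le v(X_0)=0$, the right-hand side vanishes, so $v\equiv 0$ almost everywhere on $B(X_0,r)$; combined with the local H\"older continuity of weak (super)solutions also established in \cite{fks}, this forces $h\equiv h(X_0)$ on $B(X_0,r)$, contradicting the assumption that $X_0$ is an isolated local minimum.

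The main obstacle is not the contradiction argument itself, which is classical, but verifying that the hypotheses fit the framework of \cite{fks}: one needs the $A_2$ condition $\sup_B\bigl(\fint_B w\bigr)\bigl(\fint_B w^{-1}\bigr)<\infty$ on Euclidean balls $B\subset\mathbb R^{d+k}$, which for $w(X)=|y|^\gamma$ with $\gamma\in(-1,1)$ is a direct calculation (the singular set $\{y=0\}$ is a hyperplane of codimension $k\le 1$ and the integrability exponent stays away from $\pm k$). Once this is in hand, the Harnack/minimum principle of \cite{fks} applies uniformly on balls intersecting or missing the degeneracy set $\{y=0\}$, and the proof concludes as above.
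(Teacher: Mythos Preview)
Your proposal is correct and follows essentially the same approach as the paper: both reduce to the Fabes--Kenig--Serapioni theory for $A_2$-weighted degenerate elliptic operators. The paper does not give a proof at all but simply cites \cite[Theorem~2.2.2]{fks} directly for the minimum principle, whereas you derive it from the weak Harnack inequality (\cite[Theorem~2.3.8]{fks}) via the standard contradiction argument; this is a slightly more explicit route to the same destination.
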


We are now in a position to deduce the 
\begin{proof}[Proof of Theorem \ref{separation}]
First we prove that all the points are in $\Sigma' \times \{0\}$. Take a minimizing configuration, and let $U$ be as in Lemma \ref{lemreste}. 
 Since $h_{n}' $ satisfies \eqref{hnp}, we see that $U$ satisfies 
 $-\div(\yg \nab U)=\g*  \mu $, where $\mu$  is a measure of total mass $-\c$ and compactly supported in $ \R^d \times \{0\}$. Thus $U$ must be asymptotic to $- \g$ at infinity.
In the cases \eqref{wlog}--\eqref{wlog2d}, this implies that $U\to +\infty$ as  $|X|\to +\infty$, and in the case \eqref{kernel}, this implies that $U(X)<0$ for $|X|$ large enough while $U(X) \to 0$ as $|X|\to +\infty$.  In addition $U (X) \to + \infty $ as $X\to x_i'$ for some $i\neq 1$.
In both cases, it  follows that $U$ must achieve a minimum somewhere in $\R^{d+k}\backslash \cup_{i=2}^n \{x_i'\}$. 
But from \eqref{hnp}, we have that $-\div (\yg \nab U) \ge 0 $ in $\R^{d+k } \backslash (\Sigma'\times \{0\}\cup \cup_{i=2}^n  \{x_i'\})$ which is an open set.  Also $\yg |\nab U|^2$ is locally integrable away from the points $x_i'$. The maximum principle, Lemma \ref{pcpmax} thus applies to $U$ away from the points $x_i'$, and implies that the  minimum must be in $\Sigma' \times \{0\}$ (because it can't be at any $x_i'$ for $i \ge 2$). Call $\bar x' $ this point and $\bar x= n^{-1/d}\bar x'$. In view of Lemma \ref{lemreste}, since $\zeta(\bar x)=0$,  we have 
$$ U(\bar x') \le U(x_1') \le U(\bar x') - 2n^{1-s/d} \zeta(x_1)$$
which implies (since $\zeta\ge 0$) that $U(x_1')= U(\bar x') $ and thus $x_1'$ is also a point of minimum of $U$. Thus $x_1'\in \Sigma'$ by the maximum principle.
Since the system is invariant under relabelling, this shows that all the points $x_i$ are in $\Sigma$, as claimed.

We next prove the separation result. Now that we know that all the points are in $\Sigma$, \eqref{compU} gives that for any $\bar x'\in \R^d$,
\begin{equation}\label{compU2}
U(x_1') \le U(\bar x') +2 n^{1-s/d} \zeta(\bar x).
\end{equation}

Let $\ro= \left(\frac{1}{\omega_d \overline{m}}\right)^{\frac{1}{d}}$ where $\omega_d $ is the volume of the unit ball in dimension $d$, and $\overline{m}$ is the constant in \eqref{assmu2}.
Let $x_2'$ be some point  in the collection and let us assume  that $x_1'\in B(x_2', \ro)$ (up to relabelling).
 Note that the  choice of $\ro$ ensures that $\overline{m} \mathcal H^d(B(x_2', \ro)\cap \R^d\times \{0\}))<1$.
We next  split  $U$  into 
\begin{equation}
\label{splitu}
U= U^{near}  + U^{rem}
\end{equation}
where
\begin{align*}
& U^{near}=  \g* \Big( \delta_{x_2'}- \overline{m}   \indic_{B(x_2', \ro)}\drd\Big)
\\ &
U^{rem}= \g * \Big( \sum_{i=3}^n \delta_{x_i'}-  \muv'  \drd  +\overline{ m} \indic_{B(x_2', \ro) }\drd \Big).\end{align*}

One may observe that 
$$-\div (\yg \nab U^{rem}) \ge \c \left( \overline{m} \indic_{B(x_2', \ro)}- \muv' \right) \drd \ge 0\quad \text{in} \ (\Sigma')^c\cup B(x_2', \ro)$$
because we always have $\overline{m} \ge \muv'$.
 Moreover, since $- \int\div (\yg \nab U^{rem})=\c( -3 + \overline{m} \omega_d \ro^d) $ is a negative number $C\le -2 \c$ by choice of $\ro$,  $U^{rem}  $ is asymptotic to $\frac{C}{\c} \g$ at infinity.  By the same reasoning as above it follows that $U^{rem}$ achieves a minimum somewhere. 
 By Lemma \ref{pcpmax} this minimum is at some point $\bar{x}'$ which cannot be in $(\Sigma')^c \cup B(x_2', \ro)$ hence is in the closure of  $(\Sigma' \backslash B(x_2', \ro))\times \{0\}$.
 Thus \begin{equation}\label{Urem1}
 U^{rem}(\bar{x}') \le U^{rem} (x_1')\qquad \zeta(\bar x)=0.\end{equation}

We next turn to $U^{near}$ and show that if $|x| \overline{m}^{1/d}$ is smaller than some constant $r>0$ depending only on $d$ and $s$, then 
\begin{equation}\label{compunear}
U^{near} (x) \ge \max_{B(x_2', \ro)^c} U^{near} .\end{equation}
By change of origin, we may  assume that $x_2'=0$.  By scaling, it also suffices to prove this for $\overline{m}= 1$ and $\ro = \omega_d^{-1/d}$.
 We note that then $U^{near} \le \g +C $, where the constant depends only on $s,d$ and can be taken to be $0$ in the Riesz cases (it is there to account for the possible negativity of $-\log |x|$ in the logarithmic case), so that $\max_{B(0, \ro)^c} U^{near} \le \g(\ro)+C= \g(\omega_d^{-1/d})+C$. On the other hand, in the Riesz case, 
 $$U^{near} (x) = \g(x) - \int_{y \in B(0, \ro)} \g(y-x)\, dy  \ge \g(x) - \int_{y\in B(0, |x|+\ro) }\g(y) \, dy \ge \g(x) - C (\ro+ |x|)^{d-s}$$ and the conclusion follows immediately. One easily checks that the conclusion holds as well in the logarithmic cases, and one can also note that an explicit estimate of the best $r$ is possible.

 If $r$ is taking as above, then if $x_1' \in B(x_2', r \overline{m}^{-1,d})$, we have from \eqref{compunear} that 
This implies that $U^{near} (\bar x') < U^{near} (x_1')$. Combining this with \eqref{Urem1} and   \eqref{splitu}, it follows that
$
U(\bar{x}')< U (x_1')$  and $\zeta(\bar{x})=0$, a contradiction with \eqref{compU2}. Thus, for any pair of points $x_1', x_2'$ (up to relabelling), we must have $|x_1'-x_2'|\ge r {\overline m}^{-1/d}$. Scaling down  concludes the proof.   \end{proof}

\begin{pro}\label{pro53}
Assume $E $ is a minimizer of $\W$ over the class of vector-fields in $\mathcal A_1$ which are $R$-periodic, for some given $R$ such that $|K_R|\in \mathbb N$. Then, letting $\Lambda$ be the associated set of points,  we have $$\min_{p, q \in \Lambda, p\neq q } |p-q|\ge r $$ where $r$ is a positive constant depending only on $s$ and $d$ (the same as in Theorem \ref{separation}).
\end{pro}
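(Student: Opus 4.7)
The plan is to adapt the proof of Theorem \ref{separation} to the torus. By Proposition \ref{periodic} items 1--2, the minimizer $E=\nab H$ corresponds to $N=|K_R|$ distinct points $p_1,\dots,p_N$ on the torus $\mathbb T:=\mathbb R^d/R\mathbb Z^d$ which minimize $\sum_{i\neq j}G(p_i-p_j)$. Testing against the variation $p_1\mapsto \bar x\in\mathbb T\setminus\{p_j\}_{j\neq 1}$ and using that $G$ is even yields the Euler--Lagrange inequality $\sum_{j\neq 1}G(p_1-p_j)\le\sum_{j\neq 1}G(\bar x-p_j)$. Equivalently, the function $U(X):=c_{d,s}\sum_{j\neq 1}G(X-p_j)$ on $\mathbb T\times\mathbb R^k$, which solves $-\div(\yg\nab U)=c_{d,s}(\sum_{j\neq 1}\delta_{p_j}-\tfrac{N-1}{N}\drd)$, attains its minimum over $(\mathbb T\setminus\{p_j\}_{j\neq 1})\times\{0\}$ at $X=p_1$.

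Assume for contradiction (and without loss of generality by relabelling) that $p_2$ is the nearest neighbor of $p_1$ and that $|p_1-p_2|<r$, for a small constant $r>0$ depending only on $d,s$ to be chosen. Set $\rho=\omega_d^{-1/d}$ so that $\delta_{p_2}-\indic_{B(p_2,\rho)}\drd$ has zero total mass, and define the splitting $U=U^{\mathrm{near}}+U^{\mathrm{rem}}$, with $U^{\mathrm{near}}$ the zero-mean solution on $\mathbb T\times\mathbb R^k$ of $-\div(\yg\nab U^{\mathrm{near}})=c_{d,s}(\delta_{p_2}-\indic_{B(p_2,\rho)}\drd)$. The remainder then satisfies, on the open set $(B(p_2,\rho)\setminus\{p_j\}_{j\ge 3})\times\mathbb R^k$, the inequality $-\div(\yg\nab U^{\mathrm{rem}})=c_{d,s}\bigl(1-\tfrac{N-1}{N}\bigr)\drd\ge 0$. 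Since $U^{\mathrm{rem}}\to+\infty$ at each $p_j\in B(p_2,\rho)$ with $j\ge 3$, Lemma \ref{pcpmax} forces the minimum of $U^{\mathrm{rem}}$ on $B(p_2,\rho)\cap(\mathbb T\times\{0\})$ to be attained at some $\bar x\in\partial B(p_2,\rho)$. Combining the optimality $U(p_1)\le U(\bar x)$ with $U^{\mathrm{rem}}(\bar x)\le U^{\mathrm{rem}}(p_1)$ reduces the target contradiction to the strict barrier $U^{\mathrm{near}}(p_1)>U^{\mathrm{near}}(\bar x)$.

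To obtain this barrier, I would import the analogous statement proved at the end of Theorem \ref{separation}: on $\mathbb R^{d+k}$ the Euclidean function $V_\infty:=\g*(\delta_{p_2}-\indic_{B(p_2,\rho)}\drd)$ satisfies $V_\infty(X)>\max_{|Y-p_2|\ge \rho}V_\infty(Y)$ for $|X-p_2|<r$, with $r$ depending only on $d,s$. Recovering the torus version via periodization, $U^{\mathrm{near}}=V_\infty+\sum_{k\in\mathbb Z^d\setminus\{0\}}V_\infty(\cdot-kR)-\overline{c}$ (where $\overline{c}$ enforces the zero-mean condition), one exploits the fact that the choice $\omega_d\rho^d=1$ kills the leading $|X|^{-s}$ term in $V_\infty$ at infinity, leaving $V_\infty(X)=O(|X-p_2|^{-s-2})$. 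Since $s+2>d$ throughout our range $\max(0,d-2)\le s<d$, the lattice sum converges absolutely and defines a periodic correction with $L^\infty(\mathbb T)$ norm bounded uniformly in $R\ge 1$. This transfers the barrier from $V_\infty$ to $U^{\mathrm{near}}$ (with the same universal $r$ after possibly shrinking it once) and closes the contradiction.

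The main obstacle is precisely this last step: the uniform-in-$R$ control of $\|U^{\mathrm{near}}-V_\infty\|_{L^\infty(\mathbb T)}$, resting on the delicate cancellation built into the choice of $\rho$ and on verifying the decay rate $|X|^{-s-2}$ (including in the logarithmic cases where the convention $s=0$ still gives $s+2>d$ for $d\le 1$ but requires a separate verification for $d=2$, where the two-dimensional extension parameter $k=0$ or $k=1$ matters). All other ingredients of Theorem \ref{separation} translate directly: the role previously played by the confining potential $\zeta$ and the decay at infinity on $\mathbb R^d$ is now replaced by the compactness of $\mathbb T$ together with the superharmonicity of $U^{\mathrm{rem}}$ inside the smeared ball $B(p_2,\rho)$, and the universality of the resulting $r$ follows from the universality of the Euclidean barrier.
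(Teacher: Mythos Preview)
Your strategy matches the paper's: derive the Euler--Lagrange inequality for $U$, split $U=U^{\mathrm{near}}+U^{\mathrm{rem}}$, apply the maximum principle to $U^{\mathrm{rem}}$, and invoke a barrier for $U^{\mathrm{near}}$. Two points need attention.

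First, your use of Lemma~\ref{pcpmax} on $U^{\mathrm{rem}}$ is not correct as stated. Superharmonicity of $U^{\mathrm{rem}}$ on $(B(p_2,\rho)\setminus\{p_j\}_{j\ge 3})\times\mathbb R^k$ rules out interior local minima in that open $(d{+}k)$-dimensional set, but says nothing about minima of the \emph{restriction} to the slice $\{y=0\}$: a function can be harmonic in $\mathbb R^{d+1}$ yet have an interior minimum on a hyperplane (think of $x^2-y^2$). The paper instead argues globally: its $U^{\mathrm{rem}}$ is superharmonic on all of $(\mathbb T\times\mathbb R^k)\setminus\big[(\mathbb T\setminus B(p_2,\rho))\times\{0\}\cup\{p_j\}_{j\ge3}\big]$, is periodic in $x$, tends to $0$ as $|y|\to\infty$, and has zero average on $\mathbb T\times\{0\}$, so it attains a \emph{global} minimum, which Lemma~\ref{pcpmax} then forces into $(\mathbb T\setminus B(p_2,\rho))\times\{0\}$. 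Your $U^{\mathrm{rem}}$ is superharmonic on the same open set (the coefficient $1/N$ in place of the paper's $(2-|B|)/N$ is immaterial), so the global argument works verbatim for your splitting; you should replace the local claim by this one.

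Second, your periodization argument for the barrier is the right idea, but the assertion ``$s+2>d$ throughout our range $\max(0,d-2)\le s<d$'' fails at the endpoint $s=d-2$ (Coulomb, $d\ge 3$) and in the $2$D logarithmic case, where $s+2=d$ and the lattice sum $\sum_{k\neq 0}|kR|^{-s-2}$ diverges. However these are exactly the cases $k=0$, where Newton's theorem gives $V_\infty\equiv 0$ outside $B(p_2,\rho)$ (a uniform ball of total mass $1$ generates the same exterior potential as the point charge), so the periodization becomes a finite sum and the difficulty disappears. In the genuine $k=1$ cases one has $s>d-2$ strictly, hence $s+2>d$, and your decay estimate and absolute convergence go through. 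With these two corrections your proof is complete; note that it actually supplies a step---the Euclidean-to-torus transfer of the barrier---that the paper's own proof cites from Theorem~\ref{separation} without spelling out.
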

Note that such a minimizer exists by the explicit formula in Proposition \ref{periodic}.
\begin{proof}
The argument is the same as above.  First we note that the points of $E $ must have single-multiplicities, otherwise $\W(E)$ would be infinite by Proposition \ref{periodic}.
Second,  we let $h$ be the periodic potential generated by the $N=|K_R|$ minimizing points $x_1, \dots x_N$,  and $G$ be the periodic Green function of the operator $-\div (\yg \nab \cdot )$ as in \eqref{defG1}. Let  $U(X)= h(X) - G(X-x_1)$ and let $\tilde h= G* \left( \sum_{i=2}^N \delta_{x_i}+ \delta_{\bar x}- \drd\right) $ and $\bar h(X)= \tilde h (X)- h(X) =G(X-\bar x)- G(X-x_1)$. 
Computing exactly as in 
 Lemma \ref{lemreste}, we find that  $\W (\nab \tilde h) - \W(\nab h) = U(\bar x)-U(x_1')$, hence since $(x_1, \dots, x_N)$ minimize $\W$, we must have that $x_1'$ is a minimum of $U$. 
 
 Next we  assume by contradiction  that  there is a point in the collection, say  $x_2$, such that   $x_1\in B(x_2, r)$ where $r$ is as in Theorem \ref{separation}, and write 
\begin{equation}
\label{splitu2}
U= U^{near}  + U^{rem}\end{equation}
where
\begin{align*}
& U^{near}= G * \left( \delta_{x_2}-   \indic_{B(x_2, \ro)}\drd\right)
\\ &
U^{rem}= G * \Big( \sum_{i\neq 1, 2}\delta_{x_i}+ (   \indic_{B(x_2, \ro) } - 1) \drd\Big).\end{align*} By definition of $G$, we have
\begin{multline*}-\div (\yg \nab U^{rem})=\c\left( \sum_{i\neq 1, 2}\delta_{x_i}+ \left(   \indic_{B(x_2, \ro) } - 1- \frac{- 2 + |B(x_2, \ro)|}{N}\right) \drd\right)\\
\ge \c\left( (\indic_{B(x_2, \ro)} -1 )\drd\right)\end{multline*} where we used that $|B(x_2, \ro)|<1$, i.e. $-\div(\yg \nab U^{rem})\ge 0 $ in $B(x_2, \ro)\cup (\R^d)^c$. Moreover, since $G$ has average $0$ in each periodicity cell of $\R^d$, $U^{rem}$ too, so $U^{rem}$ takes negative values unless it is constant. But  $U^{rem} $ is periodic in the $x\in \R^d $ direction and tends to $0$ as $|y|\to \infty$ (like $G$), so $U^{rem}$ must achieve a minimum at some point. 
By Lemma \ref{pcpmax}, this minimum   $\bar x$ can only  be in $(\R^d \backslash B(x_2, \ro)) \times \{0\}$. On the other hand, we saw in the proof of Theorem \ref{separation} that $U^{near}(x_1) >\max_{B(x_2, \ro)^c}U^{near}$.  
It follows that if $|x_1-x_2|<r$, we have $U(\bar x)<U(x_1)$, a contradiction.

\end{proof}

\section{Lower bound with the ergodic theorem approach}\label{sec4}

In this section, we turn to obtaining a lower bound for the energy of arbitrary (non necessarily minimizing) configurations. 
From Proposition \ref{splitting} and Lemma \ref{prodecr} 
we already have the following: for all $\eta<1$, 
\begin{multline}\label{binspl}
H_n(x_1, \dots, x_n) - n^2 \I(\muv) \\
\ge 2n \sum_{i=1}^n \zeta(x_i) + 
n^{1+\frac{s}{d}}   \left( \frac{1}{n} \int_{\R^{d+k}}\yg |\nab h'_{n,\eta}|^2 -   \c\g(\eta) - C\eta^{\frac{d-s}{2}} \right) ,
\end{multline}
respectively in the cases \eqref{wlog}--\eqref{wlog2d},
\begin{multline}\label{binspllog}
H_n(x_1, \dots, x_n) - n^2 \I(\muv)+ \frac{n}{d}\log n \\
\ge 2n \sum_{i=1}^n \zeta(x_i) + 
n   \left( \frac{1}{n} \int_{\R^{d+k}}\yg |\nab h'_{n,\eta}|^2 -   \c\g(\eta) - C\eta^{d/2} \right) ,
\end{multline}
where $C$ depends only on $d, s, V$.  We recall that from Lemma \ref{prodecr} there is equality if $\min_{i\neq j} |x_i-x_j|>2\eta.$

In this section we will take the limits $n\to\infty$ and $\eta\to 0$  in the above relations to  provide the lower bound for the energy.
This is done as in  \cite{rs} according to the method for ``lower bounds for 2-scale energies" initiated in \cite{gl13,ss2d} and inspired by Varadhan.
 The idea is to rewrite the energy as an average of energies computed on finite size balls after blow-up.
More precisely,   consider a radial smooth probability density $\chi:\R^d\to \R^+$ supported in the unit ball of $\R^d$. We may  rewrite  the energy of  $\hne'$ (which is the function defined using \eqref{defheta} and \eqref{rescalh}) as:
\begin{eqnarray*}
 \int_{\R^{d+k}}\yg|\nab \hne'|^2&=&\int_{\R^{d+k}}\left(\int_{\R^d}\chi(x-\tilde x)d\tilde x\right)\yg|\nab \hne'(X)|^2dX\\
 &=&\int_{\R^{d+k}}\int_{\R^d}\chi(\tilde x)\yg|\nab\hne'(X+(\tilde x,0))|^2d\tilde x\ dX\\
 &\ge&\int_{n^{1/d}\Sigma}\int_{\R^k}\int_{\R^d}\chi(\tilde x)\yg\left|\nab\hne'\left(X+(\tilde x,0)\right)\right|^2d\tilde x\ dy\ dx,
 \end{eqnarray*}
where we discarded the integral over the complement of $n^{1/d} \Sigma$ which we guessed to be unimportant. Changing variables, we obtain
\begin{equation}\label{crucialerg}
 \int_{\R^{d+k}}\yg|\nab \hne'|^2
 \ge
 n|\Sigma| \dashint_{\Sigma} \int_{\R^{d+k}} \chi(\tilde x) \yg \left|\nab\hne'\left(xn^{1/d}+\tilde x,y\right)\right|^2d \tilde x\, dy\ dx\, .
\end{equation}
The method then consists in examining the local energies thus defined, i.e. 
$$\int_{\R^{d+k}} \chi(\tilde x) \yg \left|\nab\hne'\left(xn^{1/d}+\tilde x,y\right)\right|^2d\tilde x \, dy$$
which have natural limits and to rewrite \eqref{crucialerg} as an average over $x\in \Sigma$ of these energies. 

More precisely, given some configuration of points and $h_{n,\eta}' $ its associated truncated blown-up potential,   the local energy is  defined as follows  based on \eqref{crucialerg}, for $(x,\Y)\in \mathcal X$ ($\mathcal X $ is the space $\Sigma \times L^p_{\loc}(\R^{d+k}, \R^{d+k})$ as specified  in the introduction):
\begin{equation*}
\mathbf f_{n,\eta}(x,\Y):=\left\{\begin{array}{ll}\int_{\R^{d+k}}\chi\yg|\Y|^2&\text{ if }\ \Y(X)=\nab\hne'(X+(xn^{1/d},0))\, ,\\[3mm] +\infty&\text{ else}\, ,\end{array}\right.
\end{equation*}
where $\theta_\lambda,\lambda\in\R^d$ is the group of translations of $\R^d$, which acts on $\mathcal X$ by 
\[
 \theta_\lambda \Y(X):=\Y(X+(\lambda,0))\, . 
\]To separate scales we also consider scale-$n^{1/d}$ coupled actions of the $\R^d$-translations on $\R^d\times\mathcal X$ defined as follows:
\[
 T_\lambda^n(x,\Y):=\left(x+n^{1/d}\lambda,\theta_\lambda\Y\right)\, .
\]
The global energy is defined as an average of the local ones by: 
$$\mathbf F_{n,\eta}(\Y):=\dashint_{\Sigma}\mathbf f_{n,\eta}\left(x, \theta_{xn^{1/d}}\Y\right)dx$$
and \eqref{crucialerg} translates into the upper bound
\begin{equation}\label{crucialerg2}
 \mathbf F_{n,\eta}(\Y)\le\frac{1}{|\Sigma|n}\int_{\R^{d+k}}\yg|\nab \hne'|^2\ ,
\end{equation}
if $\mathbf F_{n,\eta}(\Y) \neq +\infty$.
In view of \eqref{crucialerg2} and \eqref{binspl}, to bound the energy from below, it suffices to bound from below $\mathbf F_{n,\eta}$. Theorem 7 in \cite{ss2d} is precisely designed to obtain lower bounds on such energies from input at the microscopic scale (i.e. on $\mathbf f_n$). The idea is that 
$\mathbf F_{n,\eta}$ is roughly $\int \mathbf f_{n,\eta} (x, \mathcal Y) \, dP_{n,\eta} (x,\mathcal Y)$ where $P_{ n,\eta}$ is the push-forward of $P_{\nu_n}$, as defined in \eqref{pnun} by the map $\Phi_\eta$, or in other words
the push-forward of the normalized Lebesgue measure on $\Sigma$ by 
$$x\mapsto \left(x, \nab h_{n,\eta}'(n^{1/d} (x,0) + \cdot)\right).$$

Then it suffices to obtain some tightness for $\{P_{n,\eta}\}$ and pass to the limit in this average  $\int \mathbf f_n \, dP_{n, \eta}$ to obtain a lower bound by $\int \mathbf f_\eta \, dP_\eta $ where $\mathbf f_\eta$ is identified as the $\liminf $ of $\mathbf f_{n,\eta}$.
More precisely, if $(\tilde x_n, \mathcal Y_n ) \to (\tilde x, \mathcal Y)$ in $\mathcal X$ we have 
$$\liminf_{n\to \infty} \mathbf  f_{n,\eta}(x_n, \Y_n)\ge \mathbf f_\eta(\tilde x, \Y)\ ,
 $$
where
\[
 \mathbf f_\eta (\tilde x,\Y):=\begin{cases}
\D\int_{\mr^d}\chi\yg|\mathcal{Y}|^2& \text{if } \tilde x \in \Sigma \text{ is a point of continuity of }  \muv  \\
& \text{and} \ \mathcal{Y} = \Phi_\eta(E)  \text{ for some } E \in{\mathcal A}_{\mu_V (\tilde x)}\\[3mm]
0 & \text{if } \ \tilde x \text{ is a point of discontinuity of $\muv$},\\[3mm]
+\infty & \text{otherwise} .
\end{cases}
\]

 Showing that the limit $P=\Phi_\eta^{-1}(P_\eta)$ is admissible will ensure in particular that it is translation-invariant, and that we may apply the multi-parameter ergodic theorem to conclude with the desired lower bound.

The first step is to obtain the tightness of $\{P_{n,\eta}\}_n$. This will be obtained from the following compactness result for the local energies, analogous to \cite[Lemma 4.2]{rs}.
\begin{lem}\label{rslem52}\mbox{}\\
Let $h_{n,\eta}'$ be the truncated blown-up potential generated by a configuration of points and let $\nu_n'= \sum_{i=1}^n \delta_{x_i'}$. 
Assume that  for every $R>1$ and for some  $\eta\in(0,1)$, we have
\begin{equation}\label{eq:asum h}
\sup_n  \int_{B_R\times\R^k} \yg\left|\nab \hne'\left(\left(n^{1/d}\tilde x_n,0\right)+\cdot\right) \right|^2 \le C_{\eta, R},
\end{equation}
and that the centering point $\tilde x_n\to \tilde x\in\mr^d$ as $n\to \infty$.
Then $\{\nu'_n((n^{1/d}\tilde x_n,0)+\cdot)\}_{n}$ is locally bounded and up to extraction converges weakly as $n\to \infty$, in the sense of measures, to
$$\nu = \sum_{p\in\Lambda}N_p \delta_p$$
where $\Lambda $ is a discrete set and $N_p\in \mathbb{N}^*$.
In addition, there exists $E, E_\eta \in L^p_{\loc}(\R^{d+k} ) $ for  $1<p<\min(2,\frac{2}{\gamma+1}, \frac{d+k}{s+1})$,  with $E_\eta=\Phi_\eta(E)$
such that up to further extraction of a subsequence,
\begin{equation}\label{convweak1}
\nab h_n'\left(\left(n^{1/d} x_n,0\right)+\cdot\right)  \rightharpoonup   E \ \text{weakly in } \ L^p_{ \loc}  \ \text{as } \ n\to \infty ,
\end{equation}
and
\begin{equation}\label{convweak2}
\nab \hne'   \left(  \left(n^{1/d}x_n,0\right)+\cdot\right)   \rightharpoonup  E_\eta \ \text{ weakly in } \ L^p_{\loc}  \ \text{and } L^2_{\yg, \loc} \ \text{as } \  n \to \infty.
\end{equation}
Moreover $E$ is a gradient, and if $\tilde x\in \Sigma$ is a point of continuity of  $\muv$, we have
\begin{equation}\label{eqhl}
- \op{div}(\yg E) = c_{d,s}(\nu -\mu_V(\tilde x))\quad \text{in } \ \R^{d+k} \end{equation} hence $E \in{\mathcal A}_{\mu_V (\tilde x)}$.
\end{lem}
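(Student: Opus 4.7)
The plan is to center the vector fields at $n^{1/d}\tilde x_n$, use the energy bound together with the discrepancy estimate (Lemma \ref{lemdiscrepance}) to obtain compactness in each of the required topologies, and then pass to the distributional limit in the PDE satisfied by $\nab h_n'$. Denote $E_n^\eta(X) := \nab h_{n,\eta}'((n^{1/d}\tilde x_n,0)+X)$ and $\nu_n^c := \nu_n'((n^{1/d}\tilde x_n,0)+\cdot)$. Translating \eqref{hnpe}, $E_n^\eta$ satisfies
\begin{equation*}
-\div(\yg E_n^\eta) = \c\Big(\sum_i \delta_{x_i'-n^{1/d}\tilde x_n}^{(\eta)} - m_n\drd\Big), \qquad m_n(x) := \muv(\tilde x_n + x/n^{1/d}),
\end{equation*}
with $\|m_n\|_{L^\infty}$ uniformly bounded. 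Applying Lemma \ref{lemdiscrepance} to $E_n^\eta$ on balls $\tilde B_{2L}(a) \subset B_R \times \R^k$, the assumption \eqref{eq:asum h} converts the energy lower bound into an upper bound on the discrepancy $D_n(a,L)$, which in turn bounds $\nu_n^c(B_L(a))$ uniformly in $n$. Thus $\{\nu_n^c\}$ has locally uniformly bounded mass and, by weak-$*$ compactness of positive Radon measures, a subsequence converges to a positive Radon measure $\nu$. Since each $\nu_n^c(B)$ is an integer for every Borel set, the continuity-set property of weak-$*$ convergence forces $\nu$ to take integer values on all of its continuity sets, so $\nu = \sum_{p\in\Lambda} N_p\delta_p$ with $N_p \in \mathbb N^*$ and $\Lambda$ discrete.

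For the vector fields, the bound \eqref{eq:asum h} gives, along a subsequence, weak convergence of $E_n^\eta$ in $L^2_{\yg,\loc}$ to some $E_\eta$. Since $\gamma\in(-1,1)$, the weight $\yg$ is an $A_2$-Muckenhoupt weight and $|y|^{-\gamma p/(2-p)} \in L^1_\loc$ for $p < 2/(\gamma+1)$, so H\"older upgrades this to weak convergence in $L^p_\loc$ for $p < \min(2, 2/(\gamma+1))$. For the untruncated field, writing on any $B_R$
\begin{equation*}
\nab h_n'((n^{1/d}\tilde x_n,0)+X) = E_n^\eta(X) + \sum_{i\,:\, |x_i'-n^{1/d}\tilde x_n|\le R+\eta} \nab f_\eta(X - (x_i'-n^{1/d}\tilde x_n)),
\end{equation*}
the sum involves $O(1)$ terms by the previous step, and each $\nab f_\eta$ lies in $L^p_\loc$ for $p<(d+k)/(s+1)$ since it behaves like $|X|^{-s-1}$ near the origin. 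Hence the right-hand side is uniformly bounded in $L^p_\loc$ for $p < \min(2, 2/(\gamma+1), (d+k)/(s+1))$, and a further subsequence gives $\nab h_n'((n^{1/d}\tilde x_n,0)+\cdot)\rightharpoonup E$ in $L^p_\loc$. Each prelimit is a gradient and curl is continuous in the distributional topology, so $E$ is curl-free on the simply connected $\R^{d+k}$, hence a gradient.

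It then suffices to pass to the distributional limit in
$-\div(\yg \nab h_n'((n^{1/d}\tilde x_n,0)+\cdot)) = \c(\nu_n^c-m_n\drd)$: weak $L^p_\loc$ convergence handles the left-hand side, $\nu_n^c \rightharpoonup \nu$ handles the first term on the right, and if $\tilde x$ is a continuity point of $\muv$ then $m_n\to \muv(\tilde x)$ uniformly on compact sets. This yields $-\div(\yg E) = \c(\nu - \muv(\tilde x)\drd)$ and shows $E \in \mathcal A_{\muv(\tilde x)}$. The identification $E_\eta = \Phi_\eta(E)$ follows from the analogous passage to the limit in the identity $\nab h_{n,\eta}' = \nab h_n' - \sum_i \nab f_\eta(\cdot - x_i')$, combined with the fact that $\nu_n^c$ converges to $\sum_p N_p\delta_p$ and that $\delta^{(\eta)}$ is obtained by a convolution-type operation that is continuous for weak-$*$ convergence of measures.

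The main obstacle is the first step: turning the weighted $L^2$ energy bound into a genuine local bound on the number of points of the centered configuration. This is what closes the compactness argument, both for the measures $\nu_n^c$ and for the non-truncated fields $\nab h_n'$, and requires the full strength of the discrepancy estimate together with the uniform control of $m_n$. A secondary technical point is to check that the three weak-convergence modes used (for $E_n^\eta$, for $\nab h_n'$, and for $\nu_n^c$) are mutually compatible, so that the truncation commutes with taking the limit, i.e.\ $E_\eta = \Phi_\eta(E)$ actually holds and not merely up to a residual distribution supported on $\Lambda$.
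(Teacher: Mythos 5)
Your proposal is correct and follows the same overall skeleton as the paper's proof: (i) use the energy bound and the PDE to obtain a local bound on the number of centred points, (ii) extract weak limits of the truncated and untruncated fields in $L^2_{\yg,\loc}$ and $L^p_\loc$ respectively, (iii) pass to the distributional limit in the translated PDE, using the continuity of $\muv$ at $\tilde x$ for the background term. The one visible departure is in step (i): you invoke Lemma \ref{lemdiscrepance} (the quantitative discrepancy estimate) to convert the energy bound into a bound on $\nu_n^c$, whereas the paper proceeds as in Lemma \ref{limdensity}, integrating the equation over a well-chosen ball $B_t$ with $t\in[R-1,R]$ and estimating the flux through a good slice by Cauchy--Schwarz. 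Both yield the same local mass bound; the paper's route is slightly more elementary since it does not need the full lower bound of Lemma \ref{lemdiscrepance}, but yours is equally valid and arguably cleaner to cite. The justification of $E_\eta=\Phi_\eta(E)$ is phrased a bit loosely (the relevant observation is that each of the finitely many $\nab f_\eta(\cdot-p_n)$ with $p_n\to p$ converges weakly in $L^p_\loc$, not that $\delta^{(\eta)}$ commutes with weak-$*$ convergence), but the intended argument is the same as the paper's and is sound.
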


\begin{proof}
Following the proof of Lemma \ref{limdensity}, we easily deduce from the bound \eqref{eq:asum h}  that 
there exists $t\in [R-1,R]$ such that 
\[
\left| \int_{ B_t\times\R^k}    \sum\delta_{(x_i',0)}^{(\eta)} ((n^{1/d} \tilde x_n,0) + X) dX   - \int_{B_t} \mu_V(\tilde x_n  +  n^{-1/d}  x)\, dx \right|\ \le C_{\eta, R},
\] for some constant that depends on $\eta, R$ and the constant in \eqref{eq:asum h}.
Since $\mu_V$ is bounded it follows that, letting $\underline{\nu_n'}:=\nu_n'(n^{1/d}\tilde x_n,0) + \cdot) $, we have
\[
\underline{\nu_n'}(B_{R-1}\times\R^k) \le C_d \|\mu_V\|_{L^\infty}  R^d  +C_{\eta,R}\,. 
\]
This establishes that $\{\underline{\nu_n'}\}$ is locally bounded independently of $n$. In view of the form of $\underline{\nu_n'}$, its limit can only be of the form $\nu= \sum_{p\in \Lambda} N_p \delta_p$, where $N_p$ are positive  integers, and $\Lambda$ is a discrete set contained in $\R^d\times\{0\}$.

From the bound \eqref{eq:asum h}, up to a further extraction, we have that $\underline{\nab h}_{n,\eta}':= \nab h_{n, \eta}' ((n^{1/d} \tilde x_n,0)+\cdot) $ is locally weakly convergent in the weighted space $L^2_{\yg}$, and converges (locally) to some vector field $E_\eta$. 
Using H\"older's inequality, we note that $L^2_{\yg} (B_R) $ (where $B_R$ denotes the ball of radius $R$ in $\R^{d+k}$) embeds continuously into $L^q(B_R)$ for $1<q<\min(2, \frac{2}{\gamma+1})$. It thus follows that $\underline{\nab h}_{n,\eta}' $ is bounded in $L^q_{\loc}$ and converges to $E_\eta$ also in the sense of distributions. 
We next deduce that  $\underline{\nab h}_{n}':=  \nab h_n' ( (n^{1/2} \tilde x_n,0)+\cdot)$ is bounded in $L^p(B_R)$ for $1< p<  \min(2, \frac{2}{\gamma+1}, \frac{d+k}{s+1})$. Indeed, $\nab h_n'= \nab h_{n,\eta}'- \sum_{i=1}^n \nab f_\eta(X- (x_i', 0))$, 
$\nab f_\eta$ is in $L^p(B_R)$ for any $p<\frac{d+k}{s+1}$, and the number of $x_i'$'s in $(n^{1/d}\tilde x_n,0)+ B_R$ is bounded by $ \underline{\nu_n'}(B_{R}\times\R^k) $, hence bounded. 
We thus deduce that, up to a further extraction, $\underline{\nab h}_n' $ converges weakly in $L^p(B_R)$ for such $p$'s, to some vector-field $E$, which must be a gradient.
Moreover,  $E_\eta=\Phi_\eta(E)$ because $\Phi_\eta$ commutes with the weak convergence in $L^p_{\loc}$ of  $\underline{\nab h}_n'$. Indeed by definition
\[
\Phi_\eta (\underline{\nab h}_n') =\underline{ \nab h}_n'  + \sum_{p \in \Lambda_n} N_p \nab f_\eta (. - p)\ ,
\]
where $\Lambda_n$ is the set of points associated with $\underline{\nu_n'} $. Since  all these points have limits, one may check that the sum in the right-hand side converges to $\sum_{p \in \Lambda} N_p \nab f_\eta (. - p)$, at least weakly in $L ^p_{\loc}$. Taking the limit, we deduce
\[
 E_\eta = E + \sum_{p \in \Lambda}N_p \nab f_\eta (. - p),
\]
i.e. $E_\eta = \Phi_\eta(E)$ as desired.

There remains to show that \eqref{eqhl} holds. For that we start from \eqref{hnpe} and translating  the equation by $n^{1/d}(\tilde x_n,0)$ and integrating  against a smooth compactly supported test function $\vp$, we find 
$$\int_{\R^{d+k}} \nab \vp \cdot \underline{\nab h}_{n,\eta}'\yg = \c\int_{\R^{d+k}} \vp  \left( \sum \delta^{(\eta)}_{x_i'- n^{1/d} \tilde x_n  }    -\muv'( n^{1/d}(\tilde x_n,0)+ \cdot) \drd\right). $$
In view of the $L^2_{\yg}$ convergence of $\underline{\nab h}_{n,\eta}'$, the weak convergence in measures of $\underline{\nu}_n'$ and the continuity of $\muv $ at $\tilde x=\lim_{n\to \infty} \tilde x_n$, taking the $n\to \infty $ limit in this relation yields
$$\int_{\R^{d+k}} \nab \vp \cdot E_\eta  \yg= \c\int_{\R^{d+k}} \vp  \left(\sum_{p\in\Lambda} N_p \delta_p^{(\eta)}-\muv(\tilde x) \drd\right), $$
where $\Lambda, N_p$ are associated to $\nu$. 
It thus follows that 
$$-\div (\yg E_\eta)= \c\sum_{p\in \Lambda} N_p \delta_p^{(\eta)}- \muv(\tilde x) \drd \quad \text{in} \ \R^{d+k}$$
in the sense of distributions.
Since $E=\Phi_\eta^{-1}(E_\eta)$, the relation \eqref{eqhl} follows.

\end{proof}

At this point, the rest of the proof is identical to \cite{rs}. We thus only state the main steps. 
The assumptions of the abstract Theorem 7 of \cite{ss2d} are satisfied thanks to Lemma \ref{rslem52}. 
This theorem then ensures 
that $P_{n,\eta}$ as defined above is tight and converges up to extraction to a Borel probability measure $P_\eta $ on $\mathcal X$, and $P:=\Phi_\eta^{-1}(P_\eta)$ is admissible (for this we need that $P$-a.e. $\mathcal(x,E)$,  $E\in\mathcal{A}_{\muv(x)}$. This is ensured by Lemma \ref{rslem52} assuming that a.e. point in $\Sigma$ is a point of continuity of $\muv$.
 Theorem 7 of \cite{ss2d} also  yields that 
 \begin{equation}
\label{rg1} \liminf_{n \to 0} \mathbf F_{n,\eta}(\Y) \ge \int \mathbf f_\eta( x, \Y)\, dP_\eta(x,\Y)=\int 
\left(\lim_{R\to
+\infty}\dashint_{K_R} \mathbf f_\eta(x,\theta_\lambda  \Y  )\,d\lambda\right)\, dP_\eta(x, \mathcal Y).
\end{equation}
The second relation is an application of Wiener's multiparameter ergodic theorem as in \cite{becker}, in view of the translation-invariance of $P$ and thus  of  $P_\eta$, and it is part of the result that the limit exists. 
Finally, combining this with the definition of $\mathbf f_\eta $ and \eqref{crucialerg2}, we are led to 
$$\liminf_{n\to \infty} \frac{1}{|\Sigma|n} \int_{\R^{d+k}} \yg |\nab h_{n,\eta}'|^2 \ge \lim_{R\to \infty} \dashint_{K_R} |\Phi_\eta(E)|^2 \, dP(x, E).$$
Inserting into \eqref{binspl} and using the fact that $\int \muv=1$ and the first marginal of $P$ is the normalized Lebesgue measure on $\Sigma$, we obtain 
$$\liminf_{n\to \infty} 
 n^{-1-\frac{s}{d}} \left(  H_n(x_1, \dots, x_n) - n^2 \I(\muv)\right) \ge
\frac{|\Sigma|}{\c} \int \W_\eta(E)\, dP(x, E)- C \eta^{\frac{d-s}{2}},$$ and the analogous statement in the cases \eqref{wlog}--\eqref{wlog2d}.
It then remains to let $\eta \to 0$. Since $P$-a.e. $(x,E)$ is in $\mathcal A_{\muv(x)}$ and $\muv $ is bounded, in view of \eqref{scalingW}--\eqref{scalinglog} and Proposition \ref{Wbb}, we have that  $\W_\eta$ is bounded below independently of $\eta$ for $P$-a.e. $(x, E)$. We may thus apply Fatou's lemma to take the $\eta \to 0$ limit. 
In view of the definition of $\widetilde{\W}$ \eqref{Wtilde} we  thus obtain the following general lower bound result.


\begin{pro}\label{proergodic} Assume $V$ satisfies \eqref{assv1}--\eqref{assv2}--\eqref{assv3}, and that $\muv$ is a measure with an $L^\infty$ density and a.e. continuous.
Let $x_1, \dots, x_n \in \R^d$ and define $P_{\nu_n}$ as in \eqref{pnun}.
Then up to extraction of a subsequence, $P_{\nu_n}$ converges weakly in the sense of probability measures to a measure $P\in \mathcal{P}(\mathcal X)$ which is admissible, and 
 \begin{equation}\label{binf}
\liminf_{n\to \infty} n^{-1-\frac{s}{d}} \left(  H_n(x_1, \dots, x_n) - n^2 \I(\muv)\right) \ge
\widetilde{\W}(P)\  \mathrm{in case \eqref{kernel}}\, ,\end{equation}
respectively 
\begin{equation}\label{binflog}
\liminf_{n\to \infty} \frac{1}{n} \left(  H_n(x_1, \dots, x_n) - n^2 \I(\muv)+\frac{n}{d} \log n \right) \ge
\widetilde{\W}(P)\  \mathrm{in cases \eqref{wlog}--\eqref{wlog2d}}.\end{equation}
\end{pro}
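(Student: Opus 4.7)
The plan is to combine three ingredients already assembled above: the splitting formula of Proposition \ref{splitting}, the monotonicity inequality of Lemma \ref{prodecr} (which produces \eqref{binspl}--\eqref{binspllog}), and the abstract lower-bound machinery of \cite[Theorem 7]{ss2d} applied to the local energies $\mathbf f_{n,\eta}$ introduced just before the statement. For fixed $\eta\in(0,1)$, the splitting plus monotonicity reduce the task to lower-bounding $\frac{1}{n}\int_{\R^{d+k}}\yg|\nab h_{n,\eta}'|^2$, since the $\zeta$ contribution is nonnegative and the truncation error is $O(\eta^{(d-s)/2})$. The convolution identity \eqref{crucialerg}--\eqref{crucialerg2} then expresses this integral as $|\Sigma|$ times $\mathbf F_{n,\eta}(\nab h_{n,\eta}')=\int \mathbf f_{n,\eta}\,dP_{n,\eta}$, where $P_{n,\eta}$ denotes the push-forward of $P_{\nu_n}$ by the map $(x,E)\mapsto(x,\Phi_\eta(E))$.

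Next, I would verify the microscopic hypotheses of \cite[Theorem 7]{ss2d} by invoking Lemma \ref{rslem52}: any sequence of translated local fields with bounded $L^2_{\yg}$ energy on balls admits a subsequential weak $L^p_{\loc}$ limit $E$, whose class $\mathcal{A}_{\muv(\tilde x)}$ is identified at every point $\tilde x\in\Sigma$ where $\muv$ is continuous, and such that $\Phi_\eta$ commutes with this weak limit. The abstract theorem then yields tightness of $\{P_{n,\eta}\}_n$, the existence (up to extraction) of a weak limit $P_\eta$ and of $P:=\Phi_\eta^{-1}(P_\eta)$, together with the admissibility of $P$: the first marginal is the normalized Lebesgue measure on $\Sigma$ by construction; the constraint $E\in\mathcal{A}_{\muv(x)}$ holds $P$-a.e. thanks to the hypothesis that a.e. $x\in\Sigma$ is a continuity point of $\muv$; and the $T_{\lambda(x)}$-invariance is built in through the scale-$n^{1/d}$ coupled actions $T_\lambda^n$ used to define $\mathbf F_{n,\eta}$. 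Combined with Wiener's multiparameter ergodic theorem applied to the translation-invariant process $P_\eta$, this gives
\[
 \liminf_{n\to\infty} n^{-1-s/d}\bigl(H_n(x_1,\dots,x_n) - n^2\I(\muv)\bigr) \ \ge\ \frac{|\Sigma|}{\c}\int \W_\eta(E)\,dP(x,E) - C\eta^{(d-s)/2},
\]
and the analogous bound in the logarithmic cases \eqref{wlog}--\eqref{wlog2d} where the $\frac{n}{d}\log n$ correction shows up.

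Finally, I would send $\eta\to 0$. Proposition \ref{Wbb} provides a lower bound on $\W_\eta(E)$ which is uniform in $\eta$ and depends only on $d$, $s$ and $\muv(x)$, itself bounded by $\|\muv\|_{L^\infty}$. Hence Fatou's lemma permits passing to the $\eta\to 0$ limit inside the integral, yielding $\int \W(E)\,dP(x,E)=(\c/|\Sigma|)\widetilde{\W}(P)$ by \eqref{Wtilde}, which is exactly \eqref{binf}; the logarithmic statement \eqref{binflog} follows identically from \eqref{binspllog}.

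The main obstacle will be verifying the full $T_{\lambda(x)}$-invariance of $P$ for arbitrary $C^1$ fields $\lambda:\Sigma\to\R^d$, not just for constant translations. The whole point of the coupled action $T_\lambda^n(x,\Y)=(x+n^{1/d}\lambda,\theta_\lambda\Y)$ is that in the limit $n\to\infty$ the slow $x$-variable is essentially unmoved while the fast variable undergoes a genuine translation, so the limiting process acquires translation invariance of the microscopic configuration for every $x$-dependent translation. Making this rigorous requires the same refinement of ergodic averaging developed in \cite{ss2d,rs}, which in the present setting goes through because Lemma \ref{rslem52} is robust with respect to a moving center $\tilde x_n\to\tilde x$.
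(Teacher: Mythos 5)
Your proposal is correct and follows essentially the same route as the paper: splitting formula plus monotonicity to reduce to the truncated energy, the averaging identity \eqref{crucialerg}--\eqref{crucialerg2}, Lemma~\ref{rslem52} to check the local hypotheses of \cite[Theorem 7]{ss2d}, Wiener's multiparameter ergodic theorem, and then Fatou with Proposition~\ref{Wbb} to send $\eta\to 0$. You also correctly flag the $T_{\lambda(x)}$-invariance as the delicate point handled by the coupled action $T_\lambda^n$ and the robustness of Lemma~\ref{rslem52} to moving centers, which matches the paper's reliance on the abstract machinery of \cite{ss2d,rs}.
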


\section{Screening}\label{sec5}
Starting with this section, we turn to the upper bound part of the proof. We now need to use the further assumptions on $\muv$, \eqref{assumpsigma}--\eqref{assmu2}.

In this section, we prove the screening result.  More precisely, 
we consider an ``electric" vector field in a strip $K_R\times \R^k$ for $R$ large. 
 We would like 
to  prove that up to errors on the energy which can be made negligible as $R\to \infty$, we may modify $E$ in a neighborhood of the boundary of the strip
 in such a way as to obtain $E\cdot\nu=0$ on $\p  K_R\times \R^k $. 
 This  allows to later patch together several copies of the so-constructed $E$ without creating new divergence. We also want to ensure that the construction preserves the ``good separation" of the points.

Constants $C$ will mean positive constants which may only depend on $d$ and $s$.

\begin{pro}[screening]\label{screening}Let $0<\ep<1/2$ be given. 
Assume $K_R \subset \R^d\times \{0\}$ is a hyperrectangle whose sidelengths are in $[2R, 3R]$ and such that $|K_R|$ is an integer, and that  $E$ is a vector field defined in $\check{K}_R\times \R^k$ with $\check {K}_R:= \{x\in K_R, \dist(x, \p K_R)\ge \hal \ep R\}$ and satisfying 
$$-\div (\yg E) = \c \Big( \sum_{p \in \Lambda } N_p \delta_p -\drd\Big) \quad \text{in} \ \check{K}_{R}\times \R^k$$
for some discrete set $\Lambda\subset \R^d$ and positive integers $N_p$. There exists $\eta_0>0$ depending only on $d$ such that for any $0<\eta, \eta'<\eta_0$, the following holds. 
 Let  $E_\eta, E_{\eta'}$ be associated to $E$ as in \eqref{defeeta} and let  
$$\M:=\frac{1}{R^d}\int_{\check{K}_R\times \R^k}\yg |E_{\eta}|^2 ,$$ and in the case $k=1$
\begin{equation}\label{decrvert}
e_{\ep,R}:=  \frac{1}{ \ep^4 R^d} \int_{\check{K}_R\times \(\R\backslash (-\hal \ep^2 R,\hal \ep^2 R)\)} \yg |E|^2.\end{equation}
 There exists $R_0>0$ such that if
 \begin{equation}\label{condR}
 R> \max \left( \frac{R_0}{\ep^{2}},  \frac{R_0\Mp}{\ep^3}\right),   \quad 
R>   \begin{cases}R_0 \M^{1/2} \ep^{-d-3/2} & \text{if } k =0\\ 
\max(R_0\M^{1/(1-\gamma)} \ep^{\frac{-1-2d+\gamma}{1-\gamma}} , R_0\ep^{\frac{2\gamma}{1-\gamma} }  e_{\ep,R}^{1/(1-\gamma)}   )  & \text{if} \ k=1
\end{cases}, 
 \end{equation} 
then
  there exists a 
 vector field $\hat E\in L^p_\loc(\mathbb R^{d+k},\mathbb R^{d+k})$ (with $p$ as in the introduction) such that
 \begin{itemize}
  \item $\hat E \cdot \vec{\nu}=0$ on $\p  K_R\times \R^k $, where $\vec{\nu}$ is the outer unit normal, and $\hat E=0$ outside $ K_R\times \R^k$.
  \item There exists a subset $\hat \Lambda\subset K_R$ and positive integers $N_p$ such that $\hat E$ satisfies
   \[
   -\op{div}\left(\yg\hat E\right)=c_{s,d}\Big(\sum_{p\in\hat \Lambda}N_p\delta_{p}-\drd\Big) \  \text{in} \ K_R\times \mr^k .
  \]
 \item $\hat E=E$ in a hyperrectangle $K_R'\times[-\ep^{2} R, \ep^{2} R ]^k$ where $K_R'$ contains $\{x\in K_R, \dist (x, \p K_R)> 2 \ep R\}$. In particular $\hat \Lambda\cap K_R'=\Lambda \cap K_R'$.
 \item The minimal distance between the points in $\hat\Lambda \cap (K_R\backslash K_{R'})$ and between them and $\p K_R$ is bounded below by $\eta_0$.
The minimal distance between points in $\hat \Lambda$ counted with multiplicity, and between points in $\hat\Lambda$ and $\p K_R$ is bounded below by the minimum of $\eta_0$ and of the minimal distance between the original points of $\Lambda$ counted with multiplicity. 
 In other words, if the points of $\Lambda$ are simple and well-separated, so are those of $\hat \Lambda$. 

 \item Letting $\hat E_\eta$ be $\Phi_\eta(\hat E)$, we have
 \begin{equation}\label{C3}
  \int_{ K_R\times\mathbb R^k}\yg|  \hat E_\eta|^2 \le \left(\int_{ \check{K}_R\times\mathbb R^k}\yg|E_\eta|^2 \right) (1+ C \ep )  +
  C \g(\eta) (1+\Mp)   \ep R^d + C e_{\ep, R} \ep R^d\, ,
 \end{equation} 
 where  $C$ depends only on $s $ and $d$.
 \end{itemize}
\end{pro}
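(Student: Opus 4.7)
The strategy is to modify $E$ only in a thin boundary layer of $K_R \times \R^k$ by (i) cutting it off smoothly past a suitably chosen inner pseudo-boundary, (ii) filling in the resulting annular region with a carefully constructed vector field whose divergence is a sum of point masses (which we can place ourselves) minus the uniform neutralizing background $\drd$, and (iii) arranging the construction so that the normal component vanishes on $\partial K_R \times \R^k$. Integrability of $|K_R|$ is used precisely to guarantee that the number of points we need to inject in the annulus is an integer, so that the resulting $\hat E$ is admissible.

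More precisely, a mean-value argument in the spirit of Lemma \ref{limdensity} gives a rectangle $K_{R'}$ with $\{x\in K_R:\dist(x,\partial K_R)>2\ep R\}\subset K_{R'}\subset\{x:\dist(x,\partial K_R)>\ep R\}$ such that the boundary integral $\int_{\partial K_{R'}\times\R^k}\yg|E_\eta|^2$ is controlled by $C\ep^{-1}\M R^{d-1}$, and, when $k=1$, a height $H\in[\tfrac14\ep^2 R,\tfrac12\ep^2 R]$ such that $\int_{\check K_R\times\{\pm H\}}\yg|E|^2\le C\ep^{-2}e_{\ep,R}R^{d-1}$. I then define $\hat E=E$ inside $K_{R'}\times[-H,H]$. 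The number of points $\hat n$ to be added in the annular region $A=(K_R\setminus K_{R'})\times\R^k\cup K_{R'}\times(\R^k\setminus[-H,H])$ is fixed by Gauss's theorem: integrating $\div(\yg E)$ over $K_{R'}\times[-H,H]$ gives the number of original points in that box, and subtracting this from $|K_R|\in\mathbb N$ yields a nonnegative integer $\hat n$ (assuming the good-slice estimates force this count to be of the right sign; otherwise one shrinks $K_{R'}$ further, discarding points, which is where condition \eqref{condR} enters).

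In the annular region I distribute the $\hat n$ extra points on a regular lattice of spacing $\gtrsim\eta_0$ (the assumption on $R$ guarantees that there is enough room), at distance $\ge\eta_0$ from $\partial K_R$ and from the inner boundary where $E$ is preserved, so the separation requirement is automatic. I then solve the degenerate elliptic boundary value problem
\begin{equation*}
-\div(\yg\nabla u)=c_{d,s}\Bigl(\sum_{p\in\hat\Lambda_{\mathrm{new}}}\delta_p-\drd\Bigr)\quad\text{in }A,
\end{equation*}
with Neumann data $\partial_\nu u=0$ on $\partial K_R\times\R^k$ and at $y=\pm H$ outside $K_{R'}$, and $\yg\partial_\nu u=\yg E\cdot\nu$ on the inner interface. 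The compatibility condition for solvability is exactly that the prescribed total charge matches $|K_R|$, which is why we chose $\hat n$ as above. Existence and energy estimates follow from the $A_2$-Muckenhoupt theory of \cite{fks} (cited in the paper): by the weighted Lax-Milgram / variational formulation, there is a solution with
\begin{equation*}
\int_A\yg|\nabla u|^2\le C\Bigl(\int_{\partial_{\mathrm{int}}A}\yg|E\cdot\nu|^2\,d\sigma\cdot\ep R+R^d\ep\cdot(1+\Mp)\g(\eta)\Bigr),
\end{equation*}
where the first term comes from harmonic extension of the boundary flux, the second from the energy cost of the injected point charges against the neutralizing background, computed in annular cells of controlled size via Proposition \ref{periodic} applied on the model cell. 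Finally I set $\hat E=\nabla u$ in $A$ and $\hat E=0$ outside $K_R\times\R^k$.

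The main obstacle is the third step, namely producing the vector field in the annular region with the claimed energy bound in the presence of the degenerate weight $\yg$ and the need to match nontrivial boundary flux on the inner interface while having zero flux on $\partial K_R$. The control of the flux integral by $\M R^{d-1}$ via the mean-value slice, combined with a trace/extension estimate for weighted spaces, yields the term $\int_{\check K_R\times\R^k}\yg|E_\eta|^2(1+C\ep)$ in \eqref{C3}. The vertical truncation (when $k=1$) contributes the term $Ce_{\ep,R}\ep R^d$, and the cost of screening the injected point charges against the background produces the remaining $C\g(\eta)(1+\Mp)\ep R^d$ term; these estimates require scale $R\gg\ep^{-a}\M^b$ as in \eqref{condR} to ensure that the annulus is thick enough to absorb the boundary data without blowing up the energy density. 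The separation assertion for $\hat\Lambda$ is immediate from our explicit lattice placement of new points at distance $\ge\eta_0$ from one another, from $\p K_R$ and from $K_{R'}$.
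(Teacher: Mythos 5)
Your overall strategy — keep $E$ in a slightly smaller box, determine the missing integer charge by Gauss's theorem, place it in the boundary layer, and solve a Neumann problem there — is indeed the spirit of the paper's construction. However, as written, your proposal has a genuine gap that makes the crucial energy estimate \eqref{C3} fail.

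The issue is the claim that a single Lax--Milgram solve on the whole annular region $A$ yields
$\int_A\yg|\nabla u|^2\le C\bigl(\ep R\int_{\partial_{\mathrm{int}}A}\yg|E\cdot\nu|^2+\cdots\bigr)$.
Even accepting the prefactor $\ep R$ (which is already optimistic: for a Neumann problem with data on one face of a slab of thickness $t$ and extent $L$, the sharp energy scale is governed by the \emph{lowest nonzero transverse Fourier mode}, i.e.\ by $L$, not $t$), the resulting bound is insufficient. Plugging in the good-slice estimate $\int_{\partial K_R'\times\R^k}\yg|E_\eta|^2\lesssim\M\ep^{-1}R^{d-1}$, the first term becomes $\ep R\cdot\M\ep^{-1}R^{d-1}=\M R^d$, which is of the \emph{same order} as the original energy $\M R^d$; the proposition requires the excess to be $O(\ep)\cdot\M R^d+O(\ep R^d)$. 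The paper extracts this extra factor of $\ep$ precisely by \emph{not} solving one global problem. It tiles the boundary layer $K_R\setminus K_R'$ into rectangles $\tilde H_i$ of size $\ell\sim\ep^2R$, corrects the mean flux on each face by choosing cell densities $m_i$ close to $1$ (and, in the $k=1$ case, the global constant $C_0$ so that the problem above/below $D_0$ is compatible), and then applies Lemma~\ref{lem57} \emph{cell by cell}, where the Neumann--energy prefactor is the cell size $\ell\le\ep^2R$, not $\ep R$. It is exactly the product $\ell\cdot\ep^{-1}R^{d-1}\le\ep R^d$ that gives the gain in~\eqref{cE3}. Without this decomposition the argument does not close.

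A second, smaller gap: you never address the smeared charges $\delta_p^{(\eta)}$ whose support straddles $\partial K_R'$. Working at the Dirac level for the PDE postpones but does not avoid the problem, since the energy bound is on $\hat E_\eta$. The paper isolates this set $\Lambda_0$, completes those partially-cut smeared charges with the dedicated fields $E_{i,1}$ via Lemma~\ref{chargesnearbdry}, and controls their number with the discrepancy estimate \eqref{estna}; this is where the $(1+\Mp)\g(\eta)$ term in \eqref{C3} actually comes from. Your proposal attributes it to the injected lattice points alone, which only accounts for the $\g(\eta)\ep R^d$ part. So to complete your argument you would need to (i) introduce the $\ell\sim\ep^2 R$ cell decomposition with per-cell flux corrections $m_i$ and the constant $C_0$, and apply the weighted Neumann estimate cell-by-cell, and (ii) explicitly complete the boundary-straddling smeared charges and bound their number via the discrepancy lemma.
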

\begin{remark}
In this result,  one should think of $\M$   as being bounded above by a constant times $\g(\eta)$, and $e_{\ep, R}$ as bounded by a constant.   Then the conditions of $R$ are that it  has to be large enough, so much so as $\ep$ and $\eta$ are small. In  a first reading, one may also take $\eta'=\eta$ in the statement and obtain 
$$ \int_{ K_R\times\mathbb R^k}\yg|\hat E_\eta|^2 \le \left(\int_{ \check{K}_R\times\mathbb R^k}\yg|E_\eta|^2 \right) (1+ C\g(\eta) \ep )+ C \ep R^d. 
$$ Since $\int \yg |E_\eta|^2 $ blows up like $\g(\eta)$ as $\eta \to 0$, this gives an additive error in $\g(\eta)^2$.
But one may also prefer to choose say $\eta'=1/8$ and $\eta$ small and meant to tend to $0$. For a  vector field $E\in \mathcal A_1$ of finite energy, $\Mp$ will be bounded  by a constant depending on the choice $1/8$ and $\W(E)$. The formulation then  gives in that case a bound 
$$\int_{ K_R\times\mathbb R^k}\yg|\hat E_\eta|^2 \le \left(\int_{ \check{K}_R\times\mathbb R^k}\yg|E_\eta|^2 \right) (1+ C \ep ) + C \g(\eta) \ep R^d.$$ 
The additive
 error term  then blows up like $\g(\eta)$ as $\eta\to 0$  instead of $\g(\eta)^2$.
\end{remark}

\subsection{Preliminary lemmas}
We start with a series of preliminary results which will be the building blocks  for the construction of $\hat E$. 
\begin{lem}[Subdivision of a hyperrectangle]
 \label{subdivision}
 Let  $H=[0,\ell_1]\times\cdots\times[0,\ell_d]$ be a $d$-dimensional hyperrectangle of sidelengths $\ell_i$.    Fix a face $F$ of $H$. Let $m>0$ be such that $m|H|\in \mathbb N$, and for all $i$, $\ell_i >2m^{-1/d}$. 
Then  there is a partition of $H$ into $m|H|$  subrectangles $\mathcal R_j$,  such that the following hold
 \begin{itemize}
  \item  all rectangles have volume $1/m$,   \item the sidelengths of each $\mathcal R_j$ lie in the interval $[2^{-d} m^{-1/d},2^dm^{-1/d}]$,
  \item all the $\mathcal R_j$'s which have a face in common with $F$ have the same  sidelength in the direction perpendicular to $F$. 
 \end{itemize}
\end{lem}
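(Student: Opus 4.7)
The plan is to proceed by induction on the dimension $d$. The base case $d=1$ is immediate: given $\ell_1 > 2/m$ with $m\ell_1 \in \mathbb{N}$, partition $[0,\ell_1]$ into $m\ell_1$ equal intervals of length $1/m = m^{-1/d}$; every cell has volume $1/m$, the sidelengths meet the required bounds trivially, and all cells adjacent to either endpoint have the same length perpendicular to the endpoint face.

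For the inductive step, assume the result in dimension $d-1$. Without loss of generality take $F = \{x_1 = 0\}$. The idea is to partition $H$ into slabs perpendicular to $x_1$ and then apply the inductive hypothesis to the $(d-1)$-dimensional cross-section of each slab. More precisely, I would pick slab thicknesses $w_1,\ldots,w_{n_1} > 0$ summing to $\ell_1$, each close to $m^{-1/d}$, such that each quantity $k_j := m w_j \prod_{i \geq 2} \ell_i$ is a positive integer; the $j$-th slab then has volume $k_j/m$ and can be cut into $k_j$ cells of volume $1/m$ each. The slab adjacent to $F$ contributes all cells touching $F$, and these cells share the common $x_1$-sidelength $w_1$, ensuring the face constraint. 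Inside each slab, the cross-section is a $(d-1)$-rectangle of sidelengths $\ell_2,\ldots,\ell_d$; I apply the inductive statement with parameter $m' := m w_j$, so that cells of $(d-1)$-volume $1/m'$ become cells of $d$-volume $1/m$ after taking product with the slab thickness.

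The main obstacle is arranging the divisibility so that each $k_j$ is an integer while keeping $w_j$ close to $m^{-1/d}$. This is handled greedily: letting $w_j$ range over an interval of length $m^{-1/d}$, the quantity $m w_j \prod_{i \geq 2}\ell_i$ sweeps through a real interval of length $m^{(d-1)/d}\prod_{i\geq 2}\ell_i \geq 2^{d-1} \geq 1$ (using $\ell_i > 2 m^{-1/d}$), and hence passes through at least one integer. Proceeding from the left, one chooses each $w_j$ this way; the hypothesis $\ell_1 > 2 m^{-1/d}$ and the greedy procedure ensure the remaining segment retains thickness above $m^{-1/d}$ until the very last slab, whose thickness can be absorbed into a slightly wider range. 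The loss of a factor of $2$ at each recursion step compounds to give the stated range $[2^{-d} m^{-1/d}, 2^d m^{-1/d}]$ after $d$ recursions. One finally verifies the recursion hypothesis $\ell_i > 2(m w_j)^{-1/(d-1)}$ for $i \geq 2$, which follows from the bounds on $w_j$ and the original hypothesis $\ell_i > 2 m^{-1/d}$.
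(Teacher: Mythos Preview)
Your approach is correct and close in spirit to the paper's, but the two proofs organize the induction differently. Both slice $H$ in the direction perpendicular to $F$. The paper, however, first applies the inductive hypothesis to the cross-section $H'=[0,\ell_1]\times\cdots\times[0,\ell_{d-1}]$ with a single parameter $m'=[m^{(d-1)/d}|H'|]/|H'|$ (rounded so that $m'|H'|\in\mathbb N$), obtaining one fixed subdivision of $H'$ into cells of volume $1/m'$; it then stacks identical copies of this subdivision along the $F$-perpendicular direction in layers of equal thickness $m'/m$, and handles the single leftover layer (of thickness $\ell'\in[m'/m,2m'/m]$) by a second application of the inductive hypothesis with $m''=m\ell'$. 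Your version instead first fixes the slab thicknesses $w_j$ (each chosen so that $mw_j\prod_{i\ge2}\ell_i\in\mathbb N$) and then applies the induction separately to every slab's cross-section with its own $m'=mw_j$. The paper's route is slightly more economical (two calls to the inductive hypothesis rather than one per slab) and yields a more regular structure; your route is more uniform in that every slab is treated the same way, and the integrality is arranged slab by slab rather than once for the cross-section.

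Two points in your sketch deserve a line of care. First, the greedy procedure must terminate with a final slab of thickness at least $m^{-1/d}$ (this is exactly what you need for the recursion hypothesis $\ell_i>2(mw_j)^{-1/(d-1)}$ to follow from $\ell_i>2m^{-1/d}$); the standard fix is to stop the greedy step while the remainder exceeds $2m^{-1/d}$ and, if after the last greedy step the remainder drops below $m^{-1/d}$, merge it with the previous slab. Second, the claim that the constants ``compound to $2^{\pm d}$'' is right in spirit but should be checked: with $w_j\in[m^{-1/d},3m^{-1/d}]$ the cross-sectional scale $(mw_j)^{-1/(d-1)}$ lies in $[3^{-1/(d-1)}m^{-1/d},m^{-1/d}]$, so the recursion indeed stays inside $[2^{-d}m^{-1/d},2^d m^{-1/d}]$ (one verifies $2^{-(d-1)}3^{-1/(d-1)}\ge 2^{-d}$ by a short direct check in low dimensions and trivially for $d\ge3$). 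These are routine fill-ins; the argument is sound.
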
Note that in general, even if the area of $H_i$ is assumed to be an integer, it is not possible to divide it into \emph{congruent} rectangles of sidelengths in $[C^{-1},C]$ with a constant $C$ independent of $H_i$, see the lower bound in Theorem 8 of \cite{chan} in case $d=2$, but we may however divide it into rectangles of comparable sizes.
 A question which we don't answer is whether one can achieve a bound on their sizes independent of the dimension. 
\begin{proof}
The statement is obvious in dimension $1$. We prove the statement for $d\ge 2$ by induction on the dimension.
 Up to relabeling the directions, we may suppose that $F$ is one of the faces where all the first $d-1$ coordinates are constant.

We use the induction hypothesis for  the hyperrectangle $H':=[0,\ell_1]\times\cdots\times[0,\ell_{d-1}]$ with the choice   $$ m'=  \frac{[ m^{\frac{d-1}{d} } |H'|]  } { |H'|}$$ with does satisfy $ m' |H'|\in \mathbb N$.  We note that from the assumption $\ell_i >2 m^{-1/d}$ we have $ m'\in [\hal m^{\frac{d-1}{d}}, m^{\frac{d-1}{d}}]$.
We  thus   obtain a subdivision into rectangles $\mathcal R_j'$ of volumes all equal to $1/ m'$ and sidelengths belonging to $[2^{-(d-1)} (m')^{-\frac{1}{d-1}} , 2^{d-1} (m')^{-\frac{1}{d-1}}]$. 
  We then split the interval $[0, \ell_d]$ into  $[\ell_d m /m' ]-1$ adjacent equal intervals of length $m'/m $, plus a remaining interval of length $\ell'\in[ m'/m, 2 m'/m]$.  By taking the cartesian product of the first group of intervals with the $\mathcal R_j'$,      we obtain hyperrectangles of volume  $1/m$, and of sidelengths in $[ 2^{-d} m^{-1/d}, 2^d m^{-1/d} ]$

The remaining strip is of the form $H'\times[0, \ell']$ and has again  volume in $\frac{1}{m}\mathbb N$. We may apply the induction hypothesis to $H'$, this time with $m'' =m\ell'$, since $m'' |H'|=m\ell' |H'| \in \mathbb N$. We thus     obtain a subdivision of  $H'$ into hyperrectangles $\mathcal R_j''$ of volume $1/m''$ and sidelengths in $[ 2^{-(d-1) } (m'')^{-\frac{1}{d-1}} ,  2^{d-1} (m'')^{-\frac{1}{d-1}}    ]$. Taking the cartesian products of these rectangles with $[0, \ell']$ concludes the proof. 
 The last point of the thesis is also true, because the rectangles $\mathcal R_j$ with a face in common with $F$ are all making up the same strip.


\end{proof}

\begin{lem}[Correcting fluxes on rectangles]\label{lem57}
Let $H$ be a hyperrectangle of $\R^d$ with sidelengths in $[\ell/2, 3\ell]$. Let $\tilde H:=H\times [-\ell,\ell]^k$.  
 Let $g\in L^2_{\yg}(\p \tilde H) $  and $m$ be a function on $H$ of average $m_0$  such that 
\begin{equation}\label{eq:flux m}
 - \c m_0 |H| =    \int_{\p  \tilde H} \yg g. 
\end{equation}
Then the mean zero  solution to
\begin{equation}\label{eqnu}
\left\{\begin{array}{ll}
-\div (\yg \nab  h)  =\c m \drd & \text{in} \ \tilde H\\
\partial_\nu h =g & \text{on} \ \p \tilde H \end{array}\right.
\end{equation}
satisfies
\begin{equation}\label{estlcs2}
\int_{\tilde H} \yg |\nab h|^2 \le  C \ell  \int_{\p \tilde  H}\yg |g|^2+ C \ell^{d+1- \gamma} \|m-m_0\|_{L^\infty (H)}^2,
\end{equation}
where $C$  depends only on $d$ and $s$.
\end{lem}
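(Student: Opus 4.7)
My plan is to split $h$ by linearity into two pieces, each handled by a direct variational argument together with the weighted Poincar\'e and trace inequalities available for the Muckenhoupt $A_2$ weight $|y|^\gamma$ (since $\gamma\in(-1,1)$), as in Fabes-Kenig-Serapioni \cite{fks}. Concretely, I write $h = h_1 + h_2$, where $h_1$ solves the Neumann problem with constant source $-c_{d,s}m_0\drd$ and the given boundary flux $g$, and $h_2$ solves the Neumann problem with source $-c_{d,s}(m-m_0)\drd$ and zero boundary flux. The compatibility condition \eqref{eq:flux m} ensures the first problem is well-posed, and $\int_H(m-m_0)=0$ ensures the second is as well.

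For $h_2$, I normalize it to have weighted mean zero on $\tilde H$. Green's identity gives $\int_{\tilde H}\yg|\nabla h_2|^2 = -c_{d,s}\int_H(m-m_0)h_2$, and Cauchy-Schwarz produces the bound $|H|^{1/2}\|m-m_0\|_{L^\infty}\|h_2\|_{L^2(H)}$. The key ingredient is the weighted trace from $\tilde H$ to $H=\tilde H\cap\{y=0\}$, which (after a scaling argument and combining with the weighted Poincar\'e inequality on $\tilde H$) reads
\[
\|h_2\|_{L^2(H)}^2 \le C\,\ell^{1-\gamma}\int_{\tilde H}\yg|\nabla h_2|^2 .
\]
Inserting and absorbing one factor of $(\int \yg|\nabla h_2|^2)^{1/2}$ yields $\int_{\tilde H}\yg|\nabla h_2|^2 \le C\ell^{d+1-\gamma}\|m-m_0\|_{L^\infty}^2$, which is the second term in \eqref{estlcs2}.

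For $h_1$, I again normalize to weighted mean zero and use Green's identity to obtain
\[
\int_{\tilde H}\yg|\nabla h_1|^2 = \int_{\partial\tilde H}\yg\,g\,h_1 + c_{d,s}m_0\int_H h_1.
\]
The first term is bounded by Cauchy-Schwarz plus the weighted trace on $\partial\tilde H$, which gives $\int_{\partial\tilde H}\yg h_1^2 \le C\ell\int_{\tilde H}\yg|\nabla h_1|^2$, hence a bound of the form $C\ell^{1/2}(\int\yg g^2)^{1/2}(\int\yg|\nabla h_1|^2)^{1/2}$. For the second term, I use the compatibility condition to estimate $|m_0|\le C\ell^{(\gamma-d)/2}(\int_{\partial\tilde H}\yg g^2)^{1/2}$ (noting that $\int_{\partial\tilde H}\yg\sim\ell^{d+\gamma}$ by explicit computation on each face), combined with the same trace inequality used in Step 2 to control $\int_H h_1$; the two factors of $\ell$ conspire to give exactly the same order $C\ell^{1/2}(\int\yg g^2)^{1/2}(\int\yg|\nabla h_1|^2)^{1/2}$. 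Absorbing yields $\int_{\tilde H}\yg|\nabla h_1|^2 \le C\ell\int_{\partial\tilde H}\yg g^2$.

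Combining via $\int\yg|\nabla h|^2\le 2\int\yg|\nabla h_1|^2+2\int\yg|\nabla h_2|^2$ finishes the proof. No serious obstacle is expected: the only subtle point is justifying the weighted trace and Poincar\'e inequalities with explicit $\ell$-dependence, which is immediate from the $A_2$ theory together with a scaling argument $X \mapsto X/\ell$ that turns $\tilde H$ into a fixed reference rectangle and multiplies norms by explicit powers of $\ell$ determined by $d,k,\gamma$. All constants then depend only on $d$ and $s$ as required.
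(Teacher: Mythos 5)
Your proof is correct, and it takes a genuinely (though mildly) different route from the paper. The paper uses a three-way split $h = h_1 + h_2 + h_3$: first an \emph{explicit} solution $h_1 = \lambda m_0 |y|^{1-\gamma}$ of the constant-source problem with constant Neumann data $c = -\c m_0/(2\ell^\gamma)$ on the top and bottom faces $H\times\p[-\ell,\ell]$ and zero elsewhere; second a weighted-harmonic $h_2$ (with $\div(\yg\nab h_2)=0$) carrying the corrected boundary data $g-c$, handled via its variational characterization; and third an $h_3$ identical to your $h_2$. You merge the paper's $h_1$ and $h_2$ into a single piece and control it directly by Green's identity, using the compatibility condition \eqref{eq:flux m} twice: once to make the right-hand side of Green's formula independent of the additive normalization constant, and once to bound $|m_0|$ by Cauchy-Schwarz against $\int_{\p\tilde H}\yg$. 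This buys you a cleaner two-way decomposition and avoids having to write down the explicit $|y|^{1-\gamma}$ solution, at the cost of a slightly less transparent bookkeeping (the constant Neumann data is never isolated). The exponent count $(\gamma-d)/2 + d/2 + (1-\gamma)/2 = 1/2$ that you rely on does check out, and the two weighted trace estimates you invoke (boundary trace with gain $\ell$, trace to $\{y=0\}$ with gain $\ell^{1-\gamma}$) are exactly the ones the paper establishes as part of its own proof, via a 1D Cauchy-Schwarz in $y$ plus the standard trace in $x$ plus the weighted Poincar\'e inequality from \cite{fks}. One small point of care: both trace inequalities hold for suitably mean-zero functions, and you must pick one normalization for $h_1$ and use it consistently in both places — this is fine here because the compatibility condition makes the Green identity invariant under adding a constant.
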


\begin{proof} First we note that in the case where $k=0$ (which goes along with $\gamma=0$), this was proven in \cite[Lemma 5.8]{rs}. We may thus focus on the case $k=1$.

We may  split $h$ as $h_1 + h_2+h_3$ where $h_1$ solves 
\begin{equation}\label{eqh1}
 \left\{\begin{array}{ll}
   -\op{div}\left(\yg\nab h_1\right)=   \c m_0 \drd
  & \text{ in }\tilde H\ ,\\[3mm]
    \p_\nu h_1=c:=-\frac{\c m_0}{2  \ell^\gamma}& \text{ on }H\times\p[-\ell,\ell]\ ,\\[3mm]
   \p_\nu h_1=0& \text{ on the rest of }\p \tilde H\ ,
\end{array}
\right.
\end{equation}
 $h_2$ solves 
\[
 \left\{\begin{array}{ll}
   -\op{div}\left(\yg\nab h_2\right)=   0 
  & \text{ in }\tilde H\ ,\\[3mm]
    \p_\nu h_2= g- c& \text{ on }H\times\p[-\ell,\ell]\ ,\\[3mm]
   \p_\nu h_2=g & \text{ on the rest of }\p \tilde H\ ,
\end{array}
\right.
\]
and $h_3$ solves \begin{equation}\label{eqh3}
 \left\{\begin{array}{ll}
   -\op{div}\left(\yg\nab h_3\right)=   \c (m-m_0) \drd
  & \text{ in }\tilde H\ ,\\[3mm]
    \p_\nu h_3= 0& \text{ on }H\times\p[-\ell,\ell]\ ,\\[3mm]
   \p_\nu h_3=0 & \text{ on the rest of }\p \tilde H\ ,
\end{array}
\right.
\end{equation}

We note that the first equation  has a unique solution up to constants, and an explicit solution is
$h_1=\lambda  m_0|y|^{1-\gamma}$, with 
$$\lambda= \frac{c}{m_0(1-\gamma)} \ell^{\gamma} =  \frac{-c_{s,d}}{2(1-\gamma)} .$$
Then $|\nabla h_1|= (1-\gamma)\lambda m_0  |y|^{-\gamma}$ and with straightforward computations
\begin{equation}\label{estenh1}
 \int_{\tilde H}\yg|\nab h_1|^2\le C\, m_0^2 \ell^{d+1 -\gamma },
\end{equation}with $C$ depending only on $d$ and $s$.
Next  we note that $h_2$ can be  obtained as the minimizer with average $0$ of the functional $\frac12\int_{\tilde H}\yg|\nabla h|^2-\int_{\p \tilde H}\yg h \bar g$ where $\bar g$ is the boundary condition.  Comparing $h_2$ with the choice $h=0$ we obtain, using Cauchy-Schwarz,
\begin{eqnarray}
 \frac12\int_{\tilde H}\yg|\nab h_2|^2&\le&\int_{\p \tilde H}\yg |h_2||\bar g| \nonumber\\[3mm]
 &\le&\left(\int_{\p \tilde  H}\yg |h_2|^2\right)^{\frac12} \left(\int_{\p \tilde H}\yg |\bar g|^2\right)^{\frac12}  ,\label{esth3a}
\end{eqnarray}
and we estimate this in weighted spaces via
\begin{equation}\label{wsi}
\int_{\p \tilde H}\yg |h_2|^2\le C \ell \int_{\tilde H} \yg |\nab h_2|^2.\end{equation}
We postpone the proof of this inequality to the end. 
Inserting into \eqref{esth3a}, we obtain
$$\int_{\tilde H} \yg |\nab h_2|^2 \le C \ell  \int_{\p \tilde  H}\yg |\bar g|^2,$$
and inserting the definition of $\bar g $ and $c$, we are led to 
\begin{equation}\label{esth2}\int_{\tilde H} \yg |\nab h_2|^2 \le C \ell  \int_{\p\tilde  H}\yg |g|^2 + C m_0^2 \ell^{d+1-\gamma}.\end{equation} 
Next, we apply Cauchy-Schwarz again to obtain from \eqref{eq:flux m}  that  $$m_0^2 =\frac{1}{\c^2|H|^2} \left(\int_{\p \tilde H}  \yg g\right)^2\le  C \ell^{-d+\gamma} \int_{\p \tilde H} \yg |g|^2$$ and combining with  \eqref{esth2}, we deduce
\begin{equation}\label{resh2}
\int_{\tilde H} \yg |\nab h_2|^2 \le C \ell \int_{\p\tilde  H}\yg |g|^2 .\end{equation}

For $h_3$, we first assume the following trace inequality, whose proof is postponed to the end: 
\begin{equation}\label{trace0}
\int_H |h(x, 0)|^2 \le C \ell^{1-\gamma} \int_{\tilde H} \yg |\nab h|^2.\end{equation}
Assuming this, let us multiply \eqref{eqh3} by $h_3$ and integrate by parts to obtain 
$$\int_{\tilde H}\yg |\nab  h_3|^2 = \c \int_{H \times \{0\}} (m-m_0) h_3 .$$
Combining with the Cauchy-Schwarz inequality and \eqref{trace0}, we easily deduce 
$$\int_{\tilde H}\yg |\nab  h_3|^2 \le C \ell^{1-\gamma} \|m-m_0\|_{L^\infty(H)}^2 |H| = C \ell^{d+1-\gamma}\|m-m_0\|_{L^\infty(H)}^2,$$
which, combined with \eqref{estenh1} and \eqref{resh2}, gives the result.

Let us now prove \eqref{wsi} and \eqref{trace0}.
First we may again reduce to the case $k=1$ (otherwise the inequality is standard). For any function $h$ let us apply for each $y>0$ the standard trace inequality to $h $ on $H\times \{y\}$ to  obtain 
$$\int_{\p H\times \{y\}} |h-h_y|^2 \le C \ell \int_{H\times \{y\}} |\nab_x h|^2$$
where $h_y$ is the average of $h$ on $H\times \{y\}$.
Integrating against $\yg$ and using the triangle inequality in $L^2$, we deduce that 
\begin{multline*}
\int_{\p  H\times [-\ell, \ell]}\yg h^2 \le  2C \ell \int_{\tilde H} \yg |\nab h|^2 + 2\int_{\p H \times [-\ell,\ell]} \yg |h_y|^2\\ \le C \ell \int_{\tilde H} \yg |\nab h|^2 + C \frac{|\p H|}{|H|} \int_{\tilde H} \yg h^2\end{multline*}
where we used that by Jensen's inequality,  $|h_y|^2\le \frac{1}{|H|}\int_{H\times \{y\}} h^2$.
Next,  we note that the Sobolev inequality in weighted spaces for functions of zero average 
\begin{equation}\label{sobo}
\int_{\tilde H} \yg |h|^2 \le C \ell^2 \int_{\tilde H} \yg |\nab h|^2\end{equation}
holds by \cite{fks}, since $\yg$ is an $A_2$ weight.
This yields 
\begin{equation}\label{cote}\int_{\p H \times [-\ell,\ell]} \yg h^2 \le C \ell \int_{\tilde H} \yg |\nab h|^2.\end{equation}
Next, we need to prove the same relation in $H\times \{-\ell, \ell\}$.
For any $x \in H$, let us denote $h_x=\dashint_{[-\ell, \ell]} h(x, \cdot)$.
By Cauchy-Schwarz, we have
\begin{equation}\label{hx}
|h_x|^2\le \frac{1}{4\ell^2}\int_{-\ell}^\ell\yg | h(x, \cdot)|^2 \int_{-\ell}^\ell \frac{dy}{\yg}\le C \ell^{-1-\gamma} \int_{-\ell}^\ell \yg |h(x, \cdot)|^2.\end{equation}
In addition, for each $x \in H$, we may write 
\begin{equation}\label{tracel}
|h(x,\ell) -h_x|^2\le\left( \int_{-\ell}^\ell |\p_y h(x, \cdot) |\right)^2\le \int_{-\ell}^\ell \yg |\p_y h(x, \cdot) |^2 \int_{-\ell}^\ell \frac{dy}{\yg}= C\ell^{1-\gamma} \int_{-\ell}^\ell \yg |\nab h(x, \cdot)|^2.\end{equation}
Integrating this over $x\in H$, we obtain 
$$\int_{H\times \{\ell\}} \ell^\gamma |h-h_x|^2 \le C \ell \int_{\tilde H} \yg |\nab h|^2.$$
On the other hand, integrating \eqref{hx} over $x\in H$ yields 
$$\int_H \ell^\gamma |h_x|^2   \le \ell^{-1}\int_{\tilde H} \yg |h|^2 \le C \ell\int_{\tilde H}\yg |\nab h|^2$$  where we used \eqref{sobo}.
With the triangle inequality, we deduce that 
$$\int_{H\times \{\ell\}} \yg |h|^2 \le C \ell\int_{\tilde H} \yg |\nab h |^2.$$
Combining with \eqref{cote} and using the symmetry, we conclude that \eqref{wsi} holds.
We then note  that we may slightly modify the last proof to obtain \eqref{trace0}: instead of \eqref{tracel} we can write 
$$|h(x, 0)-h_x|^2 \le \left( \int_{-\ell}^\ell |\p_y h(x, \cdot) \right) ^2 \le C \ell^{1-\gamma} \int_{-\ell}^\ell \yg |\nab h(x, \cdot)|^2.$$
On the other hand, as seen above 
$\int_H |h_x|^2 \le C \ell^{1-\gamma} \int_{\tilde H} \yg |\nab h|^2 $, 
so integrating over $H$ and using  the triangle inequality yields \eqref{trace0}.
\end{proof}

The following lemma is straightforward, there we omit the proof. 
 \begin{lem}[Adding a point without flux creation]\label{lemcs1}\mbox{}\\
Let $\mathcal{R}$ be a hyperrectangle in $\R^d$ of barycenter $0$ and sidelengths in $[a,b]$ with $a,b>0$,  let $\tilde {\mathcal R}= \mathcal R\times [-1,1]$, $m=1/|\mathcal{R}|$ and let $X \in B(0, \hal \min(a,b))$.
The mean zero solution to
\begin{equation*}
\left\{\begin{array}{ll}
-\div (\yg \nab h)  = \c\left( \delta_X -m \drd\right)& \text{in} \ \tilde{\mathcal R} \\
\p_\nu h =0 & \text{on} \ \p \tilde{\mathcal{R}}
\end{array}\right.
\end{equation*}
satisfies
\begin{equation*}
\lim_{\eta\to 0}\left| \int_{\tilde{\mathcal{R}} }\yg  |\nab h_\eta|^2 - \c \g(\eta) \right|\le C
\end{equation*}
where $C$ depends only on $d,a,b$. \end{lem}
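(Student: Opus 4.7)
The plan is to mimic the computation used in the splitting formula (Proposition \ref{splitting}), decomposing $h$ into a singular part and a bounded regular part, and then identifying the $\c\g(\eta)$ divergence as coming entirely from the singular part. First, I would verify that the Neumann problem is well posed: the compatibility condition $\int_{\p\tilde{\mathcal R}} \yg\p_\nu h = 0$ matches $\c(1-m|\mathcal R|)=0$ since $m=1/|\mathcal R|$. Next, write $h = \g(\cdot - X) + u$, where $u$ absorbs the background and the boundary flux; then $u$ solves
\[
-\div(\yg\nab u) = -\c m\drd \ \text{in}\ \tilde{\mathcal R}, \qquad \p_\nu u = -\p_\nu \g(\cdot - X) \ \text{on}\ \p\tilde{\mathcal R},
\]
and the analogous compatibility condition is again satisfied by the divergence theorem applied to $\g(\cdot-X)$.

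Since $X\in B(0,\tfrac12\min(a,b))$, the distance from $X$ to $\p\tilde{\mathcal R}$ is bounded below by a constant $c(a,b)>0$, so $\g(\cdot-X)$ and $\nab\g(\cdot-X)$ are uniformly bounded on $\p\tilde{\mathcal R}$. The Neumann data and source for $u$ are therefore bounded by constants depending only on $d,a,b$. Using the energy formulation for the $A_2$-weighted operator together with the trace and Sobolev inequalities already established in the proof of Lemma~\ref{lem57} (relying on \cite{fks}), one obtains the a priori estimates
\[
\int_{\tilde{\mathcal R}} \yg|\nab u|^2 \le C(d,a,b), \qquad \|u\|_{L^\infty(\tilde{\mathcal R})}\le C(d,a,b),
\]
the second by standard Moser iteration for Muckenhoupt $A_2$ weights applied away from the set where the surface source $m\drd$ concentrates (and a direct bound on the contribution of that source, e.g.\ by representing $u$ via the Neumann Green function). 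In particular $u(X)$ is bounded, independently of the precise location of $X$ in the given ball.

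For the main computation, fix $\eta$ small enough that $B(X,\eta)\subset\tilde{\mathcal R}$; then $\p_\nu h_\eta = 0$ on $\p\tilde{\mathcal R}$ as well, so Green's identity and \eqref{defheta} give
\[
\int_{\tilde{\mathcal R}} \yg|\nab h_\eta|^2 = \c\int_{\tilde{\mathcal R}} h_\eta\,(\delta_X^{(\eta)} - m\drd).
\]
Writing $h_\eta = \g_\eta(\cdot - X)+u$ and using that $\g_\eta\equiv\g(\eta)$ on $\p B(X,\eta)=\supp(\delta_X^{(\eta)})$ together with the unit mass of $\delta_X^{(\eta)}$, the first term equals $\c\g(\eta)+\c\int u\,\delta_X^{(\eta)}$, which converges to $\c\g(\eta)+\c u(X)$ as $\eta\to 0$ by continuity of $u$ at $X$. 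The second term is
\[
-\c m\int_{\mathcal R} \g_\eta(x-X)\,dx - \c m\int_{\mathcal R} u(x,0)\,dx,
\]
and since $\g\in L^1_{\loc}(\R^{d+k})$, dominated convergence shows both integrals converge as $\eta\to 0$ to bounded limits depending only on $d,a,b$ (using $X$ is uniformly away from $\p\tilde{\mathcal R}$ and the $L^\infty$ bound on $u$). Subtracting $\c\g(\eta)$ and passing to the limit yields the claimed bound by a constant depending only on $d,a,b$. The only nontrivial ingredient is the $L^\infty$ bound on $u$, which is standard but requires care because of the surface source $\drd$; this is the main step, though it is classical for degenerate operators with $A_2$ weights.
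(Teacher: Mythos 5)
The paper omits a proof of this lemma, stating that it is ``straightforward,'' so there is no official argument to compare against. Your reconstruction takes the natural route: split $h=\g(\cdot-X)+u$, verify the Neumann compatibility, and use Green's identity to isolate the $\c\g(\eta)$ divergence from $\g_\eta(\cdot-X)$ against $\delta_X^{(\eta)}$. That skeleton is correct, and the identification of the limiting finite quantity as $\c u(X)-\c m\int_{\mathcal R}\g(x-X)\,dx-\c m\int_{\mathcal R}u(x,0)\,dx$ (after a harmless normalization, since $\delta_X^{(\eta)}-m\drd$ has zero mass) is right.

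The one point that needs tightening is the $L^\infty$ control at $X$. You write that the bound follows by ``Moser iteration for $A_2$ weights applied away from the set where the surface source $m\drd$ concentrates,'' but $X$ lies exactly on $\R^d\times\{0\}$, i.e.\ \emph{in} that set, so interior estimates for the homogeneous equation on the complement do not reach $X$. The clean way to do what you are gesturing at is to subtract off the explicit solution of the constant-surface-source problem, as in the proof of Lemma~\ref{lem57}: set $u_1(y)=-\tfrac{\c m}{2(1-\gamma)}|y|^{1-\gamma}$, which solves $-\div(\yg\nab u_1)=-\c m\drd$ with constant flux on $H\times\p[-1,1]$ and vanishing flux on the sides, is uniformly bounded (since $\gamma<1$), and satisfies $u_1\equiv 0$ on $\R^d\times\{0\}$, so $u_1(X)=0$. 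Then $u_2:=u-u_1$ solves the \emph{zero-right-hand-side} equation $\div(\yg\nab u_2)=0$ in all of $\tilde{\mathcal R}$, with Neumann data that is bounded in terms of $d,a,b$ provided $X$ is uniformly interior (see also the next remark). The energy bound of Lemma~\ref{lem57} gives $\int\yg|\nab u_2|^2\le C$, and then local boundedness of $\yg$-harmonic functions at interior points (\cite{fks}, Theorem~2.3.12) gives $|u_2(X)|\le C$; also note one does not need the full $L^\infty(\tilde{\mathcal R})$ bound you claim, only $u(X)$ and $\int_{\mathcal R}u(x,0)\,dx$, the latter already controlled by the trace inequality~\eqref{trace0}. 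Finally, a small remark on hypotheses: with sidelengths allowed to be exactly $a$ and $X$ near $\p B(0,\tfrac12\min(a,b))$, the distance from $X$ to $\p\tilde{\mathcal R}$ can be arbitrarily small, so your ``$\dist(X,\p\tilde{\mathcal R})\ge c(a,b)>0$'' is not literally implied by the stated hypothesis; this is a wrinkle in the lemma's statement rather than your proof (in the paper's actual use, $X$ is the barycenter $p_\alpha$ of $\mathcal R_\alpha$, so the distance is $\ge a/2$), and one can alternatively handle $X$ near a face by multi-reflection as in Lemma~\ref{chargesnearbdry}.
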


\begin{lem}[Completing charges near the boundary]\label{chargesnearbdry}Let $\mathcal{R}$ be a hyperrectangle in $\R^d$ of center $0$ and sidelengths in $[a,b]$ with $a,b>0$,  let $\tilde {\mathcal R}= \mathcal R\times [-1,1]^k$. Let $F$ be a face of $\mathcal R$ and $\tilde F:=F\times[-1,1]^k$. Let $\Lambda_{\mathcal R} \subset \mathbb R^d\times\{0\}$ be a discrete set of points  contained in an $\eta$-neighborhood of $\tilde F$, with $\eta< 1$, and $N_p \in \mathbb N^*$ for $p \in \Lambda_{\mathcal R}$. Let $c$ be a constant such that
 \begin{equation}\label{CR}
 c \int_{\tilde F}\yg=\c\int_{\tilde{\mathcal R}}\sum_{p\in\Lambda_{\mathcal R}}N_p\delta_p^{(\eta)}\ .
 \end{equation}
The mean-zero solution to \[
 \left\{\begin{array}{ll}
   -\op{div}\left(\yg\nabla h\right)=\c \sum_{p\in\Lambda_{\mathcal R}}N_p\delta_p^{(\eta)}& \text{ in }\tilde{\mathcal R}\ ,\\[2mm]
   \p_\nu h=0& \text{ on }\p \tilde{\mathcal R}\setminus \tilde F\ ,\\[2mm]
   \p_\nu h=c& \text{ on }\tilde F\ 
\end{array}
\right.
\] 
satisfies 
\begin{equation}\label{estboundary}
 \int_{\tilde{\mathcal R}}\yg|\nab h|^2\le C(\g(\eta)+1)\Big(    \sum_{p  \in \Lambda_{\mathcal R}} N_p\Big) ^2\, 
\end{equation} where $C$ depends only on $d, s, a, b$.
\end{lem}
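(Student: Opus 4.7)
My plan is to derive \eqref{estboundary} from the single energy identity obtained by testing the weak formulation of the Neumann problem against $h$ itself, namely
\begin{equation*}
\int_{\tilde{\mathcal R}} \yg|\nab h|^2 = \c \sum_{p\in\Lambda_{\mathcal R}}N_p \int_{\tilde{\mathcal R}} h\,\delta_p^{(\eta)} + c\int_{\tilde F}\yg h.
\end{equation*}
Since $h$ has mean zero on $\tilde{\mathcal R}$ and $\yg$ is an $A_2$ Muckenhoupt weight, a weighted Poincar\'e-trace inequality of the type established in the proof of Lemma \ref{lem57} gives $\|h\|_{L^2_\yg(\p \tilde{\mathcal R})}\le C\|\nab h\|_{L^2_\yg(\tilde{\mathcal R})}$, with $C$ depending only on $d,s,a,b$. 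It thus suffices to bound each term on the right by $C\sqrt{\g(\eta)+1}\,(\sum_p N_p)\|\nab h\|_{L^2_\yg}$, since dividing by $\|\nab h\|_{L^2_\yg}$ and squaring then yields the claim.

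The boundary term is immediate: from \eqref{CR} together with the lower bound $\int_{\tilde F}\yg\ge c(a,b,d)>0$ one gets $|c|\le C\c\sum_p N_p$, so Cauchy-Schwarz and the trace inequality give $|c\int_{\tilde F}\yg h|\le C(\sum_p N_p)\|\nab h\|_{L^2_\yg}$. For the source terms the key idea is to test against the explicit truncated-kernel potentials $\varphi_p(X):=\g_\eta(X-p)$, which satisfy $-\op{div}(\yg\nab\varphi_p)=\c\,\delta_p^{(\eta)}$ on $\R^{d+k}$ and $\varphi_p\equiv\g(\eta)$ on $\overline{B(p,\eta)}$. Integration by parts on $\R^{d+k}$ shows $\int_{\R^{d+k}}\yg|\nab\varphi_p|^2=\c\g(\eta)$, and Green's formula on $\tilde{\mathcal R}$ gives
\begin{equation*}
\c\int h\,\delta_p^{(\eta)} = \int_{\tilde{\mathcal R}} \yg\,\nab h \cdot \nab \varphi_p \;-\; \int_{\p \tilde{\mathcal R}} \yg\,h\,\p_\nu \varphi_p.
\end{equation*}
Two Cauchy-Schwarz applications, the trace bound on $h$, and the energy bound on $\varphi_p$ reduce everything to showing the boundary flux estimate
\begin{equation*}
\int_{\p \tilde{\mathcal R}} \yg\,|\p_\nu \varphi_p|^2 \le C(\g(\eta)+1)
\end{equation*}
uniformly in $p$ lying in the $\eta$-neighborhood of $\tilde F$.

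This last estimate is the only delicate technical point, and is where I expect the main obstacle. Since $\nab\varphi_p$ vanishes inside $B(p,\eta)$ and equals $\nab\g(\cdot-p)$ outside, the integrand is pointwise bounded by $C\yg/|X-p|^{2s+2}$ on the set $\{|X-p|\ge\eta\}$. Writing $\p\tilde{\mathcal R}$ as a finite union of affine pieces and decomposing each into dyadic shells $\{r\le|X-p|<2r\}$ with $r=2^j\eta$, a direct scaling computation using the key identity $\gamma=s+2-d-k$ shows that the contribution of each shell scales like $r^{-s}$, so the total is controlled by a convergent geometric sum whose dominant term is $\g(\eta)=\eta^{-s}$ (respectively $|\log\eta|$ in the logarithmic cases), plus an $O(1)$ contribution from shells at macroscopic scales bounded in terms of $a,b,d$. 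Inserting these estimates into the energy identity yields \eqref{estboundary}.
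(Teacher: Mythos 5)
Your overall strategy --- test the equation against $h$, then represent each source term $\int h\,\delta_p^{(\eta)}$ by duality against the truncated kernel $\varphi_p=\g_\eta(\cdot-p)$ --- is an attractive alternative to the paper's argument, and most of the reductions are sound: the energy identity, the weighted trace inequality \eqref{wsi}, the bound $|c|\le C\sum_p N_p$, and $\int_{\tilde{\mathcal R}}\yg|\nab\varphi_p|^2\le\c\g(\eta)$ are all correct. The difficulty is exactly where you flag it, but it is fatal: the boundary flux estimate $\int_{\p\tilde{\mathcal R}}\yg|\p_\nu\varphi_p|^2\le C(\g(\eta)+1)$ is false, and your scaling count is off by one power of $r$. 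Let $\delta_p\le\eta$ be the distance from $p$ to $\tilde F$. On $\tilde F\setminus B(p,\eta)$ the normal derivative is of order $\delta_p|X-p|^{-s-2}$, since it equals $|\g'(|X-p|)|\,\delta_p/|X-p|$ and the projection of $X-p$ onto the normal has length exactly $\delta_p$. Using $\gamma=s+2-d-k$, the weighted surface measure of the shell $\{r\le|X-p|<2r\}\cap\tilde F$ is of order $r^{d+k-1+\gamma}=r^{s+1}$, so the shell contributes $\sim r^{s+1}\cdot\delta_p^2\,r^{-2s-4}=\delta_p^2\,r^{-s-3}$ --- not $r^{-s}$. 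Summing dyadically from $r\sim\eta$ upward gives $\sim\delta_p^2\,\eta^{-s-3}$, which for $\delta_p$ comparable to $\eta$ (allowed by the hypotheses) is of order $\eta^{-s-1}=\eta^{-1}\g(\eta)$, and in the logarithmic cases of order $\eta^{-1}\gg|\log\eta|$. Feeding this into the rest of your argument only yields $\int\yg|\nab h|^2\le C\,\eta^{-1}\g(\eta)\,(\sum_p N_p)^2$, weaker than \eqref{estboundary} by the factor $\eta^{-1}$, which is not tolerable for the application in Proposition~\ref{screening}.

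The paper circumvents precisely this obstruction: because $p$ may sit within $\eta$ of $\tilde F$, the per-point potential $G_\eta$ is extended by multi-reflection across the nearby faces to a doubled rectangle $\tilde{\mathcal R}'$, removing the problematic boundary; the reflected function is then sandwiched between the explicit supersolution $\bar G_\eta=\sum_i\g_\eta(\cdot-p_i)$ and a subsolution $h_M$, and a Harnack/oscillation estimate from \cite{fks} yields a pointwise bound $0\le u_\eta\le C(1+\g(\eta))$, from which the energy estimate \eqref{estGeta} follows by pairing against the compactly supported source. In your language this amounts to replacing the test function $\varphi_p$ by $\varphi_p+\varphi_{p^*}$, where $p^*$ is the mirror image of $p$ across $\tilde F$: the normal derivatives of the two terms cancel exactly on $\tilde F$, eliminating the offending $\eta^{-1}$. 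If you want to salvage the duality argument, this reflection is the missing ingredient.
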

\begin{proof}
We may write $h=h_1+h_2$ where
\[
 \left\{\begin{array}{ll}
   -\op{div}\left(\yg\nabla h_1\right)=\c\sum_{p\in\Lambda_{\mathcal R}}N_p\delta_p^{(\eta)}-{c_{s,d}m}\drd& \text{ in }\tilde{\mathcal R} \\[3mm]
   \p_\nu h_1=0& \text{ on }\p \tilde{\mathcal R}
\end{array}
\right.
\]
and
\begin{equation}\label{eqcN}
\left\{\begin{array}{ll}
   -\op{div}\left(\yg\nabla h_2\right)={c_{s,d}m}\drd& \text{ in }\tilde{\mathcal R}\\[2mm]
    \p_\nu h_2=0& \text{ on }\p \tilde{\mathcal R}\setminus \tilde F \\ [2mm]
    \p_\nu h_2=c& \text{ on }\tilde F
\end{array}\right.
\end{equation} where $m$ is chosen so that 
$$c_{s,d} m|\mathcal{R}| = \c\int_{\tilde{\mathcal R}}\sum_{p\in\Lambda_{\mathcal R}}N_p\delta_p^{(\eta)} =c   \int_{ \tilde F } \yg. $$ 
In view of Lemma \ref{lem57} (or its proof) we have that 
$$\int_{\tilde{\mathcal R}} \yg |\nab h_2|^2 \le C c^2 \le C (\sum_{p \in \Lambda_{\mathcal R}} N_p )^2 $$
where $C$ may depend on $d,a,b$ and we used  \eqref{CR} to bound $c$.
There remains to control $h_1$. First we note that  $h_1(X)= \sum_{p\in \Lambda_{\mathcal R}} N_p G_\eta(X,p)$ where for $p\in \mr^d\times \{0\}$,  $G_\eta(X,p) $ denotes the solution with zero average on $\mathcal R \times \{0\}$ of 
\begin{equation}\label{eqGeta}
\left\{\begin{array}{ll}
   -\op{div}\left(\yg\nabla G_\eta \right)=c_{s,d} \left(\delta_p^{(\eta)} - (\dashint_{\tilde{\mathcal{R}}} \delta_p^{(\eta)})  \drd\right)   & \text{ in }\tilde{\mathcal R}\\[2mm]
    \p_\nu G_\eta=0& \text{ on }\p \tilde{\mathcal R}\,  .
\end{array}\right.\end{equation}
The desired estimate will thus follow provided we show that 
\begin{equation}\label{estGeta}\int_{\tilde{\mathcal R}} \yg |\nab G_\eta|^2 \le C (\g(\eta)+1).\end{equation}

We recall that the truncated function $\g_\eta(X)= \min (\g(X ), \g(\eta))= \g(X)- f_\eta(X) $ is a solution to 
$-\div (\yg \nab \g_\eta)= \delta_0^{(\eta)}$. 
We may thus compare $G_\eta$ with $\g_\eta(x-p)$. 
If $p$ is at distance $>\frac{a}{4} $ from   the boundary of $\mathcal R$, then $\g_\eta(x-p)$ has Neumann derivative $\vp$ which is bounded on $\p\tilde{ \mathcal R}$, and $G_\eta- \g_\eta(x-p)$ is solution to the equation 
$$\left\{\begin{array}{ll} -\op{div}\left(\yg\nabla u \right)=- c_{s,d} (\dashint_{\tilde{\mathcal{R}} }\delta_p^{(\eta)})     \drd& \text{ in }\tilde{\mathcal R} \\[2mm] 
\p_\nu u =\vp & \text{ on }\p \tilde{\mathcal R}\end{array}\right. 
$$ with $\vp $ bounded. In view of Lemma \ref{lem57} we thus have 
 $$\int_{\tilde {\mathcal R}} \yg |\nab (G_\eta- \g_\eta)|^2 \le C $$
    where $C$ depends only on $d,a,b$.  In that case, the result follows easily since one may compute that $\int_{\tilde {\mathcal R}}  \yg |\nab g_\eta|^2 = \c \g(\eta) + C$.
    
    We now turn to the case where $p$ is  close to the boundary of $\mathcal R$.\\
Without loss of generality we may assume that one vertex of $\mathcal R$ is at the origin in $\R^d$ and that it is the one that $p$ is closest to. We then consider the $2^d-1$ points $ p_1, \dots p_{2^d}$ obtained by symmetry of $p_1:=p$ with respect to the coordinate axes in $\R^d$. In other words if $(p^1, \dots p^d)$ are the coordinates of $p$, we let the $p_i$ be all  the points  with each $j$th coordinate equal  $\pm p^j$.  We define $\mathcal R'$ the rectangle obtained by taking all the same reflections of $\mathcal R$ (its sidelengths are thus double of those of $\mathcal R$, and $\tilde {\mathcal R}'= \mathcal R' \times [-1,1]^k$.  We may write $\mathcal R'= \cup_{i=1}^{2^d} \mathcal R_i$ where $\mathcal R_i$ is the image of $\mathcal R$ by the same symmetry that mapped $p$ to $p^i$. We also let $\tilde{\mathcal R}_i= \mathcal R_i \times [-1,1]^k$.
  \\  
 We can extend $G_\eta$ by multi-reflection to a function on $\tilde {\mathcal R'}$, and  write it as  $u_\eta + v_\eta$ where 
\begin{equation}\label{ueta}
\left\{\begin{array}{ll}
   -\op{div}\left(\yg\nabla u_\eta \right)=c_{s,d} \left(\sum_{i=1}^{2^d} \delta_{p^i}^{(\eta)} \indic_{\tilde{\mathcal R}_i} -  2^d \indic_{\tilde{K}_a } \frac{1}{|\tilde K_a|}  (  \int_{\tilde{\mathcal{R}}} \delta_p^{(\eta)})  \drd\right)   & \text{ in }\tilde{\mathcal R'}\, ,\\[2mm]
    \p_\nu u_\eta=0& \text{ on }\p \tilde{\mathcal R'}\, ,
\end{array}\right.
\end{equation}
where $\tilde K_a$ is a cube centered at the origin of sidelengths $\min(a/2,\hal)$ (recall that we assume that the point $p$ is included in $\tilde K_a$)
\begin{equation}\label{veta}
\left\{\begin{array}{ll}
   -\op{div}\left(\yg\nabla v_\eta \right)=c_{s,d} \left(\ (2^d \indic_{\tilde{K}_a } \frac{1}{|\tilde K_a|}  - 2^d \indic_{\tilde{ \mathcal R'}} \frac{1}{|\tilde{\mathcal R}'|}) (    \int_{\tilde{\mathcal{R'}}} \delta_p^{(\eta)}) \right)  \drd
     & \text{ in }\tilde{\mathcal R'}\, ,\\[2mm]
    \p_\nu v_\eta=0& \text{ on }\p \tilde{\mathcal R'}\, . 
\end{array}\right.\end{equation}
By the estimate in Lemma \ref{lem57}, we have $\int_{\tilde {\mathcal R'}} \yg |\nab v_\eta|^2 \le C$, with $C$ depending only on $d$ and $s$.
Thus there remains to estimate the same for $u_\eta$. For this we build sub and super-solutions to the equation \eqref{ueta}.
\\
For a supersolution, we take 
$\bar{G}_\eta:=\sum_{i=1}^{2^d} \g_\eta(x-p_i)$ which  satisfies 
$$-\div (\yg \nab \bar {G}_\eta) = \c \sum_{i=1}^{2^d} \delta_{p_i}^{(\eta)}  \ge -\div (\yg \nab u_\eta) \quad \text{in} \ \tilde{\mathcal R'}. $$
  For a subsolution we use the explicit solution  $h_M= \frac{\c}{2(1-\gamma)}  M |y|^{1-\gamma}$ as in the proof of Lemma \ref{lem57} but  with $M=-\c 2^d /|\tilde K_a|  \ge \frac{\c}{|\mathcal R|}$, which is  bounded since $\gamma<1$ and  satisfies 
 $$-\div (\yg \nab (u_\eta  -h_M) ) \le 0.$$
 By 
 the maximum principle Lemma \ref{pcpmax}, we deduce   that $\bar{G_\eta} - u_\eta $ and $u_\eta-h_M$  achieve their minimum on the boundary of $\tilde{\mathcal R'}$.  Up to adding a constant to $u_\eta$, we also assume that its minimum in $\tilde {\mathcal R'}$ is $0$. We may thus write 
 $$  h_M + \min_{\p \tilde {\mathcal R'}} (u_\eta - h_M) \le u_\eta \le \bar G_\eta -\min_{  \p \tilde {\mathcal R'}}   (\bar{G}_\eta -u_\eta)$$
 which yields, in view of the properties of $\bar G_\eta $ and $h_M$, \begin{equation}\label{pmp}
  0 \le u_\eta \le C +  2^d \g(\eta) + \max_{\p  \tilde {\mathcal R'}}  u_\eta - \min_{\p  \tilde {\mathcal R'}} u_\eta. \end{equation}
 But $u_\eta$ being the solution of \eqref{ueta} whose right-hand side is $0$ in $\tilde {\mathcal R'} \backslash \tilde K_a$ satisfies a Harnack principle (cf. \cite[Theorem 2.3.8]{fks}), which asserts that $\max_{K} u_\eta \le A \min_K u_\eta$ for each compact set $K$, and for some constant $A>0$ independent of $u_\eta$. Since we have zero Neumann boundary condition we can in principle extend the solution by reflection across the boundary, so this relation holds in fact up to the boundary of $\tilde{ \mathcal R' }$, so for some $K$ containing $\p \tilde {\mathcal R'}$, and we now consider such a $K$.    
 Standard arguments then imply that 
 \begin{equation}\label{mpp2}\max_{\p  \tilde {\mathcal R'}}  u_\eta - \min_{ \p \tilde {\mathcal R'}} u_\eta\le (1-\theta) (\max_{  \tilde {\mathcal R'}}  u_\eta - \min_{ \tilde {\mathcal R'}} u_\eta) = (1-\theta) \max_{  \tilde {\mathcal R'}}  u_\eta   \end{equation}
 for $\theta= \frac{1}{1+A}<1$ where $A$ is the constant in the Harnack inequality.  Indeed, either $u_\eta \ge \theta \max_{\tilde{\mathcal R}}  u_\eta$ in $K$,
  in which case we deduce  $ \max_K u_\eta -\min_K u_\eta \le (1-\theta)  \max_{\tilde{\mathcal R}}  u_\eta$ and this implies \eqref{mpp2};  or 
 else,  $\min_K u_\eta \le   \theta \max_{\tilde{\mathcal R}}  u_\eta$, in which case  the  above Harnack inequality yields  $\max_K u_\eta \le A \theta \max_{\tilde{\mathcal R}}  u_\eta = (1-\theta) \max_{\tilde{\mathcal R}}  u_\eta $ by choice of $\theta$, which again implies \eqref{mpp2}.

Inserting then \eqref{mpp2} into \eqref{pmp}, we deduce 
 $\theta \max_{  \tilde {\mathcal R'}}    u_\eta \le C + 2^d \g(\eta)$ i.e. $$0\le u_\eta \le C(1+ \g(\eta)) \quad \text{in} \ \tilde{\mathcal R'} $$ for some positive constant $C$ depending only on $d$ and $s$.
Multiplying \eqref{ueta} by $u_\eta$, integrating by parts  and inserting this bound, we find\begin{equation*}\int_{\tilde{\mathcal R'}} \yg |\nab u_\eta|^2 =\c \int_{\tilde{\mathcal R'}} u_\eta   \left(\sum_{i=1}^{2^d} \delta_{p^i} ^{(\eta)} -  2^d \indic_{\tilde{K}_a } \frac{1}{|\tilde K_a|}  (  \int_{\tilde{\mathcal{R}}} \delta_p^{(\eta)})  \drd\right) 
\le C (1+ \g(\eta)) ,
\end{equation*} where again $C$ depends only on $d$ and $s$. 
Combining with  the estimate on $v_\eta$, we deduce $\int_{\tilde{ \mathcal R}}\yg |\nab G_\eta|^2\le C(1+ \g(\eta))$ with $C$ depending only on $s$ and $d$, which is the desired estimate \eqref{estGeta}.

\end{proof}

\subsection{Proof of Proposition \ref{screening}}
\noindent
{\bf Step 1.}  \emph{Finding a good boundary. }\\
 We note that there exists a constant $C>0$ depending only on $d$ and a hyperrectangle $K_R'\subset \check{K}_R\subset K_R$ whose faces are  at distance $t \in [\hal \ep R, \ep R]$ from those of $\check{K}_R$ and parallel to them,   such that, denoting by $(\p K_R')_L$ the $L$-tubular neighborhood of its boundary and assuming $\hal \ep R>  2^{d+2}$ (which may be included in the condition \eqref{condR})  we have
 \begin{eqnarray}
  \int_{\partial K_R'\times\mathbb R^k}\yg|E_\eta|^2 &\le& C  \M  \ep^{-1}R^{d-1}\ ,\label{enbd1}\\[3mm]
 \int_{(\partial K_R')_{ 2^{d+2}} \times\mathbb R^k}\yg|E_{\eta'} |^2&\le& CM_{R, \eta'}  \ep^{-1} R^{d-1}\, . \label{enbd2} 
 \end{eqnarray}
Indeed,
 we   obtain that $$\int_{(\partial K_R')_{ 2^{d+2}} \times\mathbb R^k}\yg|E_{\eta'} |^2 \le CM_{\eta', R}  \ep^{-1} R^{d-1}$$ hold simultaneously for $\eta$ and $\eta'$   by a pigeonhole principle  on a subdivision of $[\hal \ep R,\ep  R]$ into $ 2^{-(d+4)} \ep R$ pieces of size $2^{d+2}$. We then use the mean value principle to get \eqref{enbd1} (translating the faces of $\p K_R'$ if necessary).  We note that $K_R'$ contains $\{x\in K_R, \dist(x, \p K_R) \ge 2\ep R\}$ as desired. 

In the case $k=1$, by a mean-value argument on \eqref{decrvert}, we find that there exists $\ell \in [\hal \ep^2 R, \ep^2 R]$ such that 
\begin{equation}\label{decv}
\int_{K_R' \times \{-\ell, \ell\}} \yg |E|^2 <  2 e_{\ep,R} \ep^{2} R^{d-1}.\end{equation}
In the sequel, we choose this $\ell$, or in the case $k=0$ we choose $\ell=\ep^2 R$. In all cases we have $\ell \le \ep^2 R$. 
We also  note that by the assumption \eqref{condR}, we have $R>\ep R> \ell>1$.


\medskip 
\noindent
\textbf{Step 2.} \emph{Subdividing the domain.}\\
We consider the following regions, on each of which we will perform different constructions to build $\hat E$: 
\begin{eqnarray*}
 D_0&=&K_R'\times[-\ell,\ell]^k\ ,\\
 D_\p&=&(K_R\times[-\ell,\ell]^k)\setminus D_0\ ,\\
 D_1&=& (K_R\times [-R,R])\setminus(D_0\cup D_\p)\ .
\end{eqnarray*}
Of course the set $D_1$ does not exist in the case $k=0$. 
We let $\Lambda_0\subset \Lambda$ be given by those points whose smeared charges touch $\p K_R'$, i.e.
\begin{equation}\label{defl0}
 \Lambda_0=\left\{p\in\Lambda:\ B(p,\eta)\cap (\p K_R')\neq\emptyset\right\}\ .
\end{equation}
We may also denote $\Lambda_{int} = (\Lambda \cap K_R')\backslash \Lambda_0$.
The goal of the construction is to ``complete" outside $D_0$ the smeared out charges whose centers belong to $\Lambda_0$, and   place an additional  
\begin{equation}\label{defnp}
N_\p :=|K_R|-|K_R'| +\frac{1}{\c} \int_{\p D_0} \yg E_\eta\cdot\vec{ \nu} - \int_{D_0^c}  \sum_{p\in \Lambda_0  }N_p \delta_p^{(\eta)}\end{equation}
points  in the set $K_R \backslash K_R'$.
Integrating the relation satisfied by $E_\eta$ over $D_0$ yields that 
\begin{multline*}
-\int_{\p D_0} \yg E_\eta \cdot \vec{\nu} = \c \int_{D_0} \sum_{p \in \Lambda} N_p \delta_p ^{(\eta)}- \c |K_R'|\\= 
\c \sum_{p \in \Lambda_{int}\cup \Lambda_0 } N_p -\c \int_{D_0^c}  \sum_{p\in \Lambda_0} N_p \delta_p^{(\eta)} - \c |K_R'|,\end{multline*}with $\vec{\nu}$ the outer normal,  hence 
$$N_\partial = |K_R|- \sum_{p \in  \Lambda_{int}\cup \Lambda_0 } N_p$$ 
which proves, since $|K_R|\in \mathbb{N}$ that $N_\p$ is indeed an integer.

In the case $k=1$ we also define the constant $C_0$ to be 
\begin{multline}\label{defC0}
C_0=  \frac{\ell^{-\gamma}}{2(|K_R|-|K_R'|)} \int_{D_0 \times \p [-\ell,\ell]} \yg E_\eta \cdot \vec{\nu}\\= 
\frac{\ell^{-\gamma}}{2(|K_R|-|K_R'|)}\left( \c N_\p - \int_{\p K_R' \times [-\ell,\ell]} \yg E_\eta \cdot \vec{\nu} +\c \int_{D_0^c}  \sum_{p\in \Lambda_0} N_p \delta_p^{(\eta)}\right) -\hal \ell^{-\gamma}
\end{multline}
where for the second equality we have used \eqref{defnp}.
Whenever $C_0$ appears below, we will mean $C_0=0$ in the case $k=0$.

Next, we split $K_R \backslash K_R'$ into hyperrectangles $H_i$ with sidelengths $\in [\ell/2,2\ell]$ (this can be done by constructing successive strips, in the style of the proof of Lemma \ref{subdivision}). We then let $\tilde H_i=H_i \times [-\ell,\ell]^k$.
We also let 
$$n_i = \c \int_{\tilde H_i}\sum_{p\in \Lambda_0} N_p \delta_p^{(\eta)}, $$ and let $m_i$ be such that 
 \begin{equation}\label{defmi}
 \c(m_i-1) |H_i| =  \int_{\p D_0 \cap \p \tilde H_i}  \yg E_\eta \cdot\vec{ \nu} - 2 C_0\ell^{\gamma}  |H_i|- n_i.\end{equation}We note that $n_i=0$ unless $H_i$ has a face in common with $\p D_0$.
 We will check below that $|m_i-1|<\hal$. Since the sidelengths of $H_i$ are of order $\ell$, as soon as $\ell $ is large enough, modifying the boundaries of the $H_i$ a little bit, we can ensure that in addition each $m_i |H_i|\in \mathbb{N}$.
 Indeed, combining \eqref{defmi} and \eqref{defnp}, we may check that $\sum_i m_i |H_i|= N_\p$, hence is an integer. Once we prove later that $|m_i-1|<\hal$, this will imply that $N_\p >0$.
 
Since $m_i\in [\hal, \frac{3}{2}]$ and $m_i |H_i|\in \mathbb N$, if $\ell >\ell_0= 2 (\frac{1}{2})^{-1/d}$, we may apply Lemma \ref{subdivision}.   This places a condition on $\ep^2R$ being large enough, which is fulfilled by taking $R_0$ large enough in \eqref{condR}.  For the $\tilde H_i$ which have some codimension-$1$ face $F_i$ in common with $\partial D_0$, we choose such an $ F_i$ to play the role of $ F$ in the notation of Lemma \ref{subdivision}. Each $H_i$ is   then divided into rectangles $\mathcal R_\alpha, \alpha\in I_i$ of sidelengths bounded above and below by positive constants which depend only on $d$, and  volumes $1/m_i$.
We let $\tilde{\mathcal R}_\alpha= \mathcal R_\alpha \times [-1,1]^k$.    For each $\mathcal R_\alpha$, we denote 
\begin{equation}\label{defna}
\bar n_\alpha= \c \int_{\tilde{\mathcal R}_\alpha}\sum_{p\in \Lambda_0} N_p \delta_p^{(\eta)} \end{equation} 
so that $n_i = \sum_{\alpha \in I_i} \bar n_\alpha.$  
 \medskip 
 
 \noindent
 {\bf Step 3.} \emph{Defining $\hat E_\eta$}.\\
 Over each $\tilde H_i$ we define $\hat E_\eta $ as a sum $E_{i,1}+E_{i,2}+E_{i,3}+E_{i,4}$, some of these terms being zero except for $H_i$ that has some boundary in common with $\p D_0$. 
 
The first vector field contains the contribution of the completion of the smeared charges belonging to $\Lambda_0$.
We let
 $$E_{i,1}:=\sum_{\alpha\in I_{i,\p}}\indic_{\tilde{\mathcal{R}}_\alpha} \nab h_{1,\alpha}$$ where $h_{1,\alpha}$ is the solution of
\begin{equation}\label{defh1}
\left\{\begin{array}{ll}
 -\op{div}\left(\yg\nabla h_{1,\alpha}\right)=\c \sum_{p\in\Lambda_\alpha}N_p\delta_p^{(\eta)}& \text{ in } \tilde{\mathcal R}_\alpha 
,\\[3mm]
 \p_\nu h_{1,\alpha}=0& \text{ on }\p   \tilde{\mathcal R}_\alpha  \setminus \p D_0\ ,\\[3mm]
 \p_\nu h_{1,\alpha}=\frac{-\bar n_\alpha }{\int_{ \p \tilde{\mathcal{R}}_\alpha\cap \p D_0}     \yg }    &\text{ on }\p \tilde{\mathcal{R}}_\alpha\cap \p D_0 ,
\end{array}
\right.
\end{equation}
We note that the definition of $\bar n_\alpha$ makes this equation solvable and that by construction of the $\mathcal R_\alpha$, the constant $\int_{ \p \tilde{\mathcal{R}}_\alpha\cap \p D_0}     \yg $ is bounded above and below by positive constants depending only on $d$ and $s$. 

The second vector  field is defined to be $E_{i,2}= \nab h_2$ with 
\begin{equation*}
\left\{\begin{array}{ll}
   -\op{div}\left(\yg\nabla h_2\right)=- 2\ell^{\gamma} C_0\drd& \text{ in }\tilde H_i\ ,\\[3mm]
   \p_\nu h_2=C_{0}&\text{ on }H_i\times\p[-\ell,\ell]\ , \\[3mm]
   \p_\nu h_2=0&\text{ on }\text{the rest of }\p \tilde H_i\ ,
\end{array}
\right.
\end{equation*} Of course, $E_{i,2}=0$ in the case $k=0$.

The third vector field is defined to be $E_{i,3}=\nab h_3$ with  
\begin{equation*}
\left\{\begin{array}{ll}
   -\op{div}\left(\yg\nabla h_3\right)= (\c( m_i-1) + 2 \ell^{\gamma} C_0) \drd& \text{ in }\tilde H_i\ ,\\[3mm]
   \p_\nu h_3=0&\text{ on }H_i\times\p[-\ell,\ell]^k\ , \\[3mm]
   \p_\nu h_3= g_i
  &\text{ on }\p  H_i\times [-\ell,\ell]^k ,
\end{array}
\right.
\end{equation*} where we let $g_i=0$ if $\tilde H_i$ has no face in common with $\p D_0$ and otherwise 
$$ g_i= - E_\eta \cdot \vec{\nu} +  \sum_{\alpha \in I_i} \indic_{\tilde{\mathcal{R}_\alpha}  }\frac{ \bar n_\alpha}{ \int_{ \p \tilde{\mathcal{R}}_\alpha\cap \p D_0}     \yg }$$ 
with $E_\eta \cdot \vec{\nu}$ taken with respect to the outer normal to $\p D_0$.  We note that this is solvable in view of \eqref{defmi} and the definition of $\bar n_\alpha$ and $n_i$.

Finally, the fourth vector field consists of contributions from almost equally spaced and screened charges over the $\mathcal R_\alpha$. To define it, let $p_\alpha$ be the barycenter of  each hyperrectangle $\tilde{\mathcal R}_\alpha, \alpha\in I_i$ and define a function $h_{4,\alpha}$ as solving \begin{equation}\label{defh4}
 \left\{\begin{array}{ll}
   -\op{div}\left(\yg\nabla h_{4,\alpha}\right)=c_{s,d} \left(\delta_{p_\alpha}^{(\eta)}- m_{i}\drd\right)& \text{ in }\tilde{\mathcal R}_\alpha
   \\[3mm]
   \p_\nu h_{4,\alpha}=0& \text{ on }\p \tilde{\mathcal R}_\alpha\ .
\end{array}
\right.
\end{equation}
We note that this equation is solvable because we have chosen $|\mathcal R_\alpha|=1/m_i$.
We then define in $\tilde H_i$
\[
 E_{i,4}=
         \sum_{ \alpha \in I_i }\indic_{\tilde{\mathcal{R}}_\alpha}     \nabla h_{4,\alpha} \, .
\]
This finishes defining $\hat E_\eta $ over $D_\p$. 
To define $\hat E_\eta$ over $D_1$ (in the case $k=1$) we  let $\hat E_\eta=\nab h$ with 
\begin{equation}\label{defh}
\left\{\begin{array}{ll}
   -\op{div}\left(\yg\nabla h\right)=0& \text{ in }D_1\ ,\\[3mm]
   \p_\nu h=- \phi& \text{ on }\p D_1\ .\\
\end{array}
\right.
 \end{equation}
where 
\begin{equation}\phi:=\indic_{\p D_1\cap\p D_0} E_\eta \cdot \vec{\nu}+\indic_{\p D_\p\cap\p D_1}C_0,
\end{equation}
with the outer normal taken to be outer to $D_0$. Again this equation is solvable in view of \eqref{defC0}.
This completes the construction of $\hat E_\eta$. 

We note that  the normal components are always constructed to be continuous across interfaces,  so that no divergence is created there, and so    $\hat E_\eta$ thus defined (and extended by $0$ outside of $D_0 \cup D_\p \cup D_1$) satisfies 
\begin{equation}
\left\{\begin{array}{ll}
   -\op{div}\left(\yg \hat E_\eta \right)=   c_{s,d} \left(\sum_{p \in \hat\Lambda } N_p \delta_p^{(\eta)}   - \drd\right) & \text{in } K_R \times \R^k  ,\\[3mm]
   \hat E_\eta \cdot \nu = 0  
   & \text{ on } \p K_R \times \R^k \ ,
\end{array}
\right.
 \end{equation}
where 
$$\hat \Lambda=\left( \Lambda \cap K_R' \backslash \Lambda_0 \right)\cup \Lambda_0 \cup \left( \cup_{i} \cup_{\alpha \in I_i}  \{p_\alpha\}\right). $$ We note that by construction  the distance between the new  points of $\hat \Lambda$ and between them and $\p K_R $ and them and the original points  is bounded below by a constant depending only on the sidelengths of $\mathcal R_i$, hence on  $d$, call it $2\eta_0$. So if the original points of $\Lambda$ are simple and well separated, so are those of $\hat \Lambda$. 
We then define 
$$\hat E= \Phi_\eta^{-1}( \hat E_\eta) = \hat E_\eta+\sum_{p\in \Lambda_{R, \eta}}N_p \nab f_\eta(x-p)$$ where $f_\eta $ is as in \eqref{feta}. Since $\hat \Lambda$ is at distance $\ge  \eta_0 $ from $\p K_R$, and since 
$f_\eta$ is supported in $B(0, \eta)$ with $\eta<\eta_0$,  we have that $\hat E =\hat E_\eta$ at distance $\ge  \eta$ from $\hat \Lambda$ and in particular on $\p (D_0\cup D_{\p})$.  In particular $\hat E$ solves
\begin{equation}\label{C2}
\left\{\begin{array}{ll}
   -\op{div}\left(\yg \hat  E \right)=   c_{s,d} \left(\sum_{p \in\hat  \Lambda  } N_p \delta_p   - \drd\right) & \text{in } K_R \times \R^k  ,\\[3mm]
  \hat  E \cdot\vec{ \nu} =  0 
   & \text{ on }\p K_R \times \R^k 
\end{array}
\right.
 \end{equation}
as desired.  We note that we may as well choose for $p_\alpha$, instead of the barycenter of the $\widetilde{\mathcal R}_\alpha$, any point at distance $<\eta_0/2$ from it, and have the same conclusions verified if $\eta<\eta_0$.
\medskip 

\noindent
{\bf Step 4.} {\it Controlling the constants.}\\
To control $C_0$, we use \eqref{decv} to obtain with the Cauchy-Schwarz inequality,  from \eqref{defC0},
\begin{equation}\label{controleC0}
|C_0| \le  C \ell^{-\gamma} R^{-d} \ep^{-1}  e_{\ep,R}^{1/2} \ep R^{\frac{d-1}{2}} \ell^{\gamma/2} R^{d/2}  = C e_{\ep,R}^{1/2} \ep^{-\gamma} R^{- \frac{\gamma+1}{2}}.\end{equation}

To control $\bar n_\alpha$, we note that $|\bar n_\alpha|\le C n_\alpha$ where $n_\alpha:=  \sum_{p \in \Lambda_0, \dist (p, \mathcal R_\alpha)\le \eta} N_p$.  
Since the sidelengths of $\mathcal{R}_\alpha$ are bounded by $2^d |m_i|^{-1/d}$ and $m_i >\hal$,   we may choose $L\ge 1$ such that the $L$-fattened 
rectangles $(\mathcal R_{\alpha})_L$ are contained in $(\p K_R')_{ 2^{d+2}}$. Combining Lemma \ref{lemdiscrepance} with  \eqref{enbd2} and recalling that 
$p_\alpha$ is the barycenter of $\mathcal R_\alpha$, we obtain
\begin{multline} \sum_i\sum_{\alpha\in I_{i,\p}}(\bar n_\alpha)^2\le C 
 \sum_i\sum_{\alpha\in I_{i,\p}}n_\alpha^2 \le C\sum_i\sum_{\alpha\in I_{i,\p}}(n_\alpha - L^d)^2
 \le C\sum_{i,\alpha}D(p_\alpha, L)^2\\ \le C\sum_{i,\alpha}\int_{(\mathcal R_\alpha)_L}\yg|E_{\eta'}|^2
  \le C \int_{(\p K_R')_{2\cdot 3^d}  \times \mathbb R}\yg|E_{\eta'}|^2
 \le C M_{R,\eta'}\ep^{-1}R^{d-1}\ .\label{estna}
\end{multline}

Finally, to control $m_i$ we write that in view of \eqref{defmi}
\begin{equation}
|m_i-1| \le C \ell^{-d} \left(   \int_{\p D_0 \cap \p \tilde H_i} \yg |E_\eta|   \right) + 2\ell^{\gamma} |C_0|+ \ell^{-d} \sum_{\alpha \in I_i} \bar n_\alpha.\end{equation}
We note that the last term can be controlled by \eqref{estna} using that $n_\alpha\le n_\alpha^2$ since $n_\alpha$ is an integer, by $ C\Mp \ep^{-3} R^{-1} $. This can be made small if \eqref{condR} holds, choosing $R_0$ large enough there.
The second term (when it exists) can be bounded by $ e_{\ep,R}^{1/2}R^{\frac{\gamma-1}{2}}   \ep^{\gamma}$  which is small as soon as $R_0$ is chosen large enough in \eqref{condR}.
The first term can be bounded by Cauchy-Schwarz and \eqref{enbd1}, first in the case $k=0$ by 
$$C\ell^{-d} \M^{1/2} \ep^{-1/2} R^{\frac{d-1}2} \ell^{\frac{d-1}{2}}\le C \M^{1/2} \ep^{-d-\frac{3}{2}} R^{-1}$$
and in the case $k=1$ by
$$\ell^{-d} \M^{1/2} \ep^{-1/2} R^{\frac{d-1}2} \ell^{\frac{d-1}{2}} \ell^{\frac{1+\gamma}2}\le \M^{1/2}  R^{\frac{\gamma-1}{2}}\ep^{-\hal -d+\gamma}. $$
These terms are all small as soon as \eqref{condR} holds with $R_0$ chosen large enough.
\medskip

\noindent
{\bf Step 5.} {\it Estimating the energy of $\hat E_\eta$}.
\\
For $l=1, \cdots, 4$, using the previous notation, we define $E_l:= \sum_i E_{i,l}$. 
We control successively the energy of each $E_l$.
For $\alpha \in I_{i, \p D}$,
we use Lemma \ref{chargesnearbdry} and from the estimate \eqref{estboundary} therein we obtain,\begin{equation}\label{esthna}
\int_{\tilde R_\alpha}\yg|\nab h_{1,\alpha}|^2 \le C(\g(\eta)+1) n_\alpha^2 .
\end{equation}  From \eqref{estna} we thus obtain (absorbing the $1$ into the $\g(\eta)$) that the total contribution is \begin{equation}\label{este1}
 \int_{K_R\backslash K_R' \times [-\ell, \ell]^k }\yg|E_1|^2=\sum_{i}\int_{\tilde H_i}\yg|\nab h_{i,1}|^2\le C\Mp \g(\eta)\ep^{-1}R^{d-1}\le C \Mp\g(\eta)\ep R^d
\end{equation}where for the last inequality we used \eqref{condR}.

For $E_2$, we use Lemma \ref{lem57}, and control the number of $H_i$'s by $C\ep R^d \ell^{-d}$ to obtain, with \eqref{controleC0} 
\begin{equation}\label{cE2}
\int_{(K_R\backslash K_R') \times [-\ell, \ell]^k}   \yg |E_2|^2 \le C \ep R^d \ell^{-d}    C_0^2 \ell^{\gamma+1+d}  \le C e_{\ep,R} \ep^{3} R^d.\end{equation}

For $E_3$ we use Lemma \ref{lem57} to get 
\begin{multline}\label{cE3}
\int \yg |E_3|^2 \le C \ell \sum_i \int_{\p  H_i \times [-\ell,\ell]^k}  \yg g_i^2 \\
\le  C\ell \int_{\p K_R' \times \R^k}   \yg |E_\eta|^2 +  C\ell \sum_{i, \alpha \in I_i} (\bar n_\alpha)^2 
\le C \ell (\M +\Mp) \ep^{-1} R^{d-1}\le C (\M + \Mp) \ep R^d.
\end{multline}
where we have used \eqref{enbd1},  \eqref{estna}, the geometric properties of $\tilde{\mathcal R}_\alpha$.

For $E_4$ we use Lemma \ref{lemcs1} and multiply by the number of  $\mathcal R_\alpha$, which is  proportional to the volume of the region, hence  bounded by $C \ep R^{d}$, to obtain 
\begin{equation}\label{este4}
 \int_{\tilde K_R}\yg|E_4|^2 = \sum_{i, \alpha\in I_i}\int_{\tilde R_\alpha}\yg|\nab h_{4,\alpha}|^2\le C\g(\eta) \ep R^d\, .
\end{equation}

Finally,  in $D_1$, we use the obvious analogue of Lemma \ref{lem57} to obtain 
\begin{multline}\label{cD1}
\int_{D_1} \yg |\hat E_{\eta}|^2\le  C R  \int_{\p D_1}  \yg |\phi|^2 \le  CR \left(\int_{\p D_0 \cap \p D_1} \yg |E_\eta \cdot \nu|^2 +   C_0^2 \ell^\gamma
(|K_R|-|K_R'|) \right) \\ \le  Ce_{\ep,R} \ep^2    R^d    + C e_{\ep,R}  \ep R^{d}  \le C e_{\ep,R} \ep R^d 
\end{multline}
where we used  \eqref{decv} and \eqref{controleC0}.

Combining \eqref{este1}--\eqref{cD1} we conclude that 
$$\int_{K_R\times \R^k} \yg |\hat E_\eta|^2 \le  \int_{K_R' \times \R^k} \yg |E_\eta|^2 + C ((\Mp+1)  \g(\eta) \ep R^d + ( \M +e_{\ep,R}) \ep R^d)
$$
and since it is clear by construction that $\hat E_\eta= \Phi_\eta (\hat E)$, 
the desired result holds. This concludes the proof of Proposition \ref{screening}.

\begin{remark}It follows from the discussion in the proof that if the points in $\hat\Lambda \cap (K_R\backslash K_R')$ are each displaced by a distance $\le \eta_0/4$ (with $\eta_0$ as in the statement of the proposition), there exists a vector field $\hat E$ compatible with the modified configuration and  satisfying  the exact same conclusions.  Indeed one may displace the barycenters $p_\alpha$ of the $\tilde{\mathcal R}_\alpha$ in the proof by a quarter of their minimal distance without changing  the conclusions. 

\end{remark}

\section{The upper bound on the energy}\label{sec6} We are now in a position to use Proposition \ref{screening} to construct a test-configuration approximating $\min \w$. 

We will need the following result along the lines of \cite[Lemma 5.4]{ss1d}.
\begin{lem}\label{ss1d}\mbox{}\\ Assume we are in one of the cases for which $k=1$. 
There exists $E$ a minimizer of $\W_\eta$ over $\mathcal A_1$ which satisfies 
$$
 \lim_{z\to\infty}\lim_{R\to \infty}\frac{1}{R^d}\int_{K_R\times(\mathbb R\setminus(-z,z))}\yg|E|^2 =0\ .
$$
 \end{lem}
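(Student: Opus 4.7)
The plan is to construct $E$ as a suitable limit of periodic minimizers supplied by item 4 of Proposition \ref{prow}, exploiting the fact that for a configuration periodic in the horizontal variable, the Fourier decomposition of the extended potential gives explicit exponential decay in the vertical coordinate $y$.

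I would start from a minimizing sequence $(E_N)_N \subset \ainfty$ of $\W_\eta$ provided by Proposition \ref{prow}(4), with $E_N = \nabla h_N$ and $h_N$ periodic in $x$ with respect to a lattice $\Lambda_N \subset \mr^d$ of volume $N$. Expanding $h_N(x,y) = \sum_{k \in \Lambda_N^*} \hat h_N(k,y)\, e^{2i\pi k \cdot x}$, each non-zero mode $\hat h_N(k,\cdot)$ satisfies, away from $y=0$, the homogeneous ODE
\begin{equation*}
-\partial_y(\yg \partial_y u) + (2\pi|k|)^2 \yg u = 0,
\end{equation*}
whose bounded solutions are of Bessel type $y^{(1-\gamma)/2} K_{(1-\gamma)/2}(2\pi|k| y)$ and therefore decay exponentially like $e^{-2\pi|k||y|}$. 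Applying Parseval in $x$, integrating against the weight $\yg$, and using that the $\eta$-truncation only affects $h_N$ in a neighbourhood of $y=0$, one obtains
\begin{equation*}
\frac{1}{N}\int_{\mathbb T_N \times (\mr \setminus (-z,z))} \yg|\nabla h_{N,\eta}|^2 \le A_N(\eta)\, e^{-c_N z},
\end{equation*}
for some $c_N > 0$ controlled by the shortest vector of $\Lambda_N^*$. By $\Lambda_N$-periodicity, the left-hand side is exactly $\lim_{R\to\infty} R^{-d}\int_{K_R \times (\mr \setminus (-z,z))} \yg|E_N|^2$, which thus tends to $0$ as $z \to \infty$ for every fixed $N$.

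Next I would extract an $L^p_{\loc}$-convergent subsequence $E_{N_j} \to E$ in $\ainfty$ by the compactness arguments underlying Proposition \ref{prow}(3) and the uniform bound $\W_\eta(E_N) \le \min \W_\eta + o(1)$; the limit $E$ is itself a minimizer, by lower semicontinuity of $\W_\eta$.

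The main obstacle is that the decay rate $c_N$ degrades as $N \to \infty$ (roughly $c_N \sim N^{-1/d}$), so the tail bound does not transfer to $E$ by a naive passage to the limit. I would handle this by a contradiction argument based on the screening result: if for some $\delta>0$ one had $\limsup_{z\to\infty} \lim_{R\to\infty} R^{-d}\int_{K_R \times (\mr\setminus(-z,z))} \yg|E|^2 \ge \delta$, then choosing $z$ large, applying Proposition \ref{screening} inside large hyperrectangles of $\mr^d$ at a well-chosen vertical height $\sim z$, and completing the construction by a harmonic extension in $y$ vanishing outside $\{|y|\le z\}$, one would produce a competitor $\hat E \in \ainfty$ with
\begin{equation*}
\W_\eta(\hat E) \le \W_\eta(E) - \tfrac{\delta}{2} + o_R(1) + o_z(1),
\end{equation*}
contradicting minimality. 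Alternatively, a diagonal extraction along carefully coupled sequences $(N_j, z_j, R_j)$ — choosing $z_j \to \infty$ slowly enough that $A_{N_j}(\eta) e^{-c_{N_j} z_j} \to 0$ — combined with Fatou's lemma applied to $R^{-d}\int_{K_R \times (\mr \setminus (-z,z))} \yg|\cdot|^2$ along the subsequence directly transfers the decay to $E$.
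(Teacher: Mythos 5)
Your proposal has a circularity problem that invalidates the whole scheme: you start from Proposition~\ref{prow}(4), but item~4 (existence of a minimizing sequence formed of periodic configurations) is proved in Section~\ref{sec6}, \emph{after} and \emph{using} Lemma~\ref{ss1d}. Specifically, the paper constructs that periodic minimizing sequence by taking a minimizer $E$ of $\W_\eta$ over $\mathcal A_1$ ``satisfying the results of Lemma~\ref{ss1d}'' and applying Proposition~\ref{screening} to it; in the $k=1$ case the hypotheses~\eqref{condR} of the screening proposition require control of $e_{\ep,R}$, i.e.\ of the vertical tail energy, which is exactly what Lemma~\ref{ss1d} supplies. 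So invoking the periodic minimizing sequence to prove the decay lemma is arguing in a circle. Your screening-based contradiction fallback suffers from the same problem: to apply Proposition~\ref{screening} to the minimizer $E$ you must already know that $e_{\ep,R}$ is under control, and moreover the screening construction terminates the field across the \emph{lateral} boundary $\p K_R\times\R^k$, not across a horizontal slab at $|y|\sim z$, so it is not the tool for ``truncating in $y$'' as you describe.

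The diagonal/Fatou fallback does not close the gap either. You correctly observed that the per-$N$ decay rate degrades like $c_N\sim N^{-1/d}$, so for fixed $z$ the bound $A_N e^{-c_N z}$ is not small uniformly in $N$. Weak lower semicontinuity in $L^2_{\yg,\loc}$ does give, for fixed $R$ and $z$, $\int_{K_R\times(\R\setminus(-z,z))}\yg|E_\eta|^2 \le \liminf_j\int_{K_R\times(\R\setminus(-z,z))}\yg|E_{N_j,\eta}|^2$, but on the right-hand side, once $N_j^{1/d}\gg R$, the integral over a fixed $K_R$ measures a local piece of $E_{N_j}$, not its periodic average; one cannot commute $\lim_{R\to\infty}$ with $\liminf_j$ nor bound this local quantity by the small per-period average. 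Fatou applies to a fixed measure, whereas here $R\to\infty$ is an escape to infinity, so the argument breaks.

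The paper's actual route avoids all of this. It starts only from Proposition~\ref{prow}(3) (existence of \emph{some} minimizer $E$ of $\W_\eta$ over $\mathcal A_1$, obtained by the compactness arguments of the appendix of \cite{rs}, independently of any periodicity). From $E$ it builds, via the ergodic-theorem machinery of Section~\ref{sec4} applied to the pushforwards of Lebesgue measure on $K_n$ by $x\mapsto E((x,0)+\cdot)$, an $\R^d$-translation-invariant probability measure $P$ on $\mathcal A_1$ concentrated on minimizers of $\W_\eta$. Setting $f_z(E)=\int_{[-1,1]^d\times\{|y|>z\}}\yg|E|^2$, each $f_z(E)$ decreases to $0$ as $z\to\infty$ and is dominated by $f_1$, so by dominated convergence $\int f_z\,dP\to 0$; the multi-parameter ergodic theorem identifies $\int f_z\,dP$ with $\int f_z^*\,dP$ where $f_z^*(E)$ controls $\lim_{R\to\infty}R^{-d}\int_{K_R\times\{|y|>z\}}\yg|E|^2$; Fatou then yields the $P$-a.e.\ conclusion, and since $P$ is concentrated on minimizers, some minimizer has the claimed decay. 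This proof is self-contained and does not rely on screening or periodic sequences.
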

\begin{proof}
We claim that there exists  $P$, an $\R^d$-translation invariant probability measure on vector fields $P$ which is concentrated on minimizers of $\mathcal{A}_1$.
Assuming this, the result follows by applying the multi-parameter ergodic theorem as in \cite{becker}  on the cubes $K_R= [-R/2,R/2]^d $ to the function 
$f_z(E)=\ \int_{[-1,1] \times \{|y|>z\} } \yg |E|^2 $ with $z>1$. 
Indeed, the multi-parameter ergodic  theorem yields that 
$\int f_z(E) \, dP(E) = \int f_z^* (E) \, dP(E)$ with 
$$f_z^*(E):=  \lim_{R\to \infty} \frac{1}{|K_R|} \int_{ K_R }  f( E(x+ \cdot) ) \, dx  $$
and we may check that $$f_z^*(E)\ge C   \lim_{R \to \infty}   \frac{1}{R^d} \int_{ K_{R} \times \{|y|>z\} }  \yg |E|^2. $$ 
But for any $E \in \mathcal A_1$,  the family of function $\{f_z(E)\}_{z>1}$ decreases to $0$ as $z$ increases to $  + \infty$ and is dominated by $f_1$, thus 
by dominated convergence, we have $\lim_{z\to 0} \int f_z(E)\, dP(E)=0$.
Combining with the above, it follows that 
$$\lim_{z\to \infty} \int \left(  \lim_{R \to \infty}   \frac{1}{R^d} \int_{ K_{R} \times \{|y|>z\} }  \yg |E|^2 \right) dP(E)=0 $$ 
and by Fatou's lemma, for $P$-a.e. $E$ we must have 
$$\lim_{z\to \infty}  \lim_{R \to \infty}   \frac{1}{R^d} \int_{ K_{R} \times \{|y|>z\} }  \yg |E|^2=0.$$ The result follows, since $P$ is concentrated on minimizers of $\W_\eta$ on $\mathcal A_1$.

For the existence of $P$, we start from some $E$ minimizing $\W_\eta$ over $\mathcal A_1$. The existence of such an $E$ can be proven exactly as in \cite[Appendix]{rs}.   We let  $K_n $  be the cubes $[-n/2,n/2]^d$ in $\R^d \times \{0\}$, and  $P_n$ be the push forward of the normalized Lebesgue measure on $K_n$ by the map 
$x \mapsto E(  (x,0) + \cdot)$.  We also let  as in Section \ref{sec4}
$$\mathbf f_{n,\eta} ( \mathcal Y)= \int_{\R^{d+1}} \chi \yg |\mathcal Y|^2  $$
if $\mathcal Y(X) = E_\eta (X+ (x , 0))$ for some $x \in K_n$, and $+\infty$ otherwise. 
Then the exact same arguments as in Section \ref{sec4} apply to this setting (the only difference is that we have from the beginning an infinite number of points), in particular we have the obvious compactness result analogous to Lemma  \ref{rslem52}. This ensures that the abstract result  \cite[Theorem 3]{gl13}, can be applied and it yields at the end 
$$\liminf_{n\to \infty} \frac{1}{|K_n|} \int_{K_n \times \R} \yg |E_\eta|^2 \ge \int \W_\eta(E) \, dP(E)$$
with $P$ an $\R^d$-translation invariant probability measure. The left-hand side in this relation is $\W_\eta(E)=\min \W_\eta $ by definition.  It follows that
 $P$ is concentrated on minimizers of $\W_\eta$.
\end{proof}
\begin{remark} With the same reasoning and Fatou's lemma, we could prove the same for a minimizer of $\W$ itself.\end{remark} 
This proposition implies that  we can find a minimizer $E$ of $\W_\eta$ over  $\mathcal A_1$ for which given $\ep$, we have  $e_{\ep, R}\le 1$ as soon as $R$ is large enough (in terms of $\ep$).  We may now  conclude  the proof of Proposition \ref{prow}. For the existence of minimizers of $\W$ and $\W_\eta$ over $\mathcal A_1$, as mentioned we may argue as in \cite[Appendix]{rs}. 
The existence of sequences of periodic minimizers then follow exactly as in \cite{ss2d,rs} (so we don't give all details): we take a minimizer of $\W_\eta$ over $\mathcal{A}_1$ satisfying the results of  Lemma  \ref{ss1d}, apply Proposition \ref{screening} to it with some $R$ large enough, then multi-symmetrize  it  by reflexion and periodize the result to obtain some periodic $E_{R, \eta}$ with almost the same energy, i.e. given $\ep>0$,  if  $R$ is large enough depending on $\eta$ and $\ep$, we have $\W_\eta(E_{R, \eta}) \le \min_{\mathcal A_1} \W_\eta+\ep$.  This implies that there is a periodic minimizing  sequence for $\min \W_\eta$ over $\mathcal A_1$. Taking a diagonal sequence   $\eta \to 0$, $R\to \infty$, we may also  conclude that there exists a periodic minimizing sequence for $\W$, and this finishes the proof of  Proposition \ref{prow}.
\\

We are now going to prove the matching upper bound corresponding to the lower bound given in Proposition \ref{proergodic}:
\begin{pro}\label{upperbound}Assume \eqref{assv1}--\eqref{assv3}  and \eqref{assumpsigma}--\eqref{assmu2}.
  For any $\varepsilon >0$,  there exists $r>0$ such that for $n\in \mathbb N^*$ there exists $A_n\subset (\mathbb R^d)^n$ with $|A_n| \ge n!\left( \frac{\pi r^d}{n}\right)^n$ 
   such that 
  for all $(x_1,\ldots,x_n)\in A_n$ there holds
  \begin{equation}\label{bsup}
\limsup_{n\to \infty}   n^{-1-\frac{s}{d}} \left(  H_n(x_1, \dots, x_n) - n^2 \I(\muv) \right) \le \min \widetilde{\W} +\varepsilon\, , 
\end{equation} in the case \eqref{kernel}, respectively
  \begin{equation}\label{bsup2}
\limsup_{n\to \infty}   n^{-1} \left(  H_n(x_1, \dots, x_n) - n^2 \I(\muv)+ \frac{n}{d} \log n  \right) \le \min \widetilde{\W} +\varepsilon\, , 
\end{equation}in the cases \eqref{wlog}--\eqref{wlog2d}. 
\end{pro}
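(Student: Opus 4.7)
The plan is to construct a nearly optimal test configuration by pasting together screened periodic minimizers of $\W$ tiled on cells where $\muv$ is approximately constant, then use the splitting formula to evaluate the energy and exhibit $A_n$ as a thickening by small independent perturbations of each point.

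First, fix $\ep>0$. Using Proposition \ref{prow}, choose $\eta_0>0$ small and a periodic minimizer $E^{(1)}\in\mathcal{A}_1$ of $\W_{\eta}$ (for small $\eta<\eta_0$) whose period $N$ is large, so that $\W(E^{(1)}) \le \min_{\mathcal{A}_1}\W + \ep/2$; by Lemma \ref{ss1d} we may also assume $e_{\ep,R}(E^{(1)})\le 1$ for $R$ large. For each $m>0$, the scaling relations \eqref{scalingW}--\eqref{scalinglog} produce from $E^{(1)}$ a periodic vector field $E^{(m)}\in\mathcal{A}_m$ whose renormalized energy realizes (approximately) $m^{1+s/d}\xi_{d,s}$ (respectively $m(\xi_{d,0}-\frac{1}{d}\log m)$ in the logarithmic cases). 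I would then partition $\Sigma$, minus a thin layer near $\p\Sigma$ of thickness $\delta$ (where $\muv$ may vanish), into hyperrectangular cells $H_\alpha$ of sidelength of order $L n^{-1/d}$ with $1\ll L\ll n^{1/d}$, chosen so that on each cell $\muv$ varies by at most $\ep^2$ (using the H\"older regularity \eqref{assmu1}); on each such cell, after blow-up by $n^{1/d}$, I replace $E^{(m_\alpha)}$ (with $m_\alpha := \muv(x_\alpha)$, $x_\alpha$ the centre of the cell) by the screened vector field $\widehat E_\alpha$ produced by Proposition \ref{screening}, with parameters $\ep$ and $R$ satisfying \eqref{condR}. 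A small adjustment of the mass of the screened piece and of the number of points (which are integers) makes the total count equal $n$.

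Next, I would paste these screened fields together: since each $\widehat E_\alpha\cdot\vec\nu=0$ on the boundary of its hyperrectangle, the union $\widehat E$ (extended by $0$ outside the union) satisfies a global equation of the form \eqref{eqeeta} on the blown-up $n^{1/d}\Sigma$, with points $x_i'$ of a global configuration. The energy bound \eqref{C3} gives
\[
\int \yg |\widehat E_\eta|^2 \le \sum_\alpha \int_{\widetilde H_\alpha}\yg |E^{(m_\alpha)}_\eta|^2 (1+O(\ep)) + O(\ep n\g(\eta)) + O(\ep n),
\]
and dividing by $n$ and letting $n\to\infty$ with $L\to\infty$ (and only afterward $\eta\to 0$, $\ep\to 0$), the main term becomes a Riemann sum converging to $\frac{|\Sigma|}{c_{d,s}}\int_\Sigma \W(E^{(\muv(x))})\,dx = \min\widetilde{\W}$ by the scaling relations. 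The boundary layer of thickness $\delta$ is handled by taking $\delta$ small and using the explicit bound $\muv(x)\le C\dist(x,\p\Sigma)^\alpha$ from \eqref{assmu2} together with the exponents \eqref{condab}, which were calibrated so that the defect in the number of points in the layer and its energy cost remain $o(n^{1+s/d})$. Combining with the splitting formula Proposition \ref{splitting} and the fact that $\zeta\equiv 0$ on $\Sigma$ yields \eqref{bsup}--\eqref{bsup2}.

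Finally, to obtain $A_n$ of volume $\ge n!(\pi r^d/n)^n$, I would use the remark following the proof of Proposition \ref{screening}: the screened configuration has its points $\eta_0$-separated, and one may independently perturb each $p_\alpha$ (both the ``interior'' tiled points and the boundary-screening points) inside a ball of radius $\eta_0/4$ (in the blown-up scale), equivalently inside a ball of radius $r n^{-1/d}$ in the original scale, while keeping all conclusions valid and the same bound on the energy up to $o(1)$; the $n!$ comes from permutation of labels. The main obstacle will be the boundary-layer analysis: the screening procedure requires a minimum density $1$ of points and sidelengths $\gtrsim \ell_0$, so I must stop the construction at distance $\delta$ from $\p\Sigma$ where $\muv$ is bounded below by a universal constant; the calibration \eqref{condab} of $\alpha$ versus $\beta$ is exactly what makes the energetic and counting error in the discarded layer negligible at order $n^{1+s/d}$. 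A secondary difficulty, as in \cite{rs,ss2d}, is that the screening estimate has an additive error $C\g(\eta)\ep R^d$ per cell, which forces us to first fix $\eta$ small (depending on $\ep$), then take $n\to\infty$ with $R=Ln^{1/d}\to\infty$ for suitable $L$, before letting $\eta\to 0$ in the very last step; this is arranged by invoking the essential monotonicity in $\eta$ from Proposition \ref{Wbb}.
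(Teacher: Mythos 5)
Your construction of the interior field by tiling blown-up $\Sigma$ into mesoscale hyperrectangles and pasting screened, rescaled periodic almost-minimizers of $\W$ is exactly the paper's Step 2, and the treatment of $A_n$ by independently perturbing well-separated points matches the paper's conclusion. However, there is a genuine gap in your treatment of the boundary layer where $\muv$ degenerates, which is also the hardest part of the paper's proof.

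You propose to ``stop the construction at distance $\delta$ from $\p\Sigma$'' and accept a ``counting error'' that \eqref{condab} keeps $o(n^{1+s/d})$. This cannot work for two reasons. First, there is no room for a counting error at all: the Hamiltonian $H_n$ is defined for exactly $n$ points, and the splitting formula (Proposition~\ref{splitting}) computes $H_n$ from the field $\nab h_n'$ solving \eqref{hnp}, i.e.\ with right-hand side $\sum_{i=1}^n\delta_{x_i'}-\muv'\drd$ involving the \emph{full} density $\muv'$ on all of $\Sigma'$. Second, and relatedly, if you extend your pasted field $\widehat E$ by $0$ across the discarded layer, then $-\div(\yg\widehat E)=c_{d,s}\bigl(\sum_p\delta_p-\muv'\indic_{\cup H_\alpha}\drd\bigr)$, whose density term does not equal $\muv'\drd$; the projection argument $\int\yg|\widehat E_\eta|^2\ge\int\yg|\nab h_{n,\eta}'|^2$ requires $\div\bigl(\yg(\widehat E_\eta-\nab h_{n,\eta}')\bigr)=0$, which fails if the densities disagree on a set of positive measure. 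So you are forced to construct the field \emph{and place points} all the way up to $\p\Sigma$. The paper does this by building nested annuli $\Sigma'_{t_j}\setminus\Sigma'_{t_{j-1}}$ with widths $t_j-t_{j-1}$ growing like $\muv'^{-1/d}$ (so up to $n^{\alpha/(d(\alpha+d))}$), partitioning each annulus into cells of comparable diameter carrying unit mass $\int_{\mathcal C_i}\muv'=1$, and putting exactly one point per cell; the total number of such cells is $O(n\,\bm^{1+1/\alpha})=o(n)$, controlling the excess energy, and \eqref{condab} enters not to bound a counting defect but to bound the energy of the correction field $u_i$ (Lemma~\ref{lem57} with $\|m_K-\muv'\|_{L^\infty}$) on these large cells, where the H\"older variation of $\muv'$ grows with the cell size. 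This nested-layer construction is what your proposal is missing, and it is not a detail one can dismiss: without it the test configuration has the wrong number of points and the splitting formula does not apply.
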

The proof follows precisely the strategy of \cite[Section 6]{rs},  with some care to be taken about the extra dimension introduced in our case. The main difficulty comes from the possible degeneracy of $\muv$ near $ \p \Sigma$ that we allow, as in \cite{ss1d}.  Indeed, we need to partition $\Sigma'= n^{1/d} \Sigma$ into nondegenerate regions in which $\int \muv' \in \mathbb N$, in which we can  paste a screened minimizer of $\W$ provided by Proposition \ref{screening}. When $\muv'$ becomes very small these regions may have to become very large, or worse, very elongated in some direction, and this would prevent the application of Proposition \ref{screening}.  
As in \cite{ss1d} this is overcome by  allowing an exceptional narrow boundary layer in which the construction is less optimal but induces only negligible errors.

\begin{proof}
\noindent
{\bf Step 1.} \emph{Subdividing the domain.}\\
For $t>0$ we define the tubular neighborhood of $\p \Sigma'$ and its boundary  to be
$$\Sigma_t'=\{ x\in \Sigma', \dist(x, \p \Sigma')<t\}\qquad \Gamma_t= \{ x\in \Sigma', \dist (x, \p \Sigma')=t\}.$$
Since \eqref{assumpsigma} holds, $\Gamma_t$ is $C^1$ for $t<t_c$ small enough.

Pick $1>\bm>0$  a small number. By assumption \eqref{assmu2}, if $\alpha>0$ in that assumption,  rescaling by $n^{1/d}$, if  $\dist (x, \p \Sigma') \ge\frac{n^{1/d}}{\cc^{1/\alpha} }\bm^{1/\alpha} $ where $\cc$ is the constant in \eqref{assmu2},  then 
$ \muv'(x)\ge  \bm$ while if $\dist (x, \p \Sigma') \le\frac{n^{1/d}}{\cc}\bm^{1/\alpha} $  we have $\muv'(x) \le \frac{\cC}{\cc} \bm$.
 We may even find $T \in [\frac{n^{1/d}}{\cc^{1/\alpha}}\bm^{1/\alpha} , \frac{2 n^{1/d} }{\cc^{1/\alpha} } \bm^{1/\alpha}]$ such that 
 $\muv'(\Sigma_T') \in  \mathbb N$, and  $\muv' \ge \bm $ in $\Sigma_T' $.  We note that we may have taken $\bm $ small enough so that $T<t_c$ and 
 \begin{equation}
 \label{longueur}
\hal \mathcal{H}^{d-1} (\p \Sigma') \le  \mathcal{H}^{d-1} (\p \Sigma_t') \le 2  \mathcal{H}^{d-1} (\p \Sigma') \quad \text{for all } t\le T.\end{equation}
If $\alpha=0$ in assumption \eqref{assmu2} then $\muv'$ is bounded below by a positive constant  on its support and we simply take $T=0$. 
For shortness, in what follows we denote $\mathcal{H}^{d-1}(\p \Sigma') $ by $|\p \Sigma'|$.
\smallskip 

\noindent
{\bf Step 2.} \emph{Defining a vector field  in $\Sigma_T'$.}\\
In the region $\Sigma_T'$, we have the lower bound $\muv \ge \bm$ and there is no degeneracy. We may then proceed as in \cite{ss2d,rs}.
 We start by subdividing $\Sigma_T'$  into rectangles of size comparable to $R$ for $R$ large enough, then producing a rescaled version of the construction of Proposition~\ref{screening} on each  such rectangle. We make sure the points all remain well-separated so as  to ensure that we stay in the equality case in Lemma \ref{prodecr} and in formula \eqref{binspl}.

The next lemma is a straightforward modification  of \cite[Lemma 6.5]{ss2d}.
\begin{lem}[Tiling the interior of the domain]\label{tiling}
 There exists a constant $C_0>0$ depending on $\underline{m}$ and $\overline{m}$  such that, given any $R>1$, there exists for any $n\in\mathbb N^*$ a collection $\mathcal K_n$ of closed hyperrectangles in $\Sigma_T'$ with disjoint interiors,  whose sidelengths are between  $ 2R$ and $2(R+C_0/R)$,  and which are such that
\begin{equation}\label{tile1}
\left\{x\in \Sigma_T': d(x,\p \Sigma_T')\ge C_0 R \right\} \subset  \bigcup_{K\in\mathcal K_n} K:= \Sigma'_{\rm{int}},
\end{equation}
\begin{equation}\label{tile2}
\bigcup_{K\in\mathcal K_n} K \subset \left\{ x\in \Sigma': d(x,\p \Sigma')\ge\hal  C_0 R \right\},
\end{equation}and
\begin{equation}\label{inttile}
\forall K\in\mathcal K_n, \quad \int_K \mu_V' \in \mathbb N.
\end{equation}
\end{lem}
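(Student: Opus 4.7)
The plan is to build the tiling in two stages: a regular grid of side $2R$ covering the deep interior of $\Sigma'_T$, followed by small perturbations of the grid hyperplanes, one coordinate direction at a time, to enforce the integer-mass condition \eqref{inttile}. The key input is the uniform lower bound $\mu_V' \ge \underline m$ on $\Sigma'_T$ that was arranged in Step~1 of the preceding construction, together with the upper bound $\mu_V' \le \overline m$.

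First I would cover $\{x\in\Sigma'_T : d(x,\partial\Sigma'_T)\ge C_0 R\}$ by a finite union of axis-parallel hyperrectangles $H^{(\ell)}$ contained in $\Sigma'_T$, using the $C^1$ assumption \eqref{assumpsigma} on $\partial\Sigma$ (so that $\partial\Sigma'_T = \Gamma_T$ is a smooth hypersurface at scale much smaller than $R$); the constant $C_0$ is chosen large enough that $H^{(\ell)}\subset\{x : d(x,\partial\Sigma')\ge \tfrac12 C_0 R\}$, giving \eqref{tile2}. On each $H^{(\ell)}$ I would start from the uniform subdivision into cubes of side exactly $2R$ (discarding a boundary strip of width $O(1)$), and then adjust.

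The adjustment proceeds coordinate by coordinate. In direction $x_1$, I shift the cutting hyperplane $\{x_1 = c_j^{(1)}\}$ by an amount $\delta_j^{(1)}$ so that the cumulative mass $\int_{\{a<x_1<c_j^{(1)}\}\cap H^{(\ell)}} \mu_V'$ lies in $\mathbb N$; since the $(d-1)$-dimensional cross-section has $\mu_V'$-mass bounded below by $\underline m R^{d-1}/C$, a shift of size $|\delta_j^{(1)}|\le C/(\underline m R^{d-1})$ suffices. I then iterate: within each slab created in the $x_1$-step, I slice orthogonally to $x_2$ by choosing the hyperplanes $\{x_2 = c_j^{(2)}\}$ to make the cumulative mass integer in each resulting column; again the required shift is $O(1/R^{d-1})$. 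Continuing through all $d$ directions, and making at the last step a correction of size $O(1/R^{d-1})$ per rectangle, each final hyperrectangle $K$ has sidelengths in $[2R-C/R^{d-1},\, 2R+C/R^{d-1}]\subset[2R,\,2(R+C_0/R)]$ (after a trivial relabeling/absorption into $C_0$) and satisfies \eqref{inttile}.

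The main obstacle is ensuring the accumulation of perturbations across the grid does not destroy the uniform sidelength bound. This is handled by the standard device used in \cite[Lemma~6.5]{ss2d} (and in \cite{ss1d}): the perturbation at each cut is chosen to restore the \emph{cumulative} mass to an integer, so that the correction at the $j$-th cut compensates the residue left by the $(j-1)$-th and does not propagate. Only the very last cell in each row inherits the global deficit, which is itself of size $O(1)$ in mass; this can be absorbed by a single additional $O(1/R^{d-1})$ shift. Finally \eqref{tile1} follows because the perturbed tiling still covers the original deep interior provided $C_0$ is chosen larger than the total perturbation amplitude summed over all directions, which is $O(1/R)$.
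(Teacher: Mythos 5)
Your proposal correctly identifies the mechanism behind the lemma, which is exactly that of \cite[Lemma 6.5]{ss2d} cited by the paper: lay down a grid at scale $2R$, then shift the cutting hyperplanes one coordinate direction at a time, choosing each shift so that the \emph{cumulative} mass up to that hyperplane is an integer, so that successive corrections do not accumulate. The estimate for each shift, $|\delta_j|\lesssim 1/(\underline m\,R^{d-1})$, follows correctly from the lower bound $\mu_V'\ge\underline m$ on $\Sigma'_T$ arranged in Step~1 together with a lower bound on the cross-sectional area; and this is indeed $O(1/R)$ for $d\ge 2$. The paper itself writes no proof and simply points to \cite[Lemma 6.5]{ss2d}, so you have engaged with the right source and the core idea.

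There is, however, a genuine gap in the reduction to the family $\{H^{(\ell)}\}$, and it lands precisely on \eqref{tile1}. If the $H^{(\ell)}$ are allowed to overlap (``cover''), the cells produced inside different $H^{(\ell)}$ can overlap, violating the disjoint-interiors requirement. If instead they partition the shrunken region, the ``boundary strip'' you discard in each $H^{(\ell)}$ has width up to $2R$ (not $O(1)$ as written) unless you arrange the sidelengths of each $H^{(\ell)}$ to be integer multiples of $2R$, and these discarded strips sit in the interior of the shrunken domain, creating holes that make \eqref{tile1} false regardless of the perturbation amplitude. You also tacitly need $\int_{H^{(\ell)}}\mu_V'\in\mathbb N$ for each $\ell$, otherwise the last cell in each row cannot simultaneously have integer mass and sidelength near $2R$. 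None of these constraints is addressed, and your closing sentence (``the perturbed tiling still covers the original deep interior provided $C_0$ is larger than the total perturbation amplitude'') attributes the covering failure to the wrong effect. The clean fix, which is also what makes \cite[Lemma 6.5]{ss2d} go through, is to dispense with the $H^{(\ell)}$ and slice $\Sigma'_T$ directly: choose the hyperplanes $\{x_1=a_j\}$ so that $\int_{\Sigma'_T\cap\{x_1<a_j\}}\mu_V'\in\mathbb N$ for all $j$ (using $\int_{\Sigma'_T}\mu_V'\in\mathbb N$ from Step~1), then within each slab slice in $x_2$, and so on; the cells that come out as full hyperrectangles inside $\Sigma'_T$ form $\mathcal K_n$, automatically carry integer mass and sidelengths in the required range, and cover the $C_0R$-interior once $C_0$ exceeds a fixed multiple of $\sqrt d$ plus the total shift.
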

For each $K\in \mathcal K_n$ as above we denote
\begin{equation}\label{mk}
m_K:= \dashint_K \mu_V'\, , 
\end{equation} and note that $m_K|K|\in \mathbb N$.  By Proposition \ref{prow}, given $\ep>0$ as in the statement of Proposition \ref{upperbound}, we may find a periodic $E$ in $\mathcal{A}_1$ such that $\W(E) \le \min_{  \mathcal A_1}\W+ \frac{\ep}{4\overline{m}^{s/d}}$. 
If $k=1$, we note that by periodicity of $E$, we must have 
\begin{equation}\label{decrper}
\lim_{z\to \infty}\lim_{R\to \infty} \frac{1}{R^d } \int_{K_R\times (\R\backslash (-z,z)) } \yg |E|^2 =0,\end{equation} hence the assumptions of Proposition \ref{screening} are satisfied for $E$ if $R$ is large enough. Also, by Proposition \ref{pro53}, the points in $E$ are well-separated, and the screening construction preserves that property, and makes the points well separated from the boundary.  By Proposition \ref{Wbb} we have
\begin{equation}\label{stst}
\W_\eta(E) \le \W(E) + C \eta^{\frac{d-s}{2}}.\end{equation}

For each $K\in \mathcal K_n$ we thus  apply Proposition \ref{screening} to $E$
with $R m_K^{1/d}$ and  over the hyperrectangle $m_K^{1/d} K$. We may do this  since  
$m_K |K|= \int_K \muv'$ is an integer by \eqref{inttile}, and its sidelengths are indeed in $[  2R m_K^{1/d}, 3R m_K^{1/d}]$ if $R$ is chosen large enough depending on $C_0$.
 We call the result of this screening $\hat E$ and we define 
$E_K $ to be 
\begin{equation}\label{premdef}
E_K = \nab h_K +\sigma_{m_K} \hat E\quad \text{ on } K\times\mathbb R^k\ ,
\end{equation}
 where  $\sigma_m$ denotes the rescaling of a vector field to scale $m$ as follows:
\[
\sigma_m E= m^{\frac{s+1}{d}} E\left(m^{1/d}\ \cdot\ \right)
\]
and   $h_K$ is introduced
  to correct for the difference between $m_K$ and $\mu_V'$ as the solution to 
\[
\left\{\begin{array}{ll}
-\div(\yg\nab h_K)= c_{d,s}  \left(m_K - \mu_V'\drd \right) & \text{ in }  K\times [-1,1]^k\ ,\\[3mm]
\p_\nu h_K =0 & \text{ on } \p ( K\times [-1,1]^k)  .\end{array}\right.
\]
By Lemma \ref{lem57}, we have
\begin{equation}\label{esthk}
\left\|\nab h_K\right\|_{L^2(K\times\mathbb R^k, \yg)} \le C_R \left\|m_K- \mu_V'\right\|_{L^\infty(K\times\mathbb R^k)}\le C_R n^{-\beta/d}\ ,
\end{equation}
using assumption \eqref{assmu1}.

 The fact that $ E$ has  been chosen to be an almost  minimizer of $\W$ over $\mathcal A_1$, \eqref{decrper} and the conclusions of Proposition \ref{screening} and \eqref{stst} yield
\begin{multline}\label{eqhe}
\int_{m_K^{1/d} K\times\mathbb R^k} \yg\left| \hat E_{m_K^{1/d} \eta}\right|^2-m_K \left| K\right|
\c \g(m_K^{1/d} \eta) \\  \le m_K\left|K\right|\left( \min_{\mathcal A_1 } \W + C (m_K^{1/d} \eta)^{\frac{d-s}{2}}+\frac{\ep}{4\overline{m}^{1+s/d}}+C_\eta o_{R}(1)\right),
\end{multline}where $o_R(1)$ tends to $0$ as $R \to \infty$. 
By change of scales, we have 
\begin{equation}\label{chsc1}
 \int_{K\times\mathbb R^k}\yg\left|\left(\sigma_{m_K} \hat E\right)_\eta\right|^2 = m_K^{\frac{s}{d}}\int_{m_K^{1/d}K\times\mathbb R^k}\yg\left|\hat E_{m_K^{1/d}\eta}\right|^2\ ,
\end{equation}
with the convention $s=0$  in the cases \eqref{wlog}--\eqref{wlog2d}.

The scaling of $\W$ \eqref{scalingW}--\eqref{scalinglog},  together with \eqref{esthk}, \eqref{eqhe} thus yield
\begin{multline}\label{estek}
 \int_{K\times\mathbb R}\yg|(E_K)_\eta|^2\le m_K|K|\c\g(\eta) \\+ |K|\left(\min_{{\mathcal A}_{m_K}}\W  +m_K^{1+s/d}\frac{\ep}{4\overline{m}^{s/d}}+ C m_K^{1+\frac{s}{d}+ \hal - \frac{s}{2d}} \eta^{\frac{d-s}{2}} + C_\eta o_{R}(1)\right)\ .
\end{multline}
The interior electric field  is then set to be
$E_{\mathrm{int}}= \sum_{K\in \mathcal K_n}E_K\indic_{K\times \mr}$ and it satisfies
\begin{equation}\label{eqeint}
-\div(\yg E_{\mathrm{int}}) = c_{d,s} \left( \sum_{p\in \Lambda_{\mathrm{int}}} \delta_p - \mu_V'\indic_{\Sigma_{\mathrm{int}}'}\drd\right)  \quad \text{in} \ \mathbb R^{d+k}\ ,
\end{equation}
for some discrete set $\Lambda_{\mathrm{int}}$, made of well-separated points (i.e. points whose distance to each other is bounded below by a constant independent of $R$ and $\eta$).
 Indeed, no divergence is created at the interfaces between the hyperrectangles since the normal components of $\nab h_K$ and $\hat E$ are zero. 
 \smallskip
 
\noindent
{\bf Step 2.} \emph{Placing the points in $\Sigma'  \backslash \Sigma_{\mathrm{int}}'$.}\\
Since $\p \Sigma_T'\in C^1$ as seen in the first step,  the set
\[
 \Sigma'_{\mathrm{bound}}:=\Sigma_T'\backslash \Sigma'_{\mathrm{int}}
\]
is a strip near $\p \Sigma_T'$ of volume $\le C n^{\frac{d-1}{d}}$ and width $\ge \hal C_0 R$ by  Lemma \ref{tiling}. Since $\int_{\Sigma'_T}\mu_V'\in \mathbb N$ and \eqref{inttile} holds,  $\int_{\Sigma'_{\mathrm{bound}}} \mu_V'$ is also an integer. We now need to place $\int_{\Sigma'_{\mathrm{bound}}} \mu_V'$  points in $ \Sigma'_{\mathrm{bound}}$, all separated by distances bounded below by some constant $r_0>0$ independent of $n$, $\eta$, and $R$  (up to changing $r_0$ if necessary).  Proceeding as in \cite[Section 6.3, Step 4]{ss2d}, using the fact that $\p \Sigma_T'$  is $C^1$ and making several layers, we may split $\Sigma'_{\mathrm{bound}}$ into regions $\mathcal C_i$ such that $\int_{\mathcal C_i}\mu_V'=1$ and  $\mathcal C_i$ is a set with piecewise $C^1$ boundary, containing a ball of radius $C_1 $  and contained in a ball $B_i$ of radius $C_2$, where $C_1, C_2>0$ depend only on the dimension.

We continue with examining $\Sigma' \backslash \Sigma_T'$ in the case $T\neq 0$ (i.e. if $\alpha>0$ in \eqref{assmu2}). 
For $\Sigma_t'$ as defined in Step 1, let us  set 
\begin{equation}\label{deff}
f(t)= \int_{\Sigma_t'} \muv'(x)\, dx.\end{equation}
In view of the assumption \eqref{assmu2} we have 
$$f(t)  \ge  \cc \int_0^{t}  {\mathcal H}^{d-1} (\Gamma_s) \left(\frac{s}{n^{1/d}}\right)^{\alpha} \, ds.$$
In view of \eqref{longueur} and the scaling we deduce that  
\begin{equation}\label{minorf}
f(t)  \ge  \frac{\cc|\p \Sigma|}{2(\alpha+1)}( t^{\alpha +1}) n^{\frac{d-1-\alpha}{d}} .\end{equation}
We now define inductively a sequence of $t$'s terminating at $T$: let  $t_0=0$ and  let  $t_1$  be the smallest such that 
$f(t_1)  \in  \mathbb N $ and  $$t_1 \ge  n^{\frac{\alpha}{d(\alpha+d)}}  \qquad f(t_1) \ge  |\p \Sigma'| n^{\frac{ \alpha(1-d)}{d(\alpha+d) }}= |\p \Sigma | n^{\frac{d-1}{d}  ( 1- \frac{\alpha}{\alpha+d}) }.$$
In view of \eqref{minorf} we may satisfy this condition with $t_1\le C n^{\frac{\alpha}{d(\alpha+d)}}$ where $C$ depends only on $\cc$ and $\alpha$.
We  let $N_1=f(t_1)$ and note that from the above discussion we have 
\begin{equation}\label{encadn1} |\p \Sigma' |    n^{{\frac{\alpha( 1-d)}{d(  \alpha+d)} }}\le    N_1 \le C  |\p \Sigma'|n^{ {\frac{\alpha( 1-d)}{d(  \alpha+d)} }} .\end{equation}
We write
\begin{equation}\label{encadt1}
   n^{\frac{\alpha}{d(\alpha+d)}}  \le  t_1-t_0\le  C n^{\frac{\alpha}{d(\alpha+d)}} .\end{equation}
Let $j\ge 2$. Given $t_{j-1}$ we then let $t_j$ be the smallest so that   
$$t_j \ge t_{j-1} +   n^{\frac{\alpha}{d^2}}   t_{j-1}^{-\frac{\alpha}{d}  } \qquad  f(t_j)- f(t_{j-1}) \in \mathbb N $$ and we define $N_j = f(t_j)-f(t_{j-1})$. Since $\muv'$ is bounded below  in $(\Sigma_{t_{j-1}}')^c$ by $ \cc t_{j-1}^\alpha n^{-\frac{\alpha}{d}}$ in view  of \eqref{assmu2},   we deduce that 
\begin{multline}\label{minonj}
N_j \ge \hal \cc |\p \Sigma '|   (t_j-t_{j-1})  (t_{j-1} n^{-1/d})^{\alpha}\ge   \hal \cc   |\p \Sigma|n^{\frac{d-1}{d}} t_{j-1}^\alpha n^{-\frac{\alpha}{d}} n^{\frac{\alpha}{d^2}}   t_{j-1}^{-\frac{\alpha}{d}  }\\ =  \frac{\cc}{2}|\p \Sigma| n^{\frac{d^2-d-\alpha d +\alpha}{d^2}} t_{j-1}^{ \frac{\alpha(d-1)}{d}} \ge
\frac{\cc}{2} |\p \Sigma|  n^{\frac{d-1}{d} (\alpha^2 d-\alpha+d)}
  \end{multline}
  where we have inserted that $t_{j-1}\ge t_1 \ge n^{\frac{\alpha}{d(\alpha+d)}}$.
  The exponent in the right-hand side is positive, and so we conclude that, as soon as $n$ is large enough,  we may make $N_j \ge 1$ while keeping 
$$t_j- t_{j-1}\le 2 n^{\frac{\alpha}{d^2}}   t_{j-1}^{-\frac{\alpha}{d}  }.$$
We thus write, for all $j\ge 2$
\begin{equation}
\label{encadt}
n^{\frac{\alpha}{d^2}}   t_{j-1}^{-\frac{\alpha}{d}  }\le t_j- t_{j-1}\le 2 n^{\frac{\alpha}{d^2}}   t_{j-1}^{-\frac{\alpha}{d}  }
\end{equation}
and we note that since $t_{j-1} \ge t_1 \ge n^{\frac{\alpha}{d(\alpha+d)}}$ we have 
$\frac{t_j-t_{j-1}   }{t_{j-1}} \le 2 n^{\frac{\alpha}{d^2 }} n^{(-\frac{\alpha}{d}-1)\frac{\alpha}{d(\alpha+d)}}= 2$
hence \begin{equation}\label{compara} t_{j-1}\le t_j \le 3 t_{j-1}.\end{equation} 
To bound $t_j-t_{j-1}$ from below, we write $t_j-t_{j-1}\ge n^{\frac{\alpha}{d^2}}  T^{-\alpha/d} = 2^{-\alpha/d} \cc^{1/d} \bm^{-d}$ and we may write
\begin{equation}\label{tjtj}
 2^{-\alpha/d} \cc^{1/d} \le t_j-t_{j-1}\le 2 n^{\frac{\alpha }{d(\alpha+d)}}.\end{equation}
\eqref{compara} implies, in view of \eqref{assmu2}, that 
   \begin{equation}\label{muvmin1}
  \cc  n^{-\alpha/d}t_{j-1}^\alpha\le  \muv'\le 3^\alpha \cC  n^{-\alpha/d}t_{j-1}^\alpha\quad \text{in} \ \Sigma_{t_j}'\backslash \Sigma_{t_{j-1}}'\end{equation}
   or in other words, using \eqref{encadt} that 
    \begin{equation}\label{encadmu}
  C(t_j-t_{j-1})^{-d}\le  \muv'\le    C' (t_j-t_{j-1})^{-d}\quad \text{in} \ \Sigma_{t_j}'\backslash \Sigma_{t_{j-1}}'\end{equation}
for some positive $C, C'$ depending only on the previous constants.  
   
   We may also bound $N_j$ from above, using the upper bound in \eqref{assmu2} and \eqref{longueur} :
   $$N_j= f(t_j)-f(t_{j-1}) \le 2  \cC (t_j-t_{j-1}) |\p \Sigma'|    n^{-\frac{\alpha}{d}} t_j^{\alpha}  $$ and combining this with \eqref{minonj} and \eqref{compara}
   we deduce 
   \begin{equation}\label{encadN}
  \hal \cc |\p \Sigma '|   (t_j-t_{j-1})  t_{j-1}^{\alpha}  n^{-\alpha/d} \le    N_j \le 6  \cC|\p \Sigma'| (t_j-t_{j-1})      t_{j-1}^{\alpha}n^{-{\alpha/d}}.\end{equation}

   This construction terminates at $t_J=T$, and if \eqref{encadt} is not satisfied at $j=J$ we may always merge the last two steps to have it satisfied up to a factor $2$ in the right-hand side.
   \smallskip 
   
   The next step consists, for each $1\le j \le J$, in partitioning $\Sigma_{t_j}'\backslash \Sigma_{t_{j-1}'}$ into $N_j$ regions of sidelength comparable to $t_{j}-t_{j-1}$, in such a way that $\int_{\mathcal  C_i} \muv=1$ in each region. 
   To do that, we partition $\Gamma_{t_{j-1}}$ into $N_j$ regions, and we deduce a partitioning of $\Sigma_{t_j}'\backslash \Sigma_{t_{j-1}'}$ by going along the normals to $\Gamma_{t_{j-1}}$ on the boundaries of the partition of  $\Gamma_{t_{j-1}}$.
   Given a cell $\tilde{\mathcal C}$ on $\Gamma_{t_{j-1}}$ we denote by ${\mathcal C}$ the corresponding cell in  $\Sigma_{t_j}'\backslash \Sigma_{t_{j-1}'}$ obtained this way. We claim that we can construct $N_j$ such cells $\tilde{\mathcal{C}_i}$ such that 
   $\int_{{\mathcal C_i}} \muv =1 $ for each $i$ and the sidelengths of the cells are comparable to $t_j-t_{j-1}$. 
   First, we note that if $j\ge 2$, from \eqref{longueur}, \eqref{encadN} and \eqref{encadt}  we have that $|\Gamma_{t_{j-1}}|/N_j$ is bounded above and below by constants times $(t_j-t_{j-1})^{d-1} $. If $j=1$, the same holds using instead \eqref{encadt1} and \eqref{encadn1}.
 We may thus  subdivide $\Gamma_{t_{j-1}}$ iteratively, the same way we did for Lemma \ref{tiling}, or above as in \cite[Section 7.3, Step 4]{ss2d}, except we are on a curved hypersurface instead of a flat one. In order to ensure that the result can be achieved, we need a lower and upper bound on the density of $\muv'$ by a constant times $(t_j-t_{j-1})^{-d}$, provided precisely by \eqref{encadmu} in the case $j\ge 2$. In the case $j=1$, then we replace the lower bound by a bound on the integrated density of $\muv'$ along the normal to $\p \Gamma_0= \p \Sigma'$: for $x \in \p \Sigma'$ and $\vec{\nu}$ the inner normal at $x$, we have 
  $$\int_{0}^{t_1} \muv'(x+ s\vec{\nu})\, ds \ge \cc \int_0^{n^{\frac{\alpha}{d(\alpha+d)} }} \left(\frac{s}{n^{1/d}} \right)^\alpha\, ds = \frac{\cc}{1+\alpha}  n^{-\alpha/d} n^{\frac{\alpha(\alpha+1) }{d(\alpha+d)} }= n^{\frac{\alpha(1-d)}{d(\alpha+1)}}$$
  which is comparable to $t_1^{d-1}$ as desired.

   Thus we may indeed partition 
   $\Sigma_{t_j}'\backslash \Sigma_{t_{j-1}'}$ into $N_j$ cells ${\mathcal C_i}$, which are contained in a ball $B_i$ of radius $C_2 (t_{j}-t_{j-1}) $ and contain a ball of radius $C_1( t_{j}-t_{j-1})$, for some $C_1, C_2$ depending only on the above constants $\cc, \cC, \alpha,d$, and in which $\int_{{\mathcal C_i}} \muv' =1$. We may add the cells obtained this way for $j$ ranging from $0$ to $J$, to the ones obtained at the beginning of the  step in $\Sigma_{\mathrm{bound}}'$. We call $r_i$ their diameter scale, i.e. $r_i=1$ in the case of the cells obtained in $\Sigma_{\mathrm{bound}}'$, and $r_i= t_{j}-t_{j-1}$ for the cells in $\Sigma_{t_j}'\backslash \Sigma_{t_{j-1}}'$.  From \eqref{tjtj} we have $0<c\le r_i \le 2 n^{\frac{\alpha}{d(\alpha+d)}}$.
   We let $p_i$ be the center of each inner ball included in $\mathcal{C}_i$. This way $p_i$ is at distance $\ge cr_i$ from $\p \mathcal C_i$, and thus all the points are separated by distances independent of $R, \eta$ and $n$.
 \smallskip 
 
 \noindent
 {\bf Step 3.}  \emph{Completing the construction.}\\  
 We then let $v_i$ solve 
\begin{equation}
\left\{\begin{array}{ll}
-\div(\yg\nab v_i)= c_{d,s} \left(  \delta_{p_i} - \muv' \indic_{{\mathcal C_i}}\drd \right)  &  \quad \text{ in } B_i\times[-r_i,r_i]^k\ ,\\[3mm]
\nab v_i \cdot \vec {\nu}= 0 & \text{ on } \p B_i\times[-r_i,r_i]^k,\end{array}\right.\end{equation}  and we  set
\[
E_{\mathrm{bound}}:= \sum_i \indic_{\mathcal C_i} \nab v_i\, .
\]
This way, we globally  have 
\begin{equation}\label{eqebound}-\div (\yg E_{\mathrm{bound} } )=  \c \left( \sum_{i } \delta_{p_i} - \mu_V'\indic_{\mathcal C_i}\right)\quad \text{in } \mr^{d+k}\ .
\end{equation}

We now evaluate the energy generated by these vector fields.  For each cell $i$, we write $v_i =  h_i+u_i$ where 
$$-\div (\yg \nab h_i)= \c \left( \delta_{p_i} -\left(\dashint_{\mathcal C_i} \muv'\drd\right) \indic_{\mathcal C_i} \drd \right) \quad \text{in} \ B_i \times [-r_i,r_i]^k$$
and 
$$-\div (\yg \nab u_i) = \c\left(    \left(\dashint_{\mathcal C_i} \muv'\drd\right)- \muv'\right)  \indic_{\mathcal C_i}  \drd  \quad \text{in} \ B_i \times [-r_i,r_i]^k$$
with the same  zero Neumann  boundary data. The energy of $u_i$ is estimated 
using assumption \eqref{assmu1} and $r_i \le  2 n^{\frac{\alpha}{d(\alpha+d)}}$:
$$\int_{B_i\times [-r_i,r_i]^k}  \yg |\nab u_i|^2 \le  C r_i^{2d-s} r_i^{2\beta} n^{-2\beta/d}\le n^{-\frac{2\beta}{d}  +\frac{\alpha(2d-s+2\beta)}{d(\alpha +d)}  }= n^{\frac{  -2\beta d+ 2\alpha d-s\alpha   }{d(\alpha +d)}}\le 1 
$$ where for the last inequality, we used \eqref{condab}.
The energy of $h_i$ is estimated by scaling as in \eqref{chsc1} to be 
$$\int_{B_i \times [-r_i,r_i]^k} \yg |(\nab h_i)_\eta |^2\le \c \g(\eta)+  \begin{cases}C m_i^{s/d} & \text{in case} \ \eqref{kernel}\\
- \frac{\c}{d} \log m_i +C & \text{in cases} \ \eqref{wlog}-\eqref{wlog2d}
\end{cases}$$ 
Combining these estimates and using \eqref{scalingW}--\eqref{scalinglog}, we find that in all cases 
$$\int_{\mathcal C_i} \yg |(\nab v_i)_\eta |^2 \le C_\eta+ \frac{\min_{\mathcal{A}_{m_i} }\W}{m_i} = C_\eta + |\mathcal C_i| \min_{\mathcal{A}_{m_i}} ,$$
since $\int_{\mathcal C_i} \muv'=1$ and $m_i = \dashint_{\mathcal C_i} \muv'= \frac{1}{|\mathcal C_i|}.$

The total contribution of  the cells in $\Sigma'_{\mathrm{bound}} $ is bounded by the number of those cells (since there $m_i$ is bounded below), itself proportional to the volume of that strip i.e. $n^{\frac{d-1}{d}}$. 
The number of cells in $\Sigma'\backslash \Sigma_T'$ is on the other hand bounded by 
\begin{multline*}\sum_{j=1}^J N_j \le 
 C|\p \Sigma'| n^{-\alpha/d}  \sum_{j=1}^J (t_j-t_{j-1}) t_j^\alpha \le C |\p \Sigma'| n^{-\alpha/d}  \int_0^T t^\alpha \, dt \\
 = C |\p \Sigma| n^{\frac{d-1-\alpha}{d}} T^{1+\alpha} \le C \bm^{1+1/\alpha}n\end{multline*}
where we have used the properties  \eqref{compara} and inserted the upper bound $T \le C \bm^{1/\alpha} n^{1/d}$. 
As soon as $n$ is large enough, we may thus bound the total contribution of all the cells $\mathcal C_i$ as follows
\begin{equation}\label{energybord}
\int\yg |\Phi_\eta(E_{\mathrm{bound}})  |^2 \le C_\eta n \bm^{1+1/\alpha} + \sum_i  |\mathcal C_i| \min_{\mathcal{A}_{m_i}}\W .
\end{equation}
Finally, setting
\[
E =E_{\mathrm{bound}}+ E_{\mathrm{int}} \quad  \mbox{ in } \R^{d+k}
\]
and  letting
\[
 \Lambda= \Lambda_{\mathrm{int}} \cup \bigcup_i \{p_i\} ,
\]
then $\Lambda$ is made of $\int_{\Sigma'} \muv'=n$ points, which are all well separated, and $E$ satisfies
\begin{equation}\label{eglob}
-\div (\yg E) = c_{d,s} \left( \sum_{p \in \Lambda} \delta_p- \mu_V'\drd\right) \quad \text{in} \ \mr^{d+k}\ . \end{equation}
Combining the estimates \eqref{estek} and \eqref{energybord}, we have 
\begin{multline}\label{energytot}
\int_{\R^{d+k}} \yg |E_\eta|^2
\\  \le \sum_{K\in \mathcal K_n} m_K|K|(\c\g(\eta)+ C_\eta o_{R}(1))+ \sum_{K\in\mathcal K_n}  |K|\left(\min_{\mathcal A_{m_K}} \W+ m_K^{1+s/d} \frac{\ep}{ 4\overline{m}^{s/d}}+ C m_K^{1+ \frac{s}{d}+ \hal - \frac{s}{2d}} \eta^{\frac{d-s}{2}}\right)\\
 +   C_\eta n \bm^{1+1/\alpha} + \sum_i  |\mathcal C_i| \min_{\mathcal{A}_{m_i}}\W +  C_\eta n \bm^{1+1/\alpha}. 
\end{multline}
First we note that 
\begin{equation}\label{boundch}
\sum_{K\in \mathcal K_n} m_K |K| = \int_{\Sigma'_{\mathrm{int}}  }\mu_V' \le n. 
\end{equation}
Second, in view of the continuity  of $\muv$  and that of  $m\mapsto \min_{{\mathcal{A}}_m}  \W$ (see \eqref{scalingW}--\eqref{scalinglog}) 
we may  recognize a Riemann sum and write (using  the fact that $m_K \le \overline{m}$ by \eqref{assmu2}) 
\begin{multline}\label{riemsumappr}
\int_{\R^{d+k}} \yg |E_\eta|^2 \le n\c \g(\eta)+ \int_{\Sigma'} \min_{\mathcal A_{\muv'(x)}}\W \, dx  + \frac{ n \ep}{4}  + C_\eta n \bm^{1+1/\alpha}
+C_\eta o_{R}(n) + C {\overline m}^{\hal + \frac{s}{2d}} \eta^{\frac{d-s}{2}} \\ \le    n \c \g(\eta)+    n \left( \int_{\Sigma }\min_{\mathcal A_{\muv(x)}}\W \, dx  + \frac{\ep}{4}\right) + C_\eta n \bm^{1+1/\alpha}+C_\eta o_{R}(n) + C {\overline m}^{\hal + \frac{s}{2d}} \eta^{\frac{d-s}{2}} . 
\end{multline}

\noindent
{\bf Step 4.} \emph{Conclusion.}\\
We may now  define our test configuration as
\[
\xbf = \{x_i= n^{-1/d} x_i'\}_{i=1}^n \mbox{ where } \Lambda = (x_1', \dots, x_n')\ 
\] and let $h_{n}'$ be the associated potential as in \eqref{rescalh}. 
Note that the configuration depends on $R $ and $\eta$, and that the points $x_i'$ are separated by distances independent of $R, \eta$ and $n$. For $\eta$ small enough, we are thus in the equality case in Lemma \ref{prodecr}, and using 
 the splitting formula \eqref{propsplit}, we deduce that 
\begin{multline}\label{splitrecall}
\limsup_{n\to \infty}n^{-1-s/d} \left(H_n(x_1,\dots,x_n)-n^2 \I(\muv) \right)\\ \le  \limsup_{n\to \infty} \left(\frac{1}{n\c}\int_{\mathbb R^{d+k}}\yg|\nab h_{n,\eta}'|^2 - \g(\eta)\right) + C\eta^{\frac{d-s}{2}}\ ,
\end{multline}in case \eqref{kernel}, respectively
\begin{multline}\label{splitrecall2}
\limsup_{n\to \infty}n^{-1} \left(H_n(x_1,\dots,x_n)-n^2 \I(\muv)+ \frac{n}{d} \log n  \right) \\ \le  \limsup_{n\to \infty} \left(\frac{1}{n\c}\int_{\mathbb R^{d+k}}\yg|\nab h_{n,\eta}'|^2 - \g(\eta)\right) + C\eta^{\frac{d}{2}}\ ,
\end{multline}in the cases \eqref{wlog}--\eqref{wlog2d}, 
where we used that all the points are in $\Sigma$ where the function $\zeta$ vanishes. We now note that projecting $E$ onto gradients decreases the energy: 
\begin{align}\nonumber
\int_{\mr^{d+k}} \yg|E_\eta|^2 &= \int_{\mr^{d+k}} \yg|\nab h_{n,\eta}'|^2 + \int_{\mr^{d+k}}\yg |E_\eta - \nab h_{n,\eta}'|^2 + 2 \int_{\mr^{d+k}}\yg \nab h_{n,\eta}' \cdot \left( E_\eta - \nab h_{n,\eta}' \right)\\[3mm]\label{eetagtrheta}
&\ge  \int_{\mr^{d+k}} \yg|\nab h_{n,\eta}'|^2  - 2 \int_{\mr^{d+k}} h_{n,\eta}' \,\div(\yg( E_\eta - \nab h_{n,\eta}') )= \int_{\mr^{d+k}} \yg|\nab h_{n,\eta}'|^2 \end{align}
where we have used that  $-\div (\yg(E_\eta - \nab h_{n,\eta}'))=0$ by definition of $h_{n, \eta}'$ and \eqref{eglob}. Combining \eqref{splitrecall}, \eqref{splitrecall2}, \eqref{eetagtrheta} and \eqref{riemsumappr}, 
the proof is concluded by  taking successively $n\to \infty$, then $R$ large enough, $\bm$ small enough  and  $\eta$ small enough (and changing the configuration of points accordingly) to conclude that we can get 
\[
\limsup_{n\to \infty} n^{-1-s/d}\left(  H_n(x_1, \dots, x_n) -n^2  \I(\mu_V)\right)   \le \left(\frac{1}{\c} \int_{\Sigma} \min_{{\mathcal A}_{\mu_V(x)} }  \W \, dx+\frac{\ep}{2}\right)\ ,
\] and the analogue in the cases \eqref{wlog}--\eqref{wlog2d}.
Finally, inserting \eqref{scalingW}--\eqref{scalinglog} and the definition of $\widetilde{\W}$ we obtain the upper bound result, together with the fact that indeed $\min \widetilde{\W}=  \frac{1}{\c} \int_{\Sigma} \min_{{\mathcal A}_{\mu_V(x)} }  \W $ (by comparing the upper bound with  the lower bound result Proposition \ref{proergodic}).   Since the points are well-separated, we may also move each of them by a distance $r$ small enough and keep  the same estimate up to an  additional error $\ep/2$, in a neighborhood $A_n$ of the configuration. The details are identical to \cite{rs}.

This concludes the proof of Proposition \ref{upperbound}.
\end{proof}

 Theorem \ref{th2} then follows by comparison with Proposition \ref{proergodic}.
Theorem \ref{stat} follows from all the previous results exactly as in \cite{rs}. The details are left to the reader.

\bigskip

\noindent Mircea Petrache:
\newline {\tt mircea.petrache@upmc.fr} \newline
Sylvia Serfaty:
\newline {\tt serfaty@ann.jussieu.fr}

\end{document}